\xpatchcmd{\@ssect@ltx}{\@xsect}{\protected@edef\@currentlabelname{#8}\@xsect}{}{}
\xpatchcmd{\@sect@ltx}{\@xsect}{\protected@edef\@currentlabelname{#8}\@xsect}{}{}
\pgfplotsset{compat=1.18} 
\newcommand{\U}{\operatorname{U}}
\newcommand{\SU}{\operatorname{SU}}
\newcommand{\Sp}{\operatorname{Sp}}
\newcommand{\su}{\mathfrak{su}}
\newcommand{\GL}{\operatorname{GL}}
\newcommand{\e}{\operatorname{e}}
\renewcommand{\i}{\mathrm{i}}
\renewcommand{\P}{\mathbf{P}}
\newcommand{\Tr}{\operatorname{tr}}
\newcommand{\sgn}{\operatorname{sgn}}
\newcommand{\V}{\mathcal{V}}
\renewcommand{\S}{\mathbb{S}}
\renewcommand\onecolumngrid{%
  \do@columngrid{one}{\@ne}%
  \def\set@footnotewidth{\onecolumngrid}%
  \def\footnoterule{\kern-6pt\hrule width 1.5in\kern6pt}%
}
\newlength\dlf 
\theoremstyle{definition}
\newtheorem{definition}{Definition}
\newtheorem*{definition*}{Definition}
\theoremstyle{plain}
\newtheorem{theorem}{Theorem}
\newtheorem*{theorem*}{Theorem}
\newtheorem{corollary}{Corollary}
\newtheorem*{corollary*}{Corollary}
\newtheorem{lemma}{Lemma}
\newtheorem*{lemma*}{Lemma}
\newtheorem{proposition}{Proposition}
\newtheorem*{proposition*}{Proposition}
\theoremstyle{remark}
\newtheorem{remark}{Remark}
\newtheorem{remark*}{Remark}
\newtheorem*{fact*}{Fact}
\newcommand{\bes} {\begin{subequations}}
  \newcommand{\ees} {\end{subequations}}
\newcommand{\bea} {\begin{eqnarray}}
  \newcommand{\eea} {\end{eqnarray}}
\newcommand{\be} {\begin{equation}}
  \newcommand{\ee} {\end{equation}}
\def\>{\rangle}
\def\<{\langle}
\def\Tr{\operatorname{Tr}}
\newcommand{\ident}{\mathbb{I}}
\newcommand{\ignore}[1]{}
\newcommand{\real}{\mathbb{R}}
\newcommand{\complex}{\mathbb{C}}
\newcommand{\hilbert}[1][H]{\mathcal{#1}}
\DeclarePairedDelimiter{\sets}{\{}{\}}
\newcommand{\cset}[2]{\sets{#1 \, : \, #2}}
\DeclarePairedDelimiter{\nket}{\lvert}{\rangle}
\DeclarePairedDelimiter{\nbra}{\langle}{\rvert}
\newcommand{\diff}{\mathop{}\!\mathrm{d}}
\newcommand{\vct}[1]{\bm{{#1}}}
\providecommand\given{}
\newcommand\givensymbol[1][]{%
  \nonscript#1\vert
  \allowbreak
  \nonscript
  \mathopen{}
}
\newcommand\ngivensymbol[1][]{%
  \nonscript:
  \allowbreak
  \nonscript
  \mathopen{}
}
\DeclarePairedDelimiterX{\parens}[1]{\lparen}{\rparen}{%
  \renewcommand\given{\ngivensymbol[\delimsize]}
  #1
}
\DeclarePairedDelimiterX{\bracks}[1]{\lbrack}{\rbrack}{%
  \renewcommand\given{\givensymbol[\delimsize]}
  #1
}
\DeclarePairedDelimiterX{\braces}[1]{\lbrace}{\rbrace}{%
  \renewcommand\given{\ngivensymbol[\delimsize]}
  #1
}
\DeclarePairedDelimiterX{\angles}[1]{\langle}{\rangle}{%
  \renewcommand\given{\ngivensymbol[\delimsize]}
  #1
}
\DeclarePairedDelimiter{\verts}{\lvert}{\rvert}
\newcommand{\p}{\parens}
\newcommand{\set}{\braces}
\newcommand{\abs}{\verts}
\DeclarePairedDelimiter{\floor}{\lfloor}{\rfloor}
\DeclarePairedDelimiterX{\inn}[2]{\langle}{\rangle}{#1,#2}
\DeclarePairedDelimiterX{\comm}[2]{\lbrack}{\rbrack}{#1,#2}
\DeclarePairedDelimiterX{\anti}[2]{\lbrace}{\rbrace}{#1,#2}
\DeclarePairedDelimiter{\ket}{\lvert}{\rangle}
\DeclarePairedDelimiterX{\qout}[2]{\lvert}{\rvert}{%
  #1\delimsize\rangle\delimsize\langle\mathopen{}#2
}
\DeclarePairedDelimiterX{\qproj}[1]{\lvert}{\rvert}{%
  #1\delimsize\rangle\delimsize\langle\mathopen{}#1
}
\DeclarePairedDelimiterX{\qinn}[2]{\langle}{\rangle}{%
  #1\givensymbol[\delimsize]#2
}
\DeclarePairedDelimiterX{\qamp}[3]{\langle}{\rangle}{%
  #1\givensymbol[\delimsize]#2\givensymbol[\delimsize]#3
}
\DeclarePairedDelimiterX{\qavg}[2]{\langle}{\rangle}{%
  #2\givensymbol[\delimsize]#1\givensymbol[\delimsize]#2
}
\newcommand{\SO}{\operatorname{SO}}
\newcommand{\lie}[1]{\mathfrak{#1}}
\renewcommand\P[1][]{%
  \ifstrempty{#1}{%
      \mathbf{P}
    }{%
      \mathbf{p}_{#1}
    }%
  }
\newcommand{\ydiag}[1]{\vcenter{\hbox{\tiny \ydiagram{#1}}}}
\newcommand{\sydiag}[1]{\vcenter{\hbox{\scalebox{0.85}{\tiny \ydiagram{#1}}}}}
\newcommand{\ysub}[1]{\vcenter{\hbox{\scalebox{0.5}{\scriptsize \ydiagram{#1}}}}}
\newcommand{\sysub}[1]{\vcenter{\hbox{\scalebox{0.37}{\scriptsize \ydiagram{#1}}}}}
\NewDocumentCommand{\yket}{om}{%
  \IfValueTF{#1}{%
    \ket[#1]{\vcenter{\hbox{\small \begin{ytableau} #2 \end{ytableau}}}}
  }{%
    \ket*{\vcenter{\hbox{\small \begin{ytableau} #2 \end{ytableau}}}}
  }
}
\NewDocumentCommand{\tyket}{om}{%
  \IfValueTF{#1}{%
    \ket[#1]{\vcenter{\hbox{\tiny \begin{ytableau} #2 \end{ytableau}}}}
  }{%
    \ket*{\vcenter{\hbox{\tiny \begin{ytableau} #2 \end{ytableau}}}}
  }
}
\crefname{section}{Sec.}{Secs.}
\crefname{claim}{Claim}{Claims}
\begin{document}


\title{Qudit circuits with \texorpdfstring{$\bm{{\SU(d)}}$}{SU(d)} symmetry II}

\author{Austin Hulse}
\email{austin.hulse@duke.edu}
\affiliation{Duke Quantum Center, Duke University, Durham, NC 27708, USA}
\affiliation{Department of Physics, Duke University, Durham, NC 27708, USA}
\author{Hanqing Liu\,\orcidlink{0000-0003-3544-6048}}
\email{hanqing.liu@lanl.gov}
\affiliation{Theoretical Division, Los Alamos National Laboratory, Los Alamos, New Mexico 87545, USA}
\author{Iman Marvian}
\email{iman.marvian@duke.edu}
\affiliation{Duke Quantum Center, Duke University, Durham, NC 27708, USA}
\affiliation{Department of Physics, Duke University, Durham, NC 27708, USA}
\affiliation{Department of Electrical and Computer Engineering, Duke University, Durham, NC 27708, USA}

\title{A framework for semi-universality: Semi-universality of \texorpdfstring{$3$}{3}-qudit  \texorpdfstring{$\SU(d)$}{SU(d)}-invariant gates}

\begin{abstract}
  Quantum circuits with symmetry-respecting gates have attracted broad interest in quantum information science. 
  While recent work has developed a theory for circuits with Abelian symmetries, revealing important distinctions between Abelian and non-Abelian cases, a comprehensive framework for non-Abelian symmetries has been lacking. In this work, we develop novel techniques and a powerful framework that is particularly useful for understanding circuits with non-Abelian symmetries. 
  Using this framework we settle an open question on quantum circuits with $\SU(d)$ symmetry. We show that 3-qudit $\SU(d)$-invariant gates are semi-universal, i.e.,  generate all $\SU(d)$-invariant unitaries, up to certain constraints on the relative phases between sectors with inequivalent representation of symmetry. Furthermore, we prove that these gates achieve full universality when supplemented with 3 ancilla qudits.   
 Interestingly, we find that studying circuits with 3-qudit gates is also useful for a better understanding of circuits with 2-qudit gates.  
In particular, we establish that even though 2-qudit $\SU(d)$-invariant gates are not themselves semi-universal, they become universal with at most 11 ancilla qudits.
   Additionally, we investigate the statistical properties of circuits composed of random $\SU(d)$-invariant gates. Our findings reveal that while circuits with 2-qudit gates do not form a 2-design for the Haar measure over $\SU(d)$-invariant unitaries, circuits with 3-qudit gates generate a $t$-design, with $t$ that is quadratic in the number of qudits.
\end{abstract}

\maketitle

\section{Introduction}

The universality of 2-qudit quantum gates is a celebrated result in the fields of quantum computing and control theory \cite{divincenzo1995two, lloyd1995almost, deutsch1995universality,brylinski2002universal}. According to this result, any unitary transformation on any finite number of qudits can be realized with a finite sequence of two qudit gates. However, 
in the presence of a global symmetry, this universality fails: generic unitaries that respect a global continuous symmetry cannot be realized, even approximately, using $k$-qudit gates that respect the same symmetry, with any fixed $k$ \cite{Marvian2022Restrict,marvian2022quditcircuit,Marvian2024Rotationally,marvian2024theoryabelian}. In general, the locality of gates imposes various types of restrictions on the set of realizable unitaries. For instance, it restricts the possible relative phases between sectors with inequivalent irreducible representations (charges) of the symmetry. To distinguish these more common types of restrictions from other types, Ref.~\cite{marvian2024theoryabelian} proposes the notion of semi-universality, a weakening of the notion of universality. This concept, which is the main focus of the present paper, is defined below (see \cref{sec:semiundef} for the formal definition).

Recall that under the action of a symmetry group $G$, the total Hilbert space $\hilbert$ decomposes into subspaces (charge sectors), $\hilbert = \bigoplus_\lambda \hilbert_\lambda$, corresponding to inequivalent irreducible representations of $G$. 
A set of gates respecting this symmetry is called \emph{semi-universal}, if for any unitary $V$ that respects the symmetry, there exists a set of phases $\set{\theta_\lambda}$ such that $V \sum_\lambda \e^{\i \theta_\lambda} \Pi_\lambda$ can be realized as a sequence of gates in that set, where $\Pi_\lambda$ is the orthogonal projector to the charge sector $\hilbert_\lambda$.

One may hope that even though 2-qudit gates are not generally universal, at least they might be semi-universal. Previous works have shown that this is indeed the case for qubits with $\U(1)$ and $\SU(2)$ symmetry \cite{Marvian2024Rotationally,marvian2024theoryabelian}. In particular, in the case of SU(2) symmetry, 2-qubit gates realized by the Heisenberg exchange interaction, which is $\SU(2)$-invariant, are semi-universal. On the other hand, surprisingly, it turns out that for $d\ge 3$, 2-qudit $\SU(d)$-invariant gates on qudit systems are not semi-universal. Indeed, Ref. \cite{marvian2022quditcircuit} identifies new conservation laws that restrict the time evolution of the system,\footnote{In particular, in \cite{marvian2022quditcircuit}, it is shown that in certain subspaces, the dynamics of qudits under 2-qudit $\SU(d)$-invariant unitaries can be mapped to the dynamics of a fermionic system evolving under a free (non-interacting) Hamiltonian.} even when the state is restricted to one $\SU(d)$ charge sector. Since 2-qudit $\SU(d)$-invariant gates are not semi-universal in general, it is natural to ask whether $k$-local gates are; and, in particular, what is the minimum locality that achieves semi-universality. 

\begin{table*}
  \begin{tblr}{c|c|c|c}
    \toprule
    Ancillae & Local dimension $d$ & Semi-universality & Universality \\ \midrule
    \SetCell[r=2]{c} \ding{55} & 2 & $k = 2$ & $k = 2 \floor{\frac{n}{2}}$ \\ \midrule
    & $d \geq 3$ & $k = 3$ & $k = n$ \\ \midrule
    \SetCell[r=2]{c} \ding{51} & 2 & $k = 2$ {with no ancillae} & $k = 2$ and $2$ ancillae \\ \midrule
    & $d \geq 3$ & $k = 2$ and $\leq 8$ ancillae & $k = 2$ and $\leq 11$ ancillae, or $k = 3$ and $\leq 3$ ancillae \\ \bottomrule
  \end{tblr}
  \caption{{\textbf{Locality of gates needed to achieve (semi)-universality.}} This table lists the minimum $k$, such that $k$-qudit $\SU(d)$-invariant unitaries achieve semi-universality and universality on $n\ge 3$ qudits, with or without ancillae. The results for $ \SU(2)$ symmetry were shown in \cite{Marvian2024Rotationally}, and the results on $\SU(d)$ symmetry with $d\ge 3$ are established in this work. Note that for $d = 2$, 2 ancilla qudits are needed to achieve universality, even if one is allowed to use $k$-qubit gates, with $k < 2 \floor{\frac{n}{2}}$. }
  \label{tab:summary}
\end{table*}

\subsection{Summary of results}

In this paper, we develop new powerful tools and a framework for understanding semi-universality in quantum circuits with arbitrary symmetries. While recent work \cite{marvian2024theoryabelian} has found a simple characterization of circuits with Abelian symmetries, 
it is known that circuits with non-Abelian symmetries can show significantly more complicated behaviors. For instance, in the presence of non-Abelian symmetries, the unitaries realized in one charge sector $\mathcal{H}_\lambda$, may dictate the unitaries in other (possibly multiple) sectors, whereas this cannot happen in the case of Abelian symmetries \cite{marvian2024theoryabelian, marvian2022quditcircuit}. 

Applying these tools to the important example of $\SU(d)$ symmetry, we settle an open question about circuits with this symmetry.
  It was recently shown that 2-qudit $\SU(d)$-invariant gates are not semi-universal when $d > 2$, while they are for $d = 2$ \cite{marvian2022quditcircuit, Marvian2024Rotationally}. Furthermore, using properties of the Young-Jucys-Murphy elements and Okounkov-Vershik's approach to the representation theory of the symmetric group \cite{okounkov2005}, Ref. \cite{Zheng_2023} argues that 4-local $\SU(d)$-invariant unitaries are semi-universal. However, prior to the present work, it was not known if semi-universality can be achieved with 3-qudit unitaries or not. 

Here, we settle this open question and prove that 3-qudit unitaries that respect a global $\SU(d)$ symmetry on $d$-dimensional qudits are indeed semi-universal. It is worth emphasizing that our proof of the semi-universality of 3-qudit gates is elementary and based on the tools developed in this paper, which are applicable to other symmetries (see \cref{sec:semiuni}, and in particular, \cref{MainLemma,thm:mulblocks}). 

In our construction, the generating gate set includes all 2-qudit $\SU(d)$-invariant gates
\begin{align}\label{eq:gateset}
  \exp({\i \theta \P_{ij}})\ : \quad \theta \in [0, 2 \pi)\ , 
\end{align}
where $\P_{ij}$ is the SWAP operator on qudits\footnote{We label the qudits as if they were in a chain for convenience: this geometry plays no role in the proof of semi-universality.} $i$ and $j$, and a single 3-qudit gate, e.g., one of the unitaries 
\begin{align}\label{reflection}
  R_+=\exp({\i\pi\Pi_{\ysub{3}}})\ ,\ \ \text{ or }\ \ \ 
  R_-=\exp({\i\pi\Pi_{\ysub{1,1,1}}})\ .
\end{align}
Here, $\Pi_{\ysub{3}}$ and $\Pi_{\ysub{1,1,1}}$ are, respectively, the Hermitian projectors to the symmetric and anti-symmetric subspaces of the three-qudit Hilbert space $(\mathbb{C}^d)^{\otimes 3}$, which means  $R_+=\mathbb{I}-2\Pi_{\ysub{1,1,1}}$
and $R_-=\mathbb{I}-2\Pi_{\ysub{1,1,1}}$ are reflection unitaries, and are also permutationally invariant gates (see \cref{fig:circuit}). Here we emphasize that $R_\pm$ are not special: almost any single 3-qudit $\SU(d)$-invariant unitary together with 2-qudit unitaries in \cref{eq:gateset} are semi-universal.   \Cref{prop:2local-criteria} provides a simple characterization of 3-qudit gates that are capable of achieving semi-universality. In particular, we find that any 3-qudit gate that cannot be realized with 2-qudit gates is sufficient to make them universal.

It is worth noting that when restricted to a 3-qudit system, 2-qudit $\SU(d)$-invariant gates already achieve semi-universality, and the gate $R_+$ is just a relative phase between the $\SU(d)$ charge sectors. However, interestingly, when acting on 3 qudits in a system with $n>3$ qudits, such gates can drastically change the set of realizable unitaries and make them semi-universal.

\begin{figure}[t]
  \centering
  \includegraphics[width=0.35\textwidth]{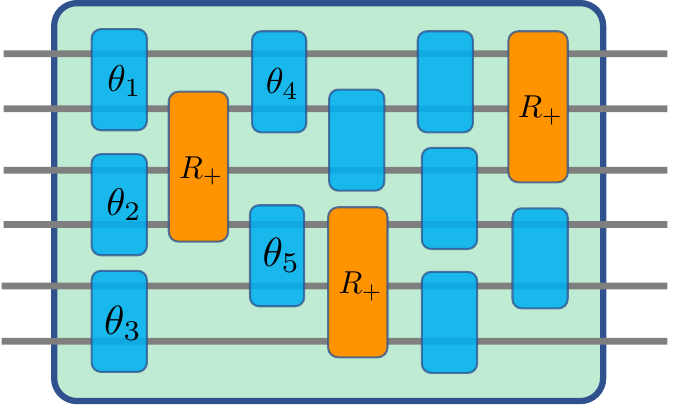}
  \caption{\textbf{Semi-universality with 3-qudit gates.} 
    Any set of $\SU(d)$-invariant gates is called \emph{semi-universal}, if they generate all $\SU(d)$-invariant unitaries, up to possible constraints on the relative phases between sectors with inequivalent irreps of $\SU(d)$. While 2-qudit $\SU(d)$-invariant gates are not semi-universal for $d \geq 3$, we show that amending them with any single generic 3-qudit unitary makes them semi-universal. In this schematic circuit, 2-qudit gates are in the form of \cref{eq:gateset} for arbitrary $\theta$, and $R_+$ is the 3-qudit reflection unitary defined in \cref{reflection}. }
  \label{fig:circuit}
\end{figure}

\Cref{tab:summary} summarizes the results on (semi-)universality from the current paper and \cite{Marvian2024Rotationally, marvian2022quditcircuit}. 
As listed in this table, in addition to the semi-universality of 3-qudit gates, which is shown in \cref{sec:4ex,sec:over}, in this work, we also prove that 
\begin{enumerate}
\item Without ancilla qudits, for $d>2$,  $(n - 1)$-qudit $\SU(d)$-invariant gates are still not sufficient to achieve universality on $n$ qudits (see \cref{sec:imposs}).
\item With at most 8 ancilla qudits, semi-universality can be achieved with 2-qudit $\SU(d)$-invariant gates when $d>2$ (see \cref{lem:ancillauni}). 
\item With at most 11 ancilla qudits, universality can be achieved with 2-qudit $\SU(d)$-invariant gates when $d>2$ (see \cref{cor:anc}).
\end{enumerate}

Therefore, a remarkable (and perhaps, unexpected) corollary of our study of 3-qudit $\SU(d)$-invariant gates is a significantly simpler understanding of the computational universality of 2-qudit gates using ancilla qudits!   This has been recently established in \cite{vanmeter2021universality} by applying an advanced result in the mathematical Lie-algebraic literature by Marin \cite{marin2007algebre}.  (This work characterizes the Lie algebra generated by transpositions as a subalgebra of the group algebra of permutations.)

The semi-universality of 3-qudit gates allows us to characterize the group 
generated by $k$-qudit $\SU(d)$-invariant unitaries on $n$ qudits, denoted by $\mathcal{V}_k^{(n)}$ for $k\ge 3$ (Note that we have suppressed the $d$ dependence to simplify the notation). Recall that according to the general results of \cite{Marvian2022Restrict}, $\mathcal{V}_k^{(n)}$ is a compact connected Lie group. Then, as we discuss in \cref{sec:imposs}, in the regime $n\ge k\gg d\ge 2$, the difference between the dimensions of this group and the subgroup generated by 3-qudit $\SU(d)$-invariant gates is approximately
\begin{align}\label{eq:asym-center}
  \dim\mathcal{V}_k^{(n)}- \dim\mathcal{V}^{(n)}_3\approx \frac{k^{d-1}}{d!(d-1)!} + \mathcal{O}(k^{d-2})\ .
\end{align}
In \cref{fig:asym-rpn} we plot 
\begin{align}\label{eq:define-ratio}
  \rho_{k,d}^{(n)} := \frac{\dim\mathcal{V}_k^{(n)}-\dim \mathcal{V}_3^{(n)}}{\dim\mathcal{V}^{(n)}-\dim \mathcal{V}_3^{(n)}} \approx \Big(\frac{k}{n}\Big)^{d-1} \ , 
\end{align}
for various $d$ together with their asymptotic behavior, where $\mathcal{V}^{(n)}=\mathcal{V}_n^{(n)}$ denotes the group of all $\SU(d)$-invariant unitaries on $n$ qudits. This ratio determines how the dimension of the Lie group of realizable unitaries grows with $k$. In particular, $\rho_{k,d}^{(n)}=1$ means that universality is achieved. Interestingly, when $d\ge 3$ this happens only if $k=n$, i.e., gates act on all qudits in the system.\footnote{On the other hand, in the case of $d=2$, i.e., qubits with SU(2) symmetry, when $n$ is odd, the universality can be achieved with gates acting on $n-1$ qubits.} 
Note that the right-hand side of \cref{eq:asym-center} is independent of $n$ the number of qudits (indeed, according to the general results of \cite{Marvian2022Restrict}, this is a consequence of the fact that the symmetry group $G=\SU(d)$ is connected). 

\begin{figure}[t]
  \centering
  \includegraphics[width=0.48\textwidth]{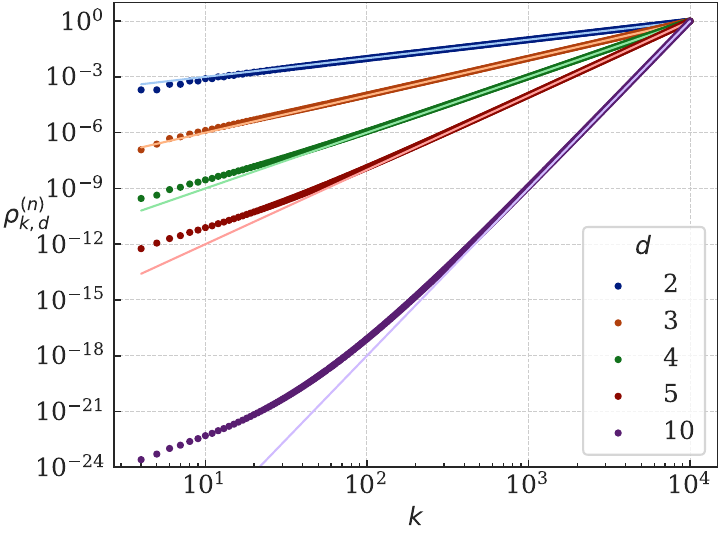}
  \caption{\textbf{Convergence to universality as a function of the locality of gates.} According to the general no-go theorem of \cite{Marvian2022Restrict}, in symmetric quantum circuits with continuous symmetries, without ancilla qudits, universality cannot be achieved with $k$-qudit gates with a fixed $k$. On the other hand, we show that in the case of $\SU(d)$ symmetry, semi-universality is achieved with 3-qudit gates, which means for $k\ge 3$, the only constraints on the realizable unitaries are on the relative phases between sectors with inequivalent irreducible representations of $\SU(d)$. 
    Roughly speaking, the ratio $\rho_{k,d}^{(n)}$ defined in \cref{eq:define-ratio}, describes the fraction of these constraints that vanish with $k$-qudit gates with $k\ge 3$. Universality is achieved when $\rho_{k,d}^{(n)}=1$. Here, we plot this ratio for a system with $n=10^4$ qudits for different values of $d$, as specified in the plot. The dots are $\rho_{k,d}^{(n)}$ and the lines are its asymptotic expression $(k/n)^{d-1}$. The denominators of the ratio in \cref{eq:define-ratio} are $4998$, $8.3\times 10^6$, $7.0\times 10^9$, $3.5\times 10^{12}$ and $7.9\times 10^{23}$ for $d=2,3,4,5,10$, respectively.}
  \label{fig:asym-rpn}
\end{figure}

In \cref{Sec:design} we also discuss the implications of this result on 
the statistical properties of random quantum circuits with $\SU(d)$-invariant gates. As was noted in \cite{marvian2022quditcircuit}, the additional conservation laws, that restrict unitaries realized by 2-qudit circuits, imply that the distribution of unitaries generated by such random circuits is not a 2-design for the Haar distribution over the group of $\SU(d)$-invariant unitaries $\mathcal{V}^{(n)}$ (see \cref{Sec:design} for the definition of $t$-designs). As we show in \cref{Sec:design}, a corollary of the semi-universality of 3-qudit gates is that, assuming the number of qudits is $n> d$, the distribution of unitaries generated by random circuits formed from such gates is a $t$-design up to $t\approx n^2/2$. 

Finally, we note that while the main emphasis of this paper is on semi-universality in the context of symmetric quantum circuits, many of the ideas and techniques developed here are broadly useful in the context of quantum computing and control theory. In particular, \cref{MainLemma}, which provides the necessary and sufficient conditions for semi-universality, and \cref{thm:mulblocks}, which is used for extending controllability from a subspace to the full space, are of independent interest.

\subsection{Outline}

In \cref{sec:semiuni} we formally define semi-universality for an arbitrary unitary symmetry group, and present a number of generally applicable tools. In particular, we provide a necessary and sufficient condition for semi-universality to hold in \cref{MainLemma}, and we describe how using ancillae, semi-universality can be promoted to universality. 

In \cref{sec:semiSU} we specify to the case of $\SU(d)$ symmetry on $d$-dimensional qudits and state \cref{thm:semi-universality}, the semi-universality of 3-qudit $\SU(d)$-invariant unitaries. We also describe one of the main tools used in this paper, namely Schur-Weyl duality. 

To explain the applications of the tools developed in \cref{sec:semiuni}, in \cref{sec:4ex} we present a detailed discussion of the examples of $n = 3$ and $n = 4$ qudits. In particular, we describe in detail how semi-universality fails for 2-qudit gates and prove that it holds for 3-qudit gates using \cref{MainLemma}. This serves as the base case of the induction argument for \cref{thm:semi-universality}, which is proven in \cref{sec:over}.

In \cref{sec:imposs}, we show that, when $d \geq 3$ and without using ancilla, universality on $n$ qudits cannot be achieved without full nonlocal control, i.e. $n$-local gates are required. We consider the use of ancilla qudits for achieving (semi-)universality in \cref{sec:ancilla}. The statistical properties of circuits generated from $\SU(d)$-invariant $3$-local gates are studied in \cref{Sec:design}.

Finally, in \cref{sec:tools}, we prove \cref{MainLemma}. We also describe more general scenarios in which semi-universality does not hold, even when there is subsystem universality on all charge sectors. In particular, in \cref{Mainlemma_gen} we present a characterization of $G$-invariant groups that are subsystem universal.

\begin{table}[ht]
  \centering
  \caption{\textbf{Table of notations.}}
  \begin{tblr}{
      colspec = {Q[c,m] Q[l,m]}, 
      rowsep = 1pt, column{2}={6.3cm}
    }
    \toprule
    Notation & Definition \\
    \midrule
    $\V^G$ & $G$-invariant unitaries \\
    $\mathcal{SV}^G$ & commutator subgroup of $\mathcal{V}^G$ ($G$-invariant unitaries w/o relative phases) \\
    $\V_k^{(n)}$ & group generated by $k$-qudit $\SU(d)$-invariant unitaries on $n$ qudits \\
    $\V^{(n)}=\V_n^{(n)}$ & all $\SU(d)$-invariant unitaries on $n$ qudits \\
    $\mathcal{SV}^{(n)}$ & 
    commutator subgroup of $\mathcal{V}^{(n)}$ ($\SU(d)$- invariant unitaries {w/o} relative phases) \\
    $\mathbb{S}_n$ & symmetric group on $n$ objects \\
    $\mathcal{M}_\lambda$ & multiplicity subsystem (irrep space of $\mathbb{S}_n$ in the case of $\SU(d)$ symmetry) \\
    $\Lambda_{n,d}$ & irreps of $\mathbb{S}_n$ on $n$ qudits  \\
    \bottomrule
  \end{tblr}
  \label{tab:notations}
\end{table}

\section{Semi-universality}\label{sec:semiuni}

In this section, we define semi-universality rigorously for arbitrary symmetry groups, and introduce three powerful lemmas that are later used in \cref{sec:over} to prove the semi-universality of 3-qudit gates. We anticipate that these lemmas will find other applications beyond this result.  

\subsection{Definition}\label{sec:semiundef}

Let $\mathcal{V}^G$ be the set of all $G$-invariant unitaries; that is, $V \in \mathcal{V}^G$ if and only if $[V, U(g)] = 0$ for all $g \in G$, where $U(g)$ is the unitary representation of a finite or compact Lie group $G$. Then, $\mathcal{V}^G$ itself is a compact, connected Lie group \cite{Marvian2022Restrict}. 

In the introduction we (informally) defined the notion of semi-universality. As is explained below, an equivalent and useful definition of semi-universality is the following: a set of $G$-invariant unitaries is called \emph{semi-universal} for $\mathcal{V}^G$, if the subgroup $\mathcal{T}$ generated by them contains the commutator subgroup of $\mathcal{V}^G$, denoted by $\mathcal{SV}^G$. That is,
\be\label{def0}
 \mathcal{T} \supseteq  \mathcal{SV}^G:=[\mathcal{V}^G, \mathcal{V}^G] \ ,
\ee
where for any group $\mathcal{S}$ the commutator subgroup $[\mathcal{S}, \mathcal{S}]$ is generated by the group commutators $W V W^{-1} V^{-1}: W, V \in \mathcal{S}$. As it is more apparent from \cref{rq} below, from a Lie-algebraic perspective, this condition implies that the semi-simple parts of the Lie algebras associated with $\mathcal{V}^G$ and $\mathcal{T}$ are identical, whereas their center can be different (this motivates the name ``semi-universality'').

A useful characterization of (semi-)universality can be obtained by considering the isotypic decomposition of representation $\{U(g): g\in G\}$, namely 
\begin{equation}\label{eq:charge}
  \hilbert = \bigoplus_{\lambda \in \Lambda} \hilbert_\lambda = \bigoplus_{\lambda \in \Lambda} \hilbert[Q]_\lambda \otimes \hilbert[M]_\lambda\ ,
\end{equation}
where the sum is over a set $\Lambda$ of inequivalent irreducible representations (irreps) of $G$, $\hilbert[Q]_\lambda$ is a space carrying the irrep $\lambda$, and $\hilbert[M]_\lambda$ corresponds to the multiplicity of $\lambda$, i.e., $\dim \hilbert[M]_\lambda > 0$ is the multiplicity of $\lambda$ in $\hilbert$. Applying Schur's lemma, we find that, with respect to the decomposition in \cref{eq:charge}, any symmetric unitary $V \in \mathcal{V}^G$ is block-diagonal and takes the form
\begin{equation}
  V = \bigoplus_{\lambda \in \Lambda} V_\lambda = \bigoplus_{\lambda \in \Lambda} (\ident_{\hilbert[Q]_\lambda} \otimes v_\lambda)\ ,
\end{equation}
where $\ident_{\hilbert[Q]_\lambda}$ is the identity operator on $\hilbert[Q]_\lambda$ and $v_\lambda \in \U(\hilbert[M]_\lambda)$ is a unitary acting only on the multiplicity degrees of freedom. Here, $\U(\hilbert[M]_\lambda)$ is the group of unitaries on the Hilbert space $\hilbert[M]_\lambda$, and $\SU(\hilbert[M]_\lambda)$ is its subgroup with determinant one.

It is useful to consider the homomorphisms $\pi_\lambda: \lambda\in \Lambda$ from $\mathcal{V}^G$ to $\U(\mathcal{M}_\lambda)$, defined by $\pi_\lambda(V)=v_\lambda$. Then, we denote the collection of these unitaries $v_\lambda: \lambda\in\Lambda$, as $\pi_{\Lambda} (V) = (v_\lambda)_{\lambda \in \Lambda}$. This defines an isomorphism 
\begin{equation}\label{eq:prodproj}
  \pi_\Lambda : \mathcal{V}^G \to \prod_{\lambda \in \Lambda} \U(\hilbert[M]_\lambda)\ ,
\end{equation}
from the group of all $G$-invariant unitaries to the Cartesian product of unitary groups on the multiplicity spaces. This means the unitary $V$ is uniquely determined by the set $\pi_{\Lambda} (V)$, 
via $V= {\bigoplus}_{\lambda \in \Lambda} \ident_{\hilbert[Q]_\lambda} \otimes \pi_\lambda(V)$.

Recall that $\SU(\hilbert[M]_\lambda)$ is a perfect group, i.e., it is equal to its commutator subgroup, which means $\SU(\hilbert[M]_\lambda) = [\U(\hilbert[M]_\lambda), \U(\hilbert[M]_\lambda)]$. This, in turn, implies 
\begin{equation}
  \mathcal{SV}^G \cong \pi_\Lambda([\mathcal{V}^G, \mathcal{V}^G]) = \prod_{\lambda \in \Lambda} \SU(\hilbert[M]_\lambda)\ .
\end{equation}
In particular, this means for any $G$-invariant unitary $V\in \mathcal{V}^G$, there exists a set of phases $\theta_\lambda\in[0,2\pi)$ and $\widetilde{V}\in \mathcal{SV}^G$, such that $V=\widetilde{V}\sum_\lambda \e^{\i \theta_\lambda} \Pi_\lambda$, demonstrating the equivalence of the above definition with the definition presented in 
the introduction.

Therefore, the definition in \cref{def0} means that group $\mathcal{T}$ is semi-universal if, and only if,  its commutator subgroup is equal to $\mathcal{SV}^G$, i.e.,
\be\label{rq}
[\mathcal{T}, \mathcal{T}] = [\mathcal{V}^G, \mathcal{V}^G] =  \mathcal{SV}^G\ .
\ee
If the group $\mathcal{T} \subseteq \mathcal{V}^G$ is connected and each $\pi_\lambda(\mathcal{T}) = \set{\pi_\lambda(V) \given V \in \mathcal{T}}$ acts irreducibly on $\hilbert[M]_\lambda$, then it follows that the connected component of the identity of $\mathcal{T} \cap \mathcal{SV}^G$ is equal to $[\mathcal{T}, \mathcal{T}]$. See \cref{sec:commconn} for a proof of this statement.

Finally, it is worth noting that a subgroup of $G$-invariant unitaries $\mathcal{T}\subseteq \mathcal{V}^G$ contains $\mathcal{SV}^G$, if and only if it contains the one-parameter family of unitaries $\exp(\i H t): t\in \mathbb{R}$ for all centerless $G$-invariant Hamiltonians, where we say $G$-invariant Hamiltonian $H$ is \emph{centerless} if $\Tr(\Pi_\lambda H)=0$ for all $\lambda\in\Lambda$, or equivalently, if $\Tr(U(g) H)=0$ for $g\in G$.

\subsection{Lemma 1: A simple characterization of semi-universality}\label{sec:nec-suf}

The following lemma is one of our main new tools for studying semi-universality and can potentially have broad applications beyond the context of quantum circuits.

\begin{lemma}\label{MainLemma}
  A subgroup $\mathcal{T}\subseteq \mathcal{V}^G$ of $G$-invariant unitaries contains the commutator subgroup of all $G$-invariant unitaries $\mathcal{SV}^G=[\mathcal{V}^G,\mathcal{V}^G]$ if, and only if, the following two conditions hold:

  \noindent\textbf{\textup{A ({Subsystem universality} in all sectors):}} For any irrep $\lambda\in\Lambda$ the action of $\mathcal{T}$ on the corresponding multiplicity subsystem $\mathcal{M}_{\lambda}$ contains $\SU(\mathcal{M}_{\lambda})$, i.e.,
  $\SU(\mathcal{M}_{\lambda})\subseteq \pi_{\lambda}(\mathcal{T})=\{\pi_{\lambda}(V): V\in \mathcal{T}\}$.

  \noindent\textbf{\textup{B (Pairwise independence):}} For any pair of distinct irreps $\lambda_1, \lambda_2 \in\Lambda$, if $\dim(\mathcal{M}_{\lambda_1}) = \dim(\mathcal{M}_{\lambda_2}) \ge 2$, then there exists a unitary $V\in\mathcal{T}$ such 
  that
  \be\label{cond4}
  |\Tr(\pi_{\lambda_1}(V))|\neq |\Tr(\pi_{\lambda_2}(V))|\ .
  \ee
\end{lemma}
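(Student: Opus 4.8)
The plan is to prove both implications, with essentially all of the content in the ``if'' direction. Throughout I identify $\mathcal{V}^G$ with $\prod_{\lambda\in\Lambda}\U(\mathcal{M}_\lambda)$ via the isomorphism $\pi_\Lambda$ of \cref{eq:prodproj}, so that $\mathcal{SV}^G$ becomes $\prod_\lambda\SU(\mathcal{M}_\lambda)$ and a subgroup $\mathcal{T}$ is described by its projections $\pi_\lambda(\mathcal{T})$ together with the correlations between them. Only the factors with $\dim\mathcal{M}_\lambda\ge 2$ play a role, since $\SU(\mathcal{M}_\lambda)$ is trivial otherwise. For \emph{necessity}, if $\mathcal{T}\supseteq\mathcal{SV}^G$ then projecting onto the $\lambda$-factor immediately gives condition A; and given $\lambda_1,\lambda_2$ with $\dim\mathcal{M}_{\lambda_1}=\dim\mathcal{M}_{\lambda_2}=n\ge 2$, the element $V\in\mathcal{SV}^G\subseteq\mathcal{T}$ whose $\lambda_1$-component is $\operatorname{diag}(\e^{\i\pi/2},\e^{-\i\pi/2},1,\dots,1)$ (trace $n-2$), whose $\lambda_2$-component is $\ident$, and which is trivial elsewhere, satisfies $|\Tr\pi_{\lambda_1}(V)|=n-2<n=|\Tr\pi_{\lambda_2}(V)|$, giving condition B.

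For \emph{sufficiency}, set $\mathcal{T}'=[\mathcal{T},\mathcal{T}]$; since $\mathcal{T}'\subseteq\mathcal{T}$ it suffices to show $\mathcal{T}'=\mathcal{SV}^G$. Because $[\U(\mathcal{M}_\lambda),\U(\mathcal{M}_\lambda)]=\SU(\mathcal{M}_\lambda)$ we have $\mathcal{T}'\subseteq\mathcal{SV}^G$, and because $\SU(\mathcal{M}_\lambda)$ is perfect, condition A yields $\pi_\lambda(\mathcal{T}')=[\pi_\lambda(\mathcal{T}),\pi_\lambda(\mathcal{T})]=\SU(\mathcal{M}_\lambda)$ for every $\lambda$. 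Thus $\mathcal{T}'$ is a subdirect product of the groups $\SU(\mathcal{M}_\lambda)$, and passing to the quotient by scalars, its image $\overline{\mathcal{T}'}$ in $\prod_\lambda\PSU(\mathcal{M}_\lambda)$ is a subdirect product of the \emph{simple} groups $\PSU(\mathcal{M}_\lambda)$. The goal becomes: show every pairwise projection of $\overline{\mathcal{T}'}$ is surjective, for then the standard structure theory of subdirect products of nonabelian simple groups forces $\overline{\mathcal{T}'}=\prod_\lambda\PSU(\mathcal{M}_\lambda)$.

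The crux is to rule out a linked pair. Suppose the projection of $\overline{\mathcal{T}'}$ to $\PSU(\mathcal{M}_{\lambda_1})\times\PSU(\mathcal{M}_{\lambda_2})$ is proper. By simplicity of $\PSU$ and Goursat's lemma it is then the graph of an isomorphism $\phi\colon\PSU(\mathcal{M}_{\lambda_1})\to\PSU(\mathcal{M}_{\lambda_2})$ (in particular $\dim\mathcal{M}_{\lambda_1}=\dim\mathcal{M}_{\lambda_2}=n\ge 2$), so that $\overline{\pi_{\lambda_2}(W)}=\phi(\overline{\pi_{\lambda_1}(W)})$ for all $W\in\mathcal{T}'$, where the bar is the image under $\U(\mathcal{M}_\lambda)\to\PSU(\mathcal{M}_\lambda)$. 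I now propagate this correlation from $\mathcal{T}'$ to all of $\mathcal{T}$: since $\mathcal{T}'$ is normal in $\mathcal{T}$, conjugation by any $V\in\mathcal{T}$ preserves this graph, and—using that $\overline{\pi_{\lambda_1}(W)}$ ranges over all of $\PSU(\mathcal{M}_{\lambda_1})$—one gets $\operatorname{Ad}_{\overline{\pi_{\lambda_2}(V)}}=\phi\circ\operatorname{Ad}_{\overline{\pi_{\lambda_1}(V)}}\circ\phi^{-1}=\operatorname{Ad}_{\phi(\overline{\pi_{\lambda_1}(V)})}$ as automorphisms of $\PSU(\mathcal{M}_{\lambda_2})$. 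As $\PSU$ is centerless, $\operatorname{Ad}$ is faithful, hence $\overline{\pi_{\lambda_2}(V)}=\phi(\overline{\pi_{\lambda_1}(V)})$ for \emph{every} $V\in\mathcal{T}$. Finally, $\phi$ is continuous (automatic continuity of homomorphisms of compact simple Lie groups) and therefore lifts to an isomorphism $\U(\mathcal{M}_{\lambda_1})\to\U(\mathcal{M}_{\lambda_2})$ of the form $g\mapsto u g u^{-1}$ or $g\mapsto u\bar g u^{-1}$; both preserve $|\Tr(\cdot)|$, and the residual scalar ambiguity in lifting from $\PSU$ to $\U$ has modulus one. Hence $|\Tr\pi_{\lambda_2}(V)|=|\Tr\pi_{\lambda_1}(V)|$ for all $V\in\mathcal{T}$, contradicting condition B. So every pairwise projection is surjective.

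Pairwise surjectivity gives $\overline{\mathcal{T}'}=\prod_\lambda\PSU(\mathcal{M}_\lambda)$, i.e. $\mathcal{T}'\cdot\prod_\lambda Z(\SU(\mathcal{M}_\lambda))=\mathcal{SV}^G$. Since the finite group $\prod_\lambda Z(\SU(\mathcal{M}_\lambda))$ is central it drops out of commutators, so, using perfectness of $\mathcal{SV}^G$, $\mathcal{SV}^G=[\mathcal{SV}^G,\mathcal{SV}^G]=[\mathcal{T}',\mathcal{T}']\subseteq\mathcal{T}'$; combined with $\mathcal{T}'\subseteq\mathcal{SV}^G$ this gives $\mathcal{T}'=\mathcal{SV}^G\subseteq\mathcal{T}$. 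The main obstacle is exactly the third paragraph: condition B is a hypothesis about $\mathcal{T}$, whereas the natural subdirect-product analysis lives on the derived subgroup $\mathcal{T}'$, so one must transport the pairwise linking from $\mathcal{T}'$ up to $\mathcal{T}$ (this is where normality of $\mathcal{T}'$ and the centerlessness of $\PSU$ enter), while keeping careful track of the distinction between $\SU$, $\U$, and $\PSU$ and invoking the fact that automorphisms preserve $|\Tr|$.
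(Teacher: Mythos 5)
Your proof is correct, and its skeleton is the same as the paper's: classify the pairwise projections via Goursat's lemma, use the $|\Tr|$ condition to exclude the ``linked'' case, and then pass from pairwise to full independence via Serre's lemma (the paper's \cref{lem:SUgoursat,lem:serre}). Two points of execution differ. First, you quotient by scalars and analyze subdirect products of the simple groups $\PSU(\mathcal{M}_\lambda)$, which makes Goursat's dichotomy immediate but costs a lifting step at the end (which you handle correctly via perfectness and centrality of $\prod_\lambda Z(\SU(\mathcal{M}_\lambda))$); the paper instead states Goursat directly for $\SU(l)\times\SU(l')$, where the graph acquires a $\mathbb{Z}_q$ phase ambiguity that disappears upon taking commutators. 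Second---the more substantive divergence---you bridge the gap between condition \textbf{B} (a statement about $\mathcal{T}$) and the subdirect-product analysis (which lives on $[\mathcal{T},\mathcal{T}]$) by exploiting normality of $[\mathcal{T},\mathcal{T}]$ in $\mathcal{T}$ together with faithfulness of the adjoint action on the centerless group $\PSU$, concluding that the correlating isomorphism extends to all of $\mathcal{T}$ modulo scalars. The paper accomplishes the same bridge with \cref{cor:Ugoursat}, whose proof runs through a classification of normal subgroups of $\U(l)$ (\cref{lem:Unorm}); your $\operatorname{Ad}$-faithfulness argument is a clean, self-contained alternative. Both routes then invoke automatic continuity and the fact that continuous automorphisms of $\SU(l)$ are inner or inner composed with complex conjugation, hence preserve $|\Tr|$, which is exactly what condition \textbf{B} forbids.
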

As we show in \cref{sec:mainproof}, this lemma can be established using Goursat's and Serre's lemmas (see \cref{lem:SUgoursat,lem:serre}, respectively).  Additionally, in \cref{Mainlemma_gen} we present a variant of \cref{MainLemma}, which does not assume condition $\textbf{B}$ holds. In particular, we find the most general form of the subgroups of $G$-invariant unitaries that respect condition $\textbf{A}$, subsystem universality in all sectors. 

In words, condition \textbf{A} means that inside the subspace associated with any irrep $\lambda$, all $G$-invariant unitaries are realizable up to a phase. When irrep $\lambda$ of group $G$ is itself 1D, this means all unitaries inside the subspace $\mathcal{H}_\lambda$ are realizable, a condition that is sometimes called ``subspace controllability" in control theory. However, in general, $\lambda$ is not a 1D irrep, and therefore following the standard terminology in quantum information, we refer to this condition as ``subsystem universality". Condition \textbf{B}, on the other hand, guarantees that for any pair of irreps, the realized unitaries in the corresponding subspaces are independent of each other. 

When condition \textbf{A} holds, condition  \textbf{B} is equivalent to the following, which can therefore replace  it:\\ 

\noindent \textbf{B}': For any pair of distinct irreps $\lambda_1,\lambda_2$, 
if $\dim(\mathcal{M}_{\lambda_1}) = \dim(\mathcal{M}_{\lambda_2}) \ge 2$, then 
there exists a unitary $V\in\mathcal{T}$ 
that acts as the identity operator on one of $\mathcal{M}_{\lambda_1}$ or $\mathcal{M}_{\lambda_2}$, and is not proportional to the identity on the other, {e.g.,}
\begin{align}\label{cond3}
  \pi_{\lambda_1}(V)= \mathbb{I}_{\hilbert[M]_{\lambda_1}} \ , \ \ \ \pi_{\lambda_2}(V)\neq \e^{\i \theta} \mathbb{I}_{\hilbert[M]_{\lambda_2}}\ ,
\end{align}
for any $\e^{\i \theta}$, where $\mathbb{I}_{\hilbert[M]_{\lambda_1}}$ and $\mathbb{I}_{\hilbert[M]_{\lambda_2}}$ are the identity operators on $\mathcal{M}_{\lambda_1}$ and $\mathcal{M}_{\lambda_2}$, respectively.\\

Note that if a unitary $V$ satisfies the condition in \cref{cond3}, then it also satisfies the condition in \cref{cond4}. This can be seen by noting that $|\Tr(\pi_\lambda(V))|=\Tr(\pi_\lambda(\mathbb{I}_{\hilbert[M]_{\lambda}}))$ if and only if all eigenvalues of $\pi_\lambda(V)$ have the same phase, which means in the complex plane they are aligned in the same direction.

If both conditions \textbf{A} and \textbf{B} (or, equivalently, conditions \textbf{A} and \textbf{B}') are satisfied for a pair of irreps $\lambda_1$ and $\lambda_2$, then the joint projection of $\mathcal{T}$ to $\mathcal{M}_{\lambda_1}$ and $\mathcal{M}_{\lambda_2}$ contains $\SU(\mathcal{M}_{\lambda_1})\times \SU(\mathcal{M}_{\lambda_2})$. That is,
\be\label{both}
\pi_{\lambda_1, \lambda_2}(\mathcal{T})=\{(v_{\lambda_1},v_{\lambda_2}): V\in \mathcal{T}\} \supseteq \SU(\mathcal{M}_{\lambda_1})\times \SU(\mathcal{M}_{\lambda_2}) .
\ee
This, in particular, means that the unitary realized in one sector is not dictated by the other, up to possible constraints on the global phases. In \cref{sec:mainproof} we argue that together with Serre's \cref{lem:serre}, \cref{both} implies independence in all sectors, as claimed in \cref{MainLemma}.

\subsection*{Pairwise independence implies full independence}

While the necessity of both conditions \textbf{A} and \textbf{B} is trivial, their sufficiency is far from obvious. Indeed, it is remarkable that according to this lemma, to demonstrate semi-universality one needs to check the independence of the realized unitaries only among pairs of sectors and not, e.g., among 3-tuples of sectors. To see an example of such dependencies, consider the subgroup of U$(1)^{3}$ corresponding to a 3-tuple of phases 
\be\label{ex:sub}
(\e^{\i \theta_1}, \e^{\i \theta_2}, \e^{\i( \theta_1+\theta_2)})\ \ \ : \theta_1, \theta_2\in [0,2\pi) \ .
\ee
Then, any pair of these 3 phases are fully independent of each other. That is, for any 2-tuples the projection of this group is $\U(1)^2$, and yet the overall group is isomorphic to $\U(1)^2$, rather than U$(1)^{3}$. 
However, \cref{MainLemma} implies that this situation cannot happen in the context of semi-universality. As we further explain in \cref{sec:tools}, this is a consequence of the fact that the group $\SU(m)$ is perfect, while $\U(1)$ in the above example is not.

\subsection*{Failure of semi-universality}

As mentioned before, when semi-universality holds, the only constraints on realizable unitaries are constraints on the relative phases between sectors with different charges (irreps) of symmetry. On the other hand, failure of semi-universality can be due to various kinds of constraints. It is useful to recall the numbering system of \cite{marvian2024theoryabelian} for all possible constraints on the universality of $\mathcal{T}$. Namely, 
\begin{enumerate}[align=left]
\item[Type \textbf{I}:] {Constraints on the relative phases between charge sectors. Note that this is a failure of universality rather than semi-universality.} 
\item[Types \textbf{II} and \textbf{III}:] Charge sectors which are not subsystem universal, either because (\textbf{II}) the action of $\mathcal{T}$ on this sector is not irreducible, or (\textbf{III}) the action is irreducible, but only a proper subset of determinant-one unitaries can be achieved.
\item[Type \textbf{IV}:] ``Correlations'' between the determinant-one unitaries in distinct charge sectors. That is, the unitary realized in one sector determines the unitary realized in the other sector (up to a possible global phase). Such correlations can arise independent of type \textbf{II} and \textbf{III} constraints.
\end{enumerate}

\cref{Mainlemma_gen} characterizes type \textbf{IV} constraints assuming type  \textbf{II} and \textbf{III} do not exist. It is worth noting that constraints of type $\textbf{I}$ on a smaller system can result in constraints of other types in a system with more qudits. Indeed, this is exactly what happens in the case of 2-qudit $\SU(d)$-invariant gates. As we show in \cref{sec:2on3}, when restricted to $n=3$ qudits, the 
2-qudit $\SU(d)$-invariant gates are semi-universal, and therefore the only constraints on the realizable unitaries are type $\textbf{I}$ constraints. However, as soon as we go to $n=4$ and more qudits, these constraints on the relative phases will also cause type $\textbf{III}$ and $\textbf{IV}$ constraints. See \cite{marvian2022quditcircuit} for further discussions on restrictions on circuits with 2-qudit $\SU(d)$-invariant gates. 

It is also worth noting that according to the result of \cite{marvian2024theoryabelian}, when the symmetry group $G$ is Abelian and has on-site representation (see \cref{sec:schur-weyl}), semi-universality holds if and only if there are no type \textbf{II} restrictions. That is, types \textbf{III} and \textbf{IV} restrictions do not exist for Abelian symmetries.

\subsection{Lemma 2: Extending controllability from a 3D subspace to the full space}

In \cref{sec:over} we show how \cref{MainLemma} can be applied recursively via an induction argument to establish the semi-universality of 3-qudit $\SU(d)$-invariant gates. To apply such recursion, we need one more tool, which is discussed in this subsection. 

In \cref{sec:genSUprf} we present a series of results, which under different assumptions allow us to extend controllability from a subspace to the full space (see also \cite{marvian2022quditcircuit} for previous examples of such results). Using these results, in \cref{sec:lem2} we prove the following lemma, which is one of the tools needed to establish the semi-universality of $\SU(d)$-invariant 3-qudit gates. 

\begin{lemma}\label{thm:mulblocks}
  Let $\hilbert$ be a finite-dimensional Hilbert space with a subspace $\hilbert_1 \subset \hilbert$ with dimension $\dim(\hilbert_1) \geq 3$. Let $A_i$, $i = 1, \dots, k$, be traceless anti-Hermitian operators on $\hilbert$ and consider the one-parameter groups $\mathcal{A}_i = \set{\e^{t A_i} \given t \in \real}$. If the group
  \begin{equation}
    \mathcal{W} = \angles{\mathcal{A}_i, \SU(\hilbert_1) \, \given \, i = 1, \dots, k}
  \end{equation}
  acts irreducibly on $\hilbert$, then $\mathcal{W} = \SU(\hilbert)$.
\end{lemma}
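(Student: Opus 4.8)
The plan is to pass to Lie algebras and run a maximal-subspace argument, enlarging a controllable subspace one dimension at a time until it fills $\mathcal{H}$. First I would record that $\mathcal{W}$ is a connected immersed Lie subgroup of $\SU(\mathcal{H})$, being generated by the connected subgroups $\mathcal{A}_i$ and $\SU(\mathcal{H}_1)$; let $\mathfrak{w}\subseteq\mathfrak{su}(\mathcal{H})$ be its Lie algebra, namely the Lie subalgebra generated by $A_1,\dots,A_k$ and $\mathfrak{su}(\mathcal{H}_1)$ (the latter embedded as the block acting on $\mathcal{H}_1$ and trivially on $\mathcal{H}_1^\perp$). Since $\mathcal{W}$ is connected, its invariant subspaces coincide with those of $\mathfrak{w}$, so the hypothesis says $\mathfrak{w}$ acts irreducibly; and because a connected subgroup is determined by its subalgebra, it suffices to prove $\mathfrak{w}=\mathfrak{su}(\mathcal{H})$. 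The generators $A_i$ play no further role: the remainder uses only that $\mathfrak{su}(\mathcal{H}_1)\subseteq\mathfrak{w}\subseteq\mathfrak{su}(\mathcal{H})$ and that $\mathfrak{w}$ is irreducible.

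Next I would take $\mathcal{K}\supseteq\mathcal{H}_1$ to be a subspace of \emph{maximal} dimension with $\mathfrak{su}(\mathcal{K})\subseteq\mathfrak{w}$ (so $\dim\mathcal{K}\ge 3$, with $\mathcal{K}=\mathcal{H}_1$ an admissible starting point). If $\mathcal{K}=\mathcal{H}$ we are done, so suppose $\mathcal{K}^\perp\neq 0$. By irreducibility $\mathcal{K}$ is not $\mathfrak{w}$-invariant, hence some $X\in\mathfrak{w}$ has a nonzero off-diagonal block in the splitting $\mathcal{H}=\mathcal{K}\oplus\mathcal{K}^\perp$. I isolate this block inside $\mathfrak{w}$ using the cyclic subgroup $g_k=e^{2\pi i k/m}\,\mathbb{I}_{\mathcal{K}}\oplus\mathbb{I}_{\mathcal{K}^\perp}\in\SU(\mathcal{K})\subseteq\mathcal{W}$ (here $m=\dim\mathcal{K}$ and $k=0,\dots,m-1$): the \emph{real}-linear operator $\operatorname{id}-\tfrac1m\sum_k\operatorname{Ad}_{g_k}$ fixes off-diagonal elements and annihilates block-diagonal ones, so it sends $X$ to its nonzero off-diagonal part $E\in\mathfrak{w}$. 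In representation-theoretic terms this is the fact that, under the adjoint action of $\mathfrak{su}(\mathcal{K})$, the space $\mathfrak{su}(\mathcal{H})$ splits into an off-diagonal part (copies of the standard representation $V=\mathcal{K}$ and its dual) and a block-diagonal part (the irreducible adjoint together with trivial summands), and since $\dim V=m\neq m^2-1$ the standard representation occurs only off-diagonally, so $E$ is simply the off-diagonal isotypic projection of $X$.

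The heart of the proof is the one-dimension extension, and this is where I expect the main difficulty and where the hypothesis $\dim\mathcal{H}_1\ge 3$ is essential. For $\dim\mathcal{K}\ge 3$ the standard representation $V$ is of \emph{complex type}, so its realification $V_{\mathbb{R}}$ is real-irreducible with $\operatorname{End}_{\mathfrak{su}(\mathcal{K})}(V_{\mathbb{R}})=\mathbb{C}$. Consequently the off-diagonal space, isomorphic to $V_{\mathbb{R}}^{\oplus(n-m)}$, has its submodules classified by $\mathbb{C}$-subspaces of the multiplicity space $\overline{\mathcal{K}^\perp}$; therefore the submodule generated by $E\neq 0$ contains $\{\,|v\rangle\langle w|-|w\rangle\langle v|:v\in\mathcal{K}\,\}$ for some fixed nonzero $w\in\mathcal{K}^\perp$. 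Since this submodule is closed under the complex structure ($v\mapsto iv$), these elements span \emph{all} anti-Hermitian operators coupling $\mathcal{K}$ to $\mathbb{C}w$; taking their mutual brackets then produces the one remaining Cartan direction, so $\mathfrak{su}(\mathcal{K}\oplus\mathbb{C}w)\subseteq\mathfrak{w}$. As $\dim(\mathcal{K}\oplus\mathbb{C}w)=\dim\mathcal{K}+1$, this contradicts maximality, forcing $\mathcal{K}=\mathcal{H}$, whence $\mathfrak{w}=\mathfrak{su}(\mathcal{H})$ and $\mathcal{W}=\SU(\mathcal{H})$.

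The delicate point is thus isolating a single new direction $w$ from $E$, which hinges on $\operatorname{End}(V_{\mathbb{R}})=\mathbb{C}$. This is exactly what fails for $\dim\mathcal{K}=2$: the standard $\SU(2)$ representation is \emph{quaternionic}, $\operatorname{End}(V_{\mathbb{R}})=\mathbb{H}$, and the generated submodule need not be a single-direction copy, so the extension can stall. This quaternionic structure is the representation-theoretic origin of the symplectic subgroups that obstruct full universality when $d=2$, matching the qualitatively different $\SU(2)$ behavior noted earlier in the paper; it is why the statement genuinely requires $\dim\mathcal{H}_1\ge 3$, ensuring that every subspace $\mathcal{K}\supseteq\mathcal{H}_1$ encountered in the induction stays of complex type.
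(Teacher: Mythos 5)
Your proof is correct, and it follows the same overall strategy as the paper's: starting from $\hilbert_1$, use irreducibility to produce a Lie-algebra element coupling the controlled block to its complement, isolate its off-diagonal part, and use $\SU$ of the controlled block to upgrade that single element to a full off-diagonal coupling, contradicting maximality. The paper phrases this as an induction that grows $\hilbert'$ to $\hilbert' \oplus \Pi_\perp A \hilbert'$ in one step rather than one dimension at a time, which is an immaterial difference. Two points of technique differ and are worth recording. First, your isolation of the off-diagonal block by averaging $\operatorname{Ad}_{g_k}$ over the center $\mathbb{Z}_m \subset \SU(\mathcal{K})$ is a clean finite-group instance of \cref{lem:haar}; the paper instead offers three alternatives in \cref{sec:strat} (conjugation by a single element $D_{U_i}$, the full Haar average $\mathcal{I}-\mathcal{E}_i$, or weighting by a characteristic function). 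Second, and more substantively, your key step --- that the real cyclic $\su(\mathcal{K})$-submodule generated by a nonzero off-diagonal element contains a full copy of the coupling to some line $\complex w \subseteq \mathcal{K}^\perp$, because $V_\mathbb{R}$ is real-irreducible with commutant $\complex$ when $\dim\mathcal{K}\ge 3$ --- is exactly equivalent to \cref{lem:iso-oneblock}, whose proof rests on $\operatorname{span}_\real \SU(m) = \operatorname{span}_\complex \SU(m) = \hilbert[L](\complex^m)$ for $m\ge 3$ (\cref{eq:d3span}); by Jacobson density these are the same statement, and both fail at $m=2$ precisely because $\operatorname{span}_\real\SU(2)\cong\mathbb{H}$, i.e., the defining representation is quaternionic. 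Your diagnosis of the symplectic obstruction in two dimensions matches \cref{rm:2dim} and the discussion in \cref{sec:spblocks}, and your final bracket computation producing $\su(\mathcal{K}\oplus\complex w)$ is the content of \cref{lem:twoblockoff}.
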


\begin{remark}\label{rem1}
  Indeed, a similar result holds when $\dim(\hilbert_1) =2$ and $\dim(\hilbert)$ is odd. We present the proof of this for the special case of $\dim(\hilbert)=3$ in \cref{app:twoblocks}, and postpone the proof of the general case to \cite{liu2024control}. 
\end{remark}
Furthermore, in 
\cite{liu2024control}, we also show that a variant of this theorem applies to the case when $\dim(\hilbert_1)=2$ and $\dim \hilbert$ is even. However, in that case, the realized group $\mathcal{W}$ can be either the full $\SU(\hilbert)$ or a subgroup isomorphic to the symplectic group $\Sp(\hilbert)$ (see \cite{liu2024control} for further discussions).

\subsection{Lemma 3: From semi-universality to universality with 3 ancillae}

So far, we have not made any assumptions about the structure of the Hilbert space $\mathcal{H}$ and the unitary representation of symmetry $G$ on this space. For many applications in physics and quantum computing, we are interested in the scenarios where $\mathcal{H}$ is the Hilbert space of $n$ identical qudits, i.e., $\mathcal{H}=(\mathbb{C}^d)^{\otimes n}$. We are also often interested in the case where $U(g)$ is an ``on-site" representation of the symmetry, such that $U(g)=u(g)^{\otimes n}$, where $u(g): g\in G$ acts on a single qudit $\mathbb{C}^d$ (this is often called ``global rotations'' on the system). However, it is worth noting that the above lemmas apply to other cases, e.g., when $G$ is the permutation group $\mathbb{S}_n$.

Then, under the assumption that the representation of symmetry is ``on-site'', it turns out that one can use ancilla qudits to elevate semi-universality to universality, as defined in the following. 
Consider a subset of $G$-invariant realizable unitaries $\mathcal{T}$ on $n+c$ qudits and a fixed  state $\ket{\eta} \in (\complex^d)^{\otimes c}$ of $c$ ancilla qudits. Then, we say unitary $V \in \U((\complex^d)^{\otimes n})$ on $n$ qudits is realizable with $c$ ancilla qudits, if there exists a unitary $\widetilde{V}\in \mathcal{T}$ such that \begin{equation}\label{eq:cat}
  \widetilde{V}(\ket{\psi} \otimes \ket{\eta}) = (V \ket{\psi}) \otimes \ket{\eta} 
\end{equation}
for all $\ket{\psi} \in (\complex^d)^{\otimes n}$.

Recent work in \cite{marvian2024theoryabelian} shows that in the case of Abelian groups, such as $\U(1)$, universality can be achieved with a single ancilla qudit. Furthermore, in \cite{Marvian2024Rotationally} we showed that in the case of qubits with $\SU(2)$ symmetry, universality can be achieved using 2 ancilla qubits, whereas 1 ancilla qubit is not sufficient in that case. Based on these previous observations, one may expect that the number of required ancillae may grow with the size of the group (e.g., with $d$ in the case of $\SU(d)$ symmetry). However, as we show in \cref{sec:ancilla}, this is not the case: 3 ancilla qudits are sufficient to achieve universality, provided that semi-universality holds.
\begin{lemma}\label{prop:ufromsemi}
  Let $G$ be an arbitrary symmetry group with on-site representation on a system with $n+3$ qudits (i.e., it acts via $g \mapsto u(g)^{\otimes (n+3)}$). Suppose a group $\mathcal{T}$ is semi-universal, i.e., it is a subgroup of $G$-invariant unitaries that contains $\mathcal{SV}^G$. Then, using 3 ancilla qudits, we can realize any $G$-invariant unitary on $n$ qudits using unitaries in $\mathcal{T}$, as defined in \cref{eq:cat}. 
\end{lemma}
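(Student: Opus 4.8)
The plan is to prove the statement for the smallest semi-universal group, namely $\mathcal{T}=\mathcal{SV}^G$ on the $n+3$ qudits; since any unitary realizable with $\mathcal{SV}^G$ is realizable with every larger $\mathcal{T}\supseteq\mathcal{SV}^G$, this loses no generality. Writing $(\mathbb{C}^d)^{\otimes(n+3)}=\bigoplus_\nu \mathcal{Q}_\nu\otimes\mathcal{M}_\nu$ for the isotypic decomposition of the combined system, recall from \cref{eq:prodproj} that $\mathcal{SV}^G\cong\prod_\nu\SU(\mathcal{M}_\nu)$. Thus realizing a target $V\in\mathcal{V}^{(n)}$ with an ancilla state $\ket{\eta}$ amounts to producing a $G$-invariant $W$ with $\det\pi_\nu(W)=1$ for all $\nu$ that preserves the code subspace $\mathcal{C}=\mathcal{H}^{(n)}\otimes\ket{\eta}$ and restricts to $V$ on it. Writing $V=\bigoplus_\mu \ident_{\mathcal{Q}_\mu}\otimes v_\mu$ over the $n$-qudit sectors $\mu$, I split $v_\mu=e^{\i\theta_\mu}\tilde v_\mu$ with $\tilde v_\mu\in\SU(\mathcal{M}^{(n)}_\mu)$, so $V=\widetilde V_0\, P$ with $\widetilde V_0\in\mathcal{SV}^{(n)}$ the determinant-one part and $P=\sum_\mu e^{\i\theta_\mu}\Pi_\mu^{(n)}$ the relative-phase part; it suffices to realize each factor and compose.

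First I would dispose of the determinant-one part, which needs no special ancilla: $\widetilde V_0\otimes\ident^{(3)}$ acts as $\widetilde V_0$ on $\mathcal{C}$ for any $\ket{\eta}$, and it lies in $\mathcal{SV}^G$. Indeed, in each combined sector the multiplicity space decomposes as $\mathcal{M}_\nu=\bigoplus_{\mu,\kappa}\mathbb{C}^{c_{\mu\kappa}^\nu}\otimes\mathcal{M}^{(n)}_\mu\otimes\mathcal{M}^{(3)}_\kappa$ (with $c_{\mu\kappa}^\nu$ the Clebsch--Gordan multiplicities and $\kappa$ the ancilla sectors), on which $\widetilde V_0\otimes\ident$ acts as $\bigoplus_{\mu,\kappa}\ident\otimes\tilde v_\mu\otimes\ident$, of determinant $\prod_{\mu,\kappa}(\det\tilde v_\mu)^{c_{\mu\kappa}^\nu\dim\mathcal{M}^{(3)}_\kappa}=1$. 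Hence the whole problem reduces to realizing the relative phases $P=e^{\i H_0}$, $H_0=\sum_\mu\theta_\mu\Pi_\mu^{(n)}$, for arbitrary reals $\theta_\mu$.

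For the phases I would seek a $G$-invariant $H=\bigoplus_\nu\ident_{\mathcal{Q}_\nu}\otimes h_\nu$ with each $h_\nu$ Hermitian and $\Tr h_\nu=0$ (so that $\det e^{\i h_\nu}=e^{\i\Tr h_\nu}=1$, i.e. $e^{\i H}\in\prod_\nu\SU(\mathcal{M}_\nu)=\mathcal{SV}^G$), chosen so that $e^{\i H}$ acts as $P$ on $\mathcal{C}$. Let $S_\nu^{(\mu)}\subseteq\mathcal{M}_\nu$ be the subspace spanned by the $\mathcal{M}_\nu$-components of $\Pi_\nu(\mathcal{H}^{(n)}_\mu\otimes\ket{\eta})$. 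The decomposition of $\mathcal{M}_\nu$ above is orthogonal and groups its summands by the $n$-qudit label $\mu$, and $S_\nu^{(\mu)}$ sits in the $\mu$-block; consequently the $S_\nu^{(\mu)}$ are \emph{automatically} mutually orthogonal across $\mu$. This is the key simplification: I may set $h_\nu$ equal to the scalar $\theta_\mu$ on each $S_\nu^{(\mu)}$ without conflict, and one checks directly that the resulting $H$ satisfies $H(\ket{\psi}\otimes\ket{\eta})=(H_0\ket{\psi})\otimes\ket{\eta}$. The remaining freedom of $h_\nu$ on the complement of $\bigoplus_\mu S_\nu^{(\mu)}$ must absorb the trace $\sum_\mu\theta_\mu\dim S_\nu^{(\mu)}$, which is possible for all $\{\theta_\mu\}$ exactly when every combined sector reached by the code is \emph{non-saturated}, $\dim\big(\bigoplus_\mu S_\nu^{(\mu)}\big)<\dim\mathcal{M}_\nu$.

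The main obstacle is therefore to exhibit an ancilla state on exactly three qudits for which non-saturation holds in every reached sector, and this is where the number $3$ enters. The point is that $\mathbb{S}_3$, unlike $\mathbb{S}_2$, has a two-dimensional irreducible representation (the standard representation); since $d\ge 2$ gives $S^{(2,1)}(\mathbb{C}^d)\neq 0$, the associated $G\times\mathbb{S}_3$-component of $(\mathbb{C}^d)^{\otimes 3}$ is nonzero, and hence there is an ancilla sector $\kappa_0$ with multiplicity $\dim\mathcal{M}^{(3)}_{\kappa_0}\ge 2$. I would take $\ket{\eta}=\ket{a}\otimes\ket{b}$ a Schmidt-rank-one vector in the $\kappa_0$-isotypic component, with $\ket{a}\in\mathcal{Q}_{\kappa_0}$ and $\ket{b}\in\mathcal{M}^{(3)}_{\kappa_0}$, so its support on $\mathcal{M}^{(3)}_{\kappa_0}$ is a proper $1$-dimensional subspace. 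Then $S_\nu^{(\mu)}\subseteq\mathbb{C}^{c_{\mu\kappa_0}^\nu}\otimes\mathcal{M}^{(n)}_\mu\otimes\mathrm{span}(\ket{b})$, whose dimension is strictly smaller than that of the $\kappa_0$-part of the $\mu$-block because $1<\dim\mathcal{M}^{(3)}_{\kappa_0}$; summing over $\mu$ yields non-saturation in every reached $\nu$. This produces the desired $H$, and combined with the determinant-one part it realizes $V$ with three ancilla qudits. That $\mathbb{S}_2$ has only one-dimensional irreps—so two ancillas need not supply any multiplicity-$\ge 2$ sector—is precisely why the construction genuinely requires three.
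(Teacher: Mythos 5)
Your proof is correct, and although it rests on the same structural feature of three ancilla qudits as the paper's --- namely that $(\complex^d)^{\otimes 3}$ contains a sector whose multiplicity space is at least two-dimensional, inherited from the $\mathbb{S}_3$-irrep $\ydiag{2,1}$ --- the route is genuinely different. The paper never splits $V$ into a determinant-one part and relative phases. Instead, for an \emph{arbitrary} $G$-invariant Hamiltonian $H$ on $n$ qudits it forms the single operator $\widetilde{H}=H\otimes(\ident_{\hilbert[Q]_{\sysub{2,1}}}\otimes Z_{\ysub{2,1}})$ with $Z_{\ysub{2,1}}$ traceless Hermitian on $\hilbert[M]_{\ysub{2,1}}\cong\complex^2$, notes that $\Tr\bigl(u(g)^{\otimes(n+3)}\widetilde{H}\bigr)$ factorizes and vanishes because $\Tr Z_{\ysub{2,1}}=0$, so that $\widetilde{H}$ is centerless and $\e^{\i\widetilde{H}t}\in\mathcal{SV}^G$, and then prepares the ancilla in an eigenstate of $\ident\otimes Z_{\ysub{2,1}}$ with nonzero eigenvalue $\alpha$ to obtain $\e^{\i H\alpha t}$ on the system in one stroke. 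You replace this one-line centerless Hamiltonian with a sector-by-sector assembly: you dispose of the $\SU$ part for free, reduce to the relative phases, use the orthogonality of the supports $S_\nu^{(\mu)}$ across $\mu$ inside $\mathcal{M}_\nu=\bigoplus_{\mu,\kappa}\complex^{c^\nu_{\mu\kappa}}\otimes\mathcal{M}_\mu\otimes\mathcal{M}^{(3)}_\kappa$ to prescribe the eigenvalues $\theta_\mu$, and cancel the trace on the unsaturated complement. Your non-saturation condition, guaranteed by a rank-one ancilla vector inside a multiplicity-$\geq 2$ sector, plays exactly the role of the paper's eigenstate of $Z_{\ysub{2,1}}$ (the paper's $\ket{\eta_1}$ is such a product vector). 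What the paper's version buys is economy and a stronger intermediate statement --- every $G$-invariant evolution, not only the phase part, is generated by one explicit centerless Hamiltonian --- while your version is somewhat more self-contained (it certifies $\det \e^{\i h_\nu}=1$ directly rather than invoking the centerless criterion as a black box) and makes explicit exactly which property of the ancilla state enables the trace absorption, and hence why two ancillas, whose symmetric group has only one-dimensional irreps, would not automatically suffice.
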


For instance, as we further explain in \cref{sec:uni3}, the state $|\eta\rangle$ of ancilla can be chosen to be either of
\begin{align}
  \ket{\eta_1} &= \frac{1}{2 \sqrt{3}}(2 \ket{100} + (\sqrt{3} - 1) \ket{010} - (1 + \sqrt{3}) \ket{001})\ ,\nonumber\\ \ket{\eta_2} &= \frac{1}{\sqrt{2}}(\ket{01} - \ket{10}) \otimes \ket{0}\ ,
\end{align}
where $|0\rangle$ and $|1\rangle$ are orthonormal states.


\section{Semi-universality for \texorpdfstring{$\bm{{\SU(d)}}$}{SU(d)} symmetry}\label{sec:semiSU}

Next, we apply the tools discussed in the previous section to the important case $G = \SU(d)$, with the on-site representation of this symmetry on qudits.

\subsection{Global \texorpdfstring{$\SU(d)$}{SU(d)} symmetry and \texorpdfstring{$\SU(d)$}{SU(d)}-invariant unitaries }\label{sec:schur-weyl}

Consider a quantum system composed of $n$ qudits $\complex^d$, with local dimension $d \geq 2$. There is a natural representation of $\SU(d)$ on $(\complex^d)^{\otimes n}$, where the single-qudit unitary $u\in \SU(d)$ acts simultaneously on each qudit, $u \mapsto u^{\otimes n}$, corresponding to a global ``rotation'' of the system. In this case, the isotypic decomposition in \cref{eq:charge} takes the form
\begin{equation}\label{eq:schur-weyl}
  (\complex^d)^{\otimes n} \cong \bigoplus_{\lambda \in \Lambda_{n, d}} \hilbert_\lambda = \bigoplus_{\lambda \in \Lambda_{n, d}} \hilbert[Q]_\lambda \otimes \hilbert[M]_\lambda,
\end{equation}
where $\Lambda_{n, d}$ labels the inequivalent irreps of $\SU(d)$ which show up in $(\complex^d)^{\otimes n}$.

The elements of the set $\Lambda_{n, d}$ can be labeled by Young diagrams. Namely, elements of $\Lambda_{n, d}$ are in one-to-one correspondence with Young diagrams with $n$ boxes and $\leq d$ rows, satisfying the property that the number of boxes in each row is non-increasing. However, 
our proof of semi-universality of symmetric 3-qudit gates does not require the manipulation of Young diagrams as long as one accepts certain basic facts about the representation theory of the symmetric group, which are reviewed in \cref{sec:symfacts}. Any reader unfamiliar with Young diagrams may consider them an elaborate labeling scheme, and nothing more, for the purposes of this proof (we provide some details for the interested reader in \cref{fig:young}). 

Let $\mathcal{V}_k^{(n)} \subset \U((\complex^d)^{\otimes n})$ be the group generated by $k$-local $\SU(d)$-invariant unitaries, i.e., $\SU(d)$-invariant unitaries that can be written as ${V}'\otimes \mathbb{I}^{\otimes (n-k)}$ up to a permutation of qudits, where ${V}'$ acts on $k$ qudits. For each $k$, this is a compact, connected Lie group \cite{Marvian2022Restrict}. With the notation of \cref{sec:semiundef}, we have the equality 
\begin{equation}
  \mathcal{V}_n^{(n)} = \mathcal{V}^{\SU(d)} \cong \prod_{\lambda \in \Lambda_{n,d}} \U(\hilbert[M]_\lambda).
\end{equation}
Then, the commutator subgroup of $\mathcal{V}^{(n)}_n$, denoted as $\mathcal{SV}^{(n)}_n$, is isomorphic to $\prod_{\lambda \in \Lambda_{n,d}} \SU(\hilbert[M]_\lambda)$. We will sometimes use the notation $\mathcal{V}^{(n)} = \mathcal{V}_n^{(n)}$, and $\mathcal{SV}^{(n)} = \mathcal{SV}_n^{(n)}$. Similarly, for $k$-qudit gates we define
  \be
  \mathcal{SV}^{(n)}_k=[\mathcal{V}^{(n)}_k,\mathcal{V}^{(n)}_k]\ .
  \ee
Using these definitions, we can now present the formal statement of our result on the semi-universality of 3-qudit gates. 

\begin{theorem}\label{thm:semi-universality}
  3-qudit $\SU(d)$-invariant gates are semi-universal on arbitrary $n$ qudits. That is, 
  \begin{equation}\label{eq:semi-universality}
    [\mathcal{V}_3^{(n)}, \mathcal{V}_3^{(n)}] = \mathcal{SV}^{(n)} \cong \prod_{\lambda \in \Lambda_{n, d}} \SU(\hilbert[M]_\lambda).
  \end{equation}
  Specifically, 2-qudit $\SU(d)$-invariant gates together with any single 3-qudit gate that is not realizable with 2-qudit gates, such as gates $R_+$ or $R_-$, defined in \cref{reflection}, 
 are semi-universal.
\end{theorem}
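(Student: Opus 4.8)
The plan is to deduce the identity $[\mathcal{V}_3^{(n)},\mathcal{V}_3^{(n)}]=\mathcal{SV}^{(n)}$ from \cref{MainLemma}, applied with $G=\SU(d)$ and $\mathcal{T}=\mathcal{V}_3^{(n)}$. By Schur--Weyl duality the multiplicity space $\mathcal{M}_\lambda$ is the irreducible $\mathbb{S}_n$-module labeled by $\lambda$; under $\pi_\lambda$ the gate $\exp(\i\theta\P_{ij})$ acts through the image of the transposition $(ij)$, while the extra gate $R_+=\mathbb{I}-2\Pi_{\ysub{3}}=\mathbb{I}-\tfrac13\sum_{\sigma\in\mathbb{S}_3}\sigma$ acts through the image of a genuine (non-permutation) element of the group algebra. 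The whole problem thus becomes a question about the symmetric group: condition \textbf{A} asks that the Lie group generated by these images contain $\SU(\mathcal{M}_\lambda)$ for every $\lambda\in\Lambda_{n,d}$, and condition \textbf{B} asks that some realizable unitary separate the traces of any two sectors of equal multiplicity dimension $\ge2$.

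I would establish \textbf{A} and \textbf{B} by induction on $n$ (carried out in \cref{sec:over}), treating $n=3$ and $n=4$ directly in \cref{sec:4ex}: for $n=3$ the $2$-qudit gates are already semi-universal, so $R_+$ only adjusts a relative phase, while $n=4$ is checked by hand using the small irreps of $\mathbb{S}_4$. For the inductive step of \textbf{A} I would embed $\mathbb{S}_{n-1}\subset\mathbb{S}_n$ as the permutations fixing the last qudit and apply the induction hypothesis to the first $n-1$ qudits, obtaining $\prod_\mu\SU(\mathcal{M}^{(n-1)}_\mu)$. Because the branching rule for symmetric groups is multiplicity-free, $\mathcal{M}_\lambda\big|_{\mathbb{S}_{n-1}}\cong\bigoplus_{\mu\nearrow\lambda}\mathcal{M}^{(n-1)}_\mu$ with each constituent appearing once, so the induction hypothesis supplies $\SU(\mathcal{M}^{(n-1)}_{\mu})$ acting on a single branching block of $\mathcal{M}_\lambda$ and trivially on the others. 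I would then invoke \cref{thm:mulblocks} with $\hilbert_1$ a branching block of dimension $\ge3$ and with the one-parameter groups generated by (projections of) transpositions involving the last qudit together with $R_+$; irreducibility of the resulting group on $\mathcal{M}_\lambda$ is automatic, since the transpositions already generate all of $\mathbb{S}_n$, which acts irreducibly on $\mathcal{M}_\lambda$. \cref{thm:mulblocks} then upgrades the single $\SU$ block to the full $\SU(\mathcal{M}_\lambda)$.

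For condition \textbf{B} the decisive point is that permutation gates alone cannot separate every pair of equal-dimension sectors: conjugate partitions $\lambda$ and $\lambda^{T}$ (both lying in $\Lambda_{n,d}$ precisely when $\lambda$ fits in a $d\times d$ square) obey $\chi_{\lambda^{T}}(\sigma)=\sgn(\sigma)\,\chi_\lambda(\sigma)$, so that $\dim\mathcal{M}_\lambda=\dim\mathcal{M}_{\lambda^{T}}$ and $|\Tr\pi_\lambda(\sigma)|=|\Tr\pi_{\lambda^{T}}(\sigma)|$ for \emph{every} permutation $\sigma$. This is exactly why a non-permutation gate is indispensable: one computes $\Tr\pi_\lambda(R_+)=\tfrac23\dim\mathcal{M}_\lambda-\chi_\lambda((12))-\tfrac23\chi_\lambda((123))$, whence $\Tr\pi_\lambda(R_+)$ and $\Tr\pi_{\lambda^{T}}(R_+)$ differ by $2\chi_\lambda((12))$, so $R_+$ separates these otherwise indistinguishable pairs, while transposition characters dispatch the remaining pairs. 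The same mechanism gives the genericity claim: a $3$-qudit gate fails to be realizable by $2$-qudit gates exactly when it carries a nontrivial relative phase between the $\ysub{3}$ and $\ysub{1,1,1}$ sectors of $\mathbb{S}_3$, and \cref{prop:2local-criteria} shows that any such gate simultaneously supplies the structure needed to break the $2$-qudit constraints in \textbf{A} and the trace separation needed in \textbf{B}.

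The step I expect to be the main obstacle is condition \textbf{A} in the inductive step: guaranteeing that \cref{thm:mulblocks} produces the full $\SU(\mathcal{M}_\lambda)$ rather than a proper (e.g.\ symplectic) subgroup — that is, verifying that the free-fermion--type conservation laws of the $2$-qudit case are genuinely broken by the added $3$-qudit gate — and handling the low-dimensional multiplicity spaces. When every branching block of $\mathcal{M}_\lambda$ has dimension $\le2$, the hypothesis $\dim\hilbert_1\ge3$ of \cref{thm:mulblocks} fails, and one must instead use the $\dim\hilbert_1=2$ variant of \cref{rem1} (and the companion results of \cite{liu2024control}) or absorb these finitely many small partitions into the base cases. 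Confirming that these exceptional $\lambda$ are all covered, and that the connecting generators always yield an irreducible — hence full-$\SU$ — action, is the delicate part of the argument.
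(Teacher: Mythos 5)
Your overall architecture — \cref{MainLemma} plus induction on $n$, the multiplicity-free branching rule, and \cref{thm:mulblocks} to promote a single $\SU$ block to $\SU(\mathcal{M}_\lambda)$ — is the same as the paper's for condition \textbf{A}. But the issue you flag as "the delicate part" (what happens when every branching block of $\mathcal{M}_\lambda$ has dimension $\le 2$) is not something you resolve, and the paper's resolution is the actual content of the inductive step: its \ref{fact:1} shows that for $m\ge 4$ \emph{every} non-one-dimensional $\lambda\in\Lambda_{m+1,d}$ has a branching constituent of dimension $\ge 3$, so there are no exceptional partitions to "absorb" once the induction is started at $m=4$ rather than $m=3$ (this is exactly why the $n=4$ case must be done by hand, using the $\dim\hilbert_1=2$ variant of \cref{rem1} only there). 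Without this fact, your induction as stated does not close. A smaller but real technical point: \cref{thm:mulblocks} requires \emph{traceless} anti-Hermitian generators, and $\i\P_{ij}$ is not traceless on $\mathcal{M}_\lambda$; the paper instead feeds in the differences $\i(\P_{ij}-\P_{kl})$, which \cref{lem-2-local} shows lie in $\mathcal{SV}_2^{(n)}$ — this also matters for keeping the generated group inside $\mathcal{SV}^{(m+1)}$.

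For condition \textbf{B} your route is genuinely different from the paper's, and it has a gap. You verify \textbf{B} by explicit trace computations: conjugate pairs $\lambda,\lambda^{T}$ are separated by $\Tr\pi_\lambda(R_+)-\Tr\pi_{\lambda^{T}}(R_+)=2\chi_\lambda((12))$, and "transposition characters dispatch the remaining pairs." Neither claim is justified: the first fails whenever $\chi_\lambda((12))=0$ (which happens for non-self-conjugate partitions with vanishing content sum), and the second asserts that any two non-conjugate irreps of equal dimension satisfy $|\chi_\lambda((12))|\neq|\chi_{\lambda'}((12))|$, which is a strong character-theoretic statement you would need to prove and which is not true in general. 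The paper avoids character values entirely: it verifies the equivalent condition \textbf{B}' inductively, using \ref{fact:2} to find a branching constituent $\gamma$ with $\dim\mathcal{M}_\gamma\ge 2$ lying in $\Gamma_\lambda$ but not $\Gamma_{\lambda'}$, and then takes $V\otimes\ident$ with $V\in\mathcal{SV}^{(m)}$ acting nontrivially only in $\mathcal{M}_\gamma$ — such a $V$ is the identity on $\mathcal{M}_{\lambda'}$ and not a phase on $\mathcal{M}_\lambda$. This uses the already-established subsystem universality on $m$ qudits rather than properties of specific gates, and is what makes the induction go through uniformly. Your genericity claim also misstates \cref{prop:2local-criteria}: the criterion is $\det v_{\ysub{2,1}}=(\det v_{\ysub{3}})(\det v_{\ysub{1,1,1}})$, a relation among all three sectors, not just a relative phase between $\ysub{3}$ and $\ysub{1,1,1}$.
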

See \cref{prop:2local-criteria} for a characterization of the 3-qudit gates that cannot be realized with 2-qudit $\SU(d)$-invariant unitaries.

In \cref{sec:2on3}, we show that in the special case of $d=2$, i.e., qubits with $\SU(2)$ symmetry, all 3-qudit SU(2)-invariant unitaries can be realized with 2-qudit SU(2)-invariant unitaries (which is not true for $d\ge 3$). Therefore, we have the following immediate corollary of this result, which was also previously established in \cite{marvian2022quditcircuit}.
\begin{corollary}
  For qubit systems with $\SU(2)$ symmetry, 2-qubit $\SU(2)$-invariant unitaries are semi-universal.
\end{corollary}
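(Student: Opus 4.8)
The plan is to obtain the corollary directly from \cref{thm:semi-universality} (applied with $d=2$) together with the special fact that, for qubits, the groups generated by $2$- and $3$-qubit $\SU(2)$-invariant unitaries coincide. Since \cref{thm:semi-universality} already gives $[\mathcal{V}_3^{(n)},\mathcal{V}_3^{(n)}]=\mathcal{SV}^{(n)}$, it suffices to prove that $\mathcal{V}_2^{(n)}=\mathcal{V}_3^{(n)}$ when $d=2$; taking commutator subgroups of both sides then yields the claim. Because $\mathcal{V}_2^{(n)}$ and $\mathcal{V}_3^{(n)}$ are compact connected Lie groups \cite{Marvian2022Restrict} and the inclusion $\mathcal{V}_2^{(n)}\subseteq\mathcal{V}_3^{(n)}$ is automatic, I would reduce the desired equality to an equality of Lie algebras $\mathfrak{g}_2=\mathfrak{g}_3$, where $\mathfrak{g}_k$ denotes the real Lie algebra generated by the anti-Hermitian $k$-qubit $\SU(2)$-invariant Hamiltonians.

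The heart of the argument is a representation-theoretic identity special to $d=2$. By Schur-Weyl duality, the $\SU(2)$-invariant Hermitian operators supported on any three qubits $\{a,b,c\}$ are spanned by the images of the permutations in $\mathbb{S}_3$, namely $\{I,\P_{ab},\P_{ac},\P_{bc},\P_{(abc)},\P_{(acb)}\}$. For $d=2$ there is no nonzero totally antisymmetric vector in $(\mathbb{C}^2)^{\otimes 3}$, so the antisymmetrizer vanishes, $\Pi_{\ysub{1,1,1}}=0$, which is precisely the single linear relation $I-\P_{ab}-\P_{ac}-\P_{bc}+\P_{(abc)}+\P_{(acb)}=0$. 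This collapses the symmetric combination of the two $3$-cycles onto transpositions, $\P_{(abc)}+\P_{(acb)}=\P_{ab}+\P_{ac}+\P_{bc}-I$, so that $\i(\P_{(abc)}+\P_{(acb)})$ is a real combination of the $2$-local generators $\i\P_{ab},\i\P_{ac},\i\P_{bc}$ and $\i I$, and hence lies in $\mathfrak{g}_2$ (here the identity $\i I$ is itself an allowed $2$-qubit invariant Hamiltonian, which supplies the $-I$ term). The only remaining independent direction is the antisymmetric combination $\P_{(abc)}-\P_{(acb)}$; I would identify it, up to sign, with the commutator $[\P_{ab},\P_{bc}]$, so that $\i(\P_{(abc)}-\P_{(acb)})=\pm[\i\P_{ab},\i\P_{bc}]$ is a bracket of $2$-local generators and therefore also lies in $\mathfrak{g}_2$.

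Putting these together, every $3$-qubit $\SU(2)$-invariant Hamiltonian decomposes into $2$-local pieces plus a single commutator of $2$-local pieces, giving $\mathfrak{g}_3\subseteq\mathfrak{g}_2$ and hence $\mathfrak{g}_2=\mathfrak{g}_3$ and $\mathcal{V}_2^{(n)}=\mathcal{V}_3^{(n)}$ for all $n$. Combining with \cref{thm:semi-universality} then yields $[\mathcal{V}_2^{(n)},\mathcal{V}_2^{(n)}]=[\mathcal{V}_3^{(n)},\mathcal{V}_3^{(n)}]=\mathcal{SV}^{(n)}$, which is exactly semi-universality of $2$-qubit gates. I expect the main work to be careful bookkeeping in the Lie-algebra reduction rather than any deep obstruction: one must (i) invoke $\Pi_{\ysub{1,1,1}}=0$ to eliminate the symmetric $3$-cycle term, (ii) match the antisymmetric term to a transposition commutator for an explicit cycle convention, and (iii) keep the identity generator so that neither the $-I$ term nor the global-phase direction is lost. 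The reliance on $\Pi_{\ysub{1,1,1}}=0$ also makes transparent why the reduction is special to $d=2$: for $d\ge 3$ the antisymmetrizer no longer vanishes, the symmetric combination of $3$-cycles survives as a genuinely new three-body invariant, and $\mathcal{V}_3^{(n)}$ is strictly larger than $\mathcal{V}_2^{(n)}$, consistent with the failure of semi-universality of $2$-qudit gates in that regime.
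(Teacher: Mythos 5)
Your proposal is correct and follows essentially the same route as the paper: both reduce the corollary to the identity $\mathcal{V}_2^{(n)}=\mathcal{V}_3^{(n)}$ for $d=2$ and then invoke \cref{thm:semi-universality}, with the key input being the vanishing of the antisymmetrizer $\Pi_{\ysub{1,1,1}}=0$ on qubits. The only cosmetic difference is that you verify $\mathfrak{g}_3\subseteq\mathfrak{g}_2$ by hand (absorbing $\P_{(abc)}+\P_{(acb)}$ via the group-algebra relation and $\P_{(abc)}-\P_{(acb)}$ as the commutator $[\P_{ab},\P_{bc}]$), whereas the paper routes through the Remark after \cref{prop:2local-criteria}, observing that the obstruction operator $C$ in \cref{eq:C3} is a multiple of $\Pi_{\ysub{1,1,1}}$ and hence vanishes when $d=2$ --- the same underlying fact.
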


Before presenting the proof of this theorem, we briefly review a powerful tool for understanding the properties of $\SU(d)$-invariant unitaries, which plays a crucial role in our arguments: Schur-Weyl duality.

\subsection{Symmetric group \texorpdfstring{$\S_n$}{Sn} and Schur-Weyl duality}

An important class of $\SU(d)$-invariant unitaries are permutations. 
The symmetric group on $n$ objects $\mathbb{S}_n$ has a natural representation on $(\complex^d)^{\otimes n}$. In particular, for any permutation $\sigma \in \mathbb{S}_n$, let $\P(\sigma)$ denote the unitary operator on $(\complex^d)^{\otimes n}$ which permutes the qudits according to $\sigma$. Occasionally, we will also use the notation $\P_\sigma$.

Since permutations are $\SU(d)$-invariant, they are block-diagonal with respect to the decomposition in \cref{eq:schur-weyl}. Furthermore, according to the Schur-Weyl duality, $\Lambda_{n,d}$ also labels inequivalent irreps of $\mathbb{S}_n$, i.e., for all $\sigma\in \mathbb{S}_n$
\be
\P(\sigma)=\bigoplus_{\lambda \in \Lambda_{n, d}} \mathbb{I}_{\hilbert[Q]_\lambda} \otimes \P[\lambda](\sigma)\ ,
\ee
where $\mathbb{I}_{\hilbert[Q]_\lambda}$ is the identity operator on $\hilbert[Q]_\lambda$, and $\P[\lambda](\sigma)\in\U(\mathcal{M}_\lambda)$, defines an irrep of $\mathbb{S}_n$ on $\mathcal{M}_\lambda$ (see, e.g., \cite{goodman2009symmetry,harrow2005applications}). This group of unitaries is generated by the transposition (swap) of qubits $i$ and $j$, denoted by $\P(ij)$, or $\P_{ij}$ for simplicity. It follows that all permutations $\P(\sigma)$ belong to $\mathcal{V}_2^{(n)}$. We conclude that for $k \geq 2$, the group $\mathcal{V}_k^{(n)}$, as well as its commutator subgroup $\mathcal{SV}_k^{(n)}$, acts irreducibly on $\hilbert[M]_\lambda$ for all $\lambda \in \Lambda_{n,d}$.

Schur-Weyl duality implies that on any pair of distinct qudits $i$ and $j$, any 2-qudit $\SU(d)$-invariant unitary can be written as a global phase $\e^{\i \phi} \ident$ times 
\begin{equation}\label{eq:exch}
  \exp({\i \theta \P_{ij}}) = \cos \theta \, \ident + \i \sin \theta \, \P_{ij}\ ,
\end{equation}
for $\phi, \theta \in [0, 2 \pi)$, where $\ident$ denotes the identity operator on $\hilbert$. Therefore, $\mathcal{V}_2^{(n)}$ is the group generated by
\be
\mathcal{V}_2^{(n)}=\langle\exp({\i \theta \P_{ij}}), \exp(\i \phi) \ident: i\neq j , \theta,\phi\in [0, 2 \pi)\rangle\ .
\ee
See \cite{marvian2022quditcircuit} for further discussions on properties of group $\mathcal{V}_2^{(n)}$.

\section{Illustrative Example: \texorpdfstring{$\bm{n = 4}$}{n = 4} qudits}\label{sec:4ex}

As an illustrative example, in this section, we discuss the case of $n = 4$ qudits in depth and show that, 3-qudit $\SU(d)$-invariant unitaries are semi-universal in this case. This also provides the base case of an inductive proof that 3-qudit $\SU(d)$-invariant unitaries are semi-universal on arbitrary $n > 4$ qudits, discussed in \cref{sec:over}. 
Also, it demonstrates all the main techniques that are applied for the general proof. It is helpful to first consider 2-qudit gates---in doing so, the failure of  {2-qudit $\SU(d)$-invariant} {gates} to be semi-universal when $d > 2$ is illustrated.

 It is worth noting that while 3-qudit $\SU(d)$-invaraint gates are semi-universal on $n=4$ qudits, they are not universal. In particular, applying the results of \cite{Marvian2022Restrict}, reviewed in \cref{eq:centerdiff}, we find that the difference between the dimensions of the Lie group of all $\SU(d)$-invariant unitaries on $n=4$ qudits, and the subgroup generated by 3-qudit $\SU(d)$-invariant gates is equal to
\begin{align}
  \dim\V^{(4)}-\dim\V^{(4)}_3 = \abs{\Lambda_{4, d}} - \abs{\Lambda_{3, d}}=
  \begin{cases}
    1 &\!\! :d=3 \\
    2 &\!\! :d\ge 4
  \end{cases}
\end{align}
where $\abs{\Lambda_{n, d}}$ is the number of inequivalent irreps of $\SU(d)$, (or equivalently, $\mathbb{S}_n$) on $n$ qudits.

\subsection{2-qudit gates on \texorpdfstring{$n=3$}{n=3} qudits}\label{sec:2on3}

First, we consider $n = 3$ qudits. In this case, for general $d\geq 3$, the decomposition in \cref{eq:schur-weyl} contains 3 inequivalent irreps of $\mathbb{S}_3$, which are labeled by the Young diagrams 
\begin{align}\label{Eq:Lambda_3}
  \Lambda_3 & = \set*{\ydiag{3} \, , \; \ydiag{2, 1} \, , \; \ydiag{1, 1, 1}}\ , 
\end{align}
where $\sydiag{3}$ and $\sydiag{1, 1, 1}$ correspond to 1D irreps of $\mathbb{S}_3$, namely the trivial representation and the sign representation\footnote{Recall that $\sgn(\sigma)=\pm 1$ depending on whether the number of transpositions needed to realize $\sigma$ is odd or even.} $\P[\ysub{1,1,1}](\sigma)=\sgn(\sigma): \sigma\in \mathbb{S}_3$ (Note that the irrep $\sydiag{1,1,1}$ appears only for qudits with $d>2$.). Diagram $\sydiag{2, 1}$ corresponds to the 2D irrep of $\mathbb{S}_3$. 
  In particular, in Young's orthonormal basis for 
$\hilbert[M]_{\ysub{2, 1}}$, we obtain
\begin{align}
  [\P[\ysub{2, 1}](12)] = \begin{pmatrix} 1 & 0 \\ 0 & -1 \end{pmatrix} \label{eq:stair12} \ , \ \ 
  [\P[\ysub{2, 1}](23)] & = \frac{1}{2} \begin{pmatrix} -1 & \sqrt{3} \\ \sqrt{3} & 1 \end{pmatrix} \ , 
\end{align}
where we use the notation $[A]$ to denote the matrix representation of an operator $A$. Therefore, when projected to $\sydiag{2,1}$, unitaries $\exp({\i \theta \P(12)})$ and $\exp({\i \theta \P(23)})$ correspond to $\SU(2)$ rotations in two non-parallel directions, which means together they generate the full $\SU(2)$ group inside $\hilbert[M]_{\ysub{2, 1}}$.
In general, the projection of such unitaries inside $\hilbert[M]_{\ysub{3}}$ and 
$\hilbert[M]_{\ysub{1,1,1}}$ are non-trivial phases. However, such phases, which are allowed in $\mathcal{V}^{(3)}$ but not $\mathcal{SV}^{(3)}$, can be canceled by 
noting that $\SU(2)$ is equal to its commutator subgroup (In particular, any $U \in \SU(2)$ can be decomposed as $U = U_2^\dag U_1^\dag U_2U_1$ for some $U_1, U_2\in \SU(2)$ \cite{nielsen2000quantum}). 

We conclude that 2-qudit $\SU(d)$-invariant gates are semi-universal on $n=3$ qudits. That is
\be\label{SU1}
\mathcal{V}^{(3)}_2 \supseteq \mathcal{SV}^{(3)} \cong \SU(2) \ .
\ee
However, whether they are universal or not depends on $d$. In particular, the irrep $\sydiag{1,1,1}$, whose charge sector $\hilbert_{\ysub{1,1,1}}$ is the subspace of states of three qudits which are totally antisymmetric, shows up in the decomposition in \cref{eq:schur-weyl} only if $d \geq 3$. In this case, there is a one-parameter family of relative phases between charge sectors that cannot be generated by 2-local $\SU(d)$-invariant gates.  In particular, as we show in   \cref{app:V23char}, 
\begin{proposition}[Characterization of $\mathcal{V}^{(3)}_2$]\label{prop:2local-criteria} 
  For a system with $n=3$ qudits, the family of unitary evolutions $\exp(-\i H t): t\in\mathbb{R}$ is realizable with 2-qudit $\SU(d)$-invariant unitaries, i.e., $\exp(-\i H t)\in\mathcal{V}^{(3)}_2$ if, and only if $\Tr(H C)=0$, where 
  \begin{align}\label{eq:C3}
    C & = 2 (d - 1) (d - 2) \Pi_{\ysub{3}} - (d + 2) (d - 2) \Pi_{\ysub{2,1}} \nonumber\\
      & \mathrel{\phantom{=}} {} + 2 (d + 2) (d + 1) \Pi_{\ysub{1,1,1}} \\
      & = d^2(\P_{(123)} + \P_{(132)}) - 2 d (\P_{12} + \P_{13} + \P_{23}) + 4 \ident. \nonumber
  \end{align}
Furthermore, when $d \geq 3$, the unitary $V \in \mathcal{V}^{(3)}$ is realizable by 2-qudit $\SU(d)$-invariant unitaries, i.e. $V \in \mathcal{V}_2^{(3)}$, if and only if
\begin{equation}\label{eq:Vdet}
    \det v_{\ysub{2,1}} = (\det v_{\ysub{3}}) (\det v_{\ysub{1,1,1}}).
\end{equation}
\end{proposition}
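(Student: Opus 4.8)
The plan is to pass to Lie algebras, settle the first (Hamiltonian) characterization there, and then lift to the group level for the determinant condition. Write $\lie{v}=\mathrm{Lie}(\V^{(3)})$ and $\lie{v}_2=\mathrm{Lie}(\V_2^{(3)})$; these are well defined since both groups are compact and connected. By Schur--Weyl, $\lie{v}=\lie{u}(1)\oplus\lie{u}(2)\oplus\lie{u}(1)$, with the summands acting on $\hilbert[M]_{\ysub{3}}$, $\hilbert[M]_{\ysub{2,1}}$ and $\hilbert[M]_{\ysub{1,1,1}}$ respectively, and $\lie{v}_2$ is generated by the anti-Hermitian operators $\i\P_{ij}$ and $\i\ident$. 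Because $\V_2^{(3)}$ is closed, the family $\exp(-\i Ht)$ lies in $\V_2^{(3)}$ for all $t$ if and only if $-\i H\in\lie{v}_2$, so Part~1 reduces to computing $\lie{v}_2$ and its orthogonal complement.

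The structural heart is the following description of $\lie{v}_2$. On $\hilbert[M]_{\ysub{2,1}}$ the generators $\i\P_{12},\i\P_{23}$ project to the traceless, non-parallel matrices of \cref{eq:stair12}, so their brackets generate the embedded copy of $\su(2)$ acting on the $\ysub{2,1}$ block alone, giving $\su(2)\subseteq[\lie{v}_2,\lie{v}_2]\subseteq\lie{v}_2$. Crucially, every iterated commutator has vanishing component in the centre $\lie{z}=\lie{u}(1)^{\oplus3}$ of $\lie{v}$, because scalars commute on the two $1$D blocks and brackets are traceless on the middle block. Hence the central part of $\lie{v}_2$ is exactly the real span of the central components of the generators, and $\lie{v}_2=\su(2)\oplus\mathrm{span}_{\real}\{\i(\Pi_{\ysub{3}}-\Pi_{\ysub{1,1,1}}),\,\i\ident\}$, of codimension $1$ in $\lie{v}$.

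For Part~1, the Hilbert--Schmidt complement of $\lie{v}_2$ in $\lie{v}$ is one-dimensional; being orthogonal to $\su(2)$ it lies in $\lie{z}$, so it is spanned by $\i C$ with $C=a\,\Pi_{\ysub{3}}+b\,\Pi_{\ysub{2,1}}+c\,\Pi_{\ysub{1,1,1}}$. Imposing $\Tr(C\P_{ij})=0$ and $\Tr(C)=0$, and evaluating the traces from $\Tr(\Pi_{\ysub{3}}\P_{ij})=\dim\hilbert[Q]_{\ysub{3}}$, $\Tr(\Pi_{\ysub{1,1,1}}\P_{ij})=-\dim\hilbert[Q]_{\ysub{1,1,1}}$, $\Tr(\Pi_{\ysub{2,1}}\P_{ij})=0$ together with the $\SU(d)$ sector dimensions $\dim\hilbert[Q]_{\ysub{3}}=\binom{d+2}{3}$, $\dim\hilbert[Q]_{\ysub{2,1}}=\tfrac13 d(d^2-1)$, $\dim\hilbert[Q]_{\ysub{1,1,1}}=\binom{d}{3}$, fixes $(a,b,c)$ up to scale and reproduces the ratio $2(d-1)(d-2):-(d-2)(d+2):2(d+1)(d+2)$. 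Since $-\i H\in\lie{v}_2$ iff $\i C$ is HS-orthogonal to $-\i H$, i.e. iff $\Tr(HC)=0$, this gives the first characterization. The equality of the two displayed forms of $C$ then follows because the class sums $\P_{(123)}+\P_{(132)}$ and $\P_{12}+\P_{13}+\P_{23}$ are central in $\complex[\S_3]$, hence scalar on each irrep, and comparing those scalars using the characters $\chi_{\ysub{2,1}}=(2,0,-1)$ on the identity, transpositions and $3$-cycles together with the trivial/sign characters on the $1$D blocks matches the three coefficients.

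Part~2 lifts this to the group. Define the continuous homomorphism $\Phi:\V^{(3)}\to\U(1)$ by $\Phi(V)=\det(v_{\ysub{2,1}})(\det v_{\ysub{3}})^{-1}(\det v_{\ysub{1,1,1}})^{-1}$; the claimed condition is $\Phi(V)=1$. Evaluating $\Phi$ on the generators (using $\det\exp(\i\theta\,\P[\ysub{2,1}](ij))=1$ since $\P[\ysub{2,1}](ij)$ is traceless) gives $\Phi\equiv1$ on $\V_2^{(3)}$, so $\V_2^{(3)}\subseteq\ker\Phi$, while $\ker(\mathrm d\Phi)=\lie{v}_2$ by Part~1, so the two groups share a Lie algebra. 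The remaining and least mechanical point is global: one must show $\ker\Phi$ is connected, so that the connected group $\V_2^{(3)}$, carrying the full Lie algebra of $\ker\Phi$, equals all of $\ker\Phi$ rather than merely its identity component. This follows from the homotopy exact sequence of $\ker\Phi\to\V^{(3)}\to\U(1)$ once $\Phi_*:\pi_1(\V^{(3)})\to\pi_1(\U(1))$ is surjective, which is exhibited by the loop $\theta\mapsto\e^{\i\theta}\Pi_{\ysub{3}}+(\ident-\Pi_{\ysub{3}})$, whose image under $\Phi$ winds once. I expect this connectedness step to be the main obstacle; everything else is a finite-dimensional computation hinging on the single observation that commutators annihilate the centre.
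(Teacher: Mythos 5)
Your proof is correct. For the first (Hamiltonian) part your route is essentially the paper's: both arguments identify the center of $\lie{v}_2=\mathrm{Lie}(\V_2^{(3)})$ as the real span of $\i\ident$ and $\i(\Pi_{\ysub{3}}-\Pi_{\ysub{1,1,1}})$ (the paper phrases this via the permutationally invariant operators $B_1=\ident$ and $B_2=\sum_{i<j}\P_{ij}$), and then solve the same two linear orthogonality conditions with the same sector dimensions to pin down $C$; your observation that iterated commutators have no central component is the same structural fact the paper leans on. Where you genuinely diverge is the second (determinant) part. The paper exploits semi-universality ($\mathcal{SV}^{(3)}\subseteq\V_2^{(3)}$) to reduce an arbitrary $V$ to a pure relative-phase unitary, explicitly parameterizes the phases reachable by $\e^{\i\alpha_1\ident}\e^{\i\alpha_2 B_2/3}$, and solves the resulting congruence $2\phi_{\ysub{2,1}}=\phi_{\ysub{3}}+\phi_{\ysub{1,1,1}}\pmod{2\pi}$, which it then rewrites as the determinant identity. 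You instead package the condition as the kernel of the homomorphism $\Phi(V)=\det(v_{\ysub{2,1}})(\det v_{\ysub{3}})^{-1}(\det v_{\ysub{1,1,1}})^{-1}$ and prove $\V_2^{(3)}=\ker\Phi$ by matching Lie algebras and establishing connectedness of $\ker\Phi$ through the homotopy exact sequence, with the explicit loop $\theta\mapsto\e^{\i\theta}\Pi_{\ysub{3}}+(\ident-\Pi_{\ysub{3}})$ witnessing surjectivity of $\Phi_*$ on $\pi_1$. Your version is more structural and makes transparent why the answer is a single multiplicative constraint; the paper's version is more elementary and hands-on (no topology needed) and, as a byproduct, exhibits the explicit two-parameter family of achievable phases. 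You correctly flag the connectedness of $\ker\Phi$ as the one non-mechanical step, and your treatment of it is sound.
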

\begin{remark}
  In the special case of $d=2$, operator $C=0$, and therefore the condition holds trivially, which means 2-qudit $\SU(2)$-invariant unitaries are indeed \emph{universal} on $n=3$ qubits, i.e., $\mathcal{V}^{(3)}_2=\mathcal{V}^{(3)}$. Furthermore, this implies that in this case, $\mathcal{V}^{(n)}_2=\mathcal{V}^{(n)}_3$ for all $n\ge 3$.
\end{remark}
The constraint imposed by \cref{eq:Vdet} implies that when $d \geq 3$, the difference between the dimensions of the Lie group of all $\SU(d)$-invariant unitaries on 3 qudits, and its subgroup generated by 2-qudits gates is
\begin{equation}
  \dim\mathcal{V}^{(3)} - \dim\mathcal{V}_2^{(3)} = 1\ ,
\end{equation}
saturating a general lower bound previously established in \cite{Marvian2022Restrict}. Therefore, a generic 3-qudit $\SU(d)$-invariant unitary cannot be generated by 2-qudit ones, since $\mathcal{V}_2^{(3)}$ forms a set of measure zero.

\subsection{2-qudit gates on \texorpdfstring{$n=4$}{n=4} qudits}\label{sec:2on4}

Next, we consider $n=4$ qudits. The group $\mathbb{S}_4$ has 5 inequivalent irreps, which are labeled by the Young diagrams 
\begin{align}\label{eq:S4mat}
  \Lambda_4 & = \set*{\ydiag{4} \, , \; \ydiag{3, 1} \, , \; \ydiag{2, 2} \, , \; \ydiag{2, 1, 1} \, , \; \ydiag{1, 1, 1, 1} \,}. 
\end{align}
For $d\ge 4$, all 5 diagrams appear in the decomposition in \cref{eq:schur-weyl}, whereas for $d=3$ the last diagram does not appear. Again, the diagrams $\sydiag{4}$ and $\sydiag{1, 1, 1, 1}$ correspond to the 1D irreps, namely the trivial and 
sign representations of $\mathbb{S}_4$, respectively. 

To determine which unitary transformations can be realized with 2-qudit $\SU(d)$-invariant gates, we consider the unitaries in the form $V\otimes \mathbb{I}: V\in \mathcal{SV}^{(3)}$, which according to \cref{SU1} form a group isomorphic to SU(2), and they can all be realized by 2-qudit $\SU(d)$-invariant gates that act trivially on qudit 4. Fortunately, to understand how these unitaries act on the multiplicity spaces $\hilbert[M]_{\ysub{2, 2}}$, 
$\hilbert[M]_{\ysub{3, 1}}$, and  
$\hilbert[M]_{\ysub{2,1, 1}}$, it suffices to understand the action of the $\mathbb{S}_3$ subgroup of $\mathcal{V}_2^{(3)}\otimes \mathbb{I}$ corresponding to permutations of the first 3 qudits. This follows from the fact that
\be
\{\textbf{P}(\sigma): \sigma\in \mathbb{S}_3\} \subset \mathcal{V}^{(3)}_2 \subset \text{span}_\mathbb{C}\{\textbf{P}(\sigma): \sigma\in \mathbb{S}_3\} \ .
\ee
Using the standard facts about the representation theory of $\mathbb{S}_4$ (see \cref{sec:symfacts}), it can be easily shown that a copy of the 2D irrep $\sydiag{2, 1}$ of $\mathbb{S}_3$ 
appear inside each of the multiplicity spaces $\hilbert[M]_{\ysub{2, 2}}$, $\hilbert[M]_{\ysub{3, 1}}$, and $\hilbert[M]_{\ysub{2, 1, 1}}$. In particular, under the restriction to $\mathbb{S}_3$ subgroup that acts on the first three qudits, the irreps of $\mathbb{S}_4$ decompose as 
\begin{equation}\label{dec24}
  \begin{split}
    \hilbert[M]_{\ysub{2, 2}}&\cong \hilbert[M]_{\ysub{2, 1}} \cong \mathbb{C}^2\\
    \hilbert[M]_{\ysub{3, 1}}&\cong\hilbert[M]_{\ysub{2, 1}}\oplus \hilbert[M]_{\ysub{3}} \cong \mathbb{C}^2\oplus \mathbb{C}\\ 
    \hilbert[M]_{\ysub{2,1, 1}}&\cong\hilbert[M]_{\ysub{2, 1}}\oplus \hilbert[M]_{\ysub{1,1,1}} \cong \mathbb{C}^2\oplus \mathbb{C}\ .
  \end{split}
\end{equation}
The first two rows of \cref{tab:n=4} give the explicit matrix representations of this $\mathbb{S}_3$ subgroup in irreps $\sydiag{3, 1}$ and $\sydiag{2,1, 1}$.

\begin{table}
  \begin{tblr}{|c|}
    \toprule
    \scalebox{0.8}[0.8]{\(\displaystyle [\P[\ysub{3,1}](\sigma)] = \begin{+pmatrix}
        \SetCell[r=2,c=2]{c} [\P[\ysub{2,1}](\sigma)] & & \vline \\
        & & \\ \hline
        & & \P[\ysub{3}](\sigma)
      \end{+pmatrix} = \begin{+pmatrix}
        \SetCell[r=2,c=2]{c} [\P[\ysub{2,1}](\sigma)] & & \vline \\
        & & \\ \hline
        & & 1
      \end{+pmatrix} \)} \\ \midrule
    \quad~~\scalebox{0.8}[0.8]{\(\displaystyle [\P[\ysub{2,1,1}](\sigma)] = \begin{+pmatrix}
        \SetCell[r=2,c=2]{c} [\P[\ysub{2,1}](\sigma)] & & \vline \\
        & & \\ \hline
        & & \P[\ysub{1,1,1}](\sigma)
      \end{+pmatrix} = \begin{+pmatrix}
        \SetCell[r=2,c=2]{c} [\P[\ysub{2,1}](\sigma)] & & \vline \\
        & & \\ \hline
        & & \sgn(\sigma)
      \end{+pmatrix} \)} \qquad \\ \midrule
    \quad
    \scalebox{0.8}[0.8]{\(\displaystyle [J] = \begin{+pmatrix}
         & 1 & \vline  \\
        -1 &  &  \\ \hline
         &  & 1
      \end{+pmatrix}\)} \qquad
    \scalebox{0.8}[0.8]{\(\displaystyle [\P[\ysub{3, 1}](34)] = \begin{+pmatrix}
        \frac{1}{3} & & \vline \frac{2 \sqrt{2}}{3} \\
        & 1 & \\ \hline
        \frac{2 \sqrt{2}}{3} & & -\frac{1}{3}
      \end{+pmatrix}\)}
    \\ \bottomrule
  \end{tblr}
  \caption{In the first two rows $\sigma$ is in $\mathbb{S}_3$ subgroup of $\mathbb{S}_4$ that acts trivially on qudit 4, and $[\P[\ysub{2,1}](\sigma)]$ is given by \cref{eq:stair12}. Matrix $[J]$ satisfies \cref{eq:J4}. }
  \label{tab:n=4}
\end{table}

\cref{dec24} immediately implies that the projection of $\mathcal{SV}^{(3)}_2 \otimes \mathbb{I}$ to $\hilbert[M]_{\ysub{2, 2}}$ is isomorphic to $\SU(2)$, i.e.,
\be
\pi_{\ysub{2, 2}}(\mathcal{SV}^{(3)}_2\otimes \mathbb{I})=\SU(\hilbert[M]_{\ysub{2, 2}})\cong \SU(2)\ .
\ee
In other words, condition \textbf{A} in \cref{MainLemma}, namely, subsystem universality, is satisfied in the sector with irrep 
$\sydiag{2, 2}$.

Next, we consider the projection of this 
group 
to $\mathcal{M}_{\ysub{3,1}}$ and $\mathcal{M}_{\ysub{2,1,1}}$. Relative to the decomposition in \cref{dec24}, this group of unitaries will be in the form
\begin{align}\label{dec25}
  \begin{+pmatrix}
    \SetCell[r=2,c=2]{c} \SU(2) & & \vline \\
    & & \\ \hline
    & & 1
  \end{+pmatrix}\ .
\end{align}

Note that this block-diagonal form is a consequence of the fact that under the operators that act on the first 3 qudits, the $\SU(d)$ charge of qudit 4 is conserved. Now suppose we include 2-qudit $\SU(d)$-invariant unitaries $\exp(\i\theta \textbf{P}(34)): \theta\in[0,2\pi)$ that act on qudits 3 and 4, which allows qudit 4 to interact with the rest of the system. Since the transposition $\textbf{P}(34)$ together with the aforementioned $\mathbb{S}_3$ subgroup generate $\mathbb{S}_4$ and $\hilbert[M]_{\ysub{2,1, 1}}$ and $\hilbert[M]_{\ysub{3,1}}$ are both irreps of $\mathbb{S}_4$, the projection of these unitaries inside 
$\hilbert[M]_{\ysub{2,1, 1}}$ and $\hilbert[M]_{\ysub{3,1}}$ cannot be block-diagonal with respect to the decompositions in \cref{dec24}.

In summary, inside both subspaces $\hilbert[M]_{\ysub{2,1, 1}}$ and $\hilbert[M]_{\ysub{3,1}}$ 2-qudit unitaries acting on qudits 1, 2 and 3 generate the block-diagonal unitaries in the form \cref{dec25}, and 
$\exp(\i\theta \textbf{P}(34)): \theta\in[0,2\pi)$ 
is a one-parameter family of unitaries that are not block diagonal with respect to this decomposition. According to \cref{rem1}, adding any such one-parameter family of unitaries to unitaries in \cref{dec25}, generates a group that contains $\SU(3)$ (see also Lemma 5 of \cite{marvian2022quditcircuit}). 

Therefore, similar to irrep $\sydiag{2, 2}$, condition \textbf{A} in \cref{MainLemma} is also satisfied in irreps $\sydiag{3, 1}$ and $\sydiag{2, 1, 1}$. Since $\mathcal{M}_{\ysub{4}}$ and $\mathcal{M}_{\ysub{1, 1, 1, 1}}$ are 1D, this condition is trivially satisfied in those irreps. We conclude that condition \textbf{A} is fully satisfied, i.e., for all $\lambda\in \Lambda_4$, the projection of $\mathcal{V}^{(4)}_2$ to $\mathcal{M}_\lambda$ 
is equal to the projection of $\mathcal{V}^{(4)}$ to $\mathcal{M}_\lambda$:
\begin{align}
  \forall\lambda\in\Lambda_4:\ \ \ \ \pi_\lambda(\mathcal{SV}^{(4)}_2)=\pi_\lambda(\mathcal{SV}^{(4)})=\SU(\mathcal{M}_\lambda)\ .
\end{align}

However, simply realizing all possible unitaries in the multiplicity spaces does not preclude the possibility of correlation between sectors. That is, condition \textbf{B} may not be satisfied, and in fact, it turns out that this is the case! The unitaries realized in $\hilbert[M]_{\ysub{3, 1}}$ and $\hilbert[M]_{\ysub{2,1,1}}$ uniquely determine each other, up to a global phase. This is a consequence of a standard representation isomorphism between pairs of irreps of $\mathbb{S}_4$ that are related by the sign representation, namely 
\begin{align}
  \ydiag{3, 1}\cong \sgn \otimes ~\ydiag{2, 1, 1}\ \ \ , \quad \quad \ydiag{2, 1, 1}\cong \sgn \otimes ~\ydiag{3, 1} \ .
\end{align}
That is, there exists a unitary operator $J : \hilbert[M]_{\ysub{3,1}} \to \hilbert[M]_{\ysub{2,1,1}}$ such that
\begin{equation}\label{eq:J4}
  \sigma\in\mathbb{S}_4:\ \ J \P[\ysub{3, 1}](\sigma) J^\dag = \sgn(\sigma) \P[\ysub{2, 1, 1}](\sigma)\ .
\end{equation}
As presented in \cref{tab:n=4}, we can pick Young's basis where permutations have real matrix representation, denoted as $[\P[\ysub{3, 1}](\sigma)]$ and $[\P[\ysub{2, 1, 1}](\sigma)]$. Then, relative to this basis, $J$ also becomes a real orthogonal matrix $[J]$. 
Since for any transposition (SWAP) $(ij)$ the parity $\sgn((ij))=-1$, this in turn implies
\begin{equation}
  [J] [\exp (\i\theta \P[\ysub{3, 1}](ij))] [J]^{\text{T}} = [\exp (\i\theta \P[\ysub{2, 1, 1}](ij))]^\ast\ ,
\end{equation}
where $[J]^{\text{T}}$ is the transpose of matrix $[J]$ in the aforementioned basis. 
Recall that any element of $\mathcal{V}^{(4)}_2$ can be decomposed to a sequence of 2-qudit gates $\exp(\i \theta \P_{ij})$ and a global phase. 
 Because products of unitaries that satisfy the above constraint also satisfy this constraint, it follows that for any $V\in \mathcal{V}^{(4)}_2$
\begin{equation}\label{const}
  \forall V\in \mathcal{V}^{(4)}_2 :\ \ [J] [v_{\ysub{3, 1}}] [J]^{\text{T}} = \e^{\i\phi}\ [v_{\ysub{2, 1, 1}}]^\ast\ ,
\end{equation}
where $v_{\ysub{3, 1}}=\pi_{\ysub{3, 1}}(V)$, $v_{\ysub{2, 1, 1}}=\pi_{\ysub{2, 1, 1}}(V)$ 
are the components of $V$ in $\mathcal{M}_{\ysub{3, 1}}$ and $\mathcal{M}_{\ysub{2, 1,1 }}$ respectively, $[v_{\ysub{2, 1, 1}}]^\ast$ is the complex conjugate of $[v_{\ysub{2, 1, 1}}]$, and $\e^{\i\phi}$ is an unspecified phase that depends on $V$.

The relation in \cref{const} means that the joint projection of $\mathcal{SV}_2^{(4)}$ to multiplicity spaces $\mathcal{M}_{\ysub{3, 1}}$ and $\mathcal{M}_{\ysub{2, 1, 1}}$ does not contain $\SU(3)\times \SU(3)$, as required by semi-universality. Rather, it satisfies
\begin{equation}\label{re1}
  \begin{split}
    \pi_{\ysub{3, 1},\, \ysub{2, 1, 1}}(\mathcal{SV}_2^{(4)}) &:=\big\{ (v_{\ysub{3, 1}}, v_{\ysub{2, 1, 1}}): V\in\mathcal{SV}_2^{(4)} \big\}\\ &\cong \SU(\mathcal{M}_{\ysub{2, 1, 1}}) \cong \SU(\mathcal{M}_{\ysub{3, 1}}) \cong \SU(3)\ ,
  \end{split}
\end{equation}
and the explicit form of the isomorphism is given by \cref{const}. In this situation we say the unitaries realized in $\sydiag{2, 1, 1}$ and $\sydiag{3, 1}$ are ``correlated". 

On the other hand, since $\mathcal{M}_{\ysub{2, 2}}$ has a different dimension from $\mathcal{M}_{\ysub{3, 1}}$ and $\mathcal{M}_{\ysub{2, 1, 1}}$, condition \textbf{B} is automatically satisfied for the pair $\sydiag{2, 2}$ and $\sydiag{3, 1}$, and the pair $\sydiag{2, 2}$ and $\sydiag{2,1, 1}$. In particular, according to \cref{both}, for both $\lambda=\sydiag{3, 1}$ and $\lambda=\sydiag{2,1, 1}$ the joint projection of $\mathcal{SV}_2^{(4)}$ to $\lambda$ and $\sydiag{2, 2}$ is isomorphic to
\begin{equation}\label{ind}
  \begin{split}
    \pi_{\lambda ,\, \ysub{2, 2}}(&\mathcal{SV}_2^{(4)}) := \{ (v_{\lambda}, v_{\ysub{2, 2}}): V\in\mathcal{SV}_2^{(4)}\} \\ &\cong \SU(\mathcal{M}_\lambda)\times \SU(\mathcal{M}_{\ysub{2, 2}}) \cong \SU(3)\times \SU(2)\ .
  \end{split}
\end{equation}
In summary, the commutator subgroup of $\mathcal{V}^{(4)}_2$ is
$$\mathcal{SV}^{(4)}_2=[\mathcal{V}^{(4)}_2,\mathcal{V}^{(4)}_2]\cong \SU(2)\times \SU(3)\ ,$$
whereas the commutator subgroup of the group of all $\SU(d)$-invariant unitaries is 
$$\mathcal{SV}^{(4)}=[\mathcal{V}^{(4)},\mathcal{V}^{(4)}] \cong \SU(2)\times \SU(3)\times \SU(3)\ ,$$
which means 2-qudit $\SU(d)$-invariant gates are not semi-universal. We revisit this example in \cref{sec:fail}, and show that while $\mathcal{V}_2^{(4)}$ does not satisfy condition \textbf{B} of \cref{MainLemma}, because it still satisfies condition \textbf{A} (subsystem universality), it can be characterized via  \cref{Mainlemma_gen}. (This lemma is an extension of \cref{MainLemma} that does not assume condition \textbf{B}.)

Here, we saw the failure of semi-universality and constraints imposed by the locality of interactions in the case of a system with $n=4$ qudits, as a consequence of \cref{const}. Indeed, this relation can be understood in terms of an anti-unitary transformation. See \cite{marvian2022quditcircuit} for further details, where these facts can be interpreted both in terms of 
 conservation of an observable $K$ defined on two copies of the system, and in the language of a freely evolving fermionic system. This phenomenon, i.e., that the dynamics of 2-qudit symmetric interactions in distinct charge sectors determine each other, exists for general $n$ when $d>2$.\footnote{The reason that it does not exist for $d=2$, i.e., that 2-qudit $\SU(2)$-invariant unitaries are semi-universal on qubits, is related to the fact that in this case there is no representation associated with the diagram $\ysub{2,1,1}$.} For sufficiently many qudits $n$, as the dimension of the Hilbert space of each qudit $d$ grows, there will be more such constraints among different sectors \cite{marvian2022quditcircuit}. Furthermore, in addition to such correlations, which correspond to type \textbf{IV} constraints, we also see type \textbf{III} constraints. For instance, for $n=5$ qudits with $d\ge 3$, type \textbf{III} constraints appear: in the charge sector $\sydiag{3,1,1}$ we get the group $\SO(6)$ rather than $\SU(6)$ \cite{marvian2022quditcircuit}, which is required by semi-universality. In addition, the sectors $\sydiag{3,2}$ and $\sydiag{2,2,1}$ are correlated with each other, and similarly, the three sectors $\sydiag{4,1}$, $\sydiag{3,1,1}$, and $\sydiag{2,1,1,1}$ are all correlated.  It is worth noting again that these phenomena, i.e., correlations between unitaries in sectors with inequivalent irreps of $G=\SU(d)$ (type $\textbf{IV}$) and appearance of irreducible subgroups of the realizable unitaries in one sector (type $\textbf{III}$), cannot occur in the case of Abelian symmetry groups $G$ \cite{marvian2024theoryabelian}.

\subsection{
  Generic 3-qudit gates are semi-universal}\label{sec:4ex3loc}

Next, we show that 2-qudit $\SU(d)$-invariant gates together with any $\SU(d)$-invariant unitary $Y\in \mathcal{V}^{(4)}$ 
that breaks the constraint in \cref{const} become semi-universal on $n=4$ qudits. Furthermore, such $Y$ can be chosen as a 3-qudit gate that acts on 3 out of 4 qudits in the system (tensor product with the identity operator on the 4th qudit).

\begin{proposition}\label{4qudit}
  Any 4-qudit $\SU(d)$-invariant unitary $Y$ together with 2-qudit $\SU(d)$-invariant unitaries are semi-universal on $n=4$ qudits, i.e., $\mathcal{SV}^{(4)}\subset \langle Y, V: V \in \mathcal{V}_2^{(4)}\rangle$ if, and only if,
  \begin{equation}\label{cond1}
    [J] [\pi_{\ysub{3, 1}}(Y)] [J]^{\text{T}} \neq \e^{\i\phi}[\pi_{\ysub{2, 1, 1}}(Y)]^\ast\ ,
  \end{equation}
  for any phase $\e^{\i\phi}$, 
  where $[\pi_{\ysub{3, 1}}(Y)]$ and $[\pi_{\ysub{2, 1, 1}}(Y)]$ are the matrix representations of the components of $Y$ in $\mathcal{M}_{\ysub{3, 1}}$ and $\mathcal{M}_{\ysub{2, 1,1}}$, in the basis defined in the \cref{tab:n=4} where matrix $[J]$ is also defined. 
\end{proposition}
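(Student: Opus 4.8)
The plan is to deduce the proposition from \cref{MainLemma} applied to the group $\mathcal{T}:=\langle Y, V: V\in\mathcal{V}_2^{(4)}\rangle$. Since $\mathcal{T}\supseteq \mathcal{V}_2^{(4)}$, and subsystem universality (condition \textbf{A}) was already verified for $\mathcal{V}_2^{(4)}$ in every sector $\lambda\in\Lambda_4$ in \cref{sec:2on4}, condition \textbf{A} holds for $\mathcal{T}$ as well. Thus semi-universality of $\mathcal{T}$ is equivalent, by \cref{MainLemma}, to condition \textbf{B}. Inspecting the multiplicity dimensions $\dim\mathcal{M}_{\ysub4}=\dim\mathcal{M}_{\ysub{1,1,1,1}}=1$, $\dim\mathcal{M}_{\ysub{2,2}}=2$, and $\dim\mathcal{M}_{\ysub{3,1}}=\dim\mathcal{M}_{\ysub{2,1,1}}=3$, the only pair of distinct irreps with equal multiplicity dimension $\ge 2$ is $(\ysub{3,1},\ysub{2,1,1})$; for every other pair condition \textbf{B} is automatic. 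Hence the whole proposition reduces to deciding whether condition \textbf{B} holds for this single pair, and the content of \cref{cond1} is precisely the criterion for this.

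It is convenient to write the correlation \cref{const} obeyed by all $V\in\mathcal{V}_2^{(4)}$ as $\pi_{\ysub{3,1}}(V)=e^{\i\phi}\,\Psi\!\big(\pi_{\ysub{2,1,1}}(V)\big)$, where $\Psi(w):=[J]^{\mathrm T}[w]^\ast[J]$ is a group homomorphism $\U(\mathcal{M}_{\ysub{2,1,1}})\to\U(\mathcal{M}_{\ysub{3,1}})$ that restricts to an automorphism of $\SU(3)$ and satisfies $\Psi(e^{\i\alpha}w)=e^{-\i\alpha}\Psi(w)$. Set $a:=\pi_{\ysub{3,1}}(Y)$ and $b:=\pi_{\ysub{2,1,1}}(Y)$; then \cref{cond1} states exactly that $c:=a\,\Psi(b)^{-1}$ is not a scalar multiple of the identity.

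For the \emph{if} direction I would establish condition \textbf{B} by an explicit construction. Choosing a phase so that $s_0:=e^{-\i\theta_0}b^{-1}\in\SU(\mathcal{M}_{\ysub{2,1,1}})$, condition \textbf{A} for $\mathcal{V}_2^{(4)}$ lets me pick $V_0\in\mathcal{V}_2^{(4)}$ with $\pi_{\ysub{2,1,1}}(V_0)=s_0$. Then $W:=Y V_0\in\mathcal{T}$ has $\pi_{\ysub{2,1,1}}(W)=b s_0=e^{-\i\theta_0}\mathbb{I}$, a scalar, while the correlation gives $\pi_{\ysub{3,1}}(V_0)=e^{\i\psi}\Psi(s_0)=e^{\i(\psi+\theta_0)}\Psi(b)^{-1}$, whence $\pi_{\ysub{3,1}}(W)=e^{\i(\psi+\theta_0)}c$. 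Since $c$ is non-scalar, $|\Tr\pi_{\ysub{3,1}}(W)|<3=|\Tr\pi_{\ysub{2,1,1}}(W)|$, so $W$ witnesses condition \textbf{B} (indeed a variant of \textbf{B}') for the pair $(\ysub{3,1},\ysub{2,1,1})$. By \cref{MainLemma} this yields $\mathcal{SV}^{(4)}\subseteq\mathcal{T}$, i.e. semi-universality. For the \emph{only if} direction I would argue the contrapositive: if \cref{cond1} fails, then $Y$ itself obeys the correlation \cref{const}, so every generator of $\mathcal{T}$ does; since the set of unitaries obeying \cref{const} is closed under products and inverses, all of $\mathcal{T}$ obeys it. On the commutator subgroup the phases cancel, forcing $\pi_{\ysub{3,1},\ysub{2,1,1}}([\mathcal{T},\mathcal{T}])$ into the graph $\{(\Psi(w),w):w\in\SU(3)\}\cong\SU(3)$, a proper subgroup of $\SU(\mathcal{M}_{\ysub{3,1}})\times\SU(\mathcal{M}_{\ysub{2,1,1}})$ (compare \cref{re1,both}); hence $[\mathcal{T},\mathcal{T}]\not\supseteq\mathcal{SV}^{(4)}$ and semi-universality fails.

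The only genuinely delicate points are bookkeeping rather than conceptual: confirming that $\Psi$ is a well-defined homomorphism restricting to an automorphism of $\SU(3)$ (it is the composition of conjugation by the real orthogonal $[J]$ with entrywise complex conjugation, the latter being the standard outer automorphism of $\SU(3)$), and tracking the free relative phases carefully so that $\pi_{\ysub{2,1,1}}(W)$ comes out exactly scalar. I expect no real obstacle beyond this, because condition \textbf{A} and the correlation \cref{const} for $\mathcal{V}_2^{(4)}$ have already been established, and \cref{MainLemma} reduces semi-universality to the single trace inequality handled by the construction above; one could alternatively phrase the \emph{if} direction via Goursat's lemma (\cref{lem:SUgoursat}), but the direct construction avoids invoking it.
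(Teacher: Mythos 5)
Your proposal is correct and follows essentially the same route as the paper's proof in \cref{app:3quditgate}: necessity by noting that the constraint \cref{const} defines a subgroup containing all generators, and sufficiency by invoking \cref{MainLemma} with condition \textbf{A} inherited from $\mathcal{V}_2^{(4)}$ and condition \textbf{B} witnessed by multiplying $Y$ with an element of $\mathcal{V}_2^{(4)}$ that cancels its $\sydiag{2,1,1}$ component (your $W=YV_0$ is the paper's $T=Y\widetilde{Y}^\dag$ up to a phase). The only cosmetic difference is that you verify the trace form of condition \textbf{B} while the paper appeals to \textbf{B}'; these are interchangeable.
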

This is proven in \cref{app:3quditgate}. Note that if the unitary $Y\in\mathcal{V}^{(4)}$ satisfies
\be\label{eq:trcond}
|\Tr(\pi_{\ysub{3, 1}}(Y))|\neq |\Tr(\pi_{\ysub{2, 1, 1}}(Y))|\ ,
\ee
then \cref{cond1} cannot hold as equality for any phase $\phi$ and unitary $J$, which means, together with 2-qudit unitaries, $Y$ is semi-universal. This allows using the characterization of elements of $\mathcal{V}_2^{(3)}$ in \cref{prop:2local-criteria} to demonstrate that a generic 3-qudit $\SU(d)$-invariant unitary, together with 2-qudit ones, is semi-universal.

\begin{proposition}[Generic 3-qudit gates are semi-universal on $n = 4$ qudits]\label{prop:gen3q} For any 3-qudit unitary $S\in\mathcal{V}^{(3)}$ that is not realizable with 2-qudit $\SU(d)$-invariant gates, i.e., $S\in \mathcal{V}^{(3)}\setminus\mathcal{V}_2^{(3)}\ ,$ the 4-qudit unitary 
  $Y=S\otimes \mathbb{I}$ satisfies the requirement in \cref{cond1} of the above proposition, and therefore the 3-qudit gate $S$ together with 2-qudit unitaries $\exp(\i \theta \P_{ij}): \theta\in[0,2\pi)$ are semi-universal, {i.e., $\mathcal{SV}^{(4)}\subset \langle S\otimes \mathbb{I}, V: V\in\mathcal{V}^{(4)}_2 \rangle$. 
  }
\end{proposition}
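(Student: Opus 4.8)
The plan is to reduce everything to \cref{4qudit}: since $Y=S\otimes\mathbb{I}$ is a $4$-qudit $\SU(d)$-invariant unitary, it suffices to show that it violates the equality in \cref{cond1} for every phase $\e^{\i\phi}$, and semi-universality then follows immediately. (For $d=2$ there is no $\ysub{1,1,1}$ sector and $\mathcal{V}^{(3)}=\mathcal{V}_2^{(3)}$, so the hypothesis is vacuous and we take $d\ge 3$.) The first step is to compute the two relevant components of $Y$ explicitly. Because $S$ is $\SU(d)$-invariant on three qudits, Schur-Weyl duality lets me write $S=\sum_{\sigma\in\mathbb{S}_3} c_\sigma\,\P(\sigma)$, so $Y$ is the same combination of permutations inside the $\mathbb{S}_3$ subgroup of $\mathbb{S}_4$ fixing qudit $4$. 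Restricting the $\mathbb{S}_4$-irreps to this subgroup and invoking the branching in \cref{dec24} makes the components block-diagonal,
\begin{equation*}
  \pi_{\ysub{3,1}}(Y)=s_{\ysub{2,1}}\oplus s_{\ysub{3}}\ ,\qquad \pi_{\ysub{2,1,1}}(Y)=s_{\ysub{2,1}}\oplus s_{\ysub{1,1,1}}\ ,
\end{equation*}
where $s_\mu=\pi_\mu(S)$ are the components of $S$ on the three-qudit multiplicity spaces, $s_{\ysub{2,1}}\in\U(2)$ is a common $2\times2$ block, and $s_{\ysub{3}},s_{\ysub{1,1,1}}$ are the $1\times1$ phases.

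Next I would substitute these into \cref{cond1}, using that in the basis of \cref{tab:n=4} the intertwiner $[J]$ is itself block-diagonal, with antisymmetric $2\times2$ block $J_2=\left(\begin{smallmatrix}0&1\\-1&0\end{smallmatrix}\right)$ on the $\ysub{2,1}$ part and a single $1$ connecting the $\ysub{3}$ and $\ysub{1,1,1}$ parts. The computational heart is the elementary $2\times2$ identity $J_2\,M\,J_2^{\mathrm T}=(\det M)\,M^\ast$, valid for all $M\in\U(2)$ (it is the adjugate relation specialized to a unitary). Applying it blockwise, the equality $[J][\pi_{\ysub{3,1}}(Y)][J]^{\mathrm T}=\e^{\i\phi}[\pi_{\ysub{2,1,1}}(Y)]^\ast$ splits into a scalar equation on the $1\times1$ part, $s_{\ysub{3}}=\e^{\i\phi}s_{\ysub{1,1,1}}^\ast$, which pins down $\e^{\i\phi}=s_{\ysub{3}}s_{\ysub{1,1,1}}$, together with a matrix equation $(\det s_{\ysub{2,1}})\,s_{\ysub{2,1}}^\ast=\e^{\i\phi}\,s_{\ysub{2,1}}^\ast$, which after cancelling the invertible $s_{\ysub{2,1}}^\ast$ becomes $\det s_{\ysub{2,1}}=s_{\ysub{3}}\,s_{\ysub{1,1,1}}$.

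The final step is to recognize this as precisely condition \cref{eq:Vdet} of \cref{prop:2local-criteria}, since for the one-dimensional irreps $\det v_{\ysub{3}}=s_{\ysub{3}}$ and $\det v_{\ysub{1,1,1}}=s_{\ysub{1,1,1}}$. Thus the equality in \cref{cond1} can hold for some phase if and only if $S\in\mathcal{V}_2^{(3)}$; the hypothesis $S\in\mathcal{V}^{(3)}\setminus\mathcal{V}_2^{(3)}$ therefore forces the strict inequality \cref{cond1}, and \cref{4qudit} yields semi-universality. As a consistency check with the remark preceding the proposition, whenever the cruder trace condition \cref{eq:trcond} already holds the inequality \cref{cond1} is immediate; the determinant computation is exactly what upgrades the conclusion from generic $S$ to \emph{every} $S\notin\mathcal{V}_2^{(3)}$.

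I expect the main obstacle to be the bookkeeping in the first two steps rather than any deep idea: one must carefully distinguish Young's basis for the $\mathbb{S}_3$-irrep $\ysub{2,1}$ (in which \cref{eq:stair12} is written) from the branching-adapted basis of the $\mathbb{S}_4$-irreps (in which $[J]$ takes the clean form of \cref{tab:n=4}), and confirm that the common $2\times2$ block is literally $s_{\ysub{2,1}}$. Once the bases are aligned the algebra collapses to the single determinant identity, and the remainder is just the dictionary back to \cref{prop:2local-criteria}.
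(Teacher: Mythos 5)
Your proof is correct, and it takes a genuinely different route from the paper's. The paper first uses $\mathcal{SV}^{(3)}\subseteq\mathcal{V}_2^{(3)}$ to reduce to the case where $S$ is a pure relative phase $\sum_\lambda \e^{\i\phi_\lambda}\Pi_\lambda$, then checks the \emph{trace} criterion \cref{eq:trcond} via a geometric two-circles argument (\cref{fig:geo-proof}), and finally translates the resulting phase relation back into the determinant condition \cref{eq:Vdet}. You instead attack \cref{cond1} head-on: block-diagonalize $\pi_{\ysub{3,1}}(Y)$ and $\pi_{\ysub{2,1,1}}(Y)$ via the branching \cref{dec24}, observe that $[J]=J_2\oplus 1$ in the basis of \cref{tab:n=4}, and use the adjugate identity $J_2 M J_2^{\mathrm T}=(\det M)\,M^\ast$ for $M\in\U(2)$ to collapse the equality in \cref{cond1} to $\det s_{\ysub{2,1}}=s_{\ysub{3}}\,s_{\ysub{1,1,1}}$, i.e.\ exactly \cref{eq:Vdet}. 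What your route buys is an exact equivalence (equality in \cref{cond1} for some phase $\Leftrightarrow$ $S\in\mathcal{V}_2^{(3)}$) rather than a one-way implication through the weaker sufficient condition \cref{eq:trcond}; notably, the trace criterion is blind to the corner case where the two points in \cref{fig:geo-proof} coincide without being conjugates (e.g.\ $\phi_{\ysub{3}}=\phi_{\ysub{1,1,1}}$ but $2\phi_{\ysub{2,1}}\neq\phi_{\ysub{3}}+\phi_{\ysub{1,1,1}}$), where $|\Tr\pi_{\ysub{3,1}}(Y)|=|\Tr\pi_{\ysub{2,1,1}}(Y)|$ yet \cref{cond1} still holds — your determinant computation handles that case automatically, whereas the paper's argument as written needs the final determinant translation to pick it up. The one point deserving care, which you correctly flag, is that the $2\times2$ block appearing in both $\mathbb{S}_4$-irreps is literally $[s_{\ysub{2,1}}]$ in Young's basis of \cref{eq:stair12}; this follows because $S\in\operatorname{span}_\complex\{\P(\sigma):\sigma\in\mathbb{S}_3\}$ by Schur--Weyl duality and the first two rows of \cref{tab:n=4} fix the branching bases so that the restricted $\mathbb{S}_3$-matrices agree on the nose. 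With that alignment in place the argument is complete.
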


Together with the results of \cref{sec:over}, in particular \cref{eq:4semi}, this implies that any 3-qudit gate $S \in \mathcal{V}^{(3)} \setminus \mathcal{V}_2^{(3)}$ together with 2-qudit $\SU(d)$-invariant gates is semi-universal on $n$ qudits, for all $n \geq 3$.

\cref{prop:gen3q} is proven in \cref{app:3quditgate}. Next, we present an example of a unitary $Y$ satisfying the constraint in the above lemma.

\subsection*{Example: Semi-universality with 3-qudit reflections}

Recall the three-qudit reflection unitaries $R_+=\exp({\i\pi\Pi_{\ysub{3}}})$ and $R_-=\exp({\i\pi\Pi_{\ysub{1,1,1}}})$ in \cref{reflection}.  Suppose in addition to the 2-qudit $\SU(d)$-invariant gates, one can also use one of the gates $R_+$ or $R_-$. Is this set semi-universal on $n=4$ qudits? 

It is easy to verify that $R_\pm \not\in \mathcal{V}_2^{(3)}$ when $d \geq 3$ using \cref{prop:2local-criteria}. For instance, 
rewriting as
\begin{equation}\label{eq:Rre}
  R_+ = -\Pi_{\ysub{3}} + \Pi_{\ysub{2, 1}} + \Pi_{\ysub{1,1,1}},
\end{equation}
we find that \cref{eq:Vdet} does not hold, i.e.,
\begin{equation}
    1 = \det \pi_{\ysub{2,1}}(R_+) \neq (\det \pi_{\ysub{3}}(R_+)) (\det \pi_{\ysub{1,1,1}}(R_+)) = -1.
\end{equation}
Equivalently, we can see this using the criterion in \cref{4qudit} or trace condition in \cref{eq:trcond}.
In particular, it can be easily seen that $R_\pm\otimes \mathbb{I}$ acts trivially on one of $\mathcal{M}_{\ysub{2, 1,1 }}$ and $\mathcal{M}_{\ysub{3, 1}}$ and non-trivially in the other (depending on the sign $\pm$). For instance, while $R_+\otimes \mathbb{I}$ acts trivially in $\mathcal{M}_{\ysub{2, 1,1 }}$, inside $\mathcal{M}_{\ysub{3, 1}}$ it has the matrix representation 
\begin{align}
 [\pi_{\ysub{3,1 }}(R_+\otimes \mathbb{I})] & = \begin{+pmatrix}
 1 & & \vline \\
 & 1 & \\ \hline
 & & -1
 \end{+pmatrix} \ ,
\end{align}
which can be seen from \cref{eq:Rre} together with \cref{tab:n=4}. This means \cref{cond1} cannot hold as equality for any phase $\e^{\i\phi}$. In particular,
\begin{equation}
  \begin{split}
    |\Tr (\pi_{\ysub{3,1}}(R_+))| & = 1 \\
    |\Tr (\pi_{\ysub{2,1,1}}(R_+))| & = 3 \ ,
  \end{split}
\end{equation}
so the condition in \cref{eq:trcond} holds.

It is also worth noting that for generic values of $\theta\in [0,2\pi)$, the unitary realized by the following circuit is in 
$\mathcal{SV}^{(4)}$ but not in $\mathcal{SV}_2^{(4)}$.
$$\Qcircuit @C=1em @R=.7em {
 & \multigate{2}{R_\pm} & \qw & \multigate{2}{R_\pm} & \qw & \qw \\
 & \ghost{R_\pm} & \qw & \ghost{R_\pm} & \qw & \qw \\
 & \ghost{R_\pm} & \multigate{1}{\e^{\i \theta \P(34)}} & \ghost{R_\pm} & \multigate{1}{\e^{-\i \theta \P(34)}} & \qw \\
 & \qw & \ghost{{\e^{\i \theta \P(34)}}} & \qw 
& \ghost{{\e^{-\i \theta \P(34)}}} & \qw 
 }$$\\
Indeed, since $R_\pm$ acts trivially in $\mathcal{M}_{\ysub{2,2}}$ and one of $\mathcal{M}_{\ysub{2,1,1}}$ or $\mathcal{M}_{\ysub{3,1}}$, the unitary realized by the above circuit acts trivially in all sectors except $\mathcal{M}_{\ysub{3, 1}}$, or $\mathcal{M}_{\ysub{2, 1, 1}}$, depending on the sign $\pm$ of $R_\pm$. Furthermore, in this sector, the realized unitary does not act as a global phase, which can be seen by noting that it is not permutationally invariant. 
Therefore, \cref{4qudit} implies that each one of $R_+$ or $R_-$, together with 2-qudit $\SU(d)$-invariant gates are semi-universal on $n=4$ qudits.

It is worth noting that the reflection unitary for the charge sector $\sydiag{2,1}$, i.e., $
\exp({\i\pi\Pi_{\ysub{2,1}}})$, is already contained in $\mathcal{SV}^{(3)} \subseteq \mathcal{V}_2^{(3)}$, since $\det \pi_\lambda(\exp({\i\pi\Pi_{\ysub{2,1}}})) = 1$ for all $\lambda \in \Lambda_{3, d}$.

\section{Semi-universality of 3-qudit gates on arbitrary number of qudits}\label{sec:over}

Next, we prove the semi-universality of 3-qudit gates for systems with $n\ge 4$, i.e., \cref{thm:semi-universality}. To establish this result, we use induction. Applying \cref{MainLemma}, we show that by composing $m$-qudit unitaries in $\mathcal{SV}^{(m)}$, one can obtain any element of $\mathcal{SV}^{(m+1)}$ on $m+1$ qudits. More precisely, let
\begin{equation}\label{eq:Wdef}
    \mathcal{W}^{(m + 1)} = \angles{\P_{ij} (\mathcal{SV}^{(m)} \otimes \ident) \P_{ij} \given i \neq j} 
\end{equation}
denote the group generated by $m$-local unitaries in $\mathcal{SV}^{(m)}$ acting on $m + 1$ qudits. It is worth noting that $\mathcal{W}^{(m + 1)}$ is generated by those unitaries restricted to only acting on the first $m$ or the last $m$ qudits, i.e.\footnote{This follows because every even permutation, i.e. element of the alternating subgroup $\sigma \in \mathbb{A}_m \subseteq \mathbb{S}_m$, is contained in $\mathcal{SV}^{(m)}$, as $\det \P[\lambda](\sigma) = 1$ for all $\lambda \in \Lambda_n$. Thus the group on the right-hand side of \cref{eq:Wgens} is invariant under conjugation by $\P(\sigma) : \sigma \in \mathbb{A}_{m + 1}$ and by $\P_{23}$, which together generate all of $\mathbb{S}_{m + 1}$.}
\begin{equation}\label{eq:Wgens}
 \mathcal{W}^{(m + 1)} = \angles{ \mathcal{SV}^{(m)} \otimes \ident , \ident \otimes \mathcal{SV}^{(m)}}\ .
\end{equation}

Then, we show that
\begin{equation}\label{sf}
\mathcal{W}^{(m + 1)} = \mathcal{SV}^{(m + 1)} \quad \text{ for } m \geq 4\ .
\end{equation}
This means that for any $n\ge 4$, the group generated by the permuted versions of $\mathcal{SV}^{(4)}$ is equal to $\mathcal{SV}^{(n)}$, i.e., 
\begin{equation}\label{eq:4semi}
 \angles{\P_{ij} (\mathcal{SV}^{(4)} \otimes \ident^{\otimes (n-4)}) \P_{ij} : i\neq j}=\mathcal{SV}^{(n)}\ .
\end{equation}
As we showed in the previous section for $4$ qudits, $\mathcal{SV}^{(4)}$ can be generated by 3-qudit $\SU(d)$-invariant unitaries. Combining these results, we conclude that 3-qudit $\SU(d)$-invariant gates are semi-universal on $n$ qudits systems, i.e., they generate $\mathcal{SV}^{(n)}$, which proves \cref{thm:semi-universality}. 

To prove \cref{sf}, first we note that $\mathcal{W}^{(m+1)}\subseteq \mathcal{SV}^{(m+1)}$, which follows from the fact that 
\be\label{incl}
\mathcal{SV}^{(m)}\otimes \mathbb{I} \subseteq \mathcal{SV}^{(m+1)}\ .
\ee
This can be seen, for instance, by noting that $\mathcal{SV}^{(m)}\otimes \mathbb{I} \subseteq \mathcal{V}^{(m+1)}$, together with $\mathcal{SV}^{(m)}=[\mathcal{SV}^{(m)}, \mathcal{SV}^{(m)}]$. 

Therefore, to prove \cref{sf} we need to show that $\mathcal{W}^{(m+1)} $ contains all elements of $\mathcal{SV}^{(m+1)}$. The example case $n = 4$, discussed in \cref{sec:4ex}, introduces most of the main ideas that are needed in this proof. The only new ingredients are a few useful facts about the representation theory of $\mathbb{S}_n$ that are reviewed in the next section, and the following lemma which is of independent interest.
\begin{lemma}\label{lem-2-local}
On a system with $n$ qudits, consider the group generated by one-parameter families $\e^{\i t (\P_{ij}-\P_{kl})}$ acting on qudits $i\neq j$ and $k\neq l$. This group is equal to the commutator subgroup of $\mathcal{V}_2^{(n)}$, i.e. 
\begin{equation}\label{eq:swapirrep}
  \langle \e^{\i t (\P_{ij}- \P_{kl})} : t \in \real, i \neq j, k \neq l \rangle= \mathcal{SV}_2^{(n)}\ .
\end{equation}
In particular, for all irreps $\lambda\in\Lambda_{n,d}$, this group acts irreducibly on $\mathcal{M}_\lambda$.
\end{lemma}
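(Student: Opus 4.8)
The plan is to pass to the Lie algebra level. Write $\mathfrak{g}$ for the Lie algebra of $\mathcal{V}_2^{(n)}$, which by \cref{eq:exch} is generated by $\set{\i\P_{ij} : i \neq j}$ together with $\i\ident$; let $\mathfrak{s} = [\mathfrak{g}, \mathfrak{g}]$ be the Lie algebra of $\mathcal{SV}_2^{(n)}$, and let $\mathfrak{h}$ be the Lie subalgebra generated by the infinitesimal generators $A_{ij,kl} := \i(\P_{ij} - \P_{kl})$. Since the subgroup generated by a family of one-parameter subgroups $\set{\e^{t X}}$ is exactly the connected Lie subgroup whose Lie algebra is the subalgebra generated by the $X$'s, and since both $\mathcal{SV}_2^{(n)}$ and the left-hand side of \cref{eq:swapirrep} are connected and lie inside the compact group $\mathcal{V}_2^{(n)}$, it suffices to prove the algebra identity $\mathfrak{h} = \mathfrak{s}$. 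The final sentence of the lemma is then immediate, since $\mathcal{SV}_2^{(n)}$ acts irreducibly on each $\mathcal{M}_\lambda$ by Schur--Weyl duality.

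For the inclusion $\mathfrak{h} \subseteq \mathfrak{s}$, I would use that $\mathfrak{g}$, being the Lie algebra of a compact group, is reductive and splits as $\mathfrak{g} = \mathfrak{z} \oplus \mathfrak{s}$ with $\mathfrak{z}$ its center. Concretely, $\mathfrak{z}$ consists of the block-scalar elements of $\mathfrak{g}$ (by Schur's lemma, as $\mathcal{V}_2^{(n)}$ acts irreducibly on each $\mathcal{M}_\lambda$), while $\mathfrak{s}$ consists of precisely those elements of $\mathfrak{g}$ that are traceless in every block $\mathcal{M}_\lambda$. All transpositions form a single conjugacy class of $\mathbb{S}_n$, so $\pi_\lambda(\P_{ij})$ and $\pi_\lambda(\P_{kl})$ share the same trace $\chi^\lambda(\tau)$ in every irrep $\lambda$; hence $\Tr \pi_\lambda(A_{ij,kl}) = 0$ for all $\lambda$, so each generator $A_{ij,kl}$ is block-traceless and lies in $\mathfrak{s}$. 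This gives $\mathfrak{h} \subseteq \mathfrak{s}$, i.e.\ the left-hand side of \cref{eq:swapirrep} is contained in $\mathcal{SV}_2^{(n)}$.

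The reverse inclusion is the crux. The naive approach---writing a commutator of overlapping transpositions as $[\P_{ij} - \P_{pq},\, \P_{kl} - \P_{pq}]$ for a transposition $\P_{pq}$ disjoint from $\set{i,j,k,l}$---only works when there are enough spare qudits and breaks down for small $n$ (already for $n = 4$). The device that makes the argument uniform in $n$ is the normalized class sum $\overline{\P} := \binom{n}{2}^{-1} \sum_{k<l} \P_{kl}$. As a sum over the entire conjugacy class of transpositions, $\overline{\P}$ is central in the group algebra $\mathbb{C}[\mathbb{S}_n]$ and hence commutes with every permutation. Two consequences follow. First, for each transposition $(ij)$,
\[
  \i \P_{ij} - \i\overline{\P} \;=\; \binom{n}{2}^{-1} \sum_{k < l} \i(\P_{ij} - \P_{kl}) \;\in\; \mathfrak{h},
\]
being an average of the generators $A_{ij,kl}$. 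Second, setting $\mathfrak{c} := \mathrm{span}_{\real}\set{\i\overline{\P}, \i\ident}$, which is central in $\mathfrak{g}$, every generator of $\mathfrak{g}$ lies in $\mathfrak{h} + \mathfrak{c}$, because $\i\P_{ij} = (\i\P_{ij} - \i\overline{\P}) + \i\overline{\P}$ and $\i\ident \in \mathfrak{c}$. Thus $\mathfrak{h} + \mathfrak{c}$ is a subalgebra containing all generators of $\mathfrak{g}$, so $\mathfrak{g} = \mathfrak{h} + \mathfrak{c}$. Since $\mathfrak{c}$ is central, $\mathfrak{s} = [\mathfrak{g}, \mathfrak{g}] = [\mathfrak{h} + \mathfrak{c},\, \mathfrak{h} + \mathfrak{c}] = [\mathfrak{h}, \mathfrak{h}] \subseteq \mathfrak{h}$. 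Combined with $\mathfrak{h} \subseteq \mathfrak{s}$, this yields $\mathfrak{h} = \mathfrak{s}$ and proves \cref{eq:swapirrep}.

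I expect the main obstacle to be exactly this reverse inclusion for small $n$: without enough ancillary qudits one cannot isolate a single commutator $[\P_{ij}, \P_{kl}]$ inside $\mathfrak{h}$ by subtracting a disjoint transposition. Replacing that disjoint transposition by the central class sum $\overline{\P}$ eliminates the qudit-counting bookkeeping and handles every $n$ simultaneously; the remainder of the argument is purely structural, relying only on the reductivity of $\mathfrak{g}$ and on transpositions forming a single conjugacy class.
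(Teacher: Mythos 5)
Your proposal is correct and follows essentially the same route as the paper: the central object in both is the class sum of transpositions (your normalized $\overline{\P}$ is the paper's $\i B=\i\sum_{i\neq j}\P_{ij}$), used to show that the Lie algebra generated by the $\i\P_{ij}$ splits as the algebra generated by the differences $\i(\P_{ij}-\P_{kl})$ plus a central piece that is killed upon taking commutators. Your argument for the inclusion $\mathfrak{h}\subseteq\mathfrak{s}$ via block-tracelessness of $\i(\P_{ij}-\P_{kl})$ is just a direct restatement of the paper's ``centerless Hamiltonian'' argument combined with its Lemma on identity components, so the two proofs coincide in substance.
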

This is proven in \cref{sec:commconn} using the fact that any swap $\i \P_{ij}$ can be written as a linear combination of the differences $\i(\P_{ij}- \P_{kl}): i \neq j, k \neq l $ and the permutationally-invariant operator $\i B=\i \sum_{i\neq j} \P_{ij}$, which is in the center of the Lie algebra generated by transpositions.

\subsection{Useful facts about the symmetric group \texorpdfstring{$\S_n$}{Sn}}\label{sec:symfacts}

Recall the isotypic decomposition in \cref{eq:schur-weyl},
\begin{equation}\nonumber
 (\complex^d)^{\otimes (m + 1)} = \bigoplus_{\lambda \in \Lambda_{m + 1, d}} \hilbert[Q]_\lambda \otimes \hilbert[M]_\lambda\ ,
\end{equation}
where $\Lambda_{m + 1, d}$ is the set of Young diagrams with $m + 1$ boxes and $\leq d$ rows, and $\hilbert[M]_\lambda$ is the irrep of $\mathbb{S}_{m + 1}$ labeled with the Young diagram $\lambda$. Consider $\mathbb{S}_m \subset \mathbb{S}_{m + 1}$ the subgroup which permutes the first $m$ qudits. Then, the branching rule {of $\mathbb{S}_m$} says that, for each $\lambda \in \Lambda_{m + 1, d}$, there is $\Gamma_\lambda \subseteq \Lambda_{m, d}$ such that
\begin{equation}\label{eq:branch}
 \hilbert[M]_\lambda = \bigoplus_{\gamma \in \Gamma_\lambda} \hilbert[M]_\gamma\ ,
\end{equation}
where {each irrep in $\gamma\in \Gamma_\lambda$ of $\mathbb{S}_m$ shows up with multiplicity one and the subspace} $\hilbert[M]_\gamma$ carries that irrep \cite{fulton2013representation,okounkov2005}.
For the proof of \cref{thm:semi-universality}, in addition to this fact, we will also use the following facts.\\ 
\begin{fact*}
 Whenever $m \geq 4$ and $\lambda \in \Lambda_{m + 1, d}$ is not one-dimensional (i.e., it is neither the trivial nor sign irrep of $\mathbb{S}_{m + 1})$,
\begin{enumerate}[label=\arabic*.,ref=Fact \arabic*]
 \item If $\lambda$ has $\leq d$ rows then each $\gamma \in \Gamma_\lambda$ has $\leq d$ rows.\label{fact:0}
 \item At least one irrep in the branching, $\gamma \in \Gamma_\lambda \subseteq \Lambda_{m, d}$, has $\dim \hilbert[M]_\gamma \geq 3$.\label{fact:1}
 \item For any distinct $\lambda \neq \lambda' \in \Lambda_{m + 1, d}$, there is some $\gamma \in \Lambda_{m, d}$ with $\dim \hilbert[M]_\gamma \geq 2$ which is in one of, but not both, $\Gamma_\lambda$ or $\Gamma_{\lambda'}$.\label{fact:2}
\end{enumerate}
\end{fact*}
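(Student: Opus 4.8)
The plan is to treat the three claims in increasing order of difficulty, using throughout the single input recalled in \cref{eq:branch}: the branching set $\Gamma_\lambda$ consists exactly of the Young diagrams obtained from $\lambda$ by deleting one removable corner, each with multiplicity one, so that $\dim\hilbert[M]_\lambda=\sum_{\gamma\in\Gamma_\lambda}\dim\hilbert[M]_\gamma$. Two pieces of elementary bookkeeping will recur: the removable corners of $\lambda$ are in bijection with its distinct part sizes, and the only one-dimensional irreps of $\mathbb{S}_m$ are the trivial $(m)$ and sign $(1^m)$ diagrams. The first claim is then immediate, since deleting a box never creates a new row: every $\gamma\in\Gamma_\lambda$ has at most as many rows as $\lambda$, hence $\leq d$. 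This also confirms that the branching in \cref{eq:branch} stays inside $\Lambda_{m,d}$.

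For the second claim I would first record that a one-dimensional branch can only be $(m)$ or $(1^m)$, and that (for non-one-dimensional $\lambda$, which excludes the row and column diagrams) $(m)\in\Gamma_\lambda$ forces $\lambda=(m,1)$ while $(1^m)\in\Gamma_\lambda$ forces $\lambda=(2,1^{m-1})$; a diagram cannot be both for $m\geq4$. Since $(m,1)$ branches onto $(m-1,1)$ and $(2,1^{m-1})$ onto $(2,1^{m-2})$, any non-one-dimensional $\lambda$ has at least one non-one-dimensional branch $\gamma_0$. For $m\geq5$ this already suffices, because the minimal nontrivial irrep dimension of $\mathbb{S}_m$ is $m-1\geq4\geq3$, so $\dim\hilbert[M]_{\gamma_0}\geq3$. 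The only gap is $m=4$, where $\mathbb{S}_4$ carries the exceptional two-dimensional irrep $(2,2)$; there I would simply verify by hand over the five non-one-dimensional diagrams of $\mathbb{S}_5$ that each branches onto $(3,1)$ or $(2,1,1)$, both of dimension $3$.

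The third claim is the crux, and rests on a structural lemma in Young's lattice: for distinct $\lambda\neq\lambda'$ of size $m+1$, the common branches number at most one, with $\Gamma_\lambda\cap\Gamma_{\lambda'}=\{\lambda\cap\lambda'\}$ precisely when $|\lambda\cap\lambda'|=m$. I would prove this by observing that any common immediate predecessor $\mu$ satisfies $\mu\subseteq\lambda\cap\lambda'$, while $|\lambda\cap\lambda'|\leq m$ since equal-size distinct diagrams cannot contain one another; as $|\mu|=m$ this forces $\mu=\lambda\cap\lambda'$, giving uniqueness. Granting this, take the branch $\gamma_0$ of the second claim, with $\dim\hilbert[M]_{\gamma_0}\geq3\geq2$. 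If $\gamma_0\notin\Gamma_{\lambda'}$ it lies in exactly one of the two branchings and we are done. Otherwise $\gamma_0$ is the unique common branch, and it suffices to exhibit a \emph{second} non-one-dimensional branch in $\Gamma_\lambda$ or $\Gamma_{\lambda'}$, since by the lemma any such branch is automatically not shared.

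The main obstacle is therefore the residual case where both $\lambda$ and $\lambda'$ (both necessarily non-one-dimensional, as a one-dimensional diagram has no non-one-dimensional branch) possess exactly one non-one-dimensional branch and it coincides. I would dispose of this by classifying the diagrams with a single non-one-dimensional branch: counting removable corners via distinct part sizes, a diagram with one corner is a rectangle $(c^r)$, one with two corners having a one-dimensional branch must be $(m,1)$ or $(2,1^{m-1})$, and one with $\geq3$ corners has no one-dimensional branch at all and hence $\geq3$ non-one-dimensional branches. Thus the candidates are exactly the rectangles $(c^r)$ with $r,c\geq2$, together with $(m,1)$ and $(2,1^{m-1})$. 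A short check shows no two distinct diagrams on this list share their unique non-one-dimensional branch for $m\geq4$: a rectangle is recovered from its branch, whereas $(m,1)$ and $(2,1^{m-1})$ branch to the incompatible shapes $(m-1,1)$ and $(2,1^{m-2})$. This contradicts the standing assumption and completes the argument. I expect the bookkeeping in this final classification, rather than any conceptual subtlety, to be the most delicate part.
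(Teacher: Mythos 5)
Your proof is correct, and it is considerably more complete than what the paper itself provides: the paper simply asserts that these facts ``can be deduced from Young's lattice'' and verifies them by inspection for the $n=5$ layer in the caption of \cref{fig:young}, whereas you supply a general argument valid for all $m \geq 4$ and all $d$. The two non-routine ingredients you add are (i) the observation that a one-dimensional branch $(m)$ or $(1^m)$ forces $\lambda$ to equal $(m,1)$ or $(2,1^{m-1})$, which combined with the classical bound that every irrep of $\mathbb{S}_m$ of dimension $>1$ has dimension at least $m-1$ for $m \geq 5$, plus a finite check of the five relevant diagrams at $m=4$, yields Fact~2; and (ii) the uniqueness-of-common-branches lemma $\Gamma_\lambda \cap \Gamma_{\lambda'} \subseteq \{\lambda \cap \lambda'\}$, proved by a size comparison, which reduces Fact~3 to classifying the diagrams with exactly one branch of dimension $>1$ (the rectangles $(c^r)$ with $r,c \geq 2$, together with $(m,1)$ and $(2,1^{m-1})$) and checking that no two of these share that branch once $m \geq 4$ --- the would-be coincidences all occur only at $m=3$, which is exactly why the induction in the paper must start at $m=4$. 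I checked the corner-counting and the exclusion of coincidences and found no gaps; your argument also gracefully covers the case where $\lambda'$ is one-dimensional, since then $\Gamma_{\lambda'}$ contains no diagram of dimension $\geq 2$ at all. The only fact you invoke without proof is the minimal-dimension bound for $\mathbb{S}_m$, which is standard and of the same kind the paper itself cites \cite{rasala1977minimal} for. In short: your route buys a rigorous, self-contained proof at the cost of some bookkeeping, where the paper's route buys brevity at the cost of asking the reader to extrapolate from one layer of the lattice.
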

These facts can be deduced from Young's lattice in \cref{fig:young} (in the caption we explain these three facts for the example of $n=5$). Note that \ref{fact:1} and \ref{fact:2} do not hold for $n=4$, and this is another way to understand why 2-qudit gates are not semi-universal on $n=4$ qudits (or, more precisely,  why the induction starts at $m=4$ rather than $m=3$).
 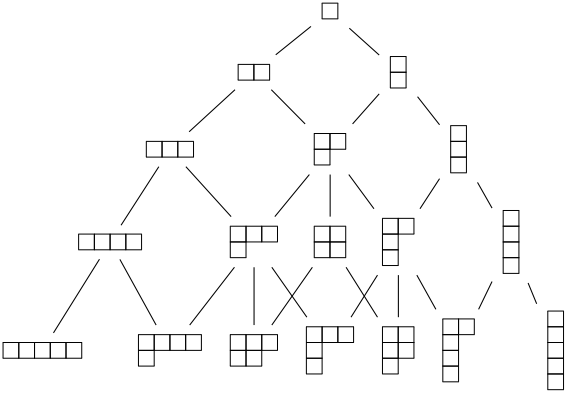
\begin{figure}
\begin{tikzcd}[tips=false,column sep=0.3em,row sep=0.8em]
 &&&& \ydiag{1} \ar[dl] \ar[dr] &&&& \\
 &&& \ydiag{2} \ar[dl] \ar[dr] && \ydiag{1,1} \ar[dl] \ar[dr] &&& \\
 && \ydiag{3} \ar[dl] \ar[dr] && \ydiag{2,1} \ar[dl] \ar[d] \ar[dr] && \ydiag{1,1,1} \ar[dl] \ar[dr] && \\
 & \mathclap{\ydiag{4}} \ar[dl] \ar[dr] && \ydiag{3,1} \ar[dl] \ar[d] \ar[dr] & \ydiag{2,2} \ar[dl] \ar[dr] & \ydiag{2,1,1} \ar[dl] \ar[d] \ar[dr] && \ydiag{1,1,1,1} \ar[dl] \ar[dr] & \\
 \ydiag{5} && \ydiag{4,1} & \ydiag{3,2} & \ydiag{3,1,1} & \ydiag{2,2,1} & \ydiag{2,1,1,1} && \ydiag{1,1,1,1,1}
\end{tikzcd}
\caption{The beginning of Young's lattice, from which the branching rule may be read: the representations that show up in the branching \cref{eq:branch} of the Young diagram $\lambda$ are precisely the subdiagrams, which are connected by an edge in the lattice. The last layer corresponds to $n=5$ qudits. One can easily check 
\ref{fact:0}, \ref{fact:1}, and \ref{fact:2}. For instance,  \cref{eq:branch} implies that except for the two Young diagrams at the two ends of each layer, namely single-row and single-column Young diagrams, the rest of the diagrams have dimensions larger than 1, and indeed for diagrams with $n=5$, they all have dimensions 3 or larger.  
We can explicitly check each fact for $\lambda \in \Lambda_{5, d}$. According to \ref{fact:0}, for any irrep $\lambda\in \Lambda_{5, d}$, the corresponding irreps in their branching denoted as $\Gamma_\lambda$, which are the diagrams connected to $\lambda$ in the previous level, have less or equal number of rows. Furthermore, as stated in \ref{fact:1}, all the irreps $\lambda \in \Lambda_{5, d}$ that are not one-dimensional, namely everything except the diagrams at the left and right corners, have either $\sydiag{3, 1}$ or $\sydiag{2, 1, 1}$ in their branching, both of which are three-dimensional. Also, as stated in \ref{fact:2}, for any distinct irreps $\lambda \neq \lambda' \in \Lambda_{5, d}$ which are not one-dimensional, at least one of $\sydiag{3, 1}$, $\sydiag{2,2}$, or $\sydiag{2, 1,1}$ is in the branching of $\lambda$ but not $\lambda'$.
\label{fig:young}}
\end{figure}

\subsection{Extending semi-universality from \texorpdfstring{$m$}{m} to \texorpdfstring{$m+1$}{m+1} qudits (Proof of Eq. (\ref{sf}))}

Here, we prove \cref{sf}. Similar to the proof in the special case of $n=4$, the proof is in 2 steps, namely we establish conditions \textbf{A} and \textbf{B} of \cref{MainLemma}. \\

\noindent $\bullet$ Condition \textbf{A}: We show that for each irrep $\lambda\in\Lambda_{m+1, d}$, the projection of $\mathcal{W}^{(m+1)}$ to irrep $\mathcal{M}_\lambda$ is equal to
\be\label{proj3}
\pi_\lambda(\mathcal{W}^{(m+1)})=\pi_\lambda(\mathcal{SV}^{(m+1)})=\SU(\mathcal{M}_\lambda)\ .
\ee
Since this condition is trvially satisfied when $\dim(\mathcal{M}_\lambda)=1$, in the following, without loss of generality, we assume $\dim(\mathcal{M}_\lambda)>1$. 
Similar to the case of $n=4$ qudits, first we consider elements of $\mathcal{W}^{(m+1)}$ that act trivially on qudit $m+1$, i.e., unitaries in the form $(V\otimes \mathbb{I}): V\in \mathcal{SV}^{(m)}$. Relative to the decomposition in \cref{eq:branch}, the projection of this subgroup of $\mathcal{W}^{(m+1)}$ to $\mathcal{M}_\lambda$ is equal to
\begin{align}
    \{ \pi_\lambda(V\otimes \mathbb{I}): V\in \mathcal{SV}^{m})\}= \bigoplus_{\gamma \in \Gamma_\lambda} \SU(\hilbert[M]_\gamma)\ .
\end{align}
Again, this block-diagonal form is a consequence of the fact that qudit $m+1$ does not interact with the rest of qudits. Then, we argue that as soon as we include interactions with this qudit we get the entire $\SU(\hilbert[M]_\lambda)$. To achieve this we apply \cref{thm:mulblocks}, whose assumptions are verified in the following: 

\begin{enumerate}
 
\item In decomposition $\hilbert[M]_\lambda = \bigoplus_{\gamma \in \Gamma_\lambda} \hilbert[M]_\gamma$, for at least one irrep $\gamma^\ast\in \Gamma_\lambda$, $\dim \hilbert[M]_{\gamma^\ast} \geq 3$. This follows immediately from \ref{fact:1}, which holds because $m \geq 4$ and $\lambda$ is not a 1D irrep of $\mathbb{S}_{m + 1}$.\\

\item The  one-parameter families
$\exp({\i t (\P_{ij}- \P_{kl})}): t \in \real$ for all $i \neq j, k \neq l\in\{1,\cdots, m+1\}$
are inside $\mathcal{W}^{(m + 1)}$, and according to \cref{lem-2-local}, the group generated by them acts irreducibly on $\mathcal{M}_\lambda$ for all $\lambda\in \Lambda_{m+1, d}$. To see why these families are inside $\mathcal{W}^{(m + 1)}$, note that according to \cref{lem-2-local}, $\exp({i t (\P_{ij}- \P_{kl})})$ is in the group $\mathcal{SV}^{(m)}$ defined on any 
$m$ out of $m+1$ qudits that contain qudits $i, j, k, l$. Then, together with \cref{incl}, this implies that on $m+1$ qudits, it is inside 
$\mathcal{W}^{(m + 1)}$. \\

\end{enumerate}

Therefore, both assumptions of \cref{thm:mulblocks} are satisfied for $\pi_\lambda(\mathcal{W}^{(m+1)})$, which implies $\pi_\lambda(\mathcal{W}^{(m+1)})=\SU(\mathcal{M}_\lambda)$. In summary, for all $\lambda\in\Lambda_{m+1, d}$, the subsystem universality holds, which means condition \textbf{A} of \cref{MainLemma} is satisfied. \\

\noindent $\bullet$ Condition \textbf{B}:
Consider an arbitrary pair of distinct irreps $\lambda, \lambda' \in \Lambda_{m + 1, d}$. If one or both of $\lambda$ and $\lambda'$ are 1D irreps of $\mathbb{S}_{m + 1}$, then condition \textbf{B} is satisfied for that pair. Therefore, without loss of generality, we assume they both have dimensions larger than one. Then, using 
\ref{fact:2} we know that the branching sets $\Gamma_\lambda \neq \Gamma_{\lambda'}$, with some differing irrep having dimension greater than two. Without loss of generality, let $\gamma \in \Gamma_\lambda$ with $\gamma \not\in \Gamma_{\lambda'}$ and $\dim \hilbert[M]_\gamma \geq 2$.
Consider the subgroup of $\mathcal{SV}^{(m)}$ isomorphic to $\SU(\hilbert[M]_\gamma)$ formed from $m$-qudit $\SU(d)$-invariant unitaruies that act trivially in all sectors except $\hilbert[M]_\gamma$. Then, for any unitary $V$ in this subgroup $V\otimes \mathbb{I}$ acts trivially in $\mathcal{M}_{\lambda'}$ and non-trivially in $\mathcal{M}_{\lambda}$. That is, 
$\pi_{\lambda'}(V\otimes \mathbb{I})= \mathbb{I}_{\lambda'}$ and $\pi_{\lambda}(V\otimes \mathbb{I})=\pi_{\gamma}(V)\oplus \mathbb{I}_\perp$, where $\mathbb{I}_\perp$ is the identity operator on the orthogonal complement of $\mathcal{M}_{\gamma}$ in $\mathcal{M}_{\lambda}$. We conclude that condition \textbf{B}' of \cref{MainLemma}, and thus, condition \textbf{B} of this lemma are also satisfied. \\

In conclusion, the subgroup $\mathcal{W}^{(m+1)}\subset \mathcal{SV}^{(m+1)}$ satisfies both assumptions \textbf{A} and \textbf{B} of \cref{MainLemma}, and therefore, this lemma implies that $\mathcal{W}^{(m+1)}= \mathcal{SV}^{(m+1)}$. This completes the proof of semi-universality of 3-qudit $\SU(d)$-invariant gates.

\section{Impossibility of achieving universality with (\texorpdfstring{$\bm{n - 1}$}{n - 1})-qudit gates}\label{sec:imposs}

We saw that semi-universality can be achieved with 3-qudit gates. The next natural question is what is the minimum locality of gates that is needed to achieve universality. More precisely, assuming we cannot use any ancilla qudits, what is the minimum $k$ for which $k$-qudit $\SU(d)$-invariant gates become universal on $n$ qudits, such that $\mathcal{V}^{(n)}_k=\mathcal{V}^{(n)}_n=\mathcal{V}^{(n)}$? 

For qubit systems with $\SU(2)$ symmetry, which corresponds to the special case of $d=2$ in the present paper, it was shown in \cite{Marvian2024Rotationally} that universality is achieved with $k = 2\lfloor n/2 \rfloor$. This means that 
there is an even-odd effect, which was shown in \cite{Marvian2024Rotationally} to be related to time-reversal symmetry. Specifically, when $n$ is odd, $(n - 1)$-local symmetric gates are universal, but they are not when $n$ is even. Here we show that, for $d \geq 3$, there is no such effect: without using ancillary qudits, $(n - 1)$-qudit $\SU(d)$-invariant gates are not universal. This proves \cref{prop:no}.

According to the general results of \cite{Marvian2022Restrict}, which apply to arbitrary symmetry groups, the semi-universality of $k$-qudit $\SU(d)$-invariant gates for $k\ge 3$ together with the fact $\SU(d)$ is a connected Lie group imply that   the difference between the dimensions of the Lie group of all $\SU(d)$-invariant unitaries on $n$ qudits, and the subgroup generated by $k$-qudit $\SU(d)$-invariant gates is
\begin{equation}\label{eq:centerdiff}
  \dim(\mathcal{V}_n^{(n)}) - \dim(\mathcal{V}_k^{(n)}) = \abs{\Lambda_{n, d}} - \abs{\Lambda_{k, d}}\ .
\end{equation}
Here, $|\Lambda_{n, d}|$ is the size of $\Lambda_{n, d}$, the set of inequivalent irreps of $\SU(d)$, or equivalently $\mathbb{S}_n$, that appear on $n$ qudits, which can be labeled by Young diagrams with $n$ boxes and $\leq d$ rows. In particular, in the case of $\SU(d)$ symmetry, the arguments of \cite{Marvian2022Restrict} imply that, because $\SU(d)$ is a connected Lie group, the center of the Lie algebra associated to $\mathcal{V}_k^{(n)}$ has dimension $\abs{\Lambda_{k, d}}$, whereas the center of the Lie algebra associated to $\mathcal{V}_n^{(n)}=\mathcal{V}^{(n)}$ has dimension $\abs{\Lambda_{n, d}}$, and because of semi-universality this is exactly the difference between dimensions of $\mathcal{V}_n^{(n)}$ and $\mathcal{V}_k^{(n)}$. In \cref{app:conj-proof}, we show that
\begin{restatable}[Strict monotonicity of the number of irreps, for $d\ge 3$]{lemma}{lemIrrepsSn}\label{lem:irrepsSn} 
  Let $|\Lambda_{k, d}|$ be the number of inequivalent irreps of $\SU(d)$, or equivalently $\mathbb{S}_k$, on $k$ qudits with the total Hilbert space $(\mathbb{C}^d)^{\otimes k}$. Then, for $d\geq 3$, $\abs{\Lambda_{k, d}}
  > \abs{\Lambda_{k-1, d}}$ for all $k$. 
\end{restatable}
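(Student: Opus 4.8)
The plan is to reduce the statement to a purely combinatorial fact about integer partitions and then prove it by an explicit injection together with an explicit element outside its image. By the labeling discussed in \cref{sec:schur-weyl}, $|\Lambda_{k,d}|$ equals the number of Young diagrams with $k$ boxes and at most $d$ rows, i.e.\ the number of partitions $\lambda = (\lambda_1 \geq \lambda_2 \geq \cdots \geq \lambda_d \geq 0)$ of $k$ with at most $d$ nonzero parts. So it suffices to show that the number of such partitions of $k$ strictly exceeds the number of such partitions of $k-1$ whenever $d \geq 3$.

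First I would construct an injection $\phi$ from the set of partitions of $k-1$ with at most $d$ parts into the set of partitions of $k$ with at most $d$ parts, by adding a single box to the first (longest) row: $\phi(\lambda_1, \lambda_2, \dots) = (\lambda_1 + 1, \lambda_2, \dots)$. This is well defined, since $\lambda_1 + 1 \geq \lambda_1 \geq \lambda_2$ keeps the sequence weakly decreasing and does not increase the number of rows, and it is injective because one recovers $\lambda$ by deleting a box from the first row. This already yields $|\Lambda_{k,d}| \geq |\Lambda_{k-1,d}|$.

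Next I would identify precisely which partitions of $k$ lie in the image of $\phi$: a partition $\mu$ of $k$ equals $\phi(\lambda)$ for some $\lambda$ if and only if subtracting one from $\mu_1$ still yields a valid partition, which (for integer parts) happens exactly when $\mu_1 > \mu_2$. Hence the obstruction to surjectivity, and the source of strictness, is the existence of a partition of $k$ into at most $d$ parts whose two largest parts are equal, $\mu_1 = \mu_2$. To finish, I would exhibit such a witness for every $k \geq 2$: when $k$ is even take $(k/2, k/2)$, which uses two parts; when $k$ is odd take $((k-1)/2, (k-1)/2, 1)$, which uses three parts and hence requires $d \geq 3$. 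In either case $\mu_1 = \mu_2$, so $\mu$ lies outside the image of $\phi$ and $|\Lambda_{k,d}| > |\Lambda_{k-1,d}|$.

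The step carrying the real content, and the only place where the hypothesis $d \geq 3$ is used, is the construction of the witness for odd $k$: with only $d = 2$ rows available, no partition of an odd integer can have $\mu_1 = \mu_2$, so $\phi$ becomes a bijection and the count stagnates. This is exactly the even-odd effect responsible for the fact that, for qubits ($d=2$), $(n-1)$-local gates can be universal when $n$ is odd, and the argument makes transparent why the phenomenon disappears once $d \geq 3$. I would close by noting the harmless boundary caveat that the statement is intended for $k \geq 2$, since for $k = 1$ both counts equal $1$.
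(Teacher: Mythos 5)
Your proof is correct and follows essentially the same route as the paper's: an injection from $\Lambda_{k-1,d}$ into $\Lambda_{k,d}$ by adding a box to the first row, followed by exhibiting a partition of $k$ with two equal longest rows (two rows for even $k$, three rows for odd $k$, which is where $d\geq 3$ enters) that lies outside the image. Your explicit characterization of the image via $\mu_1>\mu_2$ and the $k=1$ boundary caveat are minor refinements of the same argument.
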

In summary, for $d\ge 3$, we have $$\abs{\Lambda_{n, d}}= \abs{\Lambda_{k, d}}\ \ \ \Longleftrightarrow \ \ \ n=k\ .$$
Then, together with \cref{eq:centerdiff}, this proves the non-universality of $k$-qudit gates with $k<n$, and we have the following corollary,
\begin{corollary}\label{prop:no}
  Consider a system of $n$ qudits with $d \geq 3$. Then, $k$-local $\SU(d)$-invariant unitaries are universal, if and only if $k = n$. In other words, if $k < n$, then $\mathcal{V}_k^{(n)} \subsetneq \mathcal{V}_n^{(n)}$ is a proper subgroup.
\end{corollary}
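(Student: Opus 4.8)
The plan is to deduce the corollary from the dimension formula \cref{eq:centerdiff} together with the strict monotonicity of the irrep counts asserted in \cref{lem:irrepsSn}, which I may assume as it is stated above. The ``if'' direction is immediate: by definition $\mathcal{V}_n^{(n)}=\mathcal{V}^{(n)}$ is the full group of $\SU(d)$-invariant unitaries, so $k=n$ is universal. For the converse I would first reduce to a single inclusion. Since any $k$-local gate is also $(k+1)$-local (tensor with the identity on one further qudit), the groups are nested, $\mathcal{V}_1^{(n)}\subseteq\mathcal{V}_2^{(n)}\subseteq\cdots\subseteq\mathcal{V}_n^{(n)}$. Hence it suffices to prove that the single step $\mathcal{V}_{n-1}^{(n)}\subseteq\mathcal{V}_n^{(n)}$ is strict, because then every $\mathcal{V}_k^{(n)}$ with $k<n$ already lies inside the proper subgroup $\mathcal{V}_{n-1}^{(n)}$.

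For $n\ge 4$ we have $n-1\ge 3$, so semi-universality holds at locality $n-1$ and \cref{eq:centerdiff} applies with $k=n-1$, giving
\begin{equation}
  \dim\mathcal{V}_n^{(n)}-\dim\mathcal{V}_{n-1}^{(n)}=\abs{\Lambda_{n,d}}-\abs{\Lambda_{n-1,d}}\ .
\end{equation}
By \cref{lem:irrepsSn}, for $d\ge 3$ this difference is strictly positive. A (closed, connected) subgroup of the compact Lie group $\mathcal{V}_n^{(n)}$ of strictly smaller dimension is automatically a proper, indeed measure-zero, subgroup, so $\mathcal{V}_{n-1}^{(n)}\subsetneq\mathcal{V}_n^{(n)}$. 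The one remaining case $n=3$, where \cref{eq:centerdiff} is unavailable for $k=2$, is already settled by \cref{prop:2local-criteria}: there $\dim\mathcal{V}^{(3)}-\dim\mathcal{V}_2^{(3)}=1$ for $d\ge 3$, whence $\mathcal{V}_2^{(3)}\subsetneq\mathcal{V}_3^{(3)}=\mathcal{V}^{(3)}$, and the $k=1$ case follows by nesting. This completes the corollary.

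The substance of the argument, and where I expect the main obstacle to sit, is \cref{lem:irrepsSn} itself. Recalling that $\abs{\Lambda_{k,d}}$ equals the number of partitions of $k$ into at most $d$ parts (Young diagrams with $k$ boxes and $\le d$ rows), I would prove strict monotonicity by exhibiting an explicit injection that is not surjective. The natural candidate sends $\lambda=(\lambda_1\ge\lambda_2\ge\cdots)$ to $(\lambda_1+1,\lambda_2,\dots)$, mapping a partition of $k-1$ with at most $d$ parts to a partition of $k$ with at most $d$ parts (the number of parts does not increase). It is injective, since the preimage is recovered by subtracting one from the largest part. Its image is exactly the set of partitions of $k$ whose largest part strictly exceeds the second, so surjectivity fails iff there exists a partition of $k$ into at most $d$ parts whose two largest parts are equal.

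It is at this last point that the hypothesis $d\ge 3$ is decisive. For even $k=2m$ the partition $(m,m)$ supplies a missed element using only $d\ge 2$; for odd $k=2m+1$ one is forced to use $(m,m,1)$, which has three rows and hence requires $d\ge 3$. This is precisely why the statement fails for $d=2$: there, for odd $k$ every partition into at most two parts has a strict largest part, the injection becomes a bijection, and $\abs{\Lambda_{k,2}}=\abs{\Lambda_{k-1,2}}$, which is the source of the even--odd effect of the $\SU(2)$ case. The only point to check with care is the boundary behaviour at very small $k$ (e.g.\ $k=1$, where no partition with two equal largest parts exists), but this lies entirely outside the range $k=n-1\ge 3$ relevant to the corollary.
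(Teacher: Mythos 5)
Your proposal is correct and follows essentially the same route as the paper: combine the dimension formula \cref{eq:centerdiff} with strict monotonicity of $\abs{\Lambda_{k,d}}$, proved via the injection that adds a box to the first row and witnesses for non-surjectivity given by the two-row partition $(m,m)$ for even $k$ and the three-row partition $(m,m,1)$ for odd $k$ (which is exactly where $d\ge 3$ enters, matching the paper's choice of $\mu$). Your explicit reduction to the single step $\mathcal{V}_{n-1}^{(n)}\subsetneq\mathcal{V}_n^{(n)}$ via nesting, with the $n=3$ case handled by \cref{prop:2local-criteria}, is a slightly more careful bookkeeping of the $k<3$ cases than the paper spells out, but it is not a different argument.
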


We also note that while there is no simple general formula for $|\Lambda_{k, d}|$, it can be obtained \cite{erdos1941} using its generating function 
\begin{align}
  \sum_{k=0}^\infty |\Lambda_{k,d}| \ x^k = \prod_{l=1}^d \frac{1}{1-x^l}\ .
\end{align}
Using this method, in \cref{fig:partition-monotonicity} we plot $|\Lambda_{k, d}|$ as a function of $k$. We see that $|\Lambda_{k, d}|$ is strictly monotonic in $k$ for $d\geq 3$, while for $d=2$ it grows only every other step in $k$. 
\begin{figure}[htb]
  \centering
  \includegraphics[width=0.49\textwidth]{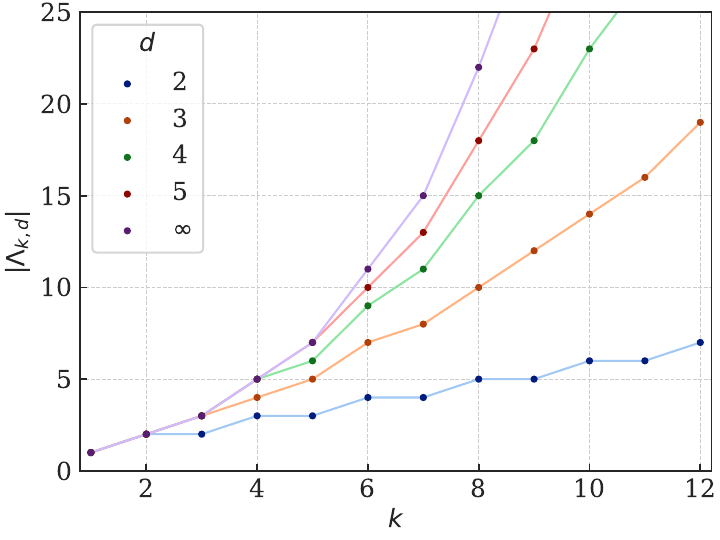}
  \caption{ The vertical axis is $|\Lambda_{k, d}|$,  the number of inequivalent irreps of $\SU(d)$ on $k$ qudits, which according to the results of \cite{Marvian2022Restrict}, is the the dimension of the center of the Lie algebra associated to $\mathcal{V}^{(n)}_k$, the group generated by $k$-qudit $\SU(d)$-invariant gates. \color{black} When $d\geq 3$, $|\Lambda_{k, d}|$ is strictly monotonic in $k$. When $d=2$, $|\Lambda_{k, d}|$ increases only every other step in $k$.}
  \label{fig:partition-monotonicity}
\end{figure}

Finally, we discuss the asymptotic behavior of $|\Lambda_{k,d}|$ in the regime $k\gg d$, which implies \cref{eq:asym-center} and explain the plot in \cref{fig:asym-rpn}. Recall that 
$|\Lambda_{k,d}|$ is the number of partitions of $k$ into $i \le d$ positive integers, i.e., $k_1\ge \cdots \ge k_i$ and $k=k_1+k_2+\cdots+ k_i$.

In the limit of large $k\gg d$, the number of such partitions for each $i\le d$ can be calculated through the number of the composition of $k$ into exactly $i$ parts, namely $\binom{k-1}{i-1}$. However, the order does not matter in integer partition, whereas it matters in integer composition, which should be corrected.  When $k\gg i$, for almost all partitions, $k_1,\cdots, k_i$ take distinct values, and we can correct it approximately by a factor of $1/i!$. This implies that\footnote{We note that this formula can also be obtained using Sylvester's denumerant formula \cite{alfonsin2005diophantine} or the asymptotic formula from \cite{erdos1941}. }
\begin{align}
  |\Lambda_{k,d}| \approx \sum_{i=1}^{d} \frac{1}{i!}\binom{k-1}{i-1} = \frac{k^{d-1}}{d!(d-1)!} + O(k^{d-2})\ .
\end{align}
Putting this into \cref{eq:centerdiff}, we arrive at
\begin{align}\nonumber
  \dim \mathcal{V}_k^{(n)} - \dim \mathcal{V}_3^{(n)} &= |\Lambda_{k, d}| - |\Lambda_{3, d}|= |\Lambda_{k, d}| - 3 \nonumber\\ &\approx \frac{k^{d-1}}{d!(d-1)!} + O(k^{d-2})\ ,\nonumber
\end{align}
where we have used the fact that $\Lambda_{3,d}$, given in \cref{Eq:Lambda_3}, has 3 elements. 
This proves \cref{eq:asym-center}.

\section{The use of ancilla qudits for achieving (semi-)universality}\label{sec:ancilla}

In this section, we consider the use of ancillae for achieving both universality and semi-universality and prove various results in this context, including those presented in \cref{tab:summary}. First, we consider achieving universality from semi-universality for arbitrary symmetry group $G$ with on-site representation, and prove \cref{prop:ufromsemi}. That is, we prove that universality can be achieved on $n$ qudits if semi-universality holds on $n+3$ qudits, and the 3 qudits are used as ancillae. {We note that in general, one might be able to achieve universality with fewer ancilla qudits (for instance, 1 qudit in the case of Abelian symmetries \cite{marvian2024theoryabelian}, and 2 qubits in the case of $\SU(2)$ symmetry with qubit systems \cite{marvian2022quditcircuit}).} 
Then, we focus on the case of 2-qudit gates with $\SU(d)$ symmetry. Even though 2-qudit gates are not semi-universal for $n>3$ qudits, by studying a system with 11 qudits with the total Hilbert space $(\mathbb{C}^d)^{\otimes 11}$, we show that 8 qudits can be used as ancillae to implement any $\SU(d)$-invariant unitary on the remaining 3 qudits. In the special case of $d=3$, this can be achieved with only 6 ancilla qutrits. 

Since the state of ancillae remains unchanged and uncorrelated with the rest of qudits, we can reuse it to realize arbitrary 3-qudit gates on a system with arbitrary $n>3$ qudits. 
We have previously shown in \cref{thm:semi-universality} that $3$-qudit gates are semi-universal on a system with arbitrary $n$. Combining these we conclude that

\begin{corollary}\label{cor:anc}
  Using 2-qudit $\SU(d)$-invariant unitaries, and 8 ancilla qudits we can realize any unitary in $\mathcal{SV}^{(n)}$, and using 11 ancilla qudits we can realize any unitary in $\mathcal{V}^{(n)}$. In the special case of $d=3$, these can be achieved with $6$ and $9$ ancilla qudits, respectively. 
\end{corollary}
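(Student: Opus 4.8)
The plan is to derive \cref{cor:anc} as a composition of three ingredients already in place, using ancilla reuse to keep the counts small: the realization of an arbitrary $3$-qudit $\SU(d)$-invariant gate with $2$-qudit gates and $8$ ancillae (the content of \cref{lem:ancillauni}), the semi-universality of $3$-qudit gates (\cref{thm:semi-universality}), and the promotion of semi-universality to universality with $3$ ancillae (\cref{prop:ufromsemi}). The genuine work sits in the first ingredient; once it is available, the corollary follows by careful bookkeeping of which qudits play the role of ancillae at each layer.

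First, for the semi-universality claim ($8$ ancillae) I would argue as follows. By \cref{lem:ancillauni}, fixing a state $\ket{\eta}$ on $8$ ancilla qudits, any $\SU(d)$-invariant unitary on a chosen triple of the $n$ working qudits is realizable by a sequence of $2$-qudit gates that returns the ancillae to $\ket{\eta}$ in the sense of \cref{eq:cat}; conjugating by $2$-qudit SWAP gates, which permute the working qudits while leaving the ancillae fixed, extends this to \emph{any} triple. The key observation controlling the count is that after each such block the $8$ ancillae are back in $\ket{\eta}$ and uncorrelated with the working register, so a single copy of the $8$ ancillae may be reused to synthesize $3$-qudit gates on arbitrary triples in sequence. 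Because the set of unitaries realizable in the sense of \cref{eq:cat} is closed under composition and inversion, it is a group; it contains every generator of $\mathcal{V}_3^{(n)}$, hence all of $\mathcal{V}_3^{(n)}$, and hence its commutator subgroup. By \cref{thm:semi-universality} the latter equals $\mathcal{SV}^{(n)}$, so every element of $\mathcal{SV}^{(n)}$ is realizable with $2$-qudit gates and $8$ ancillae.

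Second, for universality ($11$ ancillae) I would layer \cref{prop:ufromsemi} on top of the above. To realize an arbitrary $V\in\mathcal{V}^{(n)}$ on $n$ working qudits, reserve $3$ further ancilla qudits and regard the system as having $n+3$ qudits. Applying the first part to these $n+3$ qudits, the $2$-qudit gates together with $8$ ancillae realize $\mathcal{SV}^{(n+3)}$, so the realizable group is semi-universal on the $n+3$ qudits. \cref{prop:ufromsemi} then shows that the reserved $3$ ancillae suffice to realize any $\SU(d)$-invariant unitary on the remaining $n$ qudits. Counting the two disjoint sets of ancillae gives $3+8=11$ in total. The $d=3$ improvements (to $6$ and $9$) follow verbatim, substituting the sharper ancilla count of \cref{lem:ancillauni} for $d=3$ in place of $8$.

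The main obstacle is entirely contained in \cref{lem:ancillauni}: showing that $2$-qudit $\SU(d)$-invariant gates, which are \emph{not} semi-universal for $n>3$, nevertheless become able to synthesize an arbitrary $3$-qudit gate once $8$ (or, for $d=3$, $6$) ancillae are supplied, and doing so while exactly restoring and decoupling the ancillae so that reuse is legitimate. Within the corollary itself the only subtleties are (i) verifying that realizing every generator of $\mathcal{V}_3^{(n)}$ yields every element of $\mathcal{SV}^{(n)}$, which is immediate from group closure together with $\mathcal{SV}^{(n)}=[\mathcal{V}_3^{(n)},\mathcal{V}_3^{(n)}]$, and (ii) keeping the two roles of ancillae, namely phase promotion versus $3$-gate synthesis, disjoint so that the two counts add rather than interfere.
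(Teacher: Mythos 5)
Your proposal follows essentially the same route as the paper: \cref{lem:ancillauni} with $8$ ancillae (reused, since they are exactly restored) yields arbitrary $3$-qudit $\SU(d)$-invariant gates on any triple, hence all of $\mathcal{V}_3^{(n)}\supseteq\mathcal{SV}^{(n)}$ by \cref{thm:semi-universality}, and stacking \cref{prop:ufromsemi} with $3$ further ancillae gives universality with $8+3=11$ (resp.\ $6+3=9$ for $d=3$). The one small caveat is that \cref{lem:ancillauni} as literally stated concerns centerless Hamiltonians on $11$ qudits, so the paper still needs the short intermediate step of applying it to $H=\Pi_\lambda^{123}-\Pi_\lambda^{456}$, using that $|\eta\rangle$ is an eigenstate of $\Pi_\lambda^{456}$, and combining with $\mathcal{SV}^{(3)}\subseteq\mathcal{V}_2^{(3)}$ to actually synthesize an arbitrary $3$-qudit gate---but you correctly identify that all the genuine work lives in that ingredient.
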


\subsection{From semi-universality to universality with three ancilla qudits (Proof of Proposition \ref{prop:ufromsemi})}\label{sec:uni3}

Consider an arbitrary symmetry group $G$ with the on-site representation, as defined in \cref{prop:ufromsemi}. Consider 3 ancillary qudits, with the total Hilbert space 
\begin{equation}
  (\complex^d)^{\otimes 3} = \hilbert[Q]_{\ysub{3}} \oplus (\hilbert[Q]_{\ysub{2, 1}} \otimes \hilbert[M]_{\ysub{2, 1}}) \oplus \hilbert[Q]_{\ysub{1,1,1}}\ ,
\end{equation}
where we have decomposed this Hilbert space according to the irreps of the permutation group $\mathbb{S}_3$, and used the notation introduced in \cref{eq:schur-weyl}. Since it is permutationally-invariant, $u(g)^{\otimes 3}$ acts on $\hilbert[Q]_{\ysub{2, 1}} \otimes \hilbert[M]_{\ysub{2, 1}}$ as $u_{\ysub{2, 1}}(g) \otimes \ident_{\hilbert[M]_{\sysub{2, 1}}}$, where $u_{\ysub{2, 1}}$ is a possibly reducible representation of $G$ on $\hilbert[Q]_{\ysub{2, 1}}$. 

Let $Z_{\ysub{2, 1}}$ be a nonzero Hermitian operator on $\hilbert[M]_{\ysub{2, 1}}\cong \mathbb{C}^2$ with $\Tr Z_{\ysub{2, 1}} = 0$. Then, for any Hermitian $G$-invariant operator $H$ on $(\complex^d)^{\otimes n}$, we consider operator $\widetilde{H}$ on $n+3$ qudits, defined by 
\be
H\ \mapsto \ \widetilde{H}=H \otimes (\ident_{\hilbert[Q]_{\sysub{2, 1}}} \otimes Z_{\ysub{2, 1}})\ .
\ee
Then, $\widetilde{H}$ is also Hermitian, $G$-invariant, and in addition, it is centerless, because
\begin{equation}
  \begin{split}
    & \Tr \big(u(g)^{\otimes (n + 3)} \widetilde{H} \big) = \Tr (u(g)^{\otimes n} H) \Tr (u_{\ysub{2, 1}}(g)) \Tr(Z_{\ysub{2, 1}}) = 0,
  \end{split}
\end{equation}
for all $g\in G$, which follows from $\Tr Z_{\ysub{2, 1}} = 0$. We conclude that the family of unitaries $\exp(\i \widetilde{H} t): t\in \mathbb{R} $ are in the commutator subgroup of $G$-invariant unitaries on $n+3$ qudits,
and therefore they can be realized by any set of gates that are semi-universal on $n+3$ qudits. For instance, in the case of $\SU(d)$ symmetry discussed in the previous sections, $\exp(\i \widetilde{H} t)$ is in $\mathcal{SV}^{(n+3)}$, which means it is realizable with 3-qudit $\SU(d)$-invariant unitaries.

In general, $\widetilde{H}$ is an entangling Hamiltonian. However, by preparing the 3 ancilla qudit in an eigenstate $\ket{\eta} \in (\complex^d)^{\otimes 3}$ of
$\ident_{Q_{\sysub{2, 1}}} \otimes Z_{\ysub{2,1}}$, they remain unentangled with the system. That is, 
\begin{equation}
  \exp(\i \widetilde{H} t) (\ket{\psi} \otimes \ket{\eta}) = (\exp(\i{H}\alpha t)\ket{\psi}) \otimes \ket{\eta}\ ,
\end{equation}
where $\alpha$ is the eigenvalue of $\ident_{Q_{\sysub{2, 1}}} \otimes Z_{\ysub{2,1}}$ for eigenvector $|\eta\rangle$.

Some examples of choices of $\ket{\eta}$ and $Z_{\ysub{2, 1}}$ include
\begin{enumerate}[$\bullet$]
\item $\ket{\eta_1} = \frac{1}{2 \sqrt{3}}(2 \ket{100} + (\sqrt{3} - 1) \ket{010} - (1 + \sqrt{3}) \ket{001})$ and $\ident_{\hilbert[Q]_{\sysub{2, 1}}} \otimes Z_{\ysub{2, 1}} = \frac{1}{\sqrt{3}}(\P_{12} - \P_{13})$.
\item $\ket{\eta_2}= \frac{1}{\sqrt{2}}(\ket{01} - \ket{10}) \otimes \ket{0}$ and $\ident_{\hilbert[Q]_{\sysub{2, 1}}} \otimes Z_{\ysub{2, 1}} = \frac{1}{3}(-2\P_{12} + 2\P_{23} + \P_{(123)} - \P_{(132)})$.
\end{enumerate}
Here, $Z_{\ysub{2, 1}}$ is normalized so that its eigenvalues are $\pm 1$, and 
the given state is an eigenvector with eigenvalue $\alpha = 1$. 
Note that the above particular linear combination of permutations is restricted to irrep $\ydiag{2, 1}$, such that
\be
\big(\ident_{\hilbert[Q]_{\sysub{2, 1}}} \otimes Z_{\ysub{2, 1}}\big)\Pi_{\lambda}=0 \ :\ \ \ \  \lambda=\ydiag{3}\ ,~ \ydiag{1,1,1}\ ,
\ee 
which can be seen using the fact that $\P(\sigma)\Pi_{\ysub{3}}=\Pi_{\ysub{3}}$ and $\P(\sigma)\Pi_{\ysub{1,1,1}}=\sgn(\sigma)\Pi_{\ysub{1,1,1}}$.

\subsection{(Semi-)universality with 2-qudit \texorpdfstring{\bm{{$\SU(d)$}}}{SU(d)}-invariant gates and ancillae (Proof of Corollary \ref{cor:anc})}\label{sec:3ancilla}

\Cref{thm:semi-universality} shows that 3-qudit $\SU(d)$-invariant gates are semi-universal on arbitrary $n$ qudits. In this section we show how general 3-qudit $\SU(d)$-invariant gates can be realized with 2-qudit gates, provided that one can use 8 (or 6, in the special case of $d=3$) ancilla qudits. Since the state of the ancillae remains unchanged, we can reuse it arbitrarily many times to implement all 3-qudit gates in the circuit. Therefore, this together with \cref{thm:semi-universality}, proves \cref{cor:anc}.

As we proved in \cref{sec:2on3}, on $n=3$ qudits, 2-qudit gates are semi-universal, i.e., $\mathcal{SV}^{(3)}\subset \mathcal{V}^{(3)}_2$. Therefore, in this case, the only constraints that the locality of gates imposes on the realizable unitaries are on the relative phases between sectors with different irreps of symmetry, namely type $\textbf{I}$ constraints. Hence, to achieve universality it suffices to amend $\mathcal{SV}^{(3)}$ with 
relative phases between different sectors, i.e., unitaries in the form 
\be\label{phases}
\sum_{\lambda\in \Lambda_3} \e^{\i\theta_\lambda}\Pi_\lambda\ :\ \ \  \theta_\lambda\in [0,2\pi)\ , 
\ee
where $\Lambda_3=\set[\big]{\sydiag{3} \, , \; \sydiag{2, 1} \, , \; \sydiag{1, 1, 1}}$ (we note that a 2D Lie subgroup of this group is already included in $\mathcal{V}_2^{(3)}$).
This may sound similar to the problem we studied in the previous section, where we used 3 ancilla qudits to elevate semi-universality to universality. However, there is an important difference: to apply that technique we need semi-universality on $n+3$ qudits, i.e., the main qudits and the ancillae. Therefore, in the context of the problem at hand, this will require semi-universality on 6 qudits. However, as we discussed before, for $n>3$ qudits 2-qudit gates are not semi-universal. Therefore, the technique of the previous section is not applicable here, and we need to develop a new scheme that works even in the absence of semi-universality.

We have found a solution to this problem that uses 8 ancilla qudits. This solution is based on the following fact about systems with 11 qudits. Recall that the wedge product of states is defined as
\begin{equation}
  \ket{\psi_1} \wedge \dotsb \wedge \ket{\psi_m} := \frac{1}{\sqrt{m!}} \sum_{\sigma \in \mathbb{S}_m} \sgn(\sigma) \P(\sigma) (\ket{\psi_1} \otimes \dots \otimes \ket{\psi_m}).
\end{equation}

\begin{lemma}[Centerless Hamiltonians on 11 qudits]\label{lem:ancillauni} For any centerless $\SU(d)$-invariant Hamiltonian $H$ on $(\mathbb{C}^d)^{\otimes 11}$ ({i.e., a Hamiltonian satisfying $\Tr(H u(g)^{\otimes 11})=0$ for all $g\in\SU(d)$}) there exists a Hamiltonian $\widetilde{H}$ that is realizable with 2-qudit $\SU(d)$-invariant Hamiltonians, i.e., $\exp({\i t\widetilde{H}})\in\mathcal{V}_2: t\in{\mathbb{R}}$, which satisfies 
  \begin{align}\label{rt}
    \exp{(\i t\widetilde{H})} (|\psi\rangle \otimes |\eta\rangle) =\exp{(\i t {H})}(|\psi\rangle \otimes |\eta\rangle)\ ,
  \end{align}
  for all $|\psi\>\in(\mathbb{C}^d)^{\otimes 3}$, 
  where 
  \begin{align}\label{main:eq:8anc}
    |\eta\rangle=(\ket{0} \wedge \ket{1} \wedge \ket{2} \wedge \ket{3})^{\otimes 2}\in (\mathbb{C}^d)^{\otimes 8}\ , 
  \end{align}
  is an 8-qudit state. Furthermore, in the special case of $d=3$, the same fact holds for any centerless Hamiltonian on 9 qudits, and state \begin{align}\label{eq:6anc}
    |\eta'\rangle=(\ket{0} \wedge \ket{1})^{\otimes 2} \otimes |00\>\in (\mathbb{C}^d)^{\otimes 6} .
  \end{align}
\end{lemma}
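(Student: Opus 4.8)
The plan is to reduce the statement to realizing a single forbidden relative phase on the three non-ancilla qudits, and then to supply that phase by exploiting the antisymmetry of $|\eta\rangle$. First I would note that it suffices to reproduce the evolution generated by $H$ on the states $|\psi\rangle \otimes |\eta\rangle$, and that by the $n=3$ semi-universality established in \cref{sec:2on3} (namely $\mathcal{SV}^{(3)} \subset \mathcal{V}_2^{(3)}$) the only feature of this evolution not already within reach of $2$-qudit gates is a relative phase between the $\SU(d)$ charge sectors of the free register. Writing a central Hamiltonian on three qudits in the group-algebra basis $\{\ident,\ \P_{12}+\P_{13}+\P_{23},\ \kappa\}$ of $\mathbb{C}[\mathbb{S}_3]$, where $\kappa = \P_{(123)} + \P_{(132)}$ is the symmetric sum of $3$-cycles, the characterization in \cref{prop:2local-criteria} shows that the identity and the transposition sum (hence a two-dimensional torus of relative phases, together with all of $\su(\mathcal{M}_{\ysub{2,1}})$) are already realizable without ancillae, while the unique remaining direction is $\i\kappa$. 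Thus the entire content is to realize $\exp(\i t\kappa)$ on the free register using $2$-local gates together with the ancillary state $|\eta\rangle$, leaving $|\eta\rangle$ unentangled.

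The key obstruction and its resolution are Lie-algebraic. On three qudits alone, the Lie algebra $\lie{g}$ generated by the anti-Hermitian transpositions $\{\i\P_{ij}\}$ is five-dimensional and contains only the \emph{antisymmetric} $3$-cycle combination $\i(\P_{(123)} - \P_{(132)}) = [\i\P_{12}, \i\P_{23}]$; the missing sixth direction of $\i\,\mathbb{C}[\mathbb{S}_3]$ is exactly $\i\kappa$, which is why it is not $2$-locally realizable. The idea is to link the orientation of a $3$-cycle on the free qudits to permutations inside an antisymmetric ancilla block, so that the two orientations acquire opposite signs from $|\eta\rangle$ and the forbidden \emph{symmetric} combination is produced. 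Concretely, I would build the generator out of sums over ancilla qudits $a$ in a block of commutators such as $[\i\P_{1a}, \i\P_{2a}]$, which produce $3$-cycles that hop a free qudit through the block; because each block of $|\eta\rangle$ is a filled antisymmetric (Slater-determinant) state, these hops act on the ancilla as fermionic creation/annihilation, and summing over the block reorganizes them into an operator that returns the block to itself. This is precisely the free-fermionic description of $2$-qudit $\SU(d)$-invariant dynamics of \cite{marvian2022quditcircuit}: the antisymmetric ancilla fixes a fermionic reference, and the phase forbidden by locality reappears as an accessible, occupation-dependent phase.

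The remaining steps are verification and counting. I would expand the relevant sums of transposition products acting on a single antisymmetric block and use the relations $\P_{aa'}|\eta\rangle = -|\eta\rangle$ for $a,a'$ in the same block to collapse the fermionic hops, checking that the ``leakage'' out of the subspace $(\mathbb{C}^d)^{\otimes 3}\otimes|\eta\rangle$ cancels in the sum and that the surviving action on the free register is proportional to $\i\kappa$ up to terms already in $\lie{g}$. The two copies of the size-$4$ block in $|\eta\rangle$ (and the padding $|00\rangle$ together with antisymmetric \emph{pairs} when $d=3$, giving $|\eta'\rangle$) are there to supply enough independent modes that the coefficient of $\i\kappa$ is nonzero while no residual entanglement between the free qudits and the ancilla survives. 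I would then confirm that $|\eta\rangle$ is a genuine eigenvector of the constructed generator on the ancilla factor, which is what guarantees \cref{rt} with the ancilla returned unchanged, and finally treat the $d=3$ case separately.

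The hard part will be the explicit construction of the $2$-local generator $\widetilde{H}$ and the verification that its action on $(\mathbb{C}^d)^{\otimes 3}\otimes|\eta\rangle$ is exactly $\i\kappa$ modulo $2$-locally realizable terms: one must show that the antisymmetric ancilla simultaneously (i) supplies the sign flip converting $\P_{(123)} - \P_{(132)}$ into $\P_{(123)} + \P_{(132)}$ on the free register, and (ii) is returned to $|\eta\rangle$ with the inter-register leakage cancelling. Establishing that the minimal register sizes are $8$ (and $6$ for $d=3$) then amounts to checking that fewer antisymmetric modes leave the coefficient of $\i\kappa$ vanishing, so that the forbidden phase cannot be generated.
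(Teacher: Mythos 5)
Your strategy is genuinely different from the paper's, but it has two real gaps. First, a scope problem: you reduce the lemma to supplying the single missing central direction $\i(\P_{(123)}+\P_{(132)})$ on the three free qudits, via $\mathcal{SV}^{(3)}\subseteq\mathcal{V}_2^{(3)}$ and \cref{prop:2local-criteria}. That reduction only addresses the corollary that arbitrary elements of $\mathcal{V}^{(3)}$ can be enacted on the register; the lemma itself quantifies over \emph{every} centerless $\SU(d)$-invariant $H$ on all $11$ qudits, whose evolution applied to $|\psi\rangle\otimes|\eta\rangle$ can entangle register and ancilla and leave the product subspace (indeed the paper applies the lemma to $H=\Pi_\lambda^{123}-\Pi_\lambda^{456}$, which acts on three of the ancilla qudits). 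To obtain the statement as written you must control the 2-local Lie algebra on the whole subspace $\bigoplus_{\lambda\in\Lambda}\hilbert_\lambda$ supporting $|\psi\rangle\otimes|\eta\rangle$, not just its effective action on the register.

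Second, and more seriously, the entire content of the lemma sits in the step you defer as ``the hard part.'' The commutators you name, $[\i\P_{1a},\i\P_{2a}]$, produce the \emph{antisymmetric} $3$-cycle combination on $\{1,2,a\}$ --- which is already 2-locally reachable --- not a $3$-cycle on the register $\{1,2,3\}$; no explicit chain of commutators and averages over the antisymmetric block is exhibited that lands on $\i(\P_{(123)}+\P_{(132)})$ acting on the register with $|\eta\rangle$ fixed, modulo reachable terms, and nothing in the sketch guarantees the signs conspire as hoped. The paper does not attempt such a construction: it observes that $|\eta\rangle$ lies in the irrep $\sydiag{2,2,2,2}$ of $\mathbb{S}_8$, so $|\psi\rangle\otimes|\eta\rangle$ is supported on exactly the nine Young diagrams of $\mathbb{S}_{11}$ obtained by adding three boxes to $\sydiag{2,2,2,2}$, and then invokes Marin's characterization (\cref{prop:marin}) of the Lie algebra generated by transpositions: 2-local Hamiltonians are semi-universal on any set of irreps containing no hook (L-shaped) diagrams and no transpose pairs, and this nine-element set has neither. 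That combinatorial criterion --- not a ``coefficient of $\i\kappa$ vanishing'' count --- is also what dictates the ancilla sizes: with a smaller or differently shaped ancilla irrep, the diagrams obtained by adding three boxes would include hooks or mutually transposed pairs, reinstating the type \textbf{III}/\textbf{IV} obstructions. Unless you can reprove the relevant special case of Marin's theorem by hand (the authors note this is the one place the paper relies on it), your outline does not yet constitute a proof.
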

This lemma can be shown directly by studying the Lie algebra generated by SWAPs on 11 qudits, or, as we do in \cref{sec:ancillapr}, by applying the seminal result of Marin \cite{marin2007algebre}, which characterizes the Lie algebra generated by transpositions (SWAPs) in terms of its simple factors.\footnote{Note that this is the only result in the current paper depending on Marin's results in \cite{marin2007algebre}.}

Next, we apply this lemma to a system containing 3 qudits plus 8 ancilla qudits, and show how we can realize unitaries in the form of \cref{phases}. 
More precisely, we show that Hamiltonian 
$\Pi_\lambda^{123}$ for all $\lambda\in \Lambda_3$ can be realized with 8 ancilla qudits.

Since $\Pi_\lambda^{123}$ is not centerless, first we construct a centerless Hamiltonian, namely we consider $ H = \Pi_\lambda^{123} - \Pi_\lambda^{456}$, where we have suppressed tensor product with the identity operators on the rest of qudits. This Hamiltonian is clearly centerless. In fact, any operator of the form $A - P_\sigma AP_\sigma^{-1}$ is centerless, because
\begin{align}
  \Tr\big(\Pi (A - P_\sigma AP_\sigma^{-1})\big) = \Tr\big(\Pi (A - A)\big) = 0,
\end{align}
where in the last step we used the cyclic property of the trace and $\Pi$ is an arbitrary element in the center. Therefore, according to the above lemma, there exists an $\SU(d)$-invariant Hamiltonian $\widetilde{H}$, such that \cref{rt} holds for all $ t\in{\mathbb{R}}$ and all $|\psi\rangle\in(\mathbb{C}^d)^{\otimes 3}$. 

Now for state $|\eta\rangle$ in \cref{main:eq:8anc} the reduced state of qudits 456 is restricted to the totally anti-symmetric subspace, corresponding to $\lambda=\sydiag{1, 1, 1}$, which means
\bes\label{eigen}
\begin{align}
  \Pi^{456}_{\lambda} |\eta\rangle &= 0 \ \ &&: \lambda=\ydiag{3} , \ydiag{2,1}\ ,\\ 
  \Pi^{456}_{\lambda} |\eta\rangle &= |\eta\rangle \ \ &&: \lambda=\ydiag{1,1,1}\ ,
\end{align}
\ees
where we have suppressed the tensor product with the identity operator on the rest of the qudits. 
Therefore, for Hamiltonian 
$ H = \Pi_\lambda^{123} - \Pi_\lambda^{456}$, \cref{rt} becomes
\begin{align}
  \e^{\i t\widetilde{H}} (|\psi\rangle\otimes |\eta\rangle)=
  \begin{cases}
    (\e^{\i t\Pi_\lambda^{123}} |\psi\rangle)\otimes |\eta\rangle &\lambda=\ydiag{3}, \ydiag{2, 1} \\
    (\e^{\i t(\Pi_\lambda^{123}-\mathbb{I})} |\psi\rangle)\otimes |\eta\rangle & \lambda=\ydiag{1, 1, 1}
  \end{cases}
\end{align}
Note that in the second case, we get an extra global phase, which is not physically relevant (e.g., it can be removed by shifting $\widetilde{H}$ with a multiple of the identity operator). 

We conclude that for all $\lambda\in \Lambda_3$ we can implement the Hamiltonian $\Pi_\lambda$ on 3 qudits, using 8 qudits as ancilla and 2-qudit $\SU(d)$-invariant gates. Since the state of ancilla remains unchanged we can reuse it to implement all unitaries in the form of \cref{phases}. 

Now consider the special case of $d=3$. Although in this case \cref{eigen} does not hold, because both states $(|0\rangle \wedge |1\rangle)\otimes |0\rangle$ and $(|0\rangle \wedge |1\rangle)\otimes |1\rangle$ live in $\mathcal{Q}_{\ysub{2, 1}} \otimes \mathcal{M}_{\ysub{2, 1}}$, we have
\bes\label{eigen2}
\begin{align} 
  \Pi^{456}_{\lambda} |\eta'\rangle &= 0 \ \ &&: \lambda=\ydiag{3} , \ydiag{1,1,1}, \\
  \Pi^{456}_{\lambda} |\eta'\rangle &= |\eta'\rangle \ \ &&: \lambda=\ydiag{2,1} .
\end{align}
\ees
Therefore, in this case for Hamiltonian 
$ H = \Pi_\lambda^{123} - \Pi_\lambda^{456}$ \cref{rt} simplifies to 
\begin{align}
  \e^{\i t\widetilde{H}} (|\psi\rangle\otimes |\eta'\rangle)=
  \begin{cases}
    (\e^{\i t(\Pi_\lambda^{123})} |\psi\rangle)\otimes |\eta'\rangle & \lambda=\ydiag{3}, \ydiag{1, 1, 1} \\
    (\e^{\i t(\Pi_\lambda^{123}-\mathbb{I})} |\psi\rangle)\otimes |\eta'\rangle &\lambda= \ydiag{2, 1} ,
  \end{cases}
\end{align}
and therefore the same results can be established with $6$ qudit state $|\eta'\rangle$.

\section{Statistical properties of random \texorpdfstring{$\SU(d)$}{SU(d)}-invariant circuits:  \texorpdfstring{$t$}{t}-designs }\label{Sec:design}

In this section, we briefly discuss the statistical properties of random 3-qudit circuits and explain a corollary of semi-universality of 3-qudit gates. 

First, recall that the no-go theorem of \cite{Marvian2022Restrict} puts strong constraints on the statistical properties of random symmetric circuits for general symmetries. According to the general results of \cite{Marvian2022Restrict}, for any symmetry group $G$, the group of unitaries generated by $k$-local $G$-invariant unitaries is a compact connected Lie group, and therefore it has a unique notion of invariant (Haar) measure. Furthermore, when the symmetry group is a continuous group, such as $\SU(d)$, \cite{Marvian2022Restrict} shows that for any fixed $k$, the dimension of this compact Lie group is strictly less than the compact Lie group of all $G$-invariant unitaries on $n$ qudits. Clearly, the uniform distribution over a compact manifold cannot be fully mimicked by distributions restricted to compact submanifolds with lower dimensions.  This can be more formally stated in terms of moments of the distributions.

\begin{corollary}[corollary of \cite{Marvian2022Restrict}]\label{cor45}
Consider the uniform distribution over the compact group $\mathcal{V}^G_k$ of $n$-qudit unitaries realized with $k$-qudit $G$-invariant gates. Then,
all moments of this distribution are equal to the corresponding moments of the uniform distribution over the group of all $G$-invariant unitaries, such that 
\be\label{design}
\mathbb{E}_{V\in\V^G}[V^{\otimes t}\otimes {V^\ast}^{\otimes t}]=\mathbb{E}_{V\in\V_k^G}[V^{\otimes t}\otimes {V^\ast}^{\otimes t}]\ ,
\ee
for all integer $t$, only if 
$|\text{Irreps}_G(k)|= |\text{Irreps}_G(n)|$, where $|\text{Irreps}_G(k)|$ is the number of inequivalent irreps of group $G$ appearing on $k$ qudits, i.e., in the representation $g\mapsto u(g)^{\otimes k}: g\in G$.  Furthermore, assuming semi-universality holds for $\mathcal{V}_k^G$,  and the group $G$ is connected, then 
this condition becomes necessary and sufficient.  

In the case of continuous symmetries such as $\SU(d)$, there is no finite $k$ such that $|\text{Irreps}_G(k)|= |\text{Irreps}_G(n)|$ for all $n> k$. Therefore, for any fixed $k$,  the above equation cannot hold for arbitrarily large $t$ and $n$. 
\end{corollary}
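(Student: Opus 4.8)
The plan is to translate the moment condition \cref{design} into a statement about the groups themselves, and then read off a dimension count. The central observation is that for any compact group $\mathcal{K}\subseteq\mathcal{V}^{(n)}$ the $t$-th moment operator $\mathbb{E}_{V\in\mathcal{K}}[V^{\otimes t}\otimes (V^\ast)^{\otimes t}]$ is exactly the orthogonal projector onto the subspace of $\mathcal{K}$-invariant tensors in $(\complex^d)^{\otimes nt}\otimes\overline{(\complex^d)^{\otimes nt}}$. Since $\mathcal{V}_k^{(n)}\subseteq\mathcal{V}^{(n)}$, invariance under the larger group is the stronger requirement, so the invariant subspace of $\mathcal{V}^{(n)}$ is contained in that of $\mathcal{V}_k^{(n)}$, and the two moment operators in \cref{design} coincide for a given $t$ if and only if these invariant subspaces coincide. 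First I would record that \cref{design} holds for all $t$ if and only if the two groups have identical invariant tensors in every tensor power. Because both $\mathcal{V}_k^{(n)}$ and $\mathcal{V}^{(n)}$ are compact and contain the global phases $\e^{\i\phi}\mathbb{I}$, Tannaka--Krein reconstruction (equivalently, the fact that a compact linear group is the joint stabilizer of its invariant tensors) then yields
\[
  \big[\text{\cref{design} holds for all } t\big] \iff \big[\mathcal{V}_k^{(n)}=\mathcal{V}^{(n)}\big].
\]

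For necessity I would argue that if $\mathcal{V}_k^{(n)}=\mathcal{V}^{(n)}$ then their Lie algebras, and in particular the centers of their Lie algebras, coincide. By the results of \cite{Marvian2022Restrict}, which apply because $\SU(d)$ is connected, the center of the Lie algebra of $\mathcal{V}_k^{(n)}$ has dimension $|\Lambda_{k,d}|$, while that of $\mathcal{V}^{(n)}=\mathcal{V}_n^{(n)}$ has dimension $|\Lambda_{n,d}|$; equality of the centers then forces $|\Lambda_{k,d}|=|\Lambda_{n,d}|$. This step uses only connectedness of $G$ and no assumption of semi-universality. For sufficiency I would invoke semi-universality directly: assuming $\mathcal{V}_k^{(n)}$ is semi-universal and $G$ is connected, \cref{eq:centerdiff} gives $\dim\mathcal{V}^{(n)}-\dim\mathcal{V}_k^{(n)}=|\Lambda_{n,d}|-|\Lambda_{k,d}|$, so if $|\Lambda_{k,d}|=|\Lambda_{n,d}|$ the two groups have equal dimension. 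Since $\mathcal{V}_k^{(n)}\subseteq\mathcal{V}^{(n)}$ are both compact and connected, equal dimension forces $\mathcal{V}_k^{(n)}=\mathcal{V}^{(n)}$, whence \cref{design} holds for all $t$ by the first paragraph. Together these make the irrep-count condition necessary and sufficient under semi-universality, as claimed.

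Finally, the conclusion about continuous symmetries follows by combining this equivalence with the unbounded growth of $|\Lambda_{n,d}|$. For $\SU(d)$ the count $|\Lambda_{n,d}|$ is the number of partitions of $n$ into at most $d$ parts, which tends to infinity with $n$ (by \cref{lem:irrepsSn} it is even strictly increasing for $d\ge 3$, and in general it grows like $n^{d-1}/(d!(d-1)!)$). Thus for any fixed $k$ the finite number $|\Lambda_{k,d}|$ is strictly below $|\Lambda_{n,d}|$ once $n$ is large, the necessary condition fails, and so \cref{design} cannot hold for all $t$; that is, for fixed $k$ there is a finite design order beyond which the moments must differ, and no finite locality can reproduce the Haar moments for arbitrarily large $n$ and $t$. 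The same conclusion holds for any positive-dimensional compact $G$, since the number of inequivalent irreps occurring in $u(g)^{\otimes n}$ is unbounded in $n$.

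I expect the only genuinely delicate step to be the reconstruction in the first paragraph, i.e.\ passing from equality of all balanced moments to equality of the groups, since everything afterward is bookkeeping with the dimension formulas already supplied by \cite{Marvian2022Restrict}. The reconstruction is clean here precisely because both groups contain the center $\e^{\i\phi}\mathbb{I}$, so the balanced moments (equal numbers of $V$ and $V^\ast$ factors) already detect every invariant tensor and there is no loss in restricting to the symmetric choice $V^{\otimes t}\otimes (V^\ast)^{\otimes t}$.
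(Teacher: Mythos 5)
Your proposal is correct and follows essentially the same route as the paper, which presents \cref{cor45} as an immediate consequence of the center/dimension counting of \cite{Marvian2022Restrict} together with the informal remark that the Haar measure on a strictly lower-dimensional compact subgroup cannot reproduce all moments of the ambient group. Your only substantive addition is to make that remark precise — identifying the $t$-th moment with the projector onto invariant tensors and using the global phases plus Tannaka--Krein reconstruction to conclude that agreement of all balanced moments forces $\mathcal{V}_k^G=\mathcal{V}^G$ — and that step is sound; the rest (centers of the Lie algebras having dimensions $|\Lambda_{k,d}|$ and $|\Lambda_{n,d}|$, the sufficiency via \cref{eq:centerdiff}, and the unboundedness of the irrep count for continuous $G$) matches the paper's bookkeeping.
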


For instance, in the case of $\SU(d)$ symmetry discussed in this paper, the strict monotonicity of $|\Lambda_{n,d}|$ in \cref{lem:irrepsSn}, implies that for $d\ge 3$, unless $k=n$,  certain moments of the uniform distribution over $\mathcal{V}^{(n)}_{k}$ will be different from the corresponding moment for the Haar distribution over the group of all $\SU(d)$-invariant unitaries.

However, it is still possible that some moments of these distributions match. This can be formulated based on the notion of $t$-designs \cite{harrow2009random, emerson2005convergence, brandao2016efficient, brandao2016local}. We say the Haar distribution over the group $\mathcal{V}^{(n)}_{k}$ is a $t$-design for the Haar distribution over the group of all $\SU(d)$-invariant unitaries, if \cref{design} holds.

Our previous result in \cite{marvian2022quditcircuit} reveals that in the case of circuits with 2-qudit $\SU(d)$-invariant gates, for $d\ge 3$ this equation holds for $t=1$, but not $t=2$, which means the distribution generated by such random circuits converges to a 1-design but not a 2-design for the Haar distribution.  Indeed,  as it was noted in \cite{marvian2022quditcircuit}, the results of \cite{robert2015squares,zimboras2015symmetry} imply that whenever semi-universality does not hold then the distribution generated by the random circuits cannot be a 2-design for the Haar distribution over all symmetric unitaries. More recently, Ref.~\cite{Li:2023mek} shows that the distribution generated by circuits with 4-qudit $\SU(d)$-invariant gates is a $t$-design with $t$ quadratic in $n$.

In this paper, we find that a similar quadratic scaling can be achieved with 3-qudit gates in the case of $\SU(d)$ with $d\ge 3$, and 2-qudit gates in the case of $\SU(2)$.
\begin{proposition}[Quadratic scaling of $t$-designs with 3-  or  2-qudit gates]\label{prop:design}
  For systems with $n\geq 9$ qudits with $d< n-1$, the Haar distribution over the group generated by 3-qudit $\SU(d)$-invariant gates, $\V_3^{(n)}$ 
  is an exact $t$-design for the Haar distribution over the group of all $\SU(d)$-invariant unitaries $\V^{(n)}$ with $t < \frac{1}{2}n(n-3)$. In the case of $d=2$, i.e., qubits with $\SU(2)$ symmetry, $\V_3^{(n)}=\V_2^{(n)}$, which means the same quadratic scaling can be obtained with 2-qudit gates.
\end{proposition}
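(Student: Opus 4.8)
The plan is to recast the $t$-design property as an equality of invariant subspaces, use \cref{thm:semi-universality} to reduce that comparison to the torus of relative phases, and finally turn the remaining question into a lattice problem governed by the content statistics of $\mathbb{S}_n$-irreps.

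First I would recall that the moment operator $\mathbb{E}_{V}[V^{\otimes t}\otimes \bar V^{\otimes t}]$ is the orthogonal projector onto the subspace of $\mathcal{H}^{\otimes t}\otimes\bar{\mathcal{H}}^{\otimes t}$ invariant under the group in question. Since $\V_3^{(n)}\subseteq\V^{(n)}$, every $\V^{(n)}$-invariant vector is $\V_3^{(n)}$-invariant, so \cref{design} holds at level $t$ \emph{iff} conversely every $\V_3^{(n)}$-invariant vector is $\V^{(n)}$-invariant. By \cref{thm:semi-universality}, $[\V_3^{(n)},\V_3^{(n)}]=\mathcal{SV}^{(n)}=\prod_\lambda\SU(\mathcal{M}_\lambda)$, so both groups share this semisimple part and differ only in their central tori: $\V^{(n)}=\mathcal{SV}^{(n)}\cdot T$ with $T=\{\sum_\lambda \e^{\i\theta_\lambda}\Pi_\lambda\}$ the full torus of relative phases, and $\V_3^{(n)}=\mathcal{SV}^{(n)}\cdot T_3$ with $T_3\subseteq T$ the identity component of the center of $\V_3^{(n)}$. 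Thus the problem collapses to comparing invariance under $T$ versus under $T_3$, on top of the common constraint of $\mathcal{SV}^{(n)}$-invariance.

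Next I would decompose the $\mathcal{SV}^{(n)}$-invariant subspace into weight spaces for $T$. A product tensor whose $2t$ factors lie in definite charge sectors has a definite weight $\vec{n}=(n_\lambda)\in\mathbb{Z}^{|\Lambda_{n,d}|}$, where $n_\lambda$ is the excess of ket- over bra-factors in sector $\lambda$; with $t$ kets and $t$ bras one automatically has $\sum_\lambda n_\lambda=0$. By the first fundamental theorem of invariant theory for $\SU(m)$, such a weight space carries a nonzero $\mathcal{SV}^{(n)}$-invariant iff $m_\lambda\mid n_\lambda$ for every $\lambda$ (the invariant being $|n_\lambda|/m_\lambda$ copies of the $\varepsilon$-tensor on $\mathcal{M}_\lambda$, tensored with arbitrary vectors on the charge spaces $\mathcal{Q}_\lambda$), and the least $t$ realizing $\vec n$ is $\tfrac12\lVert\vec n\rVert_1$. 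A $\V^{(n)}$-invariant vector is exactly one of weight $\vec n=0$, whereas a $\V_3^{(n)}$-invariant one is a sum of weight components with $\vec n\perp\mathfrak{t}_3$, where $\mathfrak{t}_3=\mathrm{Lie}(T_3)$ (the nontrivial $T_3$-characters cannot survive, by independence of characters). Hence the design fails at level $t$ iff there is a nonzero integer vector $\vec n$ with $m_\lambda\mid n_\lambda$, $\vec n\perp\mathfrak{t}_3$, and $\tfrac12\lVert\vec n\rVert_1\le t$, and the design degree is one below the minimal such $\tfrac12\lVert\vec n\rVert_1$. To make $\mathfrak{t}_3$ explicit I would note that its elements are the $\SU(d)$-invariant Hamiltonians acting as a scalar on each $\mathcal{M}_\lambda$, i.e.\ the $\le 3$-local class sums (identity, transposition sum, $3$-cycle sum); by the Jucys–Murphy calculus their eigenvalues on $\lambda$ are $1$, the content sum $\mathrm{cont}(\lambda)=\sum_{b\in\lambda}c(b)$, and an affine combination of $1$ and $\mathrm{cont}_2(\lambda)=\sum_{b\in\lambda}c(b)^2$, so $\mathfrak{t}_3=\mathrm{span}\{\mathbf{1},\mathrm{cont},\mathrm{cont}_2\}$ and the orthogonality condition is the vanishing of $\sum_\lambda n_\lambda$, $\sum_\lambda n_\lambda\,\mathrm{cont}(\lambda)$, $\sum_\lambda n_\lambda\,\mathrm{cont}_2(\lambda)$. (For $d=2$ the antisymmetric sector is absent, $\dim T_3=2$, only the first two constraints appear, and one invokes $\V_3^{(n)}=\V_2^{(n)}$ to transfer the conclusion to $2$-qubit gates.)

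The main obstacle is the final estimate: that every nonzero $\vec n$ obeying the divisibility and the orthogonality constraints has $\tfrac12\lVert\vec n\rVert_1\ge\tfrac12 n(n-3)$, whence the design holds for all $t<\tfrac12 n(n-3)$. Here the hypothesis $d<n-1$ is essential, as it removes the two smallest nontrivial irreps that would otherwise appear (the one-column sign irrep and the near-column irrep of dimension $n-1$, both needing more than $d$ rows), leaving only the trivial irrep ($m_\lambda=1$), the standard irrep $\sydiag{3,1}$-type $\sydiag{n-1,1}$ ($m_\lambda=n-1$), and then a jump up to $\sydiag{n-2,2}$ and $\sydiag{n-2,1,1}$ of dimension $\Theta(n^2/2)$. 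Since $m_\lambda\mid n_\lambda$ forces each occupied sector to contribute at least $m_\lambda$ to $\lVert\vec n\rVert_1$, while the three linear constraints forbid solutions supported on too few or too small sectors (one checks directly that $\{\text{trivial},\sydiag{n-1,1}\}$ admits only $\vec n=0$, and that any admissible nonzero $\vec n$ must excite a sector of dimension $\Theta(n^2)$), the minimal weight is quadratic in $n$. I would finish by a short case analysis on the support of $\vec n$, using the explicit content data $\mathrm{cont}$, $\mathrm{cont}_2$ of the low-dimensional diagrams together with the divisibility lattice $\bigoplus_\lambda m_\lambda\mathbb{Z}$, and verifying that $\tfrac12 n(n-3)$ is a valid lower bound uniformly over all $d<n-1$ and $n\ge 9$; pinning down the extremal configurations (which, and hence the exact degree, depend mildly on $d$) is the delicate step, but only the uniform lower bound is needed for the stated result.
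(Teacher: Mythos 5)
Your proposal is correct, and once unpacked it runs on the same four ingredients as the paper's proof in \cref{App:tdesign}: semi-universality of $\mathcal{V}_3^{(n)}$; the fact that $\mathbb{E}_{U\in\SU(m)}[U^{\otimes r}\otimes U^{\ast\otimes r'}]$ vanishes unless $r\equiv r'\pmod m$ (your divisibility condition $m_\lambda\mid n_\lambda$ is the invariant-theoretic restatement of this); the separation of the two small sectors $\mu_0,\mu_1$ by the $2$-local central elements; and the result of \cite{rasala1977minimal} that every other irrep has dimension at least $n(n-3)/2$ when $n\ge 9$ and $d<n-1$. The packaging differs: the paper computes the moments $\mathbb{E}[\bigotimes_i v_{\lambda_i}\otimes v^\ast_{\lambda_i'}]$ directly and shows they vanish whenever the two multisets of labels disagree, whereas you dualize to the weight lattice of the central torus and reduce the design question to a Diophantine condition on $\vec{n}$. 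Your formulation buys two things the paper does not spell out: an explicit description of the relevant center via Jucys--Murphy contents, $\mathfrak{t}_3\supseteq\operatorname{span}\{\mathbf 1,\mathrm{cont}\}$ (the paper instead writes down the ad hoc combinations $A_0,A_1$ of $\ident$ and $B_2$, which span the same subspace restricted to $\{\mu_0,\mu_1\}$), and, in principle, the exact design degree rather than only a lower bound, provided one also proves the converse by exhibiting an unbalanced invariant.

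Two points to tighten. First, the final case analysis you call delicate is in fact immediate: if $\vec{n}$ is supported on $\{\mu_0,\mu_1\}$ alone, the two constraints $\sum_\lambda n_\lambda=0$ and $\sum_\lambda n_\lambda\,\mathrm{cont}(\lambda)=0$ have determinant $\mathrm{cont}(\mu_1)-\mathrm{cont}(\mu_0)=-n\neq 0$ and force $\vec{n}=0$; otherwise some sector $\delta$ with $m_\delta\ge n(n-3)/2$ has $n_\delta\neq 0$, whence $\tfrac12\|\vec{n}\|_1=\sum_{n_\lambda>0}n_\lambda\ge\abs{n_\delta}\ge m_\delta$, giving the bound (note only the $2$-local part of $\mathfrak{t}_3$ is needed for the lower bound, so the $3$-cycle class sum and $\mathrm{cont}_2$ are dispensable here). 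Second, you should justify that each $T$-weight component of a $\mathcal{V}_3^{(n)}$-invariant vector is itself invariant; this holds because $T$ is the center of $\mathcal{V}^{(n)}$ and therefore commutes with $\mathcal{V}_3^{(n)}$, so the weight decomposition is preserved by the group action.
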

As we explain in Appendix \ref{App:tdesign}, this proposition follows from the 
semi-universality of 3-qudit gates, together with the following fact
\be
\pi_{\mu_0,\mu_1}(\mathcal{V}^{(n)}_3)=\pi_{\mu_0,\mu_1}(\mathcal{V}^{(n)})\cong  \U(\mathcal{M}_{\mu_0})\times \U(\mathcal{M}_{\mu_1})  \ ,
\ee
where $\mu_0$ and $\mu_1$ are, respectively the symmetric and the ($n-1$)-dimensional standard irrep of $\mathbb{S}_n$. In words, this means that when restricted to these two sectors, the group $\mathcal{V}^{(n)}_3$ and $\mathcal{V}^{(n)}$ are identical, which in turn implies that the design properties of $\mathcal{V}^{(n)}_3$ is determined by the smallest dimension of $\mathcal{M}_\lambda$, for $\lambda\in\Lambda_{n,d}$ other than $\mu_0,\mu_1$. This dimension is equal to $n(n-3)/2$, when $n\ge 9$ and $d<n-1$ \cite{rasala1977minimal}. We also use the standard techniques in
the context of $t$-designs \cite{harrow2009random, emerson2005convergence, brandao2016efficient, brandao2016local}, which are used previously,  e.g., in Ref.~\cite{Li:2023mek} in the context of $\SU(d)$-invariant circuits, and in \cite{hearth2023unitary} in the context of $\U(1)$-invariant circuits.

\section{From subsystem universality to semi-universality}\label{sec:tools}

In this section, we introduce 
a characterization of subsystem universality and various other techniques that are generally useful for understanding semi-universality (and its failure) for subgroups of $G$-invariant unitaries, for general symmetries and representations. We start by presenting a generalization of \cref{MainLemma}, namely \cref{Mainlemma_gen}, which is applicable even when condition 
$\textbf{B}$ of \cref{MainLemma}, i.e., pairwise independence, does not hold. Then, we show how both lemmas follow from a combination of Goursat's and Serre's lemmas (\cref{lem:SUgoursat} and \cref{lem:serre}). The ideas and techniques discussed in this section and in \cref{sec:proof} are more broadly useful in the context of quantum computing and control theory.

\subsection{Subdirect products}

Before going through this section, we recall the concept of subdirect products, which is useful for understanding semi-universality and also appears in the statements of Goursat's and Serre's lemmas.

\begin{definition}[Subdirect product]\label{def:subdirect}
  A \emph{subdirect product} of the groups $G_1, \dots, G_r$ is a subgroup of the direct product, $H \subseteq G_1 \times \dots \times G_r$, such that $\pi_i(H) = G_i$ for each factor, $i = 1, \dots, r$, where $\pi_i : G_1 \times \dots \times G_r \to G_i$ is the projection homomorphism.
\end{definition}
It is useful to consider different subdirect products as exhibiting different types of \emph{correlations}. In particular, if for any pair $i$ and $j$ of groups $G_i$ and $G_j$, the projection of $H$ to that pair is not surjective, i.e., $\pi_{ij}(H) \neq G_i \times G_j$, we say they are \emph{correlated} in $H$, and otherwise, we say they are \emph{independent} in $H$. This can be generalized to more factors, and in general, the correlations can be quite complicated (see, e.g., the example in \cref{sec:nec-suf} of a subdirect product of three $\U(1)$ subgroups which are pairwise independent but still exhibit a tripartite correlation).

As we explain below, the notion of subdirect products appears naturally in the context of $G$-invariant unitaries.

\subsection{Subsystem universality (condition \textbf{A})}

Consider any subgroup $\mathcal{T}\subset \mathcal{V}^G$ of $G$-invariant unitaries satisfying condition $\textbf{A}$ in \cref{MainLemma}, namely subsystem universality, which means that for any irrep $\lambda\in\Lambda$ the projection of $\mathcal{T}$ to $\mathcal{M}_{\lambda}$ contains $\SU(\mathcal{M}_{\lambda})$, i.e.
\be\label{qtr}
 \pi_{\lambda}(\mathcal{T})=\big\{\pi_{\lambda}(V): V\in \mathcal{T}\big\} \supseteq \SU(\mathcal{M}_{\lambda})\ .
\ee
Then under this assumption, the commutator subgroup of $\mathcal{T}$, denoted as $[\mathcal{T},\mathcal{T}]$, is a subdirect product of the groups $\SU(\mathcal{M}_\lambda): \lambda\in\Lambda$. Furthermore, semi-universality means that this subdirect product is indeed the Cartesian product $\mathcal{SV}^G \cong \prod_{\lambda\in\Lambda} \SU(\mathcal{M}_\lambda)$. 

It is also worth noting that subsystem universality implies that any unitary $V\in\mathcal{T}$ can be decomposed as the product of an element of $[\mathcal{T},\mathcal{T}]$ and a unitary in the subgroup
of relative phases
\be\label{eq:phases}
\mathcal{P}=\{\sum_{\lambda\in\Lambda} \e^{\i\theta_\lambda} \Pi_\lambda : \theta_\lambda\in[0,2\pi) \}\ ,
\ee
which is the center of $\mathcal{V}^G$. More precisely,

\begin{lemma}\label{lem:phases}
  A subgroup of $G$-invariant unitaries $\mathcal{T}\subset \mathcal{V}^G$ contains $\mathcal{SV}^G$ if, and only if, its commutator subgroup $[\mathcal{T}, \mathcal{T}]=\mathcal{SV}^G$. More generally, if $\mathcal{T}$ satisfies the subsystem universality condition in \cref{qtr} for all $\lambda\in\Lambda$, then any element of $\mathcal{T}$ can be decomposed as a product of an element of $[\mathcal{T},\mathcal{T}]$ and a unitary in the group of relative phases $\mathcal{P}$. That is, $\mathcal{T} \subseteq \mathcal{P} [\mathcal{T},\mathcal{T}]$. 
\end{lemma}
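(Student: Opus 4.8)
The plan is to treat all groups abstractly and to exploit not merely the perfectness but crucially the \emph{simplicity} of the projective unitary factors. I would prove the two claims separately, the first being short and the second carrying the real content.

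For the ``if and only if'' claim, the direction $[\mathcal{T},\mathcal{T}]=\mathcal{SV}^G\Rightarrow \mathcal{T}\supseteq\mathcal{SV}^G$ is immediate since $[\mathcal{T},\mathcal{T}]\subseteq\mathcal{T}$. For the converse, assume $\mathcal{SV}^G\subseteq\mathcal{T}\subseteq\mathcal{V}^G$. On one hand $[\mathcal{T},\mathcal{T}]\subseteq[\mathcal{V}^G,\mathcal{V}^G]=\mathcal{SV}^G$. On the other hand, as recalled before \cref{rq}, $\mathcal{SV}^G\cong\prod_\lambda\SU(\mathcal{M}_\lambda)$ is perfect, being a product of the perfect groups $\SU(\mathcal{M}_\lambda)$; hence $\mathcal{SV}^G=[\mathcal{SV}^G,\mathcal{SV}^G]\subseteq[\mathcal{T},\mathcal{T}]$ using $\mathcal{SV}^G\subseteq\mathcal{T}$. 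The two inclusions give $[\mathcal{T},\mathcal{T}]=\mathcal{SV}^G$.

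For the general claim I would pass to the quotient by the center. Let $q:\mathcal{V}^G\to\mathcal{V}^G/\mathcal{P}\cong\prod_\lambda\PSU(\mathcal{M}_\lambda)$ be the projection modulo the group of relative phases $\mathcal{P}$ of \cref{eq:phases}, which is exactly $\ker q$. Since $q$ is a homomorphism, $q([\mathcal{T},\mathcal{T}])=[q(\mathcal{T}),q(\mathcal{T})]$, and an element $V\in\mathcal{T}$ lies in $\mathcal{P}[\mathcal{T},\mathcal{T}]$ iff $q(V)\in q([\mathcal{T},\mathcal{T}])$ (if $V=PW$ then $q(V)=q(W)$; conversely $q(V)=q(W)$ forces $VW^{-1}\in\ker q=\mathcal{P}$). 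Therefore the desired inclusion $\mathcal{T}\subseteq\mathcal{P}[\mathcal{T},\mathcal{T}]$ is \emph{equivalent} to the assertion that $q(\mathcal{T})$ is perfect. To establish this, I would use that the subsystem universality hypothesis \cref{qtr} gives $\pi_\lambda(\mathcal{T})\supseteq\SU(\mathcal{M}_\lambda)$; since $\SU(\mathcal{M}_\lambda)\to\PSU(\mathcal{M}_\lambda)$ is surjective, the projection of $q(\mathcal{T})$ onto each factor $\PSU(\mathcal{M}_\lambda)$ is everything, so $q(\mathcal{T})$ is a \emph{subdirect product} of the groups $\PSU(\mathcal{M}_\lambda)$ (discarding the trivial factors with $\dim\mathcal{M}_\lambda=1$). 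As each nontrivial $\PSU(\mathcal{M}_\lambda)$ is a non-abelian \emph{simple} group, the structure theorem for subdirect products of non-abelian simple groups (Serre's \cref{lem:serre}) shows any such subdirect product is a direct product of diagonally embedded simple factors, hence perfect; thus $q(\mathcal{T})=[q(\mathcal{T}),q(\mathcal{T})]$, as needed.

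I expect the last step to be the main obstacle, precisely because it requires simplicity rather than mere perfectness of the factors. One is tempted to claim that a subdirect product of perfect groups is perfect, but this is false: for a perfect group $Q$ with nontrivial Schur multiplier, gluing two copies of its universal central extension $P$ along $P/Z(P)\cong Q$ produces a subdirect product isomorphic to $P\times Z(P)$, which is not perfect. What rescues the argument is that $\PSU(\mathcal{M}_\lambda)$ is not just perfect but simple, so no such central twist can occur. Consequently the technical heart of the proof is verifying the abstract simplicity of $\PSU(\mathcal{M}_\lambda)$ (equivalently, that $\U(\mathcal{M}_\lambda)$ has no nontrivial normal subgroups modulo scalars) and matching the setup to the hypotheses of \cref{lem:serre}.
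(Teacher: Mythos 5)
Your argument is correct in outline and, for the substantive second claim, takes a genuinely different route from the paper's. The paper proves $\mathcal{T}\subseteq\mathcal{P}[\mathcal{T},\mathcal{T}]$ in \cref{sec:mainproof} by first running Goursat's lemma (\cref{lem:SUgoursat}, \cref{cor:Ugoursat}) on every pair of sectors to partition $\Lambda$ into correlation classes as in \cref{Mainlemma_gen}, then using \cref{proof2} to find $U\in[\mathcal{T},\mathcal{T}]$ that makes $UV$ proportional to the identity on one representative per class, and finally invoking the correlation relations \cref{eq:isophase} to force $UV\in\mathcal{P}$. You instead quotient by the center and reduce everything to a single assertion: that $q(\mathcal{T})$, a subdirect product of the groups $\PSU(\mathcal{M}_\lambda)$, is perfect. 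Your reduction is clean, your treatment of the first claim via perfectness of $\mathcal{SV}^G$ is fine (and equivalent to the paper's centrality argument), and you correctly diagnose that abstract simplicity of the factors, not mere perfectness, is what the argument must exploit --- your central-extension counterexample is exactly the right cautionary example.

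The one real issue is the load-bearing citation. The result you need is \emph{not} the paper's \cref{lem:serre}: that lemma assumes pairwise surjectivity, $\pi_{ij}(H)=G_i\times G_j$ for all pairs, which is precisely the hypothesis that fails when sectors are correlated --- and the correlated case is the whole point of this part of the lemma (cf.\ $\mathcal{V}_2^{(4)}$, where the pairwise projection to the sectors $\sydiag{3,1}$ and $\sydiag{2,1,1}$ is a diagonal copy of $\SU(3)$). What you actually need is the Remak-type structure theorem: a subdirect product of finitely many non-abelian, abstractly simple groups is an internal direct product of full diagonal subgroups, hence perfect. That statement is true and classical, and is provable by a short induction ($H\cap G_i$ is normal in $G_i=\pi_i(H)$, hence trivial or everything, and one splits off factors accordingly), and the abstract simplicity of $\PSU(m)$ for $m\ge 2$ is standard; but neither is established in the paper, so as written your proof rests on a lemma that does not cover your situation. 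To close the gap you must either supply that induction yourself, or first group the correlated sectors via Goursat and only then apply \cref{lem:serre} to one representative per class --- which essentially reconstructs the paper's argument. Once the structure theorem is in hand, your route buys a proof that bypasses the explicit classification of the correlations (the $W(\cdot)W^\dag$ versus $W(\cdot)^\ast W^\dag$ dichotomy) entirely; the paper's route buys that extra structural information, which it needs anyway to state \cref{Mainlemma_gen}.
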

\begin{proof}
    Note that $\mathcal{V}^G = \mathcal{SV}^G \mathcal{P}$. If $\mathcal{SV}^G \subseteq \mathcal{T}$ then also $\mathcal{V}^G = \mathcal{P} \mathcal{T}$. Taking the commutator subgroup, since $\mathcal{P}$ is the center, i.e. commutes with everything else, it follows that $\mathcal{SV}^G = [\mathcal{T}, \mathcal{T}]$. The proof of the second part is presented in \cref{sec:mainproof}, by applying \cref{Mainlemma_gen}.
\end{proof}
Therefore, in the following, we often consider the properties of the commutator subgroup $[\mathcal{T}, \mathcal{T}]$ and, assuming subsystem universality holds, interpret it as a subdirect product.

\subsection{ A characterization of subsystem universality }\label{sec:fail}

Here, we present a useful characterization of subgroups of $G$-invariant unitaries $\mathcal{V}^G$ that satisfy the subsystem-universality condition in all charge sectors, i.e., condition $\textbf{A}$ of \cref{MainLemma}, but do not necessarily satisfy the condition 
$\textbf{B}$, pairwise independence (According to the numbering of constraints in \cref{sec:nec-suf}, this means that except type \textbf{I}, all constraints are of type \textbf{IV}, i.e., correlations between different sectors).

\begin{lemma}[Extension of \cref{MainLemma}]\label{Mainlemma_gen}
  Consider a subgroup $\mathcal{T}\subseteq \mathcal{V}^G$ of {$G$-invariant unitaries} satisfying condition \textbf{A} in \cref{MainLemma}. That is, for any irrep $\lambda\in\Lambda$ the projection of $\mathcal{T}$ to $\mathcal{M}_{\lambda}$ contains $\SU(\mathcal{M}_{\lambda})$, as stated in \cref{qtr}. Then, the set of irreps $\Lambda$ is partitioned into non-overlapping subsets $\Delta_1, \cdots, \Delta_s$, such that for all pairs $\lambda, \lambda'$ that belong to the same subset $\Delta_r$, $\dim (\mathcal{M}_\lambda)=\dim (\mathcal{M}_{\lambda'})\eqqcolon m_{r}$, and that the commutator subgroup of $\mathcal{T}$ is isomorphic to
  \be
  [\mathcal{T},\mathcal{T}]\cong\prod_{r=1}^s \SU(m_r) \ .
  \ee
  Furthermore, for any pair of irreps $\lambda, \lambda'\in \Delta_r$, and fixed orthonormal bases for $\hilbert[M]_{\lambda}$ and $\hilbert[M]_{\lambda'}$, there exists an $m_r \times m_r$ unitary matrix $W\in\SU(m_r)$, such that for all $V\in[\mathcal{T},\mathcal{T}]$ one of the followings holds
  \begin{subequations}
    \begin{align}
    [\pi_{\lambda'}(V)]&= W[\pi_{\lambda}(V)] W^\dag\ \ , \text{or} \\
    [\pi_{\lambda'}(V)]&= W[\pi_{\lambda}(V)]^\ast W^\dag\ ,
  \end{align}
  \end{subequations}
  where $[\pi_{\lambda}(V)]^\ast$ is the complex conjugate of $\pi_{\lambda}(V)$ relative to this basis.
\end{lemma}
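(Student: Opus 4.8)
The plan is to analyze the commutator subgroup $[\mathcal{T},\mathcal{T}]$ as a subdirect product and to extract its structure using Goursat's and Serre's lemmas (\cref{lem:SUgoursat} and \cref{lem:serre}). First I would invoke \cref{lem:phases}: since $\mathcal{T}$ satisfies condition \textbf{A} (subsystem universality, \cref{qtr}), its commutator subgroup is contained in $\mathcal{SV}^G\cong\prod_{\lambda\in\Lambda}\SU(\mathcal{M}_\lambda)$, and moreover $\pi_\lambda([\mathcal{T},\mathcal{T}])=[\pi_\lambda(\mathcal{T}),\pi_\lambda(\mathcal{T})]\supseteq[\SU(\mathcal{M}_\lambda),\SU(\mathcal{M}_\lambda)]=\SU(\mathcal{M}_\lambda)$, the last equality because $\SU(\mathcal{M}_\lambda)$ is perfect. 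Hence $[\mathcal{T},\mathcal{T}]$ projects onto each factor and is a subdirect product of the groups $\SU(\mathcal{M}_\lambda)$, $\lambda\in\Lambda$, in the sense of \cref{def:subdirect}.

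Next I would apply the $\SU$-version of Goursat's lemma pairwise. For each pair $\lambda,\lambda'$, the joint projection $\pi_{\lambda,\lambda'}([\mathcal{T},\mathcal{T}])$ is a subdirect product of $\SU(\mathcal{M}_\lambda)\times\SU(\mathcal{M}_{\lambda'})$. Because $\SU(m)$ is almost simple (its only proper normal subgroups are central), \cref{lem:SUgoursat} should yield the dichotomy: either the projection is the full product (the pair is \emph{independent}), or it is the graph of a continuous isomorphism $\SU(\mathcal{M}_\lambda)\to\SU(\mathcal{M}_{\lambda'})$ (the pair is \emph{correlated}). In the correlated case the two groups are isomorphic, forcing $\dim(\mathcal{M}_\lambda)=\dim(\mathcal{M}_{\lambda'})$, and classifying the continuous automorphisms of $\SU(m)$ (inner, or inner composed with the outer complex-conjugation automorphism) produces exactly the two stated forms $[\pi_{\lambda'}(V)]=W[\pi_{\lambda}(V)]W^\dag$ and $[\pi_{\lambda'}(V)]=W[\pi_{\lambda}(V)]^\ast W^\dag$. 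I would then define $\lambda\sim\lambda'$ to mean ``correlated''. Transitivity is immediate once correlation is identified with a graph of an isomorphism: if $\pi_{\lambda'}(V)=\phi(\pi_\lambda(V))$ and $\pi_{\lambda''}(V)=\psi(\pi_{\lambda'}(V))$ for isomorphisms $\phi,\psi$ defined on all of $\SU(\mathcal{M}_\lambda)$ (using the surjectivity from Step 1), then $\pi_{\lambda''}(V)=\psi\circ\phi(\pi_\lambda(V))$, again a graph; composing the two explicit forms and using that conjugation and complex conjugation compose predictably ($\mathrm{conj}\circ\mathrm{conj}=\mathrm{conj}$, $\mathrm{conj}\circ{}^\ast={}^\ast$, ${}^\ast\circ{}^\ast=\mathrm{conj}$) shows the composite is again of one of the two forms. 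This gives the partition $\Lambda=\Delta_1\sqcup\dots\sqcup\Delta_s$, with a common dimension $m_r$ on each block and the explicit $W$ between any two members of a block (obtained by composing isomorphisms through a fixed representative).

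Finally I would assemble the global structure using \cref{lem:serre}. Choosing one representative irrep from each block $\Delta_r$, the representatives are \emph{pairwise} independent, so by Serre's lemma (the ``pairwise independence implies full independence'' statement, exactly what rules out the $\U(1)^3$ phenomenon of \cref{ex:sub}) their joint projection is the full product $\prod_{r=1}^s\SU(m_r)$. Since every other $\lambda\in\Delta_r$ is determined by its representative through the isomorphism above, the whole of $[\mathcal{T},\mathcal{T}]$ is the image of $\prod_r\SU(m_r)$ under the block-diagonal embeddings, giving the claimed isomorphism $[\mathcal{T},\mathcal{T}]\cong\prod_{r=1}^s\SU(m_r)$.

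The main obstacle is the Serre step together with the central subtleties hidden in Goursat's lemma: a priori a correlated pair could correspond, via Goursat, to an isomorphism only between \emph{quotients} $\SU(m)/N$ by nontrivial central subgroups, which would obstruct the clean identification with $\SU(m_r)$ and the honest graph forms. The crucial input that resolves this is that $\SU(m)$ is simply connected, so any isomorphism of the quotients lifts to an honest isomorphism of the full simply connected groups; I expect \cref{lem:SUgoursat} to be stated precisely to deliver this, so that the only ambiguity left is inner versus outer, i.e. the two displayed forms. The second delicate point is verifying that the diagonal embeddings chosen per block are mutually consistent (a cocycle-type compatibility), which is what legitimizes reading off the product decomposition rather than merely a subdirect one; this is again guaranteed by Serre's lemma applied to the representatives.
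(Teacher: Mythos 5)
Your proposal is correct and follows essentially the same route as the paper: establish that $[\mathcal{T},\mathcal{T}]$ is a subdirect product of $\prod_{\lambda}\SU(\mathcal{M}_\lambda)$, apply the $\SU$-version of Goursat's lemma pairwise to obtain the independent/correlated dichotomy (with the correlated case forcing equal dimensions and yielding the two conjugation forms via the classification of automorphisms of $\SU(m)$), check that correlation is an equivalence relation, and then apply Serre's lemma to one representative per class to upgrade the subdirect product of representatives to the full Cartesian product $\prod_r\SU(m_r)$. You also correctly flag the two genuine subtleties — lifting the Goursat isomorphism of central quotients using simple connectedness of $\SU(m)$, and the transitivity/consistency of the partition — which are exactly the points the paper's proof addresses.
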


In words, this lemma means that 
for charge sectors that belong to the same part $\Delta$, the unitary realized in one sector uniquely determines the unitaries realized in all the other sectors. Namely, they are all equal, up 
to a change of basis, a possible complex conjugation, and relative phases between sectors (i.e. an element of the group $\mathcal{P}$ defined in \cref{eq:phases}).  Note that, as stated in \cref{lem:phases}, when condition $\textbf{A}$ holds for all sectors, the commutator subgroup $[\mathcal{T},\mathcal{T}]$ determines $\mathcal{T}$, up to additional freedoms on the relative phases between different charge sectors. Therefore, this lemma characterizes group $\mathcal{T}$, up to this freedom.

We next mention an example that has been already discussed in detail in \cref{sec:2on4}.

\subsection*{Example: Revisiting group $\mathcal{V}^{(4)}_2$}
In the special case of $G=\SU(d)$ symmetry and $n=4$ qudits, in 
\cref{sec:2on4} we  characterized the group $\mathcal{V}^{(4)}_2$ generated 
by 2-qudit $\SU(d)$-invariant gates. 
In the context of the above lemma, $\mathcal{V}^{(4)}_2$ corresponds to the group $\mathcal{T}$, which is a subgroup of $\mathcal{V}^{(4)}$ the group of all $\SU(d)$-invariant unitaries. Then, in the language of the above lemma, we found that relative to this group, the set of irreps $\Lambda_4$ is partitioned into 3 sets, namely 
\be
\Delta_1=\big\{\ydiag{2, 2} \big\} \ ,\ \Delta_2=\big\{\ydiag{3, 1},\ydiag{2, 1, 1} \big\}\ ,\ \Delta_3=\big\{\ydiag{4} , \ydiag{1, 1, 1, 1} \big\} \ ,
\ee
with corresponding dimensions $m_1=2$, $m_2=3$, and $m_3=1$. Furthermore, we showed while the commutator subgroup of all $\SU(d)$-invariant unitaries, denoted as $\mathcal{SV}^{(4)}$, is isomorphic to $\SU(2)\times \SU(3)\times \SU(3)$, the commutator subgroup of $\mathcal{V}^{(4)}_2$, denoted as $\mathcal{SV}^{(4)}_2$ is isomorphic to
\be
\SU(m_1)\times \SU(m_2)\times \SU(m_3)= \SU(2)\times \SU(3)\ . 
\ee
Moreover, in \cref{const} we saw that for any $V\in\mathcal{V}^{(4)}_2$, the unitaries realized in $\mathcal{M}_{\ysub{3, 1}}$ and $\mathcal{M}_{\ysub{2, 1,1}}$ are equal, up to a complex conjugation and a change of basis, as described by \cref{Mainlemma_gen}.\\

Next, we explain the tools that are needed to prove \cref{MainLemma,Mainlemma_gen}.

\subsection{Two useful lemmas}

\subsubsection{Goursat's lemma: A characterization of pairwise correlations}\label{sec:goursat}

Goursat's lemma \cite{lang2005algebra} classifies all possible subdirect products $H \subseteq G_1 \times G_2$ of arbitrary groups $G_1$ and $G_2$.
We postpone the general form of this lemma, which applies to arbitrary groups $G_1$ and $G_2$ to \cref{lem:agoursat} in \cref{app:Goursat}. Instead, here we discuss its specialized form applied to special unitary groups, which are the most relevant groups for applications in quantum computing and control.

\begin{lemma}[Goursat's lemma for special unitary groups]\label{lem:SUgoursat}
  Let $l, l' \geq 2$ and let $H \subseteq \SU(l) \times \SU(l')$ be a subdirect product. There are two possibilities:
  \begin{enumerate}[(i)]
  \item $H = \SU(l) \times \SU(l')$.
  \item $l = l'$ and $H \cong \SU(l) \times \mathbb{Z}_q$, where $q$ divides $l$.
  \end{enumerate}
  Furthermore, in the second case, there exists an isomorphism $\Phi : \SU(l) \to \SU(l)$ such that
  \begin{equation}\label{eq:corr}
    H = \set{(U, \e^{\i \theta} \Phi(U)) \given U \in \SU(l) \text{ and } \e^{\i \theta} \in \mathbb{Z}_q \subset \SU(l)}.
  \end{equation}
\end{lemma}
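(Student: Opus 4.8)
The plan is to derive this from the general Goursat lemma (\cref{lem:agoursat}) once the normal subgroup structure of a special unitary group is understood. First I would set
\[
N_1 = \{U \in \SU(l) : (U, \ident) \in H\}, \qquad N_2 = \{V \in \SU(l') : (\ident, V) \in H\}.
\]
Subdirectness of $H$ forces $N_1 \trianglelefteq \SU(l)$ and $N_2 \trianglelefteq \SU(l')$, and the general lemma then produces a group isomorphism $\alpha : \SU(l)/N_1 \to \SU(l')/N_2$ with $H = \{(U,V) : \alpha(U N_1) = V N_2\}$. Everything reduces to classifying the pair $(N_1, N_2)$ and the isomorphism $\alpha$.

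Next I would pin down the normal subgroups of $\SU(l)$. Writing $Z = Z(\SU(l)) \cong \mathbb{Z}_l$ for the center, the quotient $\PSU(l) = \SU(l)/Z$ is simple for every $l \geq 2$, and $\SU(l)$ is perfect. Hence any $N \trianglelefteq \SU(l)$ is central or everything: its image in $\PSU(l)$ is trivial or all of $\PSU(l)$; in the first case $N \subseteq Z$, and in the second $NZ = \SU(l)$, so $\SU(l)/N \cong Z/(Z \cap N)$ is abelian, which by perfectness of $\SU(l)$ is trivial, i.e. $N = \SU(l)$. Applying this to $N_1$ gives a dichotomy. If $N_1 = \SU(l)$ the quotient $\SU(l)/N_1$ is trivial, forcing $N_2 = \SU(l')$ and hence $H = \SU(l) \times \SU(l')$, which is conclusion (i). Otherwise $N_1 \subseteq Z(\SU(l))$ and symmetrically $N_2 \subseteq Z(\SU(l'))$, the situation of conclusion (ii).

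In case (ii) I would first argue $l = l'$. Taking $H$ to be a closed (Lie) subgroup, $N_1$ and $N_2$ are closed central, hence finite, and $\alpha$ becomes an isomorphism of Lie groups; since the kernels are discrete, $\SU(l)/N_1$ and $\SU(l')/N_2$ carry the Lie algebras $\su(l)$ and $\su(l')$, so $\d\alpha$ yields $\su(l) \cong \su(l')$ and therefore $l = l'$. To reach the normal form I would use that $\SU(l)$ is simply connected: the composite $\beta = \alpha \circ p_1 : \SU(l) \to \SU(l)/N_2$ is surjective with discrete central kernel $N_1$, and since $p_2 : \SU(l) \to \SU(l)/N_2$ is the universal cover, $\beta$ lifts to a homomorphism $\Phi : \SU(l) \to \SU(l)$ with $p_2 \circ \Phi = \beta$; because $\d\Phi = (\d p_2)^{-1}\d\beta$ is an isomorphism and the target is simply connected, $\Phi$ is an automorphism. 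Then
\[
H = \{(U, V) : V N_2 = \beta(U) = p_2(\Phi(U)) = \Phi(U) N_2\} = \{(U, \Phi(U) n) : U \in \SU(l),\ n \in N_2\}.
\]
Identifying $N_2 \subseteq Z(\SU(l)) \cong \mathbb{Z}_l$ with the group $\mathbb{Z}_q$ of $q$-th roots of unity as scalar matrices, with $q \mid l$ exactly so that $\e^{\i\theta}\ident \in \SU(l)$, recovers \cref{eq:corr}; and $(U, \e^{\i\theta}) \mapsto (U, \e^{\i\theta}\Phi(U))$ is an isomorphism $\SU(l) \times \mathbb{Z}_q \to H$ since $\mathbb{Z}_q$ is central and $\Phi$ is a homomorphism, giving $H \cong \SU(l) \times \mathbb{Z}_q$.

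The main obstacle is the case-(ii) analysis: upgrading the purely group-theoretic $\alpha$ to a genuine Lie isomorphism (needed to force $l = l'$) and then manufacturing a single automorphism $\Phi$ of the \emph{full} group $\SU(l)$ rather than merely an isomorphism of central quotients. The lifting step rests essentially on $\SU(l)$ being the universal cover of every quotient $\SU(l)/N_2$, and the remaining bookkeeping lies in verifying that $\Phi$ is onto and in matching $N_2$ with $\mathbb{Z}_q \subseteq \SU(l)$ under the divisibility constraint $q \mid l$. If one prefers not to assume $H$ closed, the same conclusions follow from automatic continuity (rigidity) for semisimple Lie groups, which makes the abstract isomorphism $\alpha$ continuous from the outset.
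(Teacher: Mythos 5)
Your proof is correct and follows essentially the same route as the paper: Goursat's lemma, quasisimplicity of $\SU(l)$ to force the off-diagonal normal subgroups to be central or everything, and a lift through the universal cover to produce the automorphism $\Phi$ (with $l=l'$ forced by comparing Lie algebras). The only substantive point is that the lemma is stated for arbitrary, not necessarily closed, subdirect products, so the automatic-continuity remark in your final paragraph is not an optional alternative but the actual mechanism the paper uses to make the Goursat isomorphism continuous before invoking simple connectivity.
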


\begin{remark}\label{rem:SUiso}
  When interpreted as explicit $l\times l$ determinant-one unitary matrices, then the above isomorphism in case (ii) must be in one of the following forms:
  \bes\label{eq:isos}
  \begin{align}
    \Phi(U) &= W U W^\dagger\\
    \Phi(U) &= W U^\ast W^\dagger,
  \end{align}
  \ees
  where $W$ is a unitary matrix and $U^\ast$ is the component-wise complex conjugate of matrix $U$. For $l>2$, these two possibilities correspond to two inequivalent representations of the group $\SU(l)$,\footnote{For $l = 2$, these are not distinct possibilities since the defining representation is self-dual: with Pauli $Y$ defined in the chosen basis, we have $U^\ast = Y U Y^\dagger$ for all $U \in \SU(2)$.} and these are the only two non-trivial representations of dimension $l$ \cite{fulton2013representation}.\footnote{Note that this fact follows from ``automatic continuity''. See \cref{app:auto} for further discussion.}
\end{remark}

Note that, in the second case of \cref{lem:SUgoursat}, upon taking the commutator subgroup, the discrete phase vanishes, i.e. $[H, H] \cong \SU(l)$. In fact, this holds even if we consider $H$ as a subgroup of $\U(l) \times \U(l)$. Goursat's lemma and the following corollary are proved in \cref{app:Goursat}.

\begin{corollary}\label{cor:Ugoursat}
    Suppose that $H \subseteq \U(l) \times \U(l')$ is a subgroup with $[H, H] \subseteq \SU(l) \times \SU(l')$ a subdirect product. Then either $[H, H] \cong \SU(l) \times \SU(l)$, or $l = l'$ and there is an isomorphism $\Phi : \SU(l) \to \SU(l)$ such that $[H, H] = \set{(U, \Phi(U)) \given U \in \SU(l)} \cong \SU(l)$. Furthermore, in the second case every element of $H$ is of the form $(\e^{\i \theta} U, \e^{\i \phi} \Phi(U))$, for some $\theta,\phi\in[0,2\pi)$.
\end{corollary}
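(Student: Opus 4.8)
The plan is to reduce immediately to Goursat's lemma for special unitary groups (\cref{lem:SUgoursat}) and then exploit the fact that a commutator subgroup is always normal in order to eliminate the discrete factor. Since $[H,H]\subseteq \SU(l)\times\SU(l')$ is by hypothesis a subdirect product, I would apply \cref{lem:SUgoursat} directly to $N:=[H,H]$. In case (i) this gives $N=\SU(l)\times\SU(l')$, the first alternative. In case (ii) we have $l=l'$ and
\[
N=\{(U,\,\e^{\i\alpha}\Phi(U)) : U\in\SU(l),\ \e^{\i\alpha}\in\mathbb{Z}_q\}
\]
for an isomorphism $\Phi:\SU(l)\to\SU(l)$ and some $q$ dividing $l$. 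It then remains to prove the two assertions attached to the second alternative: that every element of $H$ has the stated product form, and that in fact $q=1$, so that $N$ is the clean graph $\{(U,\Phi(U)):U\in\SU(l)\}\cong\SU(l)$.

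The heart of the argument is the normality $N\trianglelefteq H$. I would fix $(a,b)\in H$ and conjugate a generic element $(U,\e^{\i\alpha}\Phi(U))\in N$; the result $(aUa^{-1},\,\e^{\i\alpha}\,b\Phi(U)b^{-1})$ must lie in $N$ again. Matching first components forces its second component to equal $\e^{\i\alpha'}\Phi(aUa^{-1})$ for some $\e^{\i\alpha'}\in\mathbb{Z}_q$, so the quantity $F(U):=b\Phi(U)b^{-1}\,\Phi(aUa^{-1})^{-1}$ is a scalar lying in the finite group $\mathbb{Z}_q$. Since $F$ is continuous in $U$, the group $\SU(l)$ is connected, and $F(I)=I$, I conclude $F\equiv I$, i.e.
\[
b\,\Phi(U)\,b^{-1}=\Phi\!\left(aUa^{-1}\right)\qquad\text{for all }U\in\SU(l).
\]
Using the explicit form of $\Phi$ from \cref{rem:SUiso} (either $\Phi(U)=WUW^\dagger$ or $\Phi(U)=WU^\ast W^\dagger$) and extending $\Phi$ to $\U(l)$ by the same formula, one has $\Phi(aUa^{-1})=\Phi(a)\Phi(U)\Phi(a)^{-1}$; the displayed identity then forces $\Phi(a)^{-1}b$ to commute with all of $\SU(l)$, hence to be a scalar $\mu=\mu(a,b)$. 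Therefore $b=\mu\,\Phi(a)$, and writing $a=\e^{\i\theta}U$ with $U\in\SU(l)$ yields exactly $(a,b)=(\e^{\i\theta}U,\e^{\i\phi}\Phi(U))$, which is the ``furthermore'' statement.

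Finally I would feed this product form back into the computation of $N$. For $h_j=(a_j,b_j)=(a_j,\mu_j\Phi(a_j))$ the scalar factors cancel in the commutator, and because component-wise complex conjugation is multiplicative (so that $\Phi$ is a genuine homomorphism in both cases of \cref{rem:SUiso}), one gets $[b_1,b_2]=[\Phi(a_1),\Phi(a_2)]=\Phi([a_1,a_2])$. Hence every commutator, and therefore every element of $N=[H,H]$, has the form $(c,\Phi(c))$ with $c\in\SU(l)$; combined with $\pi_1(N)=\SU(l)$ from subdirectness this gives $N=\{(U,\Phi(U)):U\in\SU(l)\}\cong\SU(l)$, so the discrete factor is trivial and $q=1$. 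The main obstacle is the middle step: using normality to rigidly tie the two factors together while correctly disposing of the discrete $\mathbb{Z}_q$ ambiguity, which is handled by the connectedness of $\SU(l)$ through evaluating the continuous, finite-valued obstruction $F$ at the identity.
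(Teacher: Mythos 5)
Your proof is correct, and while it opens the same way as the paper's (applying \cref{lem:SUgoursat} to $[H,H]$ and then reducing to case (ii)), the mechanism you use to pin down the form of a general element of $H$ is genuinely different. The paper multiplies an arbitrary $(\e^{\i\theta_1}U_1,\e^{\i\theta_2}U_2)\in H$ by an element of $[H,H]$ to reduce to the form $(\e^{\i\theta_1}\ident, W_2)$, observes that the set of such $W_2$ generates a normal subgroup of $\U(l)$, invokes the dichotomy of \cref{lem:Unorm} (phases only, or contains $\SU(l)$), and rules out the second branch by a commutator computation that would contradict the $\mathbb{Z}_q$ structure of $[H,H]$. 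You instead exploit normality of $[H,H]$ in $H$ under conjugation: the obstruction $F(U)=b\Phi(U)b^{-1}\Phi(aUa^{-1})^{-1}$ is a continuous $\mathbb{Z}_q$-valued function on the connected group $\SU(l)$ with $F(\ident)=\ident$, hence identically trivial, and Schur's lemma then forces $\Phi(a)^{-1}b$ to be a scalar. Your route avoids \cref{lem:Unorm} entirely, at the cost of needing the explicit continuous form of $\Phi$ from \cref{rem:SUiso} (both to extend $\Phi$ to $\U(l)$ as a homomorphism and to justify continuity of $F$); the paper's route is more purely group-theoretic in that step. Both arguments correctly handle the final point that the discrete factor $\mathbb{Z}_q$ collapses upon passing to commutators, and your closing computation $[b_1,b_2]=\Phi([a_1,a_2])$ is a clean way to see that $[H,H]$ is exactly the graph of $\Phi$.
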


\subsubsection{Serre's lemma: Pairwise independence implies full independence}\label{sec:serre}

Serre's lemma characterizes when the subdirect product of perfect groups is actually a direct product, and roughly states that, if perfect groups are pairwise independent, then they are fully independent.\footnote{Serre's lemma for finite groups was first stated in a paper by Kenneth Ribet \cite{Ribet1974adic}, where the proof is attributed to Jean-Pierre Serre. Later Terence Tao restated the lemma in a slightly different way and named it Serre's lemma \cite{tao2021goursat}. Our statement and proof follow Ribet \cite{Ribet1974adic}.} Recall that a group $G$ is called \emph{perfect} if it is equal to its commutator subgroup, i.e., $
[G, G] = G$.
For example, the unitary group $\U(d)$ is not perfect, but the special unitary group $\SU(d)$ is perfect (in fact, all finite-dimensional semi-simple Lie groups are perfect).

\begin{lemma}[Serre's lemma \cite{Ribet1974adic}]\label{lem:serre}
  Let $H \subseteq G_1 \times \dots \times G_r$ be a subgroup such that $\pi_{ij}(H) = G_i \times G_j$ for all pairs $1 \leq i < j \leq r$, where $\pi_{ij} : G_1 \times \dots \times G_r \to G_i \times G_j$ is the projection homomorphism. If each $G_i$ is perfect, then $H = G_1 \times \dots \times G_r$.
\end{lemma}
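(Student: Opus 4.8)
The plan is to induct on the number of factors $r$. The base case $r=2$ is immediate: there $\pi_{12}$ is the identity map, so the hypothesis $\pi_{12}(H)=G_1\times G_2$ is literally the conclusion. For the inductive step with $r\ge 3$, I would first collapse the last factor. Set $A:=G_1\times\dots\times G_{r-1}$ and let $\pi$ be the projection forgetting the last coordinate. The image $\pi(H)\subseteq A$ inherits pairwise surjectivity onto $G_i\times G_j$ for all $i<j\le r-1$, so by the induction hypothesis $\pi(H)=A$; likewise the last coordinate is surjective since $\pi_{1r}(H)=G_1\times G_r$. Thus $H$ is a subdirect product of the two groups $A$ and $G_r$, and I can apply Goursat's lemma (\cref{lem:agoursat}) to this pair.

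Goursat's lemma furnishes normal subgroups $N\trianglelefteq A$ and $M\trianglelefteq G_r$, together with an isomorphism of the common quotient $Q:=A/N\cong G_r/M$, such that $(a,c)\in H$ if and only if $a$ and $c$ have matching images in $Q$. Writing $q_A:A\to Q$ and $q_r:G_r\to Q$ for the induced surjections, the entire statement reduces to showing that $Q$ is trivial: indeed $Q=\{e\}$ forces $N=A$ and $M=G_r$, whence $A\times\{e\}\subseteq H$ and $\{e\}\times G_r\subseteq H$, so $H=A\times G_r=G_1\times\dots\times G_r$.

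To show $Q$ is trivial I would exploit the remaining pairwise hypotheses $\pi_{ir}(H)=G_i\times G_r$. Fix $i<r$ and write $A=\tilde G_i\times A_{\hat\imath}$, where $\tilde G_i$ is the $i$-th factor and $A_{\hat\imath}$ is the product of the others. Since $q_r$ is onto, pairwise surjectivity of the $(i,r)$ projection translates into surjectivity of the map $(p_i,q_A):A\to G_i\times Q$, where $p_i$ is the projection to $G_i$: given $(g,t)$, pick $c$ with $q_r(c)=t$, then a witness $h=(a,c)\in H$ with $p_i(a)=g$ automatically has $q_A(a)=q_r(c)=t$. Because $q_A(\tilde G_i)$ and $q_A(A_{\hat\imath})$ commute in $Q$ and together generate it, this surjectivity forces $q_A(A_{\hat\imath})=Q$; hence $q_A(\tilde G_i)$ commutes with all of $Q$, i.e. lies in the center $Z(Q)$. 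Here is where perfectness enters: $q_A(\tilde G_i)$ is a homomorphic image of the perfect group $G_i$, so it equals its own commutator subgroup, which vanishes inside the abelian group $Z(Q)$. Therefore $q_A$ kills every factor $\tilde G_i$, and since these generate $A$ we conclude $Q=q_A(A)=\{e\}$.

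The main obstacle, and the point where all three hypotheses conspire, is the middle claim that each factor maps into the center of the common quotient $Q$. Pairwise independence with the last block is exactly what makes $q_A(\tilde G_i)$ central, and perfectness is exactly what then makes that central (hence abelian) image trivial. This is also precisely the step that fails for the abelian example $\U(1)^3$ with the relation $\theta_3=\theta_1+\theta_2$ highlighted in \cref{sec:nec-suf}: there the factors still land centrally, but $\U(1)$ is not perfect, so the central images need not vanish and the tripartite correlation survives. Everything else is routine bookkeeping with the Goursat correspondence and the product decomposition of $A$.
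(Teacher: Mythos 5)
Your proof is correct, but it takes a genuinely different route from the paper's. The paper follows Ribet's original argument: after applying the induction hypothesis to the two sub-products $G_1 \times G_3 \times \dots \times G_r$ and $G_1 \times G_2 \times G_4 \times \dots \times G_r$, it produces elements $h_1 = (g_1, g_2, e_3, \dots, e_r)$ and $h_2 = (g_1', e_2, g_3, e_4, \dots, e_r)$ of $H$ whose group commutator is $([g_1, g_1'], e_2, \dots, e_r)$; perfectness of $G_1$ then says such commutators generate $G_1 \times \{e\} \times \dots \times \{e\}$, and the claim follows by symmetry. That argument is shorter and entirely self-contained. Your version instead collapses the first $r-1$ factors into $A$, invokes Goursat's lemma (\cref{lem:agoursat}) to extract the common quotient $Q$, and shows $Q$ is trivial by proving each factor $G_i$ maps into $Z(Q)$ (via pairwise surjectivity with the last block) and then killing that central image with perfectness. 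The bookkeeping is heavier, but you gain a sharper structural picture: for non-perfect factors your argument still shows that the only possible obstruction is a common quotient into which every factor lands centrally, i.e., a quotient of the abelianizations --- which is exactly what survives in the $\U(1)^3$ example of \cref{sec:nec-suf} and explains precisely why perfectness is the right hypothesis. All the individual steps check out: the surjectivity of $(p_i, q_A)$ onto $G_i \times Q$ follows correctly from $\pi_{ir}(H) = G_i \times G_r$ together with the Goursat matching condition, and the deduction $q_A(A_{\hat\imath}) = Q$ needs only that surjectivity applied to pairs $(e_i, t)$.
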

We present the proof of this lemma in \cref{app:serre}.

\subsection{Characterizations of semi-universality and subsystem universality: Proofs of Lemmas \ref{MainLemma} and \ref{Mainlemma_gen} }\label{sec:mainproof}

Finally, we show how Goursat's and Serre's lemmas can be applied to the case of $G$-invariant unitaries, and prove \cref{MainLemma} and \cref{Mainlemma_gen}. We start by \cref{MainLemma}.

Given that $\pi_{\Lambda}(\mathcal{SV}^G) = \prod_{\lambda} \SU(\hilbert[M]_\lambda)$, it is clear that if $\mathcal{T} \supseteq \mathcal{SV}^G$, then $\mathcal{T}$ satisfies both conditions \textbf{A} and \textbf{B} (see \cref{sec:nec-suf}). Therefore, to prove the lemma, it suffices to prove the converse direction.

First, we focus only on a pair of irreps $\lambda_1, \lambda_2\in\Lambda$, and discuss the implications of conditions $\textbf{A}$ and $\textbf{B}$ for this pair. Suppose the subgroup $\mathcal{T}\subset\mathcal{V}^G$ satisfies condition $\textbf{A}$ for $\lambda_1,\lambda_2\in\Lambda$. That is, \cref{qtr} holds for $\lambda=\lambda_1,\lambda_2$.
Then, case (i) of Goursat's lemma immediately implies that if $\dim \mathcal{M}_{\lambda_1}\neq \dim \mathcal{M}_{\lambda_2}$, 
the joint projection of $[\mathcal{T},\mathcal{T}]$ to $\lambda_1$ and $\lambda_2$, denoted as $\pi_{\lambda_1 \lambda_2}([\mathcal{T},\mathcal{T}])$, is isomorphic to the Cartesian product $\SU(\hilbert[M]_{\lambda_1}) \times \SU(\hilbert[M]_{\lambda_2})$. However, if $m \coloneqq \dim \mathcal{M}_{\lambda_1}= \dim \mathcal{M}_{\lambda_2}\ge 2$, then there exists a second possibility, namely case (ii) of Goursat's lemma. In particular, according to \cref{rem:SUiso,cor:Ugoursat}, for any fixed orthonormal bases on $\hilbert[M]_{\lambda_1}$ and $\hilbert[M]_{\lambda_2}$, there exists an $m \times m$ unitary matrix $W$ such that, for all $V\in[\mathcal{T},\mathcal{T}]$, one of the following holds
\bes\label{trq1}
\begin{align}
  [\pi_{\lambda_2}(V)] &= W [\pi_{\lambda_1}(V)] W^\dagger\\
  [\pi_{\lambda_2}(V)] &= W [\pi_{\lambda_1}(V)]^\ast W^\dagger.
\end{align}
\ees
Furthermore, the second part of \cref{cor:Ugoursat} implies that, for all $V \in \mathcal{T}$,
\begin{subequations}\label{eq:isophase}
\begin{align}
    [\pi_{\lambda_2}(V)] & = \e^{\i \phi} W [\pi_{\lambda_1}(V)] W^\dagger \\
    [\pi_{\lambda_2}(V)] & = \e^{\i \phi} W [\pi_{\lambda_1}(V)]^\ast W^\dagger,
\end{align}
\end{subequations}
for some phase depending on $V$.

Now to prove \cref{MainLemma}, we note that in both cases in \cref{eq:isophase}, $|\Tr(\pi_{\lambda_1}(V))|=|\Tr(\pi_{\lambda_2}(V))|$.  Therefore, if condition $\textbf{B}$ of \cref{MainLemma} holds, i.e., there exists a unitary $V\in \mathcal{T}$ such that $|\Tr(\pi_{\lambda_1}(V))|\neq |\Tr(\pi_{\lambda_2}(V))|$
then the case (ii) of Goursat's lemma is ruled out, which means 
\begin{equation}\label{eq:commind}
  \SU(\hilbert[M]_{\lambda_1}) \times \SU(\hilbert[M]_{\lambda_2})= \pi_{\lambda_1 \lambda_2}([\mathcal{T},\mathcal{T}]) \subseteq \pi_{\lambda_1 \lambda_2}(\mathcal{T})\ .
\end{equation}
Similarly, if condition \textbf{B}' holds, i.e., there exists $V\in\mathcal{T}$, such that $\pi_{\lambda_1}(V) = \ident_{\hilbert[M]_{\lambda_1}}$
and $\pi_{\lambda_2}(V) \neq \e^{\i\alpha} \ident_{\hilbert[M]_{\lambda_2}}$
for any phase $\alpha$, then, again \cref{eq:isophase} cannot hold, which implies the same result. 

This proves \cref{MainLemma} in the special case where $\Lambda$ contains only two irreps. Finally, we note that because the special unitary group $\SU(\mathcal{M}_\lambda)$ is perfect, and \cref{eq:commind} applies to all pairs $\lambda_1, \lambda_2 \in \Lambda$, applying Serre's lemma to the subdirect product $\pi_\Lambda([\mathcal{T},\mathcal{T}])\subseteq \prod_{\lambda\in\Lambda} \SU(\mathcal{M}_\lambda)$, we conclude that 
$\prod_{\lambda\in\Lambda} \SU(\hilbert[M]_{\lambda}) \cong \mathcal{SV}^G\subset \mathcal{T}$, 
which completes the proof of \cref{MainLemma}.

Next, we prove \cref{Mainlemma_gen}, which does not assume condition $\textbf{B}$. Then, in this case, the case (ii) in Goursat's lemma, i.e., correlations between different sectors, cannot be ruled out. This means that it is possible that for any $V\in\mathcal{T}$ its projection to two sectors $\lambda_1$ and $\lambda_2$, namely $ \pi_{\lambda_1}(V)$ and $\pi_{\lambda_2}(V)$, are related via one of the two possibilities in \cref{eq:isophase} where the unitary realized in $\lambda_1$ uniquely determines the unitary in $\lambda_2$, up to a possible phase. 

As we explain in the following, based on 
the two possibilities in Goursat's lemma, i.e., case (i) and case (ii), we can partition all irreps in $\Lambda$ into equivalency classes $\Delta_1, \cdots, \Delta_s$.  
Namely, for any $\lambda_1,\lambda_2\in \Lambda$ if case (i) of Goursat's lemma applies to $\pi_{\lambda_1 \lambda_2}([\mathcal{T},\mathcal{T}])$, then they are in different parts and if case (ii) applies they are in the same part.  Note that \cref{trq1} guarantees that this rule defines equivalency classes and can be applied consistently. In particular, if for $\lambda_1,\lambda_2, \lambda_3\in \Lambda$, the case (ii) applies to the pair $\lambda_1, \lambda_2$ and also to the pair $\lambda_2, \lambda_3$, then it also applies to the pair $\lambda_1,\lambda_3$, i.e., one of the two possibilities in \cref{trq1} is applicable to $\pi_{\lambda_1}(V)$ and $\pi_{\lambda_3}(V)$, and therefore they are in the same part as well. 
Also, note that the case (ii) of Goursat's lemma applies to $\lambda_1$ and $\lambda_2$ only if $\dim\mathcal{M}_{\lambda_1}=\dim\mathcal{M}_{\lambda_2}$, which means for all irreps in the same part this dimension is identical. 
We label these parts as $\Delta_1,\cdots, \Delta_s$, and  define the corresponding dimensions as $m_r=\dim\mathcal{M}_\lambda$ for any $\lambda\in \Delta_r$.

Next, we pick a representative $\lambda$ from each part $\Delta_r$, denoted as $\lambda_r$, which together define $\Lambda_\text{rep}=\{\lambda_r: r=1,\cdots, s\}$. Then, we apply Serre's lemma to the subdirect product
\be\label{proof2}
\prod_{\lambda \in \Lambda_\text{rep}} \pi_\lambda([\mathcal{T},\mathcal{T}]) \subseteq  \prod_{r=1}^s \SU(m_r) \ .
\ee
By the above construction, for any pair of parts $\Delta_r$ and $\Delta_{r'}$ their corresponding representatives 
$\lambda_{r}$ and $\lambda_{r'}$ satisfy 
 case (i) of Goursat's lemma, which means $\pi_{\lambda_r,\lambda_{r'}}([\mathcal{T},\mathcal{T}])=\SU(m_r)\times\SU(m_{r'})$. Therefore, by applying Serre's lemma, we conclude that \cref{proof2} holds as equality, i.e., the subdirect product is indeed a Cartesian product. This completes the proof of \cref{Mainlemma_gen}. 

Now we finish the proof of \cref{lem:phases}. Let $V \in \mathcal{T}$ be an arbitrary element. We show that there exists phases $\e^{\i \phi_\lambda} : \lambda \in \Lambda$ and $V' \in [\mathcal{T}, \mathcal{T}]$ such that $V = V' \sum_{\lambda \in \Lambda} \e^{\i \phi_\lambda} \Pi_\lambda$.

By \cref{proof2} there is an element $U \in [\mathcal{T}, \mathcal{T}]$ such that, for all $\lambda_r \in \Lambda_{\text{rep}}$, $\pi_{\lambda_r}(U V) = \e^{\i \phi_r} \ident_{\lambda_r}$ for some phase. By construction, every other $\lambda' \in \Lambda \setminus \Lambda_{\text{rep}}$ is contained in some equivalence class of irreps, say $\lambda' \in \Delta_r$. According to \cref{eq:isophase}, if $\pi_{\lambda_r}(U V)$ is proportional to the identity, $\pi_{\lambda'}(U V)$ must be as well. But this holds for all $\lambda' \in \Lambda$, hence $U V \in \mathcal{P}$ is in the subgroup of relative phases. In other words, $V = U^{-1} (U V)$ is of the desired form.

\section{Conclusion}

A series of recent works has shown that in the presence of symmetries, the locality of gates can severely restrict the set of realizable unitaries \cite{Marvian2022Restrict, marvian2022quditcircuit, Marvian2024Rotationally, marvian2024theoryabelian, kazi2024universality}. Interestingly, the type of restrictions significantly depends on the properties of the symmetry. In particular, in the case of Abelian symmetries, recent work \cite{marvian2024theoryabelian} has revealed a simple characterization of the group of all realizable unitaries with $k$-qudit gates, showing that some restrictions that appear in the case of non-Abelian symmetries, such as  $\SU(d)$ symmetry, cannot appear in the case of Abelian symmetries. In general, understanding the restrictions in the case of non-Abelian symmetries is significantly more complicated. Prior to the present work, a general formal framework or techniques for understanding such circuits did not exist.

In this work, we developed novel tools and a new framework for understanding semi-universality and universality in symmetric quantum circuits, which are particularly useful in the case of non-Abelian symmetries where the simpler characterization of \cite{marvian2024theoryabelian} is not applicable. We anticipate that, beyond the theory of symmetric quantum circuits, this framework and the new tools, such as our characterization of semi-universality in \cref{MainLemma}, characterization of subsystem universality in \cref{Mainlemma_gen}, and our \cref{thm:mulblocks} on extending controllability, will be more broadly useful in Quantum Control theory, Quantum Many-body Physics, and Quantum Thermodynamics. For instance, these new techniques could be useful for understanding the effect of non-Abelian conserved charges on the thermalization of quantum systems \cite{halpern2020noncommuting, majidy2023noncommuting}.

As an example of applications, we applied these new tools to settle a fundamental question in the context of $\SU(d)$-invariant circuits with $d\ge 3$. Namely, we showed the semi-universality of 3-qudit gates. We also discussed two corollaries of this result. Firstly, by studying 3-qudit circuits, we found a significantly simpler proof of the universality of 2-qudit gates, which was recently shown in a PhD thesis \cite{vanmeter2021universality}, based on the advanced results in Mathematical literature, namely, Marin's characterization of the Lie algebra generated by transpositions. Secondly, using this result, we showed that the distribution generated by random circuits with 3-qudit $\SU(d)$-invariant gates is a $t$-design, with $t$ growing quadratically with the number of qudits. Such quadratic scaling has been recently shown in \cite{marin2007algebre}, albeit using 4-qudit gates (We also emphasize that circuits with 2-qudit gates are not 2-design \cite{marvian2022quditcircuit}).

$\SU(d)$-invariant circuits, and more generally symmetric quantum circuits, have various applications in the context of quantum computation and control. In particular, they are useful for protecting information, e.g., via decoherence-free subspaces and noiseless subsystems \cite{lidar1998decoherence, Bacon:2000qf, divincenzo2000universal, Zanardi:97c, kempe2002exact, brod2013computational}. More broadly, the notion of charge-conserving unitaries appears in various areas of quantum information science, and it is crucial to understand how such unitaries can be realized. This is particularly relevant in the context of quantum thermodynamics \cite{janzing2000thermodynamic, FundLimitsNature, brandao2013resource, guryanova2016thermodynamics, lostaglio2015quantumPRX, halpern2016microcanonical, halpern2016beyond, lostaglio2017thermodynamic}, the resource theory of asymmetry \cite{gour2008resource, Marvian_thesis, marvian2013theory}, and other related areas such as Quantum reference frames \cite{QRF_BRS_07,marvian2008building}, and covariant error-correcting codes \cite{faist2020continuous, hayden2021error, woods2020continuous}. Furthermore, such circuits have been found useful in other contexts including variational quantum machine learning \cite{meyer2023exploiting, nguyen2022theory, sauvage2022building, zheng2023speeding}, and variational quantum eigensolvers for quantum chemistry \cite{barron2021preserving, shkolnikov2021avoiding, gard2020efficientsymmetry, streif2021quantum, wang2020x, barkoutsos2018quantum}. Besides such applications, the framework of symmetric quantum circuits has now become a standard tool in the area of many-body Physics, for modeling various physical phenomena, from quantum thermalization and quantum chaos \cite{khemani2018operator} to symmetry-protected topological order \cite{chen2010local, chen2011classification}. Therefore, we anticipate that the framework developed in this paper find applications beyond quantum computation. \\

\section*{Acknowledgements}
This work is supported by a collaboration between the US DOE and other Agencies. This material is based upon work supported by the U.S. Department of Energy, Office of Science, National Quantum Information Science Research Centers, Quantum Systems Accelerator. Additional support is
acknowledged from Army Research Office
(W911NF-21-1-0005), NSF Phy-2046195, and 
NSF QLCI grant OMA-2120757. H.L. was supported by the Quantum Science Center (QSC), a National Quantum Information Science Research Center of the U.S. Department of Energy (DOE) and by the U.S. Department of Energy, Office of Science, Office of Nuclear Physics (NP) contract DE-AC52-06NA25396.

\bibliography{refs}

\onecolumngrid

\appendix

\newpage

\section*{Appendix}

\newcommand\appitem[2]{\hyperref[{#1}]{\textbf{\cref*{#1} \nameref*{#1}}} \dotfill \pageref{#1}\\ \begin{minipage}[t]{0.8\textwidth} #2\end{minipage}}
\begin{itemize}
\item \appitem{app:2local}{In this section, the group $\mathcal{V}_2^{(3)}$ of 3-qudit $\SU(d)$-invariant unitaries generated by 2-qudit symmetric gates is characterized. It is also shown that generic 3-qudit $\SU(d)$-invariant gates are semiuniversal together with 2-qudit ones.}
\item \appitem{sec:proof}{In this section, the remaining details of the proof of semi-universality of 3-qudit $\SU(d)$-invariant gates are given. In particular, \cref{thm:mulblocks} is proven, as well as Goursat's and Serre's lemmas.}
\item \appitem{app:conj-proof}{In this section, it is proven that, when $d \geq 3$, full $n$-local control is required for universality.}
\item \appitem{sec:ancillapr}{Here the result that semi-universality can be achieved with 2-local $\SU(d)$-invariant gates and $8$ ancilla qudits is proven.}
\item \appitem{App:tdesign}{In this section it is proven that 3-qudit $\SU(d)$-invariant gates are a $t$-design for all symmetric gates, with $t \approx n^2 / 2$.}
\item \appitem{sec:genSUprf}{A number of strategies useful for extending controllability are considered, including the idea of isolating Lie algebra elements to blocks.}
\end{itemize}

\newpage

\section{3-qudit gates}\label{app:2local}

In this section, we consider 2-qudit and 3-qudit $\SU(d)$-invariant gates. In \cref{app:V23char} we characterize the 3-qudit gates that can be generated by 2-qudit ones, and we prove in \cref{app:3quditgate} that the ones which cannot be generated, together with 2-qudit gates are semiuniversal. Together with the aforementioned characterization, this implies that generic $\SU(d)$-invariant 3-qudit gates are semiuniversal (\cref{prop:gen3q}).

\subsection{Which 3-qudit gates are realizable with 2-qudit gates: A characterization of \texorpdfstring{$\mathcal{V}_2^{(3)}$}{V2(3)}}\label{app:V23char}

In this section, we characterize the 3-qudit $SU(d)$-invariant gates that can be generated by 2-qudit ones, i.e. the elements of $\mathcal{V}_2^{(3)}$. In particular, because 2-qudit gates are semi-universal, i.e. $\mathcal{SV}^{(3)} \subseteq \mathcal{V}_2^{(3)}$, the only elements that cannot be generated are central and therefore act as relative phases between charge sectors.

On 
\begin{equation}
    (\complex^d)^{\otimes 3} = \hilbert_{\ysub{3}} \oplus \hilbert_{\ysub{2,1}} \oplus \hilbert_{\ysub{1,1,1}}
\end{equation}
consider the Hermitian operators
\begin{equation}\label{eq:Bs}
  \begin{split}
    B_1 & = \ident \\
    B_2 & = \P(12) + \P(13) + \P(23) \\
    B_3 & = \P(123) + \P(132).
  \end{split}
\end{equation}
These are both $\SU(d)$- and permutationally-invariant and therefore commute with each other and each element of $\mathcal{V}^{(3)}$. Furthermore, when $d \geq 3$, they are linearly independent. Thus, the group of unitaries that they generate, is equal to the center of $\mathcal{V}^{(3)}$,
\begin{equation}
    \angles{\e^{\i \alpha_1 B_1}, \e^{\i \alpha_2 B_1}, \e^{\i \alpha_3 B_3} \given \alpha_i \in \real} = \set{\e^{\i \phi_{\sysub{3}}} \Pi_{\ysub{3}} + \e^{\i \phi_{\sysub{2,1}}} \Pi_{\ysub{2,1}} + \e^{\i \phi_{\sysub{1,1,1}}} \Pi_{\ysub{1,1,1}} \given \phi_\lambda \in [0, 2 \pi)}.
\end{equation}

Using \cref{tab:n=4}, or using a character table for $\mathbb{S}_3$, it can be determined that
\begin{equation}\label{eq:Bprojs}
\begin{split}
    B_1 & = \Pi_{\ysub{3}} + \Pi_{\ysub{2,1}} + \Pi_{\ysub{1,1,1}} \\
    B_2 & = 3 \Pi_{\ysub{3}} - 3 \Pi_{\ysub{1,1,1}} \\
    B_3 & = 2 \Pi_{\ysub{3}} - \Pi_{\ysub{2,1}} + 2 \Pi_{\ysub{1,1,1}}.
\end{split}
\end{equation}
Note that, when $d = 2$, $\Pi_{\ysub{1,1,1}} = 0$, and so these three operators are linearly dependent. On the other hand, when $d \geq 3$, $\i B_3$ does not satisfy the $\mathbb{Z}_2$ condition of \cite{marvian2022quditcircuit}, so the group generated by $B_3$ is not achievable with 2-local $\SU(d)$-invariant unitaries (one can also verify by hand that the Lie algebra generated by transpositions $\i \P(jk)$ does not contain $\i B_3$). To fully characterize $\mathcal{V}_2^{(3)}$, we find the linear combination of projectors to charge sectors which is orthogonal to both $B_1$ and $B_2$ \cite{Marvian2022Restrict}.

\begin{proposition*}[re \ref{prop:2local-criteria}]
    For a system with $n=3$ qudits, the family of unitary evolutions $\exp(-\i H t): t\in\mathbb{R}$ is realizable with 2-qudit $\SU(d)$-invariant unitaries, i.e., $\exp(-\i H t)\in\mathcal{V}^{(3)}_2$, if and only if $\Tr(H C)=0$, where 
    \begin{align}\label{eq:Cop}
      C & = 2 (d - 1) (d - 2) \Pi_{\ysub{3}} - (d + 2) (d - 2) \Pi_{\ysub{2,1}} + 2 (d + 2) (d + 1) \Pi_{\ysub{1,1,1}} \\
        & = d^2(\P_{(123)} + \P_{(132)}) - 2 d (\P_{12} + \P_{13} + \P_{23}) + 4 \ident. \nonumber
    \end{align}
  Furthermore, when $d \geq 3$, the unitary $V \in \mathcal{V}^{(3)}$ is realizable by 2-qudit $\SU(d)$-invariant unitaries, i.e. $V \in \mathcal{V}_2^{(3)}$, if and only if
  \begin{equation}
      \det v_{\ysub{2,1}} = (\det v_{\ysub{3}}) (\det v_{\ysub{1,1,1}}).
  \end{equation}
  \end{proposition*}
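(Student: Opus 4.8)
The plan is to reduce both claims to a computation of the Lie algebra $\mathfrak{v}_2^{(3)}$ of $\mathcal{V}_2^{(3)}$, i.e. the real Lie algebra generated by $\{\i\P_{ij},\i\ident\}$. Since $\mathcal{V}_2^{(3)}$ is a closed connected subgroup of $\mathcal{V}^{(3)}$, the family $\exp(-\i Ht)$ lies in $\mathcal{V}_2^{(3)}$ for all $t\in\mathbb{R}$ if, and only if, $-\i H\in\mathfrak{v}_2^{(3)}$. I would split any $\SU(d)$-invariant Hermitian $H$ as $H=H_{ss}+H_c$, where $H_c=\sum_\lambda \frac{\Tr(\Pi_\lambda H)}{\dim\hilbert_\lambda}\Pi_\lambda$ is the central (relative-phase) part and $H_{ss}$ is block-traceless in every sector.

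I would then determine $\mathfrak{v}_2^{(3)}$. Its semisimple part equals $\su(\mathcal{M}_{\ysub{2,1}})$ in full, since 2-qudit gates are semi-universal on three qudits, $\mathcal{SV}^{(3)}\cong\SU(\mathcal{M}_{\ysub{2,1}})\subseteq\mathcal{V}_2^{(3)}$ (\cref{SU1}), while the two $1$-dimensional sectors carry no semisimple content; hence $-\i H_{ss}\in\mathfrak{v}_2^{(3)}$ always, and the entire question is whether $H_c\in\mathrm{span}_\mathbb{R}\{B_1,B_2\}$. The center of $\mathfrak{v}_2^{(3)}$ contains $\i B_1=\i\ident$ and $\i B_2=\i(\P_{12}+\P_{13}+\P_{23})$, which are linearly independent for $d\geq3$ by \cref{eq:Bprojs}. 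The main obstacle is to show that no further central direction is reachable, concretely that $\i B_3=\i(\P_{(123)}+\P_{(132)})\notin\mathfrak{v}_2^{(3)}$. This is precisely the conservation law of \cite{marvian2022quditcircuit}, where $\i B_3$ fails the $\mathbb{Z}_2$ condition for $2$-local realizability; equivalently one checks directly that the Lie algebra generated by the transpositions closes on $\mathrm{span}_\mathbb{R}\{\i\ident,\i\P_{12},\i\P_{13},\i\P_{23},\i(\P_{(123)}-\P_{(132)})\}$, whose only central directions are $\i B_1,\i B_2$ (the dimension count $\dim\mathcal{V}^{(3)}-\dim\mathcal{V}_2^{(3)}=1$ of \cite{Marvian2022Restrict} is the same statement). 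Thus the center of $\mathfrak{v}_2^{(3)}$ is exactly $\mathrm{span}_\mathbb{R}\{\i B_1,\i B_2\}$.

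The first claim now follows by Hilbert--Schmidt orthogonality. The operator $C$ is the central operator orthogonal to both $B_1$ and $B_2$; I would confirm $\Tr(CB_1)=\Tr(CB_2)=0$ from \cref{eq:Bprojs} and the sector dimensions $\dim\hilbert_{\ysub{3}}=\binom{d+2}{3}$, $\dim\hilbert_{\ysub{1,1,1}}=\binom{d}{3}$, and $\dim\hilbert_{\ysub{2,1}}=\tfrac{2}{3}d(d^2-1)$, so that $C$ spans the orthogonal complement of $\mathrm{span}\{B_1,B_2\}$ in the center (for $d=2$ this complement is trivial and $C=0$, matching that 2-qudit gates are universal on three qubits). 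Since $C$ is central while $H_{ss}$ is block-traceless, $\Tr(HC)=\Tr(H_cC)$, and because $H_c$ lies in the center, $H_c\in\mathrm{span}\{B_1,B_2\}$ iff $H_c\perp C$. Combining with the previous step yields $\exp(-\i Ht)\in\mathcal{V}_2^{(3)}$ iff $\Tr(HC)=0$.

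For the determinant criterion ($d\geq3$) I would use that $\mathcal{V}_2^{(3)}$ is compact and connected, so $\exp$ is surjective, and since the semisimple and central parts of $\mathfrak{v}_2^{(3)}$ commute, every $V\in\mathcal{V}_2^{(3)}$ factors as $V=\e^{X}\e^{Y}$ with $X\in\su(\mathcal{M}_{\ysub{2,1}})$ (extended by $\ident$) and $Y=\i(aB_1+bB_2)$, whose sector phases are $(\psi_{\ysub{3}},\psi_{\ysub{2,1}},\psi_{\ysub{1,1,1}})=(a+3b,a,a-3b)$. Reading off determinants gives $\det v_{\ysub{2,1}}=\e^{2\i\psi_{\ysub{2,1}}}$ (the $\SU(2)$ factor is unimodular) and $\det v_{\ysub{3}}\det v_{\ysub{1,1,1}}=\e^{\i(\psi_{\ysub{3}}+\psi_{\ysub{1,1,1}})}=\e^{2\i\psi_{\ysub{2,1}}}$, proving necessity of \cref{eq:Vdet}. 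For sufficiency I would absorb the $\SU(2)$ content of $v_{\ysub{2,1}}$ by an element of $\mathcal{SV}^{(3)}\subseteq\mathcal{V}_2^{(3)}$ to reduce to a purely central $V=\e^{\i\phi_{\ysub{3}}}\Pi_{\ysub{3}}+\e^{\i\chi}\Pi_{\ysub{2,1}}+\e^{\i\phi_{\ysub{1,1,1}}}\Pi_{\ysub{1,1,1}}$, for which \cref{eq:Vdet} reads $2\chi\equiv\phi_{\ysub{3}}+\phi_{\ysub{1,1,1}}\pmod{2\pi}$; choosing $a=\chi$ and $b=(\phi_{\ysub{3}}-\chi)/3$ then matches $\psi_{\ysub{3}},\psi_{\ysub{2,1}}$ exactly and $\psi_{\ysub{1,1,1}}$ modulo $2\pi$ precisely under this congruence, so $V\in\mathcal{V}_2^{(3)}$. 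The crux throughout is the non-realizability of $B_3$; the remainder is linear algebra together with surjectivity of $\exp$ on the compact connected group $\mathcal{V}_2^{(3)}$.
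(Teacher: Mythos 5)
Your proposal is correct and follows essentially the same route as the paper's proof in \cref{app:V23char}: both arguments reduce the question to identifying the center of the Lie algebra of $\mathcal{V}_2^{(3)}$ as $\operatorname{span}_\mathbb{R}\{\i B_1,\i B_2\}$ (with the exclusion of $\i B_3$ imported from \cite{marvian2022quditcircuit} or checked by a short direct computation), compute $C$ as the Hilbert--Schmidt orthogonal complement of $B_1,B_2$ within the center using the same sector dimensions, and obtain the determinant criterion from the central phase pattern $(a+3b,\,a,\,a-3b)$ after absorbing the $\SU(2)$ content of $v_{\ysub{2,1}}$ by an element of $\mathcal{SV}^{(3)}$. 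The only blemish is notational: the commutator $[\i\P_{12},\i\P_{13}]$ produces the already anti-Hermitian element $\P_{(123)}-\P_{(132)}$, not $\i(\P_{(123)}-\P_{(132)})$, so the five-dimensional closed algebra you describe should omit that extra factor of $\i$; this does not affect the argument.
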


\begin{proof}
Write $C = c_{\ysub{3}} \Pi_{\ysub{3}} + c_{\ysub{2,1}} \Pi_{\ysub{2,1}} + c_{\ysub{1,1,1}} \Pi_{\ysub{1,1,1}}$. We find $c_\lambda$, up to an overall normalization, using the conditions $\Tr C = \Tr B_1 C = 0$ and $\Tr B_2 C = 0$. Recall the decomposition into isotypic components, $\hilbert = \bigoplus_\lambda \hilbert[Q]_\lambda \otimes \hilbert[M]_\lambda$. Note that $\Tr \Pi_\lambda = d_\lambda m_\lambda$ where $d_\lambda = \dim \hilbert[Q]_\lambda$ and $m_\lambda = \dim \hilbert[M]_\lambda$.

First find that
\begin{equation}
    0 = \Tr B_2 C = 3 c_{\ysub{3}} d_{\ysub{3}} - 3 c_{\ysub{1,1,1}} d_{\ysub{1,1,1}},
\end{equation}
so $c_{\ysub{3}} d_{\ysub{3}} = c_{\ysub{1,1,1}} d_{\ysub{1,1,1}}$. Using this in
\begin{equation}
\begin{split}
    0 = \Tr C & = c_{\ysub{3}} d_{\ysub{3}} + 2 c_{\ysub{2,1}} d_{\ysub{2,1}} + c_{\ysub{1,1,1}} d_{\ysub{1,1,1}} \\
    & = 2 c_{\ysub{3}} d_{\ysub{3}} + 2 c_{\ysub{2,1}} d_{\ysub{2,1}},
\end{split}
\end{equation}
we find $c_{\ysub{2,1}} d_{\ysub{2,1}} = -c_{\ysub{3}} d_{\ysub{3}}$.

Finally, with the formulas for the dimensions,
\begin{equation}
  \begin{split}
    d_{\ysub{3}} & = \binom{d + 2}{3} \\
    d_{\ysub{2,1}} & = 2 \binom{d + 1}{3} \\
    d_{\ysub{1,1,1}} & = \binom{d}{3},
  \end{split}
\end{equation}
the first part of \cref{eq:Cop} is verified. The second part can be checked with \cref{eq:Bs,eq:Bprojs}.

The Hamiltonians $\ident = B_1$ and $B_2$ of \cref{eq:Bs} generate the center of $\mathcal{V}_2^{(3)}$, since $B_2$ is a sum of 2-local terms. According to \cref{eq:Bprojs},
\begin{equation}\label{eq:phase}
    \e^{\i \alpha_1 \ident} \e^{\i \alpha_2 B_2 / 3} = \e^{\i(\alpha_1 + \alpha_2)} \Pi_{\ysub{3}} + \e^{\i \alpha_1} \Pi_{\ysub{2,1}} + \e^{\i(\alpha_1 - \alpha_2)} \Pi_{\ysub{1,1,1}}.
\end{equation}
Each choice of values $\alpha_1, \alpha_2 \in [0, 2 \pi)$ corresponds to a unique operator in $\mathcal{V}_2^{(3)}$, i.e. this parameterization is one-to-one, and every element of the center of $\mathcal{V}_2^{(3)}$ is of the form \cref{eq:phase} for some $\alpha_1, \alpha_2$. Note that, when $\alpha_1 = \alpha_2 = \pi$, \cref{eq:phase} is actually inside $\mathcal{SV}^{(3)}$, since $-\ident_{\hilbert[M]_{\sysub{2,1}}} \in \SU(\hilbert[M]_{\ysub{2,1}})$. This is the only point of intersection, since any other choice of phases will not have determinant one on all multiplicity subsystems $\hilbert[M]_\lambda$.

Conversely, given an arbitrary unitary acting as relative phases, where $\phi_\lambda \in [0, 2 \pi)$,
\begin{equation}
    \e^{\i \phi_{\sysub{3}}} \Pi_{\ysub{3}} + \e^{\i \phi_{\sysub{2,1}}} \Pi_{\ysub{2,1}} + \e^{\i \phi_{\sysub{1,1,1}}} \Pi_{\ysub{1,1,1}},
\end{equation}
it is of the form \cref{eq:phase} if and only if there are $\alpha_1, \alpha_2 \in [0, 2 \pi)$ such that
\begin{equation}
\begin{split}
    \alpha_1 + \alpha_2 & = \phi_{\ysub{3}} \pmod{2 \pi} \\
    \alpha_1 & = \phi_{\ysub{2, 1}} \\
    \alpha_1 - \alpha_2 & = \phi_{\ysub{1,1,1}} \pmod{2 \pi}.
\end{split}
\end{equation}
Setting $\alpha_1 = \phi_{\ysub{2,1}}$, a phase $\alpha_2 \in [0, 2 \pi)$ satisfying these equations exists if and only if $\phi_{\ysub{3}} - \phi_{\ysub{2,1}} = \phi_{\ysub{2,1}} - \phi_{\ysub{1,1,1}} \pmod{2 \pi}$, or, in other words,
\begin{equation}\label{eq:phidet}
    \e^{\i 2 \phi_{\sysub{2,1}}} = \e^{\i \phi_{\sysub{3}}} \e^{\i \phi_{\sysub{1,1,1}}},
\end{equation}
i.e. $2 \phi_{\ysub{2,1}} = \phi_{\ysub{3}} + \phi_{\ysub{1,1,1}} \pmod{2 \pi}$.

Now consider an arbitrary element of $\mathcal{V}^{(3)}$:
\begin{equation}
    V = (\ident_{\hilbert[Q]_{\sysub{3}}} \otimes v_{\ysub{3}}) \oplus (\ident_{\hilbert[Q]_{\sysub{2,1}}} \otimes v_{\ysub{2,1}}) \oplus (\ident_{\hilbert[Q]_{\sysub{1,1,1}}} \otimes v_{\ysub{1,1,1}}).
\end{equation}
How can we check if $V \in \mathcal{V}_2^{(3)}$? The first thing to note is that, since $\mathcal{V}_2^{(3)}$ is semiuniversal, for any $V \in \mathcal{V}^{(3)}$ there is some $\widetilde{V} \in \mathcal{SV}^{(3)} \subseteq \mathcal{V}_2^{(3)}$ such that $V \widetilde{V}$ acts as phases,
\begin{equation}
    V \widetilde{V} = \e^{\i \phi_{\sysub{3}}} \Pi_{\ysub{3}} + \e^{\i \phi_{\sysub{2,1}}} \Pi_{\ysub{2,1}} + \e^{\i \phi_{\sysub{1,1,1}}} \Pi_{\ysub{1,1,1}}.
\end{equation}
Since $\widetilde{V} \in \mathcal{V}_2^{(3)}$, $V \in \mathcal{V}_2^{(3)}$ if and only if $V \widetilde{V} \in \mathcal{V}_2^{(3)}$. Furthermore, since $\det \widetilde{v}_\lambda = 1$ for each $\lambda = \sydiag{3}$, $\sydiag{2,1}$, $\sydiag{1,1,1}$, it also holds that
\begin{equation}
    \det v_\lambda = \det v_\lambda \widetilde{v}_\lambda = \e^{\i \phi_{\lambda} m_\lambda},
\end{equation}
where $m_\lambda = \dim \hilbert[M]_\lambda$. Since $V \widetilde{V} \in \mathcal{V}_2^{(3)}$ if and only if $2 \phi_{\ysub{2,1}} = \phi_{\ysub{3}} + \phi_{\ysub{1,1,1}} \pmod{2 \pi}$, it follows that $V \in \mathcal{V}_2^{(3)}$ if and only if
\begin{equation}
    \det v_{\ysub{2,1}} = (\det v_{\ysub{3}}) (\det v_{\ysub{1,1,1}}).
\end{equation}
\end{proof}

Note that, since $\mathcal{V}_2^{(3)}$ is semiuniversal, i.e. it contains $\mathcal{SV}^{(3)} \cong \SU(2)$ which acts nontrivially only in the 2D multiplicity subsystem $\hilbert[M]_{\ysub{2,1}}$ (the other two are 1D), the discussion after \cref{eq:phase} implies
\begin{equation}
    \mathcal{V}_2^{(3)} \cong \U(1) \times \U(2)
\end{equation}
since the $\U(1) \times \U(1)$ subgroup parameterized by \cref{eq:phase} intersects $\mathcal{SV}^{(3)}$ only in the element which acts on $\hilbert[M]_{\ysub{2,1}}$ as $-\ident$. (When $d = 2$ this isomorphism is obvious since then $\mathcal{V}_2^{(3)} \cong \mathcal{V}^{(3)}$.)

\subsection{Generic 3-qudit gates are semi-universal}\label{app:3quditgate}

Here we show \cref{4qudit}, that any 4-qudit gate which does not satisfy the condition
\begin{equation*}
    [J] [\pi_{\ysub{3, 1}}(Y)] [J]^{\text{T}} \neq \e^{\i\phi}[\pi_{\ysub{2, 1, 1}}(Y)]^\ast\ ,
  \end{equation*}
is enough to achieve semi-universality with, along with 2-qudit gates. Note that this condition is related to the $\mathbb{Z}_2$ condition of \cite{marvian2022quditcircuit}.

\begin{proof}[Proof of \cref{4qudit}]
  One can easily see the necessity of this condition: All unitaries in $\mathcal{V}^{(4)}_2$ satisfy the constraint in \cref{const}. If $Y$ also satisfies this constraint, then the group generated by them does as well, which means it cannot contain $\mathcal{SV}^{(4)}$ {(Note that if unitaries $V_1$ and $V_2$ satisfy \cref{const}, then their inverses and their product also satisfies this constraint, i.e., this condition defines a subgroup)}.

  Next, to see the sufficiency of the condition in \cref{cond1}, first recall that the projection of $\mathcal{V}^{(4)}_2$ to each of the multiplicity spaces $\mathcal{M}_{\ysub{3, 1}}$, $\mathcal{M}_{\ysub{2,2}}$, and $\mathcal{M}_{\ysub{2, 1, 1}}$, is equal to the full unitary group in these spaces, 
  which means condition \textbf{A} of \cref{MainLemma} is satisfied. 
  In this case condition \textbf{B} is only relevant for the pair $\lambda=\sydiag{2, 1, 1}$ and $\lambda=\sydiag{3, 1}$, which both have dimension $3$.

  Furthermore, for all such unitaries the components in $\mathcal{M}_{\ysub{3, 1}}$ and $\mathcal{M}_{\ysub{2, 1, 1}}$ satisfy the constraint in \cref{const}. Therefore, there exists a unitary $\widetilde{Y}\in \mathcal{V}^{(4)}_2$, such that $\pi_{\ysub{2, 1, 1}}(\widetilde{Y}) = \pi_{\ysub{2, 1, 1}}(Y)$. Thus $T=Y\widetilde{Y}^\dag$ acts trivially on $\sydiag{2, 1, 1}$, and inside $\sydiag{3, 1}$ acts non-trivially, i.e., it is not proportional to a global phase, such that 
    $\pi_{\ysub{3, 1}}(Y \widetilde{Y}^\dag) \neq \e^{\i\phi}\mathbb{I}_{\ysub{3, 1}}$ for all phases $\e^{\i\phi}$. 
    This together with \cref{ind} implies that condition \textbf{B}' in \cref{MainLemma} is also satisfied. Therefore, \cref{MainLemma} implies that $\mathcal{SV}^{(4)}\subset \langle Y, V: V\in \mathcal{V}_2^{(4)}\rangle$.
\end{proof}

We now prove \cref{prop:gen3q}, that for any 3-qudit $\SU(d)$-invariant unitary $V$ which cannot be generated by 2-qudit ones, the unitary $Y = V \otimes \ident$ satisfies the condition of \cref{4qudit}. In particular, due to \cref{prop:2local-criteria}, generic 3-qudit $\SU(d)$-invariant unitaries are semi-universal when combined with 2-qudit ones.

\begin{proof}[Proof of \cref{prop:gen3q}] To see this first we consider unitaries of the form
\be\label{eq:Sphases}
\widetilde{S}=\e^{\i \phi_{\sysub{3}} } \Pi_{\ysub{3}}+
\e^{\i \phi_{\sysub{2, 1}} } \Pi_{\ysub{2, 1}}+\e^{\i \phi_{\sysub{1, 1, 1}} } \Pi_{\ysub{1, 1, 1}}\ .
\ee
Then, 
for $Y=\widetilde{S}\otimes \mathbb{I}$, \cref{eq:trcond} is equivalent to
\be\label{eq:trcircle}
\abs[\big]{\exp\i(\phi_{\ysub{2, 1}}-\phi_{\ysub{3}})+\frac{1}{2}}\neq \abs[\big]{\exp\i(\phi_{\ysub{2, 1}}-\phi_{\ysub{1,1,1}})+\frac{1}{2}}\ .
\ee
This can only hold as equality if $\exp \i(\phi_{\ysub{2,1}} - \phi_{\ysub{3}})$ and $\exp \i(\phi_{\ysub{2,1}} - \phi_{\ysub{1,1,1}})$ are either equal or related by complex conjugate (see the diagram \cref{fig:geo-proof} for a proof). As seen in \cref{eq:phidet}, a unitary of the form $\widetilde{S}$ is an element of $\mathcal{V}_2^{(3)}$ if and only if they are complex conjugates, i.e.,
\begin{equation}\label{eq:comcon}
  \exp\i(\phi_{\ysub{2, 1}}-\phi_{\ysub{3}}) = \exp\i(\phi_{\ysub{2, 1}}-\phi_{\ysub{1,1,1}})^\ast.
\end{equation}

Now recall that, for any $V \in \mathcal{V}^{(3)}$ there exists $\widetilde{V} \in \mathcal{SV}^{(3)} \subseteq \mathcal{V}_2^{(3)}$ such that $V \widetilde{V}$ has the form \cref{eq:Sphases} and
\begin{equation}
\begin{split}
    \phi_{\ysub{3}} & = \arg \det v_{\ysub{3}} \\
    \phi_{\ysub{2,1}} & = \arg \det v_{\ysub{2,1}} + b \pi \pmod{2 \pi} \\
    \phi_{\ysub{1,1,1}} & = \arg \det v_{\ysub{1,1,1}}
\end{split}
\end{equation}
where $b = 0, 1$ is not specified by $V$. But, since $\e^{\i \pi} = \e^{-\i \pi}$, the value of $b$ does not affect the validity of \cref{eq:comcon}. In other words, the condition \cref{eq:comcon} for $V \widetilde{V}$ is determined by
\begin{equation}
    \det v_{\ysub{2,1}} \overset{?}{=} (\det v_{\ysub{3}}) (\det v_{\ysub{1,1,1}}),
\end{equation}
which is precisely the condition \cref{eq:Vdet} for $V \in \mathcal{V}_2^{(3)}$.
\end{proof}

\begin{figure}[htb]
  \centering
  \begin{tikzpicture}[scale=0.7]
    \begin{axis}[
      axis lines = center,
      xmin=-1, xmax=2,
      ymin=-1, ymax=1,
      xtick={-1, 0, 0.5, 1, 2},
      ytick={-1, 0, 1},
      enlargelimits=true,
      axis equal
      ]
      
      \addplot[domain=0:2*pi, samples=100, smooth, thick] ({0.5 + cos(deg(x))}, {sin(deg(x))});
      \addplot[domain=0:2*pi, samples=100, smooth, thick] ({0.7 * cos(deg(x))}, {0.7 * sin(deg(x))});

      \addplot[only marks, mark=*, mark options={red, scale=1.5}] coordinates {(-0.26, 0.65)};
      \addplot[only marks, mark=*, mark options={red, scale=1.5}] coordinates {(-0.26, -0.65)};
      
      
    \end{axis}
  \end{tikzpicture}
  \caption{ A geometric description of the condition in 
  \cref{eq:trcircle}. This equation holds as equality if, and only if, $\exp \i(\phi_{\ysub{2,1}} - \phi_{\ysub{3}})$ and $\exp \i(\phi_{\ysub{2,1}} - \phi_{\ysub{1,1,1}})$ are either equal or complex conjugates. Both $\exp \i(\phi_{\ysub{2,1}}-\phi_{\ysub{3}}) + 1/2$ and $\exp \i(\phi_{\ysub{2,1}} - \phi_{\ysub{1,1,1}}) + 1/2$ sit on the translated unit circle centered at 1/2. \Cref{eq:trcircle} holds as equality only when these points are equidistant from the origin, i.e. lie on the same circle centered at the origin. The intersection of any two such circles consists of at most two distinct points, which are complex conjugates of each other. Thus $|\exp \i(\phi_{\ysub{2,1}}-\phi_{\ysub{3}}) + 1/2| = |\exp \i(\phi_{\ysub{2,1}} - \phi_{\ysub{1,1,1}}) + 1/2|$ if and only if they are either equal or complex conjugates, and since $1/2$ is real, the claim holds.}
  \label{fig:geo-proof}
\end{figure}

\clearpage

\section{Tools for semiuniversality}\label{sec:proof}

In this section,  we prove \cref{thm:mulblocks,lem:twoblock2}, which provide guarantees on controllability, and we prove Goursat's lemma for special unitary groups, \cref{lem:SUgoursat}, as a corollary of the asymmetric version of the original Goursat's lemma, \cref{lem:agoursat}. We also provide a proof of Serre's lemma, \cref{lem:serre}. Together with the argument in \cref{sec:over}, this completes the proof of semi-universality of 3-qudit $\SU(d)$-invariant unitaries.


\subsection{Extending controllability from a single block}\label{sec:lem2}

In this section, we prove theorems that provide guarantees on controllability. In particular, we demonstrate that irreducibility resulting from continuous families of unitaries together with particular subgroups of unitaries acting only on subspaces of the total Hilbert space $\hilbert$, is enough to achieve full controllability, i.e. all unitaries up to a global phase, $\SU(\hilbert)$.

In the following, we will often consider a Hilbert space with orthogonal decomposition $\hilbert = \bigoplus_{i = 1}^m \hilbert_i$. With respect to such a decomposition, we consider $\U(\hilbert_i) \subseteq \U(\hilbert)$ the subgroup of unitaries which act as the identity on $\hilbert_j$ for $j \neq i$. We also use the fact that, for \emph{any} matrix group $\mathcal{W} \subseteq \GL(\hilbert)$, the set of linear operators
\begin{equation}\label{eq:lie}
  \lie{w} = \set{X \in \hilbert[L](\hilbert) \given \forall t \in \real, \e^{t X} \in \mathcal{W}}
\end{equation}
is a Lie algebra \cite{jurdjevic1972control}, closed under real linear combination and commutator.\footnote{ While $\mathcal{W}$ may not be a Lie group itself, it at least contains a (possibly trivial) connected Lie group, namely the group generated by exponentials of its Lie algebra, $\langle \e^{\lie{w}} \rangle \subseteq \mathcal{W}$. In any case, we call $\lie{w}$ the Lie algebra of $\mathcal{W}$.}


\begin{lemma*}[re \ref{thm:mulblocks}]
  Let $\hilbert$ be a finite-dimensional Hilbert space with a subspace $\hilbert_1 \subset \hilbert$ such that $\dim \hilbert_1 \geq 3$. Let $A_i$, $i = 1, \dots, k$, be traceless anti-Hermitian operators on $\hilbert$ and consider the one-parameter groups $\mathcal{A}_i = \set{\e^{t A_i} \given t \in \real}$. If the group
  \begin{equation}
    \mathcal{W} = \angles{\mathcal{A}_i, \SU(\hilbert_1) \, \given \, i = 1, \dots, k}
  \end{equation}
  acts irreducibly on $\hilbert$, then $\mathcal{W} = \SU(\hilbert)$.
\end{lemma*}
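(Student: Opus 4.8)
The plan is to pass to the Lie algebra and show that the copy of $\su(\hilbert_1)$ already present, together with irreducibility, forces all of $\su(\hilbert)$. Let $\lie{w}$ be the Lie algebra of $\mathcal{W}$ as defined in \cref{eq:lie}. Since each generating piece ($\SU(\hilbert_1)$ and each $\mathcal{A}_i$) is connected and contains the identity, $\mathcal{W}$ is the connected Lie group $\langle \e^{\lie{w}} \rangle$ with Lie algebra $\lie{w}$; hence a subspace of $\hilbert$ is $\mathcal{W}$-invariant if and only if it is $\lie{w}$-invariant, so $\lie{w}$ acts irreducibly. Because each $A_i$ is traceless anti-Hermitian and $\su(\hilbert_1)\subseteq\su(\hilbert)$, every generator lies in $\su(\hilbert)$, so $\mathcal{W}\subseteq\SU(\hilbert)$ and $\lie{w}\subseteq\su(\hilbert)$. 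It therefore suffices to prove the reverse inclusion $\su(\hilbert)\subseteq\lie{w}$. Finally, $\su(\hilbert_1)\subseteq\lie{w}$ together with bracket-closure makes $\lie{w}$ stable under $\operatorname{ad}(\su(\hilbert_1))$, i.e.\ an $\su(\hilbert_1)$-module.

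Next I would exploit this module structure. Fix the orthogonal splitting $\hilbert=\hilbert_1\oplus\hilbert_1^\perp$ and write operators in $2\times 2$ block form. Complexifying, $\lie{w}_\complex$ is an $\lie{sl}(\hilbert_1)$-submodule of $\lie{gl}(\hilbert)$, and under $\operatorname{ad}(\lie{sl}(\hilbert_1))$ the ambient space splits into four isotypic types: the trivial type (the $(2,2)$ block and the scalars on $\hilbert_1$), the adjoint type (the traceless $(1,1)$ block), the standard type $V=\hilbert_1$ (the $(1,2)$ block $\operatorname{Hom}(\hilbert_1^\perp,\hilbert_1)$), and the dual type $V^\ast$ (the $(2,1)$ block). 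This is exactly where $\dim\hilbert_1\ge 3$ is essential: it guarantees $V\not\cong V^\ast$, so the four types are pairwise inequivalent and any submodule — in particular $\lie{w}_\complex$ — equals the direct sum of its intersections with them. Since $\hilbert_1$ is irreducible, the intersection with the standard type has the form $\operatorname{Hom}(U,\hilbert_1)$ for some subspace $U\subseteq\hilbert_1^\perp$; and because $\lie{w}_\complex$ is stable under the conjugation $X\mapsto -X^\dagger$ cutting out the real form $\su(\hilbert)$, which interchanges the $(1,2)$ and $(2,1)$ blocks, it also contains $\operatorname{Hom}(\hilbert_1,U)$.

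I would then run an enlargement induction. If $U\neq 0$, the elementary-matrix commutators among $\operatorname{Hom}(U,\hilbert_1)$, $\operatorname{Hom}(\hilbert_1,U)$, and $\su(\hilbert_1)$ generate $\lie{sl}(\hilbert_1\oplus U)$, so $\su(\hilbert_1\oplus U)\subseteq\lie{w}$. Let $\hilbert_1^\ast\supseteq\hilbert_1$ be a subspace maximal with respect to $\su(\hilbert_1^\ast)\subseteq\lie{w}$ (it exists by finite-dimensionality). Applying the previous paragraph with $\hilbert_1^\ast$ in place of $\hilbert_1$, which is legitimate since $\dim\hilbert_1^\ast\ge\dim\hilbert_1\ge 3$, a nonzero corresponding $U^\ast\subseteq(\hilbert_1^\ast)^\perp$ would enlarge $\hilbert_1^\ast$ and contradict maximality. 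Hence $U^\ast=0$, so no element of $\lie{w}$ has a block connecting $\hilbert_1^\ast$ to $(\hilbert_1^\ast)^\perp$; thus $\hilbert_1^\ast$ is $\lie{w}$-invariant, and irreducibility forces $\hilbert_1^\ast=\hilbert$. Therefore $\su(\hilbert)\subseteq\lie{w}$, giving $\lie{w}=\su(\hilbert)$ and $\mathcal{W}=\SU(\hilbert)$.

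The crux of the argument, and the only point where $\dim\hilbert_1\ge 3$ is used, is the clean separation of the standard and dual isotypic types. For $\dim\hilbert_1=2$ these coincide ($V\cong V^\ast$ for $\SU(2)$), the off-diagonal blocks can mix, and the bootstrap may instead close up on a symplectic subalgebra — precisely the phenomenon flagged in \cref{rem1}. The secondary technical care is the real-structure bookkeeping guaranteeing that the dual block $\operatorname{Hom}(\hilbert_1,U)$ accompanies $\operatorname{Hom}(U,\hilbert_1)$, which is what allows the commutators to build $\lie{sl}$ of the enlarged block rather than a proper subalgebra of it.
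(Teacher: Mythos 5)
Your proof is correct, and it reaches the conclusion by a genuinely different mechanism than the paper, even though the outer skeleton (reduce to the Lie algebra, enlarge a fully controllable block until irreducibility forces it to be everything) is the same. The paper's proof fixes a single generator $A$ with $\Pi_\perp A \Pi \neq 0$ and uses the explicit ``isolation'' machinery of \cref{lem:iso-oneblock} --- conjugating or Haar-averaging $A$ against $\SU(\hilbert')$ with real coefficients to manufacture the elements $B^\dagger A \Pi_\perp + \Pi_\perp A B \in \lie{w}$ for every $B \in \End(\hilbert')$, which hinges on $\operatorname{span}_\real \SU(m) = \operatorname{span}_\complex \SU(m)$ for $m \geq 3$ --- and then invokes \cref{lem:twoblockoff} to assemble $\su(\hilbert' \oplus \Pi_\perp A \hilbert')$. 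You instead never construct isolated elements: you observe that $\lie{w}_\complex$ is an $\operatorname{ad}(\lie{sl}(\hilbert_1))$-submodule of $\End(\hilbert)_\complex$, that for $\dim\hilbert_1 \geq 3$ the four isotypic types (trivial, adjoint, $V$, $V^\ast$) are pairwise inequivalent so $\lie{w}_\complex$ splits as the sum of its intersections with them, and that the $V$-isotypic intersection is automatically of the form $\operatorname{Hom}(U,\hilbert_1)$, with the dual block supplied by stability under $X \mapsto -X^\dagger$. The two routes use the same underlying fact about $\SU(m)$, $m\geq 3$ --- the standard representation is not self-dual, equivalently the group is not of real or quaternionic type, which is exactly why both arguments break at $\dim\hilbert_1 = 2$ and the symplectic alternative of \cref{rem1} appears --- but your packaging is softer and more conceptual: the role of $\dim\hilbert_1 \geq 3$ is isolated in the single statement $V \not\cong V^\ast$, and the maximality argument replaces the paper's explicit induction on the image subspace $\Pi_\perp A\hilbert'$. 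What the paper's harder, constructive route buys is reusability: its isolation lemmas are applied elsewhere (e.g.\ in \cref{lem:twoblock2} and the $n=4$ analysis), whereas your argument is tailored to this one statement. One small point worth making explicit in a final write-up: when you pass from $\lie{sl}(\hilbert_1^\ast\oplus U^\ast)\subseteq\lie{w}_\complex$ back to $\su(\hilbert_1^\ast\oplus U^\ast)\subseteq\lie{w}$, you should note that $\lie{w}$ is exactly the set of anti-Hermitian elements of $\lie{w}_\complex$ (which holds because $\lie{w}\subseteq\su(\hilbert)$ forces $\lie{w}\cap\i\lie{w}=0$); this is implicit in your ``real-structure bookkeeping'' remark but is the one place a referee would ask for a line of justification.
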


\begin{remark}
  Let $\lie{g} \subseteq \su(\hilbert)$ be a matrix Lie algebra acting irreducibly on a finite-dimensional Hilbert space $\hilbert$. \Cref{thm:mulblocks} is equivalent to the statement that, if there is a subspace $\hilbert_1 \subseteq \hilbert$ such that $\su(\hilbert_1) \subseteq \lie{g}$ and $\dim \hilbert_1 \geq 3$, then $\lie{g} = \su(\hilbert)$.
\end{remark}

\begin{remark}\label{rm:2dim}
  The assumption that $\dim \hilbert_1 \geq 3$ can be relaxed to $\dim \hilbert_1 \geq 2$ if the possibility that $\mathcal{W} = \Sp(\hilbert)$ is allowed, when $\dim \hilbert$ is even (see \cite{liu2024control} for further details).
\end{remark}

For the proof of \cref{thm:mulblocks}, we will also need \cref{lem:iso-oneblock,lem:twoblockoff}: the first, which constructs an off-block-diagonal Lie algebra element, is proven in a number of ways \cref{sec:strat}; the second, which describes a generating set for the Lie algebra of determinant-one unitaries, is postponed until after the proof of \cref{thm:mulblocks}.

\begin{proof}[Proof of \cref{thm:mulblocks}]
  Suppose that $\hilbert' \subset \hilbert$ is any proper subspace of $\mathcal{H}$ which contains $\hilbert_1 \subseteq \hilbert'$ and such that $\SU(\hilbert') \subseteq \mathcal{W}$. We prove that there is a subspace $\hilbert'' \subseteq \hilbert$ properly containing $\hilbert' \subset \hilbert''$ and such that $\SU(\hilbert'') \subseteq \mathcal{W}$. In summary,
    \be
    \mathcal{H}_1\subseteq \mathcal{H}'\subset \mathcal{H}''\subseteq \mathcal{H} \ .
    \ee 
  
  Let $\Pi$ be the Hermitian projector to $\hilbert'$ and $\Pi_\perp$ the projector to its orthogonal complement. There necessarily exists some $A_i$, $i = 1, \dots, k$, such that $\Pi_\perp A_i \Pi \neq 0$, since otherwise $\mathcal{W}$ would act block-diagonally between $\hilbert'$ and its orthogonal complement, i.e. it would not act irreducibly on $\hilbert$.  In the following we denote this operator $A_i$ with $A$.

  Consider the non-zero subspace $\mathcal{F}=\Pi_\perp A \hilbert'$, that is the image of $\hilbert'$ under $A$, projected to $\Pi_\perp$.
  Let
  \begin{equation}
    \hilbert'' = \hilbert'\oplus \mathcal{F}= \hilbert' \oplus \Pi_\perp A \hilbert'\ .
  \end{equation}
  In the following, we will show that for any arbitrary pair of vectors $|\phi\rangle\in \hilbert'$ and $|\psi\rangle\in \mathcal{F}$, the Lie algebra $\mathfrak{w}$ contains $|\psi\rangle\langle \phi|-|\phi\rangle\langle\psi|$.

  First, note that by definition, the linear map $\tilde{A}=\Pi_\perp A \Pi$ has full support on subspace $\mathcal{F}=\Pi_\perp A \hilbert'$, which means for any vector $|\psi\rangle\in \mathcal{F}$, there exists a vector $|{\psi}'\rangle$ in $\mathcal{H}'$, such that
  \be
  |\psi\rangle=\tilde{A}|{\psi}'\rangle=\Pi_\perp A |\psi'\rangle\ . 
  \ee
  Next, we use \cref{lem:iso-oneblock}. This lemma implies that,  because $\dim(\mathcal{H}')\ge 3$ and $\SU(\mathcal{H}')\subseteq\mathcal{W}$, then    
 for any $B \in \hilbert[L](\hilbert')$ (i.e. $B = \Pi B \Pi$),
  $ {B^\dag} A \Pi_\perp + \Pi_\perp A B \in \lie{w}$.

  Suppose we choose $B=|\psi'\rangle\langle\phi|$, where $|\phi\rangle$ is an arbitrary vector in $\mathcal{H}'$. In this case, we have 
  \be
  \Pi_\perp A B=\Pi_\perp A |\psi'\rangle\langle\phi|= |\psi\rangle\langle\phi|\ .
  \ee
  Since $A$ is anti-Hermitian $B^\dag A \Pi_\perp =-( \Pi_\perp A B)^\dag=-|\phi\rangle\langle\psi|$. Therefore, \cref{lem:iso-oneblock} implies
  \be
  {B^\dag} A \Pi_\perp + \Pi_\perp A B= |\psi\rangle\langle\phi|-|\phi\rangle\langle\psi| \in \mathfrak{w}\ .
  \ee
  Since this holds for arbitrary $|\phi\rangle\in \hilbert'$ and $|\psi\rangle\in \mathcal{F}$, then applying \cref{lem:twoblockoff} to $\mathcal{H}''=\mathcal{H}'\oplus \mathcal{F}$, we conclude that $\su(\hilbert'') \subseteq \lie{w}$, and, by exponentiation, $\SU(\hilbert'') \subseteq \mathcal{W}$.

  Proceeding inductively, because $\hilbert$ is finite-dimensional and $\SU(\hilbert') \subseteq \mathcal{W}$, it follows that $\SU(\hilbert) \subseteq \mathcal{W}$.
\end{proof}

Now we state and prove \cref{lem:twoblockoff}.

\begin{lemma}\label{lem:twoblockoff}
  Consider a finite-dimensional Hilbert space with orthogonal decomposition $\hilbert = \hilbert_1 \oplus \hilbert_2$ and respective dimensions $d = d_1 + d_2$, with $d_1 \geq d_2 \geq 1$. The (real) Lie algebra $\lie{g}$ generated by all $\qout{\psi_1}{\psi_2} - \qout{\psi_2}{\psi_1}$, where $\ket{\psi_i} \in \hilbert_i$, is equal to $\su(\hilbert)$.
\end{lemma}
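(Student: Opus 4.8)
The plan is to recognize the given generators as spanning the off--block--diagonal anti-Hermitian operators---the odd part of a Cartan-type decomposition of $\su(\hilbert)$---and then to invoke the simplicity of $\su(\hilbert)$ to conclude that iterated commutators of these recover the entire algebra.

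First I would pin down the real span $\lie{p}$ of the generators. Fix orthonormal bases $\{\ket{e_a}\}_{a=1}^{d_1}$ of $\hilbert_1$ and $\{\ket{f_\mu}\}_{\mu=1}^{d_2}$ of $\hilbert_2$. Taking $\ket{\psi_1}=\ket{e_a}$ and $\ket{\psi_2}=\ket{f_\mu}$ produces $X_{a\mu}=\ket{e_a}\bra{f_\mu}-\ket{f_\mu}\bra{e_a}$, while replacing $\ket{\psi_2}$ by $\i\ket{f_\mu}$ produces $Y_{a\mu}=-\i(\ket{e_a}\bra{f_\mu}+\ket{f_\mu}\bra{e_a})$. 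Their real linear combinations realize an arbitrary complex coefficient on each off-diagonal matrix unit $\ket{e_a}\bra{f_\mu}$, so $\{X_{a\mu},Y_{a\mu}\}$ spans, over $\real$, exactly the space of anti-Hermitian operators $P$ with $\Pi_1 P\Pi_1=\Pi_2 P\Pi_2=0$, where $\Pi_i$ is the orthogonal projector onto $\hilbert_i$. Thus $\lie{p}$ is precisely the purely off-block-diagonal part of $\su(\hilbert)$.

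Next I would set up the decomposition $\su(\hilbert)=\lie{k}\oplus\lie{p}$, where $\lie{k}=\lie{s}(\lie{u}(\hilbert_1)\oplus\lie{u}(\hilbert_2))$ is the algebra of block-diagonal traceless anti-Hermitian operators. Any traceless anti-Hermitian $A$ splits as $A=A_{\mathrm{diag}}+A_{\mathrm{off}}$; since $A_{\mathrm{off}}$ is off-diagonal and therefore traceless, $A_{\mathrm{diag}}$ is traceless too, giving $A_{\mathrm{diag}}\in\lie{k}$ and $A_{\mathrm{off}}\in\lie{p}$. The bracket relations are immediate from the block structure: $[\lie{k},\lie{k}]\subseteq\lie{k}$, $[\lie{k},\lie{p}]\subseteq\lie{p}$, and $[\lie{p},\lie{p}]\subseteq\lie{k}$ (a product of two off-diagonal operators is block-diagonal), so $(\lie{k},\lie{p})$ is a symmetric pair. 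The heart of the argument is then $[\lie{p},\lie{p}]=\lie{k}$. For this I would use that $\su(\hilbert)$ is simple, as $d=d_1+d_2\ge 2$. Set $\lie{m}=[\lie{p},\lie{p}]\oplus\lie{p}$; combining the symmetric bracket relations with the Jacobi identity shows $[\su(\hilbert),\lie{m}]\subseteq\lie{m}$, so $\lie{m}$ is an ideal of $\su(\hilbert)$. Since $\lie{m}\supseteq\lie{p}\neq 0$ and $\su(\hilbert)$ is simple, $\lie{m}=\su(\hilbert)$, forcing $[\lie{p},\lie{p}]=\lie{k}$. Hence the generated algebra satisfies $\lie{g}\supseteq\lie{p}+[\lie{p},\lie{p}]=\lie{k}\oplus\lie{p}=\su(\hilbert)$, and the reverse inclusion is clear since every generator is traceless and anti-Hermitian.

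I expect the one genuinely substantive point to be the step $[\lie{p},\lie{p}]=\lie{k}$---in particular the fact that the commutators of off-diagonal generators produce not only $\su(\hilbert_1)$ and $\su(\hilbert_2)$ but also the relative generator $\i(\Pi_1/d_1-\Pi_2/d_2)$ coupling the two blocks, which is what distinguishes $\lie{k}$ from $\su(\hilbert_1)\oplus\su(\hilbert_2)$. The simplicity argument handles this uniformly, but as a concrete alternative (and a sanity check covering the degenerate cases $d_1=1$ or $d_2=1$, where one or both $\su(\hilbert_i)$ vanish) I would compute the brackets directly: $[X_{a\mu},X_{b\nu}]$ yields the real in-block generators $\ket{e_a}\bra{e_b}-\ket{e_b}\bra{e_a}$ (for $\mu=\nu$) and $\ket{f_\mu}\bra{f_\nu}-\ket{f_\nu}\bra{f_\mu}$ (for $a=b$), while $[X_{a\mu},Y_{b\nu}]$ yields the corresponding imaginary off-diagonal generators and, in the diagonal case $a=b,\mu=\nu$, the traceless diagonal operator $\i(\ket{f_\mu}\bra{f_\mu}-\ket{e_a}\bra{e_a})$ up to a factor. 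These, together with their further commutators, visibly span $\lie{k}$, giving the same conclusion $\lie{g}=\su(\hilbert)$.
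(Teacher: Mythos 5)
Your proof is correct, and it takes a genuinely different route from the paper's. The paper argues entirely by explicit computation in a fixed basis: it first notes that the off-block elements $Y_{jk}=\qout{j}{k}-\qout{k}{j}$ and $X_{jk}=\i(\qout{j}{k}+\qout{k}{j})$ (for $j$ in one block and $k$ in the other) are themselves generators, then recovers the in-block elements via the relations $[Y_{jk},Y_{kj'}]=Y_{jj'}$ and $[X_{jk},Y_{kj'}]=X_{jj'}$ and the diagonal elements via $Z_{jk}=\tfrac12[X_{jk},Y_{jk}]$, and finally observes that $\{X_{jk},Y_{jk},Z_{jk}\}$ spans $\su(\hilbert)$. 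Your main argument instead identifies the real span of the generators as the odd part $\lie{p}$ of the symmetric pair $\su(\hilbert)=\lie{k}\oplus\lie{p}$ and invokes simplicity of $\su(\hilbert)$ to show that $[\lie{p},\lie{p}]\oplus\lie{p}$ is a nonzero ideal, hence everything; the Jacobi-identity check that this subspace is an ideal is correct, and the hypothesis $d_1\ge d_2\ge 1$ guarantees both $\lie{p}\neq 0$ and $d\ge 2$ so that $\su(\hilbert)$ is indeed simple. What your route buys is conceptual economy and uniformity (it needs no case analysis and generalizes immediately to any symmetric decomposition with nonzero odd part of a simple algebra); what the paper's route buys is complete elementarity and an explicit generating procedure, which is in the spirit of the constructive controllability arguments used elsewhere in the paper. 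Your closing "sanity check" by direct bracket computation is essentially the paper's proof, so the two approaches meet there.
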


\begin{proof}
  Let $\ket{1}, \dots, \ket{d}$ be an orthonormal basis for $\hilbert$ with $\ket{1}, \dots, \ket{d_1} \in \hilbert_1$ and $\ket{d_1 + 1}, \dots, \ket{d_1 + d_2} \in \hilbert_2$. 
  To prove the claim it suffices to show that for any $j \neq k$, both 
  \be
  {Y}_{jk}=|j\rangle\langle k|-|k\rangle\langle j|
  \ee
  and 
  \be
  {X}_{jk}= \i (|j\rangle\langle k|+|k\rangle\langle j|)
  \ee
  are in the Lie algebra $\mathfrak{g}$, which implies 
  \be
  Z_{jk}=\i (|j\rangle\langle j|-|k\rangle\langle k|)=\frac{{1}}{2}[{X}_{jk}, {Y}_{jk}]
  \ee
  is also in the Lie algebra. Since operators ${X}_{jk}, {Y}_{jk}, {Z}_{jk}: 1\le j <k \le d$ form a basis for $\su(\hilbert)$, this proves the claim. 

  First, not that for any $1\le j\le d_1$ and $d_1+1\le k\le d$, the assumption of lemma implies that $Y_{jk}\in\mathfrak{g}$. Furthermore, choosing $|\psi_1\rangle=\i|j\rangle$ and 
  $|\psi_2\rangle=|k\rangle$, the assumption also implies $X_{jk}\in\mathfrak{g}$. This way we obtain all the elements of the above basis that have support in both subspaces $\mathcal{H}_1$ and $\mathcal{H}_2$. Next, we show how we can obtain elements with support restricted to one of these subspaces. We use the fact that for any 3 distinct $j , k , k'\in\{1,\cdots, d\}$, we have the commutation relations
  \be
  [Y_{jk}, Y_{kj'}]= Y_{jj'}\ ,
  \ee
  and 
  \be
  [X_{jk}, Y_{kj'}]=  X_{jj'}\ .
  \ee
  This completes the proof.
\end{proof}

\subsection{Two blocks}\label{app:twoblocks}

\Cref{thm:mulblocks} proves that irreducibility allows for extending controllability from three-dimensional blocks to the entire Hilbert space. For the proof of \cref{thm:semi-universality} it is also required that controllability can be extended from a two-dimensional block to the entire Hilbert space when $\dim \hilbert = 3$ (note that this holds more generally when $\dim \hilbert$ is odd, see \cite{liu2024control}). That is, we need to extend
\begin{align}
  \begin{+pmatrix}
    \SetCell[r=2,c=2]{c} \SU(2) & & \vline \\
    & & \\ \hline
    & & 1
  \end{+pmatrix} .
\end{align}
to the entire $\SU(3)$. Note that here we do not need to explicitly assume irreducibility on the entire Hilbert space: instead, the fact that the one-parameter group is not block-diagonal immediately implies it. (Note that the following lemma is actually more general than described above, but in the case of $\dim \hilbert_\alpha = 2$ and $\dim \hilbert_\beta = 1$ it is equivalent).  This lemma is also previously presented in \cite{marvian2022quditcircuit}.\footnote{The proof presented below was included in the first arXiv version of \cite{marvian2022quditcircuit}.}

\begin{lemma}(\cite{marvian2022quditcircuit})\label{lem:twoblock2}
  Consider a finite-dimensional Hilbert space with an orthogonal decomposition $\hilbert = \hilbert_\alpha \oplus \hilbert_\beta$ such that neither $\dim \hilbert_\alpha = \dim \hilbert_\beta = 2$ or $\dim \hilbert_\alpha = \dim \hilbert_\beta = 1$. The subgroup $\SU(\hilbert_\alpha) \subseteq \SU(\hilbert)$ acts as the identity on $\hilbert_\beta$, and vice-versa. Let $A$ be a Hermitian operator on $\hilbert$ which is not block-diagonal with respect to the above decomposition. Then the group
  \begin{equation}
    \mathcal{W} = \angles{\e^{\i t A}, \SU(\hilbert_\alpha), \SU(\hilbert_\beta) \given t \in \real},
  \end{equation}
  generated by the unitaries $\e^{\i t A}$ and each $\SU(\hilbert_i)$ contains all determinant-one unitaries, $\mathcal{W} \supseteq \SU(\hilbert)$.
\end{lemma}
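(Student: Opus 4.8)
The plan is to pass to the level of Lie algebras and reduce the claim to producing every off-diagonal anti-Hermitian operator. Let $\lie{w}$ be the Lie algebra of $\mathcal{W}$ in the sense of \cref{eq:lie}; it contains $\su(\hilbert_\alpha)$, $\su(\hilbert_\beta)$, and $\i A$. Writing operators in block form relative to $\hilbert = \hilbert_\alpha \oplus \hilbert_\beta$, I set
\begin{equation}
  A = \begin{pmatrix} A_\alpha & B \\ B^\dagger & A_\beta \end{pmatrix}, \qquad M(C) = \begin{pmatrix} 0 & C \\ -C^\dagger & 0 \end{pmatrix},
\end{equation}
where $C \colon \hilbert_\beta \to \hilbert_\alpha$, so that $M(C)$ ranges over all off-diagonal anti-Hermitian operators. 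Since $A$ is not block-diagonal, $B \neq 0$. By \cref{lem:twoblockoff} the real Lie algebra generated by all the $M(C)$ is $\su(\hilbert)$, so it suffices to show that every $M(C)$ lies in $\lie{w}$; then $\su(\hilbert) \subseteq \lie{w}$ and hence $\SU(\hilbert) \subseteq \mathcal{W}$ by exponentiation. I relabel if necessary so that $\dim \hilbert_\alpha \geq \dim \hilbert_\beta$; in every non-excluded case this forces $\dim \hilbert_\alpha \geq 2$.

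First I would extract off-diagonal elements from $\i A$ by commuting with the block-diagonal subalgebra. For $X \in \su(\hilbert_\alpha)$, regarded as $X \oplus 0$, a direct block computation gives
\begin{equation}
  [X \oplus 0, \i A] = \big(-\i[A_\alpha, X] \oplus 0\big) + M(\i X B),
\end{equation}
whose block-diagonal part $-\i[A_\alpha, X]$ is traceless anti-Hermitian and hence already lies in $\su(\hilbert_\alpha) \subseteq \lie{w}$. Subtracting it shows $M(\i X B) \in \lie{w}$ for all $X \in \su(\hilbert_\alpha)$; since $B \neq 0$ and $\dim \hilbert_\alpha \geq 2$, there is an $X$ with $XB \neq 0$, so the space $\mathcal{O} \coloneqq \lie{w} \cap \set{M(C)}$ of off-diagonal elements of $\lie{w}$ is nonzero.

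The key observation is that $\mathcal{O}$ is a submodule for the adjoint action of $\su(\hilbert_\alpha) \oplus \su(\hilbert_\beta)$, which acts by $[X \oplus 0, M(C)] = M(XC)$ and $[0 \oplus Y, M(C)] = M(-CY)$; thus, as a real representation, $\set{M(C)}$ is identified with $\hilbert_\alpha \otimes \hilbert_\beta^\ast$. I would then argue that this real representation is irreducible in exactly the non-excluded cases, so the nonzero submodule $\mathcal{O}$ must be everything. This is the crux and the main obstacle: the subtlety is that $\hilbert_\alpha \otimes \hilbert_\beta^\ast$ is irreducible as a complex representation always, but its underlying real representation can split. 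It fails to be real-irreducible precisely when $\hilbert_\alpha \otimes \hilbert_\beta^\ast$ is of real type, i.e.\ self-dual with a symmetric invariant form (a conjugation $J$ with $J^2 = +1$). For special unitary groups the defining representation is self-dual only in dimension $2$, where it is of quaternionic type ($J^2 = -1$); hence $\hilbert_\alpha \otimes \hilbert_\beta^\ast$ carries a symmetric structure $J = J_\alpha \otimes J_\beta$ with $J^2 = (-1)(-1) = +1$ exactly when $\dim \hilbert_\alpha = \dim \hilbert_\beta = 2$, which is the excluded case (and is precisely where one may obtain only $\Sp(\hilbert)$). In every remaining case the representation is real-irreducible: if $\dim \hilbert_\beta = 1$ and $\dim \hilbert_\alpha = 2$ it is the quaternionic representation $\hilbert_\alpha$, while any factor of dimension $\geq 3$ makes $\hilbert_\alpha \otimes \hilbert_\beta^\ast$ non-self-dual and thus of complex type, hence irreducible over $\real$. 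Therefore $\mathcal{O} = \set{M(C)}$, and combining with \cref{lem:twoblockoff} gives $\su(\hilbert) \subseteq \lie{w}$, completing the proof.
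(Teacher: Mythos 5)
Your proof is correct, but it takes a genuinely different route from the paper's. The paper argues in two explicitly computational stages: it first shows that the Lie algebra generated by $\i A$ and $\su(\hilbert_\alpha)\oplus\su(\hilbert_\beta)$ contains a rank-two ``hopping'' element $\i(|\Theta_\alpha\rangle\langle\Theta_\beta|+|\Theta_\beta\rangle\langle\Theta_\alpha|)$, obtained by commuting $\i A$ against $E=|\Gamma\rangle\langle\Gamma|-D_\alpha^{-1}\Pi_\alpha$ (with a separate double-commutator computation when $D_\beta\ge 2$, where the non-vanishing of $-2+3D_\alpha-D_\alpha^2$ for $D_\alpha>2$ is exactly where the dimension hypothesis enters); it then builds an explicit basis $X_{lm},Y_{lm},Z_{lm}$ of $\su(\hilbert)$ by iterated commutators starting from that hopping term. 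You instead isolate the whole off-diagonal block in one step ($M(\i XB)\in\lie{w}$), observe that the off-diagonal part of $\lie{w}$ is a nonzero real submodule of the realification of $\hilbert_\alpha\otimes\hilbert_\beta^\ast$ under the adjoint action of $\su(\hilbert_\alpha)\oplus\su(\hilbert_\beta)$, and invoke the real/complex/quaternionic trichotomy: the realification is $\real$-irreducible unless the complex irrep is of real type, which happens exactly when both blocks are two-dimensional ($J_\alpha\otimes J_\beta$ squares to $+1$) or both are trivial---precisely the excluded cases. Your argument buys brevity and a conceptual explanation of why $(2,2)$ is exceptional (the invariant real form is the symplectic structure behind \cref{rmk:sp(2)}), at the cost of importing the Frobenius--Schur classification; the paper's proof buys elementarity and self-containedness, using nothing beyond commutator algebra. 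One cosmetic point: when you conclude via \cref{lem:twoblockoff}, it is worth saying explicitly that the generators $|\psi_1\rangle\langle\psi_2|-|\psi_2\rangle\langle\psi_1|$ appearing there are the $M(C)$ with $C$ of rank one, so obtaining all of $\{M(C)\}$ in $\lie{w}$ indeed suffices.
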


\begin{remark}
  In the case of $\dim \hilbert_\alpha = \dim \hilbert_\beta = 1$, if we instead consider the subgroups of relative phases, i.e. $\U(\hilbert_i)$, the theorem holds.
\end{remark}
\begin{remark}\label{rmk:sp(2)}
  In the case of $\dim \hilbert_\alpha = \dim \hilbert_\beta = 2$, this result can fail in an interesting way. Choose a basis $\cset{\nket{m} \otimes \nket{n}}{m, n = 0, 1}$ for this space so that $\hilbert_\alpha = \operatorname{span}_\complex \sets{\nket{00}, \nket{01}}$ and $\hilbert_\beta = \operatorname{span}_\complex \sets{\nket{10}, \nket{11}}$. Then, for example, $A = \sigma_x \otimes \ident$ is not block-diagonal with respect to the decomposition $\hilbert_\alpha \oplus \hilbert_\beta$. However, $\cset{\e^{\i A t}}{t \in \real}$ together with $\SU(\hilbert_i)$ generate a subgroup $\Sp(2) \subseteq \SU(4)$.
\end{remark}

Note the change in notation from \cref{thm:mulblocks} to \cref{lem:twoblock2}: $A$ here is Hermitian rather than anti-Hermitian. Let $D_i = \dim \hilbert_i$ and $\lie{g} = \lie{su}(\hilbert_\alpha) \oplus \lie{su}(\hilbert_\beta)$.

\begin{proof}
  First, we prove the result in the special case where the Hermitian operator $A$ is in the form
  \be
  A=|\Theta_\alpha\rangle\langle \Theta_\beta|+|\Theta_\beta\rangle\langle \Theta_\alpha|\ ,
  \ee
  where $|\Theta_{\alpha,\beta}\rangle$ are normalized vectors in $\mathcal{H}_{\alpha, \beta}$, and then explain how the proof can be generalized. 

  Since $\hilbert_\alpha$ and $\hilbert_\beta$ play equivalent roles in the proof, without loss of generality we assume $D_\alpha \geq D_{\beta}$. In particular, the assumption of the lemma is that we cannot have $D_{\alpha} = D_\beta = 1$ or $D_{\alpha} = D_\beta = 2$, and so $D_\alpha \geq 2$, and when $D_\beta = 2$ we also have $D_\alpha \geq 3$. Let $\{|l,\alpha\rangle: l=1,\cdots, D_\alpha \}$ be an orthonormal basis for $\mathcal{H}_\alpha$ and $\{|m,\beta\rangle: m=1,\cdots, D_\beta \}$ be an orthonormal basis for $\mathcal{H}_\beta$, with the property that they contain $\nket{\Theta_{\alpha, \beta}}$, so that there exist $l_0, m_0$ such that $|l_0,\alpha\rangle=|\Theta_\alpha\rangle$ and $|m_0,\beta\rangle=|\Theta_\beta\rangle$, i.e., 
  \be\label{trqew}
  \begin{split}
    \i A& =\i(|\Theta_\alpha\rangle\langle \Theta_\beta|+|\Theta_\beta\rangle\langle \Theta_\alpha|)\\
        & =\i(|l_0,\alpha\rangle\langle m_0,\beta|+ |m_0,\beta\rangle\langle l_0,\alpha|)\\
        &\equiv X_{l_0 m_0}\ .
  \end{split}
  \ee 
  It can be easily shown that the operator $X_{l_0m_0}=\i A$ together with $\mathfrak{g}$ generates the full $\mathfrak{su}(D)$. In particular, note that for any $ l=1,\cdots, D_\alpha$ with $l \neq l_0$, the commutator of $X_{l_0m_0}=\i A$ with the traceless skew-Hermitian operator $|l_0,\alpha\rangle\langle l,\alpha|- |l,\alpha\rangle\langle l_0,\alpha|\in \mathfrak{g}$, is 
  \be\label{b1}
  \begin{split}
    X_{lm_0}&=\big[X_{l_0m_0}, |l_0,\alpha\rangle\langle l,\alpha|- |l,\alpha\rangle\langle l_0,\alpha| \big]\\
            &=\i(|l,\alpha\rangle\langle m_0,\beta|+ |m_0,\beta\rangle\langle l,\alpha|)\ .
  \end{split}
  \ee
  Furthermore, the commutator of this operator with the traceless skew Hermitian operator $\i(|l_0,\alpha\rangle\langle l_0,\alpha|- |l,\alpha\rangle\langle l,\alpha| )\in \mathfrak{g}$ is equal to
  \begin{align}\label{b2}
    \begin{split}
      Y_{l m_0}&\equiv\big[X_{lm_0}, \i(|l_0,\alpha\rangle\langle l_0,\alpha|- |l,\alpha\rangle\langle l,\alpha| ) \big]\\
               &= |m_0,\beta\rangle\langle l,\alpha|-|l,\alpha\rangle\langle m_0,\beta|\ .
    \end{split}
  \end{align}
  And, 
  \begin{align}
    Z_{lm_0}\equiv-\frac{1}{2}[X_{lm_0},Y_{lm_0}]&= \i(|m_0,\beta\rangle\langle m_0,\beta|-|l,\alpha\rangle\langle l,\alpha|)\ .
  \end{align}
  Therefore, for all $l=1,\cdots, D_\alpha$, operators $X_{lm_0}, Y_{lm_0}, Z_{lm_0}$ are in the Lie algebra $\langle \mathfrak{g}, \i A\rangle$. If $D_\beta=1$, then the linear combinations of these operators with traceless skew-Hermitian operators with support restricted to $\mathcal{H}_\alpha$ yield all traceless skew Hermitian operators on $\mathcal{H}_\alpha\oplus \mathcal{H}_\beta$. Therefore, in this special case we conclude $\mathfrak{su}(D)\subseteq \langle \mathfrak{g}, \{ \i A \} \rangle$, which proves the lemma.

  On the other hand, if $D_\beta>1$, then for any arbitrary $m\in\{1,\cdots, D_\beta\}$ with $m\neq m_0$, we define
  \be
  \begin{split}
    Y_{l m}&=\big[Y_{l m_0}, |m,\beta\rangle\langle m_0,\beta|- |m_0,\beta\rangle\langle m,\beta| \big]\\
           &= |l,\alpha\rangle\langle m,\beta|-|m,\beta\rangle\langle l,\alpha| \ .
  \end{split}
  \ee
  Similarly, considering the commutator of $Y_{l m}$ with operator with $\i(|m,\beta\rangle\langle m,\beta|- |m_0,\beta\rangle\langle m_0,\beta|)\in\mathfrak{g}$ we obtain operators $X_{lm}=\i( |l,\alpha\rangle\langle m,\beta|+|m,\beta\rangle\langle l,\alpha|)$, and from the commutator of $X_{lm}$ with $Y_{lm}$ we obtain $Z_{lm}=\frac{1}{2}[X_{lm}, Y_{lm}]=\i(|l,\alpha\rangle\langle l,\alpha|-|m,\beta\rangle\langle m,\beta|)\ $. Then, the 
  linear combination of operators
  \be
  X_{lm}, Y_{lm}, Z_{lm}:\ \ m=1,\cdots, D_\beta\ ; \ \ l=1,\cdots, D_\alpha\ , 
  \ee
  together with operators in $\mathfrak{g}$ contains all skew-Hermitian traceless operators, which implies $\mathfrak{su}(D)\subseteq \langle \{iA\}, \mathfrak{g}\rangle$. \\

  The above argument proves the lemma in the special case where $A=|\Theta_\alpha\rangle\langle \Theta_\beta|+|\Theta_\beta\rangle\langle \Theta_\alpha| $. To complete the proof, next we show that if $A$ is not block-diagonal with respect to $\mathcal{H}_\alpha\oplus \mathcal{H}_\beta$ then the Lie algebra generated by $\mathfrak{g}$ and $\{iA\}$ contains an operator in the form 
  $\i(|\Theta_\alpha\rangle\langle \Theta_\beta|+|\Theta_\beta\rangle\langle \Theta_\alpha|)$. Therefore, by applying the above argument this proves the lemma in the general case.

  To prove this, we consider the cases of $D_\beta=1$ and $D_\beta\ge 2$ separately. \\

  \noindent\textbf{The case of $D_\beta =1$:}
  Let $|\Theta_\beta\rangle\in \mathcal{H}_\beta$ be a normalized vector. Any Hermitian operator $A$ can be written as
  \be
  A=A_\alpha+ a |\Theta_\beta\rangle\langle \Theta_\beta|+ b |\Gamma\rangle\langle \Theta_\beta|+ b^\ast |\Theta_\beta\rangle\langle \Gamma | \ .
  \ee
  where $A_\alpha = A_\alpha^\dagger$ has support restricted to $\mathcal{H}_\alpha$, $a$ is real, $|\Gamma\rangle\in \mathcal{H}_\alpha $ is a normalized vector and the assumption that $A$ is not block-diagonal implies $b\neq 0$. 

  Let $\Pi_\alpha$ be the projector to $\mathcal{H}_\alpha$ and define the operator 
  \begin{align}\label{eq:Eop}
    E\equiv |\Gamma\rangle\langle\Gamma|-D_\alpha^{-1} {\Pi_\alpha}\ ,
  \end{align}
  which is nonzero because $D_\alpha > 1$. Note that $\i E$ is an element of $\mathfrak{g}$. Next, note that 
  \begin{equation}
    \begin{split}
      [\i A\ , \i E] &=(E A_\alpha-A_\alpha E) + b\ E|\Gamma\rangle\langle \Theta_\beta|-b^\ast\ |\Theta_\beta\rangle\langle \Gamma|E\\ &= (E A_\alpha-A_\alpha E) \\
                     &\mathrel{\phantom{=}} {} +[1-D^{-1}_\alpha]( 
                       b\ |\Gamma\rangle\langle \Theta_\beta|-b^\ast\ |\Theta_\beta\rangle\langle \Gamma|) \ .
    \end{split}
  \end{equation}
  The first term $E A_\alpha-A_\alpha E $ is a traceless skew-Hermitian operator with support restricted to $\mathcal{H}_\alpha$. Therefore $E A_\alpha-A_\alpha E $ is an element of $\mathfrak{g}$. This means the second term, i.e., $[1-D^{-1}_\alpha](b\ |\Gamma\rangle\langle \Theta_\beta|-b^\ast\ |\Theta_\beta\rangle\langle \Gamma|)$ is in the Lie algebra generated by $\mathfrak{g}$ and $\{\i A\}$. Furthermore, since $b\neq 0$ and $1-D^{-1}_\alpha\neq 0$, we conclude that
  \be
  \i(|\Theta_\alpha\rangle\langle \Theta_\beta|+|\Theta_\beta\rangle\langle \Theta_\alpha|) \in \langle \{\i A\}, \mathfrak{g}\rangle\ ,
  \ee
  where 
  \be
  |\Theta_\alpha\rangle=-\i \frac{b}{|b|}|\Gamma\rangle\ ,
  \ee 
  is a normalized state. Since this operator is in the form given in \cref{trqew} then we can proceed with the rest of the proof of the lemma, as presented above.\\

  \noindent\textbf{The case of $D_\beta \geq 2$:} The fact that $A$ is hermitian means that it can be written
  \begin{equation}
    A = A_\alpha + A_{\alpha \beta} + A_{\beta \alpha} + A_\beta,
  \end{equation}
  where $A_\alpha = A_\alpha^\dagger$ has support restricted to $\hilbert_\alpha$ (and likewise for $A_\beta = A_\beta^\dagger$) and $A_{\alpha \beta} = \Pi_\alpha A \Pi_\beta$ satisfies $A_{\alpha \beta} = A_{\beta \alpha}^\dagger$. The assumption that $A$ is not block-diagonal means that there exists a normalized state $\nket{\Gamma} \in \mathcal{H}_\alpha$ so that $0 \neq A_{\beta \alpha} \nket{\Gamma} \in \hilbert_\beta$. Consider again the operator $E$ from \cref{eq:Eop} and the commutator
  \begin{equation}
    [\i A, \i E] = (E A_\alpha - A_\alpha E) + E A_{\alpha \beta} - A_{\beta \alpha} E.
  \end{equation}
  Similar to before, $E A_\alpha - A_\alpha E$ is in $\mathfrak{g}$. Thus $B_1 \equiv E A_{\alpha \beta} - A_{\beta \alpha} E \in \langle \sets{\i A}, \mathfrak{g} \rangle$. Consider further the commutator
  \begin{equation}
    B_2 \equiv \bigl[[B_1, \i E], \i E\bigr] = A_{\beta \alpha} E^3 - E^3 A_{\alpha \beta}.
  \end{equation}
  Then
  \begin{equation}
    B_1 + D_\alpha^2 B_2 = [-2 + 3 D_\alpha - D_\alpha^2] \p[\big]{\qproj{\Gamma} A_{\alpha \beta} - A_{\beta \alpha} \qproj{\Gamma}},
  \end{equation}
  which is not zero because it is assumed that $D_\alpha > 2$ and so
  \begin{equation}\label{eq:sproot}
    -2 + 3 D_\alpha - D_\alpha^2 < 0.
  \end{equation}
  (Interestingly, the fact that this is zero when $D_\alpha = 2$ opens up the possibility for the proper subgroup $\Sp(2) \subset \SU(4)$ to appear when $D_\beta = 2$ also. See \cref{sec:spblocks} for more discussion.)

  With $\nket{\Theta_\alpha} \equiv \nket{\Gamma}$ and
  \begin{equation}
    \nket{\Theta_\beta} \equiv \frac{\i}{\nbra{\Gamma} A_{\alpha \beta} A_{\beta \alpha} \nket{\Gamma}} A_{\beta \alpha} \nket{\Gamma},
  \end{equation}
  it is apparent that $\i \p[\big]{\nket{\Theta_\alpha} \nbra{\Theta_\beta} + \nket{\Theta_\beta} \nbra{\Theta_\alpha}} \in \langle \sets{\i A}, \mathfrak{g} \rangle$ is in the form \cref{trqew}, and so the proof can proceed as before.
\end{proof}

\subsection{Goursat's lemma for the special unitary group}\label{app:Goursat}

Recall \cref{def:subdirect} of a subdirect product: a subgroup of the direct product whose projection to each factor is surjective. Goursat's lemma characterizes all subdirect products of a pair of groups $G_1$ and $G_2$.

\begin{lemma}[Asymmetric Goursat's lemma]\label{lem:agoursat}
  The set of all subdirect products $H \subseteq G_1 \times G_2$ is in bijection with the set of all surjective homomorphisms $G_1 \to G_2 / N_2$ where $N_2 \unlhd G_2$ is a normal subgroup.
\end{lemma}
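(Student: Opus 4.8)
The plan is to exhibit the claimed bijection explicitly and then verify that the two constructions are mutually inverse. In the forward direction, given a subdirect product $H \subseteq G_1 \times G_2$ with (surjective) coordinate projections $\pi_1, \pi_2$, I would attach to it the pair $(N_2, \phi)$, where $N_2 = \{g_2 \in G_2 : (e, g_2) \in H\}$ and $\phi : G_1 \to G_2/N_2$ sends $g_1$ to the coset $g_2 N_2$ for any $g_2$ with $(g_1, g_2) \in H$. First I would check that $N_2 \unlhd G_2$: given $h \in G_2$, subdirectness supplies some $(g_1, h) \in H$, and conjugating $(e, g_2) \in H$ by it gives $(e, h g_2 h^{-1}) \in H$, so $h g_2 h^{-1} \in N_2$. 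Next I would check that $\phi$ is well-defined: if $(g_1, g_2)$ and $(g_1, g_2')$ both lie in $H$, then $(e, g_2^{-1} g_2') \in H$, so $g_2^{-1} g_2' \in N_2$ and the coset $g_2 N_2$ is unchanged. That $\phi$ is a homomorphism is immediate from $H$ being a subgroup, and its surjectivity follows from $\pi_2(H) = G_2$.

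In the reverse direction, given a normal subgroup $N_2 \unlhd G_2$ and a surjective homomorphism $\phi : G_1 \to G_2/N_2$, I would set $H = \{(g_1, g_2) \in G_1 \times G_2 : \phi(g_1) = g_2 N_2\}$. Here the normality of $N_2$ is exactly what makes $G_2/N_2$ a group, so that $\phi(g_1 g_1') = \phi(g_1)\phi(g_1') = (g_2 N_2)(g_2' N_2) = g_2 g_2' N_2$ shows $H$ is closed under multiplication; closure under inverses and containment of the identity are equally routine. Subdirectness is then direct: $\pi_1$ is onto because every coset $\phi(g_1)$ has a representative $g_2$, giving $(g_1, g_2) \in H$, and $\pi_2$ is onto because $\phi$ is surjective.

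Finally I would verify that the two assignments compose to the identity in both orders. Starting from $H$, passing to $(N_2, \phi)$, and rebuilding $H' = \{(g_1,g_2) : \phi(g_1) = g_2 N_2\}$, the inclusion $H \subseteq H'$ holds by the definition of $\phi$, while $H' \subseteq H$ follows because any discrepancy in the second coordinate lies in $N_2 = \{g_2 : (e,g_2)\in H\}$ and can be absorbed by multiplying by an element of $H \cap (\{e\} \times G_2)$. Starting instead from $(N_2, \phi)$, building $H$, and re-extracting the data returns $N_2$ (since $\phi(e) = N_2$ forces $(e,g_2)\in H \iff g_2 \in N_2$) and returns $\phi$ by construction. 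I expect the only genuinely delicate point to be the independence-of-representative arguments for $\phi$, which is precisely where the defining property of $N_2$ as the ``identity fibre'' of $H$ does all the work; everything else is a routine subgroup and homomorphism check. To connect this asymmetric packaging with the classical symmetric form of Goursat's lemma and with its specialization in \cref{lem:SUgoursat}, it is worth remarking that $\ker \phi = \{g_1 : (g_1,e) \in H\} =: N_1$, whence the first isomorphism theorem yields the familiar isomorphism $G_1/N_1 \cong G_2/N_2$.
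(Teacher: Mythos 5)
Your proposal is correct and follows essentially the same route as the paper: you define $N_2$ as the identity fibre, check normality via subdirectness, build the quotient-valued homomorphism from the coset assignment, and reverse the construction in the obvious way. The only difference is that you explicitly verify the two constructions are mutually inverse (and note the symmetric form via $G_1/N_1 \cong G_2/N_2$), which the paper leaves implicit but which is strictly needed for the bijection claim, so this is a welcome addition rather than a divergence.
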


\begin{proof}
  First, suppose that $H \subseteq G_1 \times G_2$ is a subdirect product and consider the set
  \begin{equation}
    N_2 = \set{g_2 \in G_2 \given (e_1, g_2) \in H}.
  \end{equation}
  This is a normal subgroup of $G_2$: if $g_2' \in G_2$ then, since $H$ is a subdirect product, there exists $g_1 \in G_1$ such that $(g_1, g_2') \in H$, and
  \begin{equation}
    H \ni (g_1, g_2') (e_1, g_2) (g_1, g_2')^{-1} = (e_1, g_2' g_2 g_2'^{-1}).
  \end{equation}
  For any $g_1 \in G_1$, assign the set of elements of $G_2$ that show up with $g_1$ in $H$,
  \begin{equation}
    g_1 \mapsto \tilde{\Phi}_{g_1} = \set{g_2 \in G_2 \given (g_1, g_2) \in H}.
  \end{equation}
  If $g_2, g_2' \in \tilde{\Phi}_{g_1}$, then $(g_1, g_2)^{-1} (g_1, g_2') = (e_1, g_2^{-1} g_2') \in H$, so $g_2^{-1} g_2' \in N_2$. In other words, the cosets $g_2 N_2 = g_2' N_2$, and in fact $\tilde{\Phi}_{g_1} = g_2 N_2$ since if $(e_1, n_2) \in H$ then $(g_1, g_2 n_2) \in H$. It follows that $\tilde{\Phi} : G_1 \to G_2 / N_2$ is surjective, since $H$ is subdirect, and it is a homomorphism since if $(g_1, g_2), (g_1', g_2') \in H$ then $g_2 g_2' \in \tilde{\Phi}_{g_1 g_1'}$.

  Conversely, given a surjective homomorphism $\tilde{\Phi} : G_1 \to G_2 / N_2$ for some normal subgroup $N_2 \unlhd G_2$, define the subset
  \begin{equation}
    H = \set{(g_1, g_2) \in G_1 \times G_2 \given \tilde{\Phi}_{g_1} = g_2 N_2}.
  \end{equation}
  It is easily verified that this is a subdirect product: it's a subgroup since $\tilde{\Phi}$ is a homomorphism and the projections $\pi_i(H) = G_i$ since $\tilde{\Phi}$ is surjective (every element $g_2 \in G_2$ shows up in some coset, namely $g_2 N_2$).
\end{proof}

In applying Goursat's lemma to the special unitary groups, we take advantage of two facts: that special unitary groups are simply connected simple Lie groups, and that they enjoy an ``automatic continuity'' property for their homomorphisms to compact groups (see \cref{thm:auto}). Note that simple Lie groups are examples of so-called ``quasisimple'' groups: in particular, every proper normal subgroup of $\SU(l)$ is contained in its center $\mathbb{Z}_l$.

\begin{lemma*}[re \ref{lem:SUgoursat}]
  Let $l, l' \geq 2$ and let $G \subseteq \SU(l) \times \SU(l')$ be a subdirect product. There are two possibilities:
  \begin{enumerate}[(i)]
  \item $G = \SU(l) \times \SU(l')$.
  \item $l = l'$ and $G \cong \SU(l) \times \mathbb{Z}_q$, where $q$ divides $l$.
  \end{enumerate}
  Furthermore, in the second case, there exists an isomorphism $\Phi : \SU(l) \to \SU(l)$ such that
  \begin{equation}
    G = \set{(U, \e^{\i \theta} \Phi(U)) \given U \in \SU(l) \text{ and } \e^{\i \theta} \in \mathbb{Z}_q \subset \SU(l)}.
  \end{equation}
\end{lemma*}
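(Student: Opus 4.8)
The plan is to prove Goursat's lemma for special unitary groups by specializing the asymmetric Goursat lemma (\cref{lem:agoursat}) and exploiting the rigid structure of $\SU(l)$ as a simply connected simple Lie group. By \cref{lem:agoursat}, every subdirect product $G \subseteq \SU(l) \times \SU(l')$ corresponds to a normal subgroup $N_2 \unlhd \SU(l')$ together with a surjective homomorphism $\tilde{\Phi} : \SU(l) \to \SU(l')/N_2$. The whole proof reduces to classifying which such data can occur, which in turn comes down to understanding the possible surjective homomorphisms out of $\SU(l)$.

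First I would handle the normal subgroup $N_2$. Since $\SU(l')$ is quasisimple, every proper normal subgroup lies in the center $\mathbb{Z}_{l'}$, so either $N_2 = \SU(l')$ or $N_2 \subseteq \mathbb{Z}_{l'}$. If $N_2 = \SU(l')$ then $\tilde{\Phi}$ is the trivial map onto the trivial quotient, which forces $G = \SU(l) \times \SU(l')$ — this is case (i). So the interesting case is $N_2 = \mathbb{Z}_q$ for some $q \mid l'$, giving a surjection $\tilde{\Phi} : \SU(l) \to \SU(l')/\mathbb{Z}_q = \PSU(l')$ (or onto the intermediate quotient if $q < l'$).

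Next I would analyze the surjective homomorphism itself. The key structural input is \emph{automatic continuity} (\cref{thm:auto}, referenced in \cref{rem:SUiso}): any homomorphism from $\SU(l)$ into a compact group is automatically continuous, hence a Lie group homomorphism. Since $\SU(l)$ is a connected simple Lie group, a continuous surjection onto a nontrivial quotient of $\SU(l')$ must be a covering-type map; comparing dimensions and using simple-connectedness of $\SU(l)$, the homomorphism lifts to an isomorphism of Lie algebras $\su(l) \cong \su(l')$, which forces $l = l'$. The lift then gives a genuine isomorphism $\Phi : \SU(l) \to \SU(l)$ (an automorphism, up to the choice of quotient), so that $\tilde{\Phi}$ is $\Phi$ composed with the quotient map $\SU(l) \to \SU(l)/\mathbb{Z}_q$. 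Reconstructing $G$ from the bijection in \cref{lem:agoursat} then yields exactly
\begin{equation*}
  G = \set{(U, \e^{\i \theta} \Phi(U)) \given U \in \SU(l),\ \e^{\i \theta} \in \mathbb{Z}_q \subset \SU(l)}\ ,
\end{equation*}
where the discrete freedom $\e^{\i \theta} \in \mathbb{Z}_q$ is precisely the preimage $N_2$, giving $G \cong \SU(l) \times \mathbb{Z}_q$ with $q \mid l$. This is case (ii). The explicit form of $\Phi$ as $U \mapsto W U W^\dagger$ or $U \mapsto W U^\ast W^\dagger$ (\cref{rem:SUiso}) follows from the classification of the two $l$-dimensional irreps of $\SU(l)$, but that refinement is stated separately in the remark and need not be carried out here.

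\textbf{The hard part will be} the automatic-continuity step and the rigidity argument that a surjective abstract homomorphism $\SU(l) \to \PSU(l')$ forces $l = l'$ and lifts to an automorphism. The subtlety is that \cref{lem:agoursat} produces only an \emph{abstract} group homomorphism with no a priori continuity, so one genuinely needs \cref{thm:auto} to promote it to a Lie-theoretic map before the simply-connected-simple-Lie-group machinery (Lie algebra isomorphism, lifting through the central quotient) can be applied. Once continuity is in hand, the remaining dimension count and covering-space argument are routine, and the reconstruction of $G$ is a direct unwinding of the bijection.
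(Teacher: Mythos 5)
Your proposal is correct and follows essentially the same route as the paper: apply the asymmetric Goursat lemma to get a surjection $\tilde{\Phi} : \SU(l) \to \SU(l')/N_2$, use quasisimplicity to reduce $N_2$ to either all of $\SU(l')$ or a central cyclic subgroup, invoke automatic continuity to make $\tilde{\Phi}$ a Lie group homomorphism, and then use simple-connectedness plus a dimension count to lift it to an isomorphism forcing $l = l'$. The paper's proof is a direct implementation of exactly this outline, including deferring the explicit form of $\Phi$ to the remark.
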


\begin{proof}
  Consider the proof of the forward direction of \cref{lem:agoursat}, and let
  \begin{equation}
    U \mapsto \tilde{\Phi}_U = \set{U' \in \SU(l') \given (U, U') \in G}
  \end{equation}
  define the surjective homomorphism $\tilde{\Phi} : \SU(l) \to \SU(l') / N'$, where $N' \unlhd \SU(l')$ is the set of elements that show up with the identity $\ident \in \SU(l)$. Since $\SU(l')$ is quasisimple, either $N' = \SU(l')$ or $N' \subseteq \mathbb{Z}_{l'} \subseteq \SU(l')$. In the first case, it immediately follows that $G = \SU(l) \times \SU(l')$. So suppose that $N' \lhd \SU(l')$ is a proper subgroup, which, as a subgroup of $\mathbb{Z}_{l'}$, is equal to a cyclic subgroup, $N' = \mathbb{Z}_{q'}$ for some $q'$ that divides $l'$.

  By \cref{thm:auto}, since $\SU(l)$ and $\SU(l') / \mathbb{Z}_{q'}$ are both compact Lie groups, it follows that $\tilde{\Phi}$ is continuous. Therefore, since $\SU(l)$ is simply connected, there is a (continuous) surjective homomorphism $\Phi : \SU(l) \to \SU(l')$ which lifts $\tilde{\Phi}$.\footnote{That is, for each $U \in \SU(l)$, some $U' \in \tilde{\Phi}_U$ can be chosen so that the assignment $\Phi: U \mapsto U'$ is a proper homomorphism.} The kernel of any homomorphism should be a proper normal subgroup of $\SU(l)$, namely in the center of SU($l$), which is the discrete $\mathbb{Z}_l$ subgroup of phases $e^{\i 2\pi k/l } \mathbb{I}: k=0,\cdots, l-1$. Since $\Phi$ is continuous this implies that $\SU(l')$ has the same dimension as $\SU(l)$ as a Lie group (i.e. manifold), which means $l=l'$ and  therefore $\Phi$ is an isomorphism.

  Setting $q = q'$, it is clear that $G$ consists of the elements of the form \cref{eq:corr}.
\end{proof}

From this result, it immediately follows that the commutator subgroup $[G, G] \subseteq \SU(l)' \times \SU(l')$ is perfect and connected, equal to either $\SU(l) \times \SU(l')$ or $\set{(U, \Phi(U)) \given U \in \SU(l)} \cong \SU(l)$. Using the following description of normal subgroups of $\U(l)$, we extend this result to subgroups of $\U(l) \times \U(l')$ in \cref{cor:Ugoursat}. Let $\mathcal{P} = \set{e^{\i \theta} \ident \given \theta \in [0, 2 \pi)} \cong \U(1)$ denote the subgroup of overall phases and let $\mathbb{Z}_l = \mathcal{P} \cap \SU(l)$ denote the subgroup with determinant one, isomorphic to the cyclic group of order $l$.

\begin{lemma}\label{lem:Unorm}
    Let $N \unlhd \U(l)$ be a normal subgroup. Then either $N \subseteq \mathcal{P}$ or $N \supseteq \SU(l)$.
\end{lemma}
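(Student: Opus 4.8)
The plan is to pass to the projective quotient and exploit its simplicity. Recall that $\mathcal{P} = Z(\U(l))$ is the full center of $\U(l)$, and that it is exactly the kernel of the quotient homomorphism $q : \U(l) \to \U(l)/\mathcal{P}$, where $\U(l)/\mathcal{P} \cong \PSU(l)$ (indeed the composite $\SU(l) \hookrightarrow \U(l) \twoheadrightarrow \U(l)/\mathcal{P}$ is surjective with kernel $\mathbb{Z}_l = \SU(l) \cap \mathcal{P}$). The single structural input I would use is that $\PSU(l)$ is simple as an abstract group for $l \geq 2$. This itself follows from the quasisimplicity of $\SU(l)$ invoked earlier in the paper: by the correspondence theorem the normal subgroups of $\SU(l)/\mathbb{Z}_l$ are the images of normal subgroups of $\SU(l)$ containing $\mathbb{Z}_l$, and quasisimplicity forces such a subgroup to be either $\mathbb{Z}_l$ or $\SU(l)$, so $\PSU(l)$ has only the trivial normal subgroups. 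The degenerate case $l = 1$ is handled separately and trivially, since then $\U(1) = \mathcal{P}$ and every subgroup is contained in $\mathcal{P}$.

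Now let $N \unlhd \U(l)$ be an arbitrary normal subgroup (no closedness or continuity is assumed, which is why a purely group-theoretic argument is desirable). First I would observe that $q(N)$ is a normal subgroup of $\PSU(l)$, so by simplicity either $q(N) = \{e\}$ or $q(N) = \PSU(l)$. In the first case $N \subseteq \ker q = \mathcal{P}$, which is the first alternative of the lemma. In the second case surjectivity gives $N \mathcal{P} = q^{-1}(q(N)) = q^{-1}(\PSU(l)) = \U(l)$, and I would then take commutator subgroups of both sides.

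The key computation is that, because $\mathcal{P}$ is central in $\U(l)$, a commutator of two elements of $N\mathcal{P}$ collapses to a commutator of their $N$-parts: for $n_i \in N$ and $p_i \in \mathcal{P}$ one has $[n_1 p_1, n_2 p_2] = [n_1, n_2] \in N$, the central factors cancelling. Hence $[\U(l), \U(l)] = [N\mathcal{P}, N\mathcal{P}] = [N, N] \subseteq N$. Combining this with the standard fact that $[\U(l), \U(l)] = \SU(l)$ — the determinant of any commutator equals $1$, while $\SU(l)$ is perfect, as used already in the excerpt — I conclude $\SU(l) \subseteq N$, which is the second alternative. The only genuine obstacle is the simplicity of $\PSU(l)$ (equivalently, the quasisimplicity of $\SU(l)$), but this is exactly the ingredient already available in the present setting, so beyond correctly tracking the central factor $\mathcal{P}$ in the commutator step and disposing of $l = 1$, I do not expect any real difficulty.
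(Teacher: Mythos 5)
Your proof is correct and follows essentially the same route as the paper's: both arguments reduce the dichotomy to the quasisimplicity of $\SU(l)$, and both then recover $\SU(l)\subseteq N$ from the identity $[n_1p_1,n_2p_2]=[n_1,n_2]$ (centrality of $\mathcal{P}$) together with the perfectness of $\SU(l)$. The only cosmetic difference is that you phrase the dichotomy through the quotient $q(N)\unlhd \U(l)/\mathcal{P}\cong\PSU(l)$, whereas the paper works with the intersection $\SU(l)\cap(\mathcal{P}N)\unlhd\SU(l)$; these are equivalent via the correspondence theorem, exactly as you note.
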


\begin{proof}
    Every element of $\U(l)$, and, in particular every element of $N$, can be written $\e^{\i \theta} U$ for some $\e^{\i \theta} \in \mathcal{P}$ and some $U \in \SU(l)$. Thus, $U \in \SU(l) \cap (\mathcal{P} N)$. Furthermore, since the product and intersection of normal subgroups are normal, it follows that
    \begin{equation}
        \SU(l) \cap (\mathcal{P} N) \unlhd \SU(l).
    \end{equation}
    Since $\SU(l)$ is quasisimple, (see the discussion before \cref{lem:SUgoursat}), either $\SU(l) \cap (\mathcal{P} N) = \SU(l)$, in which case $\SU(l) \subset \mathcal{P} N$, or $\SU(l) \cap (\mathcal{P} N) \subseteq \mathbb{Z}_l$. In the former case, taking commutators gives, since  $\SU(l)$ is perfect, $\SU(l) \subseteq [N, N]$. In the latter case, every $U \in \SU(l) \cap (\mathcal{P} N)$ is a phase, so $N$ consists entirely of phases, i.e. $N \subseteq \mathcal{P}$. This completes the proof.
\end{proof}

\begin{corollary*}[re \ref{cor:Ugoursat}]
    Suppose that $H \subseteq \U(l) \times \U(l')$ is a subgroup with $[H, H] \subseteq \SU(l) \times \SU(l')$ a subdirect product. Then either $[H, H] \cong \SU(l) \times \SU(l)$, or $l = l'$ and there is an isomorphism $\Phi : \SU(l) \to \SU(l)$ such that $[H, H] = \set{(U, \Phi(U)) \given U \in \SU(l)} \cong \SU(l)$. Furthermore, in the second case every element of $H$ is of the form $(\e^{\i \theta} U, \e^{\i \phi} \Phi(U))$.
\end{corollary*}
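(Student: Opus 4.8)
The plan is to reduce everything to \cref{lem:SUgoursat} applied to the commutator subgroup itself, and then to use normality of $[H,H]$ in $H$ to absorb the ambient phases. By hypothesis $[H,H]$ is already a subdirect product of $\SU(l)\times\SU(l')$, so \cref{lem:SUgoursat} applies verbatim to $G=[H,H]$. In case (i) this gives $[H,H]=\SU(l)\times\SU(l')$, which is the first alternative, so from now on I would assume we are in case (ii): $l=l'$ and there is an isomorphism $\Phi:\SU(l)\to\SU(l)$ together with a cyclic subgroup $\mathbb{Z}_q\subseteq\SU(l)$ with
\[
[H,H]=\set{(U,\e^{\i\theta}\Phi(U)) \given U\in\SU(l),\ \e^{\i\theta}\in\mathbb{Z}_q}.
\]
The two remaining tasks are to show that every element of $H$ has the asserted form and to show that $q=1$.

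For the first task I would exploit that $[H,H]$ is normal in $H$. Fix $h=(A,B)\in H$ and write $A=\e^{\i a}A_0$ with $A_0\in\SU(l)$. Conjugating a graph element $(U,\Phi(U))\in[H,H]$ by $h$ gives $(A_0UA_0^{-1},\,B\Phi(U)B^{-1})$, which must again lie in $[H,H]$; matching first components then forces the second component to equal $\e^{\i\theta'}\Phi(A_0UA_0^{-1})$ for some $\e^{\i\theta'}\in\mathbb{Z}_q$. Setting $C=\Phi(A_0)^{-1}B$ and using that $\Phi$ is a homomorphism, this rearranges to $C\Phi(U)C^{-1}=\e^{\i\theta'(U)}\Phi(U)$ for all $U\in\SU(l)$. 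Taking traces shows $\e^{\i\theta'(U)}=1$ near the identity, and since $U\mapsto\e^{\i\theta'(U)}$ is continuous with values in the discrete group $\mathbb{Z}_q$ while $\SU(l)$ is connected, it is identically $1$. Hence $C$ centralizes $\Phi(\SU(l))=\SU(l)$, so by Schur's lemma (irreducibility of the defining representation, the analytic counterpart of the normal-subgroup rigidity in \cref{lem:Unorm}) $C=\e^{\i\phi}\ident$ is a scalar. This yields $B=\e^{\i\phi}\Phi(A_0)$, i.e. $h=(\e^{\i a}A_0,\e^{\i\phi}\Phi(A_0))$, exactly the claimed form.

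Finally, to eliminate the discrete factor I would simply compute a commutator of two elements now known to be of the form $h_i=(\e^{\i a_i}U_i,\e^{\i\phi_i}\Phi(U_i))$. The overall phases cancel, and since $\Phi$ is a homomorphism one obtains $[h_1,h_2]=([U_1,U_2],\Phi([U_1,U_2]))$, a genuine graph element with no discrete phase. As such elements generate $[H,H]$, we conclude $[H,H]\subseteq\set{(W,\Phi(W))\given W\in\SU(l)}$; combined with subdirectness this forces $q=1$ and $[H,H]\cong\SU(l)$. I expect the genuine obstacle to be the middle step of the second task, namely ruling out the discrete $\mathbb{Z}_q$ correlation: because $H$ and $[H,H]$ need not be connected Lie groups one cannot argue by dimension or connectedness, and one must instead combine the normality of the commutator subgroup with the centralizer (Schur/\cref{lem:Unorm}) rigidity and a careful continuity argument for $\theta'(U)$.
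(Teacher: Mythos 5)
Your proof is correct, and it takes a genuinely different route through the key middle step. Both you and the paper start the same way (apply \cref{lem:SUgoursat} to the subdirect product $[H,H]$ itself) and end the same way (commutators of elements of the form $(\e^{\i\theta}U,\e^{\i\phi}\Phi(U))$ are pure graph elements $([U_1,U_2],\Phi([U_1,U_2]))$, which kills the phases and the residual $\mathbb{Z}_q$). The difference is in how each argument shows that every element of $H$ is of the correlated form. The paper multiplies an arbitrary $(\e^{\i\theta_1}U_1,\e^{\i\theta_2}U_2)\in H$ by a suitable element of $[H,H]$ to reduce to elements $(\e^{\i\theta_1}\ident,W_2)$, notes that the set of such $W_2$ is a normal subgroup of $\U(l)$, and then combines \cref{lem:Unorm} with a commutator contradiction against the $\mathbb{Z}_q$ structure of $[H,H]$. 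You instead exploit normality of $[H,H]$ in $H$ directly: conjugating graph elements by $h=(A,B)$ forces $C=\Phi(A_0)^{-1}B$ to satisfy $C\Phi(U)C^{-1}=\e^{\i\theta'(U)}\Phi(U)$ with $\e^{\i\theta'(U)}\in\mathbb{Z}_q$, and then continuity of $U\mapsto \e^{\i\theta'(U)}=\frac{1}{l}\Tr\bigl(C\Phi(U)C^{-1}\Phi(U)^{-1}\bigr)$ on the connected group $\SU(l)$, together with Schur's lemma, forces $C$ to be scalar. Your route trades \cref{lem:Unorm} for Schur's lemma plus a short topological argument, which is arguably more self-contained; the paper's route avoids any explicit continuity discussion. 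Two small points worth making explicit if you write this up: (i) the availability of the pure graph elements $(U,\Phi(U))\in[H,H]$ for every $U$ follows because in \cref{eq:corr} the phase $\e^{\i\theta}$ ranges over all of $\mathbb{Z}_q$ independently of $U$, so $\theta=0$ is always allowed; and (ii) $\e^{\i\theta'(U)}$ is a well-defined (hence continuous) function of $U$ because $\Phi(U)$ is invertible, so the relation $C\Phi(U)C^{-1}=\e^{\i\theta'}\Phi(U)$ determines the phase uniquely.
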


\begin{proof}
    Since $[H, H] \subseteq \SU(l) \times \SU(l')$ is a subdirect product, \cref{lem:SUgoursat} implies there are two possibilities: either it is the direct product, or $l = l'$ and $[H, H]$ consists entirely of elements of the form $(U, \e^{\i \theta} \Phi(U))$ where $\Phi : \SU(l) \to \SU(l)$ is an isomorphism and $\e^{\i \theta} \in \mathbb{Z}_q \subseteq \mathbb{Z}_l$ for some $q$ that divides $l$.

    Suppose the second case holds and let $(\e^{\i \theta_1} U_1, \e^{\i \theta_2} U_2) \in H$ be an arbitrary element, where $U_1, U_2 \in \SU(l)$. Then also $(\e^{\i \theta_1} \ident, \e^{\i \theta_2} \Phi(U_1)^{-1} U_2) \in H$. But the group $\angles{W_2 \in \U(l) \given (\e^{\i \theta_1} \ident, W_2) \in H}$ is a normal subgroup (note that it is sufficient to check only that this set is invariant under conjugation by $W_2' \in \SU(l)$ since overall phases vanish). Thus by \cref{lem:Unorm} it either consists entirely of phases or it contains all $\SU(l)$ elements.

    But if for every $W_2 \in \SU(l)$ it held that $(\e^{\i \phi} \ident, W_2) \in H$ for some phase $\e^{\i \phi}$, then taking commutators with elements $(U, \Phi(U)) \in [H, H]$ would give $(\ident, V_2) \in [H, H]$ for every commutator $V_2 = W_2 \Phi(U) W_2^\dagger \Phi(U)^\dagger$, which is in contradiction with the conclusion from \cref{lem:SUgoursat} that $(\ident, V_2) \in [H, H]$ if and only if $V_2 \in \mathbb{Z}_q$. Thus, it must hold that $\Phi(U_1)^{-1} U_2$ is a phase, i.e. $U_2 = \e^{\i \theta} \Phi(U_1)$. This proves that all elements of $H$ are of the form $(\e^{\i \phi_1} U, \e^{\i \phi_2} \Phi(U))$. Upon taking commutators, we find that the phases vanish, i.e. $[H, H] = \set{(U, \Phi(U))}$.
\end{proof}

\subsection{Automatic continuity}\label{app:auto}

There is a subtlety in the argument for \cref{lem:SUgoursat}, namely that in order to lift the projective homomorphism, continuity is required. This is guaranteed by the following result, which is an instance of a more general property known as ``automatic continuity''.

\begin{theorem}[Theorem 5.64 of \cite{hofmann2020structure}]\label{thm:auto}
  Assume that $f: G \to H$ is a group homomorphism where $H$ is a compact group and $G$ is a linear Lie group such that $\lie{g} = [\lie{g}, \lie{g}]$, i.e. $G_0 = [G_0, G_0]$ (where $\lie{g}$ is the Lie algebra of $G$ and $G_0 \subseteq G$ is the connected component of the identity). Then $f$ is continuous.
\end{theorem}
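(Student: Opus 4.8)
The plan is to strip both groups down to their essential cores and then invoke van der Waerden's continuity theorem, isolating exactly where the hypothesis $\mathfrak{g}=[\mathfrak{g},\mathfrak{g}]$ is used. First I would reduce the source group. Since $G$ is a linear Lie group, its identity component $G_0$ is open; a homomorphism is continuous iff it is continuous at the identity, and continuity there involves only a neighbourhood lying inside $G_0$. Hence it suffices to treat $f|_{G_0}$, and I may assume $G=G_0$ is connected. The hypothesis then reads that $\mathfrak{g}$ is perfect; equivalently, the analytic subgroup generated by commutators has Lie algebra $[\mathfrak{g},\mathfrak{g}]=\mathfrak{g}$ and so, being full-dimensional in the connected group $G$, equals $G$, giving $G=[G,G]$. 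Next I would reduce the target group: a compact group is pro-Lie, $H\cong\varprojlim_N H/N$ over closed normal subgroups $N$ with each $H/N$ a compact Lie group, and a map into a projective limit is continuous iff every composite $G\to H/N$ is. Thus it is enough to prove the statement when $H$ is a compact Lie group.

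With $G$ connected, $\mathfrak{g}$ perfect, and $H$ a compact Lie group, the goal is to show that for every $X\in\mathfrak{g}$ the abstract one-parameter homomorphism $\phi_X\colon\mathbb{R}\to H$, $\phi_X(t)=f(\exp tX)$, is continuous. Granting this for all $X$, standard Lie theory (the Trotter and commutator product formulas) upgrades $X\mapsto\phi_X$ to a continuous Lie-algebra homomorphism $\Lambda\colon\mathfrak{g}\to\mathfrak{L}(H)$ into the Lie algebra of $H$, and by definition $f(\exp_G X)=\phi_X(1)=\exp_H(\Lambda(X))$. Since $\Lambda$ is $\mathbb{R}$-linear and $\exp_H$ is continuous, $f\circ\exp_G$ is continuous near $0\in\mathfrak{g}$; as $\exp_G$ is a local diffeomorphism, $f$ is continuous at the identity, hence everywhere by translation.

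The heart of the matter, and the only place the hypothesis enters, is the continuity of $\phi_X$. Because $\mathfrak{g}=[\mathfrak{g},\mathfrak{g}]$, every $X$ is a finite sum of brackets, so it suffices to handle $X=[Y,Z]$. Here I would use the commutator–product limit formula $\exp([Y,Z])=\lim_{n\to\infty}\big(\exp(\tfrac1nY)\exp(\tfrac1nZ)\exp(-\tfrac1nY)\exp(-\tfrac1nZ)\big)^{n^2}$ together with the Baker–Campbell–Hausdorff estimate $\exp(sY)\exp(sZ)\exp(-sY)\exp(-sZ)=\exp(s^2[Y,Z]+O(s^3))$, which realises motion along the bracket direction as a group commutator of the one-parameter elements $\exp(sY)$, $\exp(sZ)$. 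Taking $s=\sqrt{t}$ exhibits $\exp(t[Y,Z])$, for small $t>0$, as a commutator of flows along $Y$ and $Z$ with a correction of higher order in $\sqrt{t}$; in particular the bracket directions are uniformly and controllably divisible.

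The main obstacle is converting this algebraic divisibility into honest continuity inside $H$, since on its own an abstract additive homomorphism $\mathbb{R}\to\mathbb{T}$ into a torus can be wildly discontinuous (built from a Hamel basis), so perfectness must genuinely do the work. This is precisely the content of van der Waerden's continuity theorem and the step I expect to be hardest. One argues that a discontinuous $\phi_X$ would have image dense in a nontrivial compact connected abelian subgroup of $H$, whereas the representation of $\exp(t[Y,Z])$ as a boundedly-corrected commutator of $\sqrt{t}$-scaled flows is incompatible with such spreading. Concretely, I would run a recurrence/boundedness argument in the compact group: the uniform $O(s^3)$ control in the BCH estimate, combined with compactness (so the relevant products cannot escape and must accumulate), forces $\phi_X(t)\to e$ as $t\to0$ along bracket directions, and perfectness propagates this to all of $\mathfrak{g}$. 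Making this quantitative step rigorous is the technical core of \cref{thm:auto}; I would either reproduce van der Waerden's estimate directly or, following Hofmann and Morris, deduce it from the structure theory of $\mathfrak{L}(H)$ for compact $H$, which packages exactly the limit formulas above into a functorial correspondence between one-parameter subgroups and Lie-algebra elements.
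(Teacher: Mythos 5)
The paper does not actually prove this statement: it is quoted as Theorem 5.64 of Hofmann and Morris and used as a black box in \cref{app:auto} (to justify that abstract isomorphisms of $\SU(l)$ are continuous). So there is no in-paper proof to compare against; your proposal has to stand on its own as a proof of the cited result. Your preliminary reductions are correct and standard: restricting to the open subgroup $G_0$ (continuity is a local property at the identity), deducing $G_0=[G_0,G_0]$ from perfectness of $\mathfrak{g}$, and reducing the target to a compact Lie group via the projective-limit structure of compact groups.

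The gap is in the core step. First, a circularity in the "upgrade" step: identifying $\phi_{X+Y}$ with the Trotter limit of $\phi_X$ and $\phi_Y$ requires pushing $f$ through a limit taken in $G$, which presupposes the continuity being proved. (This part is repairable: since single brackets span $\mathfrak{g}=[\mathfrak{g},\mathfrak{g}]$, one can choose a basis consisting of brackets and use canonical coordinates of the second kind, so continuity of finitely many $\phi_{[Y,Z]}$ at $t=0$ already yields continuity of $f$ at $e$.) Second, and more seriously, the mechanism offered for continuity of $\phi_{[Y,Z]}$ does not close. The exact BCH identity gives $[f(\exp sY),f(\exp sZ)]=f\bigl(\exp(s^2[Y,Z]+R(s))\bigr)$ with $R(s)=O(s^3)$, so what you control is $\phi$ evaluated at a \emph{perturbed} argument, and the error cannot be discarded precisely because $f$ is not yet known to be continuous. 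Moreover, compactness of $H$ does not force $[\phi_Y(s),\phi_Z(s)]\to e_H$ as $s\to 0$: the values $\phi_Y(s)$, $\phi_Z(s)$ of an abstract homomorphism may be spread densely over a compact subgroup for arbitrarily small $s$ (the Hamel-basis phenomenon you yourself cite), and commutators of pairs of elements of a compact semisimple group fill the entire group, so "accumulation" gives no smallness. Your text concedes this by proposing to "reproduce van der Waerden's estimate directly or deduce it from the structure theory of Hofmann and Morris" --- but that estimate \emph{is} the theorem, so as written the proposal reduces the statement to itself. A genuine proof needs the actual van der Waerden mechanism, e.g. showing that the oscillation set $D=\bigcap_{U\ni e}\overline{f(U)}$ is a compact normal subgroup of $\overline{f(G)}$ and exploiting the fact that, in a Lie group with perfect Lie algebra, a neighborhood of the identity is covered by products of a bounded number of commutators whose first entries can be taken arbitrarily close to $e$, to force $D=\{e\}$; none of this quantitative input is present in the sketch.
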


Note that ``linear Lie group'' in \cite{hofmann2020structure} is more general than the typical meaning of ``matrix Lie group'', so indeed all compact Lie groups are linear Lie groups.

The usual classification of irreps of $\SU(l)$ goes through its Lie algebra, and thus requires an assumption of continuity. The above result implies that all representations are automatically continuous. This ensures, for instance, that all isomorphisms $\Phi: \SU(l) \to \SU(l)$ are of the form $U \mapsto W U W^\dagger$ or $U \mapsto W U^\ast W^\dagger$, as in \cref{eq:isos}.

\subsection{Serre's lemma}\label{app:serre}

Now we consider a subdirect product $H \subseteq G_1 \times \dots \times G_r$ which is pairwise independent. Recall that a group is perfect if it is equal to its commutator subgroup, $[G_i, G_i] = G_i$.

\begin{lemma*}[Serre's lemma \cite{Ribet1974adic}, re \ref{lem:serre}]
  Let $H \subseteq G_1 \times \dots \times G_r$ be a subgroup such that $\pi_{ij}(H) = G_i \times G_j$ for all pairs $1 \leq i < j \leq r$, where $\pi_{ij} : G_1 \times \dots \times G_r \to G_i \times G_j$ is the projection homomorphism. If each $G_i$ is perfect, then $H = G_1 \times \dots \times G_r$.
\end{lemma*}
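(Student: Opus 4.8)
The plan is to prove the statement by induction on the number of factors $r$, using the general form of Goursat's lemma (\cref{lem:agoursat}) together with the hypothesis that each $G_i$ is perfect. The base case $r=2$ is immediate, since $\pi_{12}$ is the identity projection on $G_1\times G_2$ and the hypothesis $\pi_{12}(H)=G_1\times G_2$ forces $H=G_1\times G_2$. For the inductive step with $r\ge 3$, I would first observe that the projections $\pi_{1,\dots,r-1}(H)$ and $\pi_{2,\dots,r}(H)$ inherit pairwise surjectivity onto every pair $G_i\times G_j$ they contain, so the induction hypothesis (applied to $r-1$ factors) gives $\pi_{1,\dots,r-1}(H)=G_1\times\cdots\times G_{r-1}=:A$ and $\pi_{2,\dots,r}(H)=G_2\times\cdots\times G_r$. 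Thus $H$ is a subdirect product of $A$ and $G_r$, which is the setting to which Goursat's lemma applies.

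The heart of the argument is to show that the normal subgroup $N:=\set{g\in G_r : (e_A,g)\in H}\unlhd G_r$ is all of $G_r$. By \cref{lem:agoursat}, the subdirect product $H\subseteq A\times G_r$ determines a surjective homomorphism $\phi:A\to G_r/N$, where $\phi(a)=gN$ whenever $(a,g)\in H$. I would split $A$ into the two commuting perfect blocks $A_1=G_1$ and $A_2=G_2\times\cdots\times G_{r-1}$ (both nonempty since $r\ge 3$), and consider the restrictions $\phi_1=\phi|_{A_1}$ and $\phi_2=\phi|_{A_2}$, whose images commute elementwise in $G_r/N$ because $A_1$ and $A_2$ commute in $A$. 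The key claim is that both $\phi_1$ and $\phi_2$ are surjective: for $\phi_1$, given $g\in G_r$ I use $\pi_{2,\dots,r}(H)=G_2\times\cdots\times G_r$ to produce an $h\in H$ of the form $(x,e,\dots,e,g)$ with $x\in G_1$, whence $\phi_1(x)=gN$; for $\phi_2$, given $g\in G_r$ I use the pairwise independence $\pi_{1,r}(H)=G_1\times G_r$ to produce an $h\in H$ of the form $(e,a_2,g)$ with $a_2\in A_2$, whence $\phi_2(a_2)=gN$. Since then $\phi_1(A_1)=\phi_2(A_2)=G_r/N$ and these two (equal) subgroups commute elementwise, $G_r/N$ must be abelian; but being a quotient of the perfect group $G_r$ it is also perfect, and a perfect abelian group is trivial, so $N=G_r$.

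Once $N=G_r$, the subgroup $\set{e_A}\times G_r$ lies in $H$; combined with $\pi_A(H)=A$ this also yields $A\times\set{e}\subseteq H$ (multiply any preimage of $a\in A$ by a suitable element of $\set{e_A}\times G_r$ to clear its last coordinate), and therefore $H=A\times G_r=G_1\times\cdots\times G_r$, completing the induction. I expect the main obstacle to be precisely the surjectivity-plus-commutativity step: choosing the right decomposition $A=A_1\times A_2$ into commuting perfect pieces and checking that pairwise independence alone—rather than any higher-order independence—already forces each restriction $\phi_i$ to be onto. This is exactly where perfectness is indispensable; the very same argument fails for the $\U(1)^3$ example in \cref{sec:nec-suf}, where the analogue of $G_r/N$ is a nontrivial abelian group that need not collapse.
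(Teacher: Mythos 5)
Your proof is correct, but it takes a genuinely different route from the paper's. The paper's argument is a direct commutator computation: using the induction hypothesis on the two index sets $\{1,3,\dots,r\}$ and $\{1,2,4,\dots,r\}$, it lifts $g_1,g_1'\in G_1$ to elements $h_1=(g_1,g_2,e,\dots,e)$ and $h_2=(g_1',e,g_3,e,\dots,e)$ of $H$, observes that $[h_1,h_2]=([g_1,g_1'],e,\dots,e)\in H$, and invokes perfectness of $G_1$ in the form ``commutators generate,'' so that $G_1\times\{e\}\times\dots\times\{e\}\subseteq H$. Your argument instead runs the induction through Goursat's lemma: you collapse the first $r-1$ factors to $A$, extract the surjection $\phi:A\to G_r/N$, and show that the two commuting blocks $G_1$ and $G_2\times\dots\times G_{r-1}$ each surject onto $G_r/N$ (the first via $\pi_{2,\dots,r}(H)=G_2\times\dots\times G_r$, the second via $\pi_{1,r}(H)=G_1\times G_r$), forcing $G_r/N$ to be abelian and hence, being a perfect quotient, trivial. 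Both uses of perfectness are legitimate but different in flavor: the paper uses $[G_1,G_1]=G_1$ generatively, while you use ``perfect and abelian implies trivial'' on $G_r$. Your version is slightly longer but arguably more conceptual, ties in naturally with the Goursat machinery the paper already sets up in \cref{lem:agoursat}, and makes vividly clear why the $\U(1)^3$ counterexample evades the conclusion (there the quotient $G_r/N$ is a nontrivial abelian group). The paper's version is more elementary and self-contained, needing no appeal to Goursat at all. All steps of your argument check out, including the two surjectivity claims and the final bootstrapping from $\{e_A\}\times G_r\subseteq H$ and $\pi_A(H)=A$ to $H=A\times G_r$.
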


\begin{proof}
  Without loss of generality, it suffices to prove that, for all $g_1 \in G_1$,
  \begin{equation}\label{eq:serre1}
    (g_1, e_2, \dots, e_r) \in H,
  \end{equation}
  where $e_i \in G_i$ is the identity element. In particular, if this is true for $G_1$ then it is symmetrically true for all factors $G_i$, and altogether these elements generate the entire direct product.

  We proceed by induction; the base case of $r = 2$ holds from the assumption of the lemma. The induction hypothesis applied to $G_1 \times G_3 \times \dots \times G_r$ and $G_1 \times G_2 \times G_4 \times \dots \times G_r$, respectively, implies that, for all $g_1, g_1' \in G_1$, there exist $g_2 \in G_2$ and $g_3 \in G_3$ such that
  \begin{subequations}
    \begin{align}
      h_1 & = (g_1, g_2, e_3, \dots, e_r) \in H \\
      h_2 & = (g_1', e_2, g_3, e_4, \dots, e_r) \in H.
    \end{align}
  \end{subequations}
  Therefore,
  \begin{equation}
    [h_1, h_2] = ([g_1, g_1'], e_2, \dots, e_r) \in H,
  \end{equation}
  where $[h_1, h_2] = h_1 h_2 h_1^{-1} h_2^{-1}$ is the group commutator. Since $G_1$ is assumed perfect, elements of the form $[g_1, g_1']$ generate $G_1$, and so \cref{eq:serre1} holds.
\end{proof}

\subsection{Commutator subgroup is connected component of identity}\label{sec:commconn}

Here we verify the statement that,
\begin{lemma}\label{lem:connect}
    Let $\mathcal{T} \subseteq \mathcal{V}^G$ be a subgroup such that $\mathcal{T}$ acts irreducibly on each multiplicity subsystem (i.e. $\pi_\lambda(\mathcal{T}) \subseteq \U(\hilbert[M]_\lambda)$ is irreducible for each $\lambda \in \Lambda$). Then the connected component of the identity of $\mathcal{T} \cap \mathcal{SV}^G$ is equal to the connected component of the identity of the commutator subgroup $[\mathcal{T}, \mathcal{T}]$.
\end{lemma}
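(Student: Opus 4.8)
The plan is to identify both connected groups by computing their Lie algebras and proving these coincide; equality of the identity components then follows because, inside the compact group $\mathcal{V}^G$, both $\mathcal{T}\cap\mathcal{SV}^G$ and $[\mathcal{T},\mathcal{T}]$ are closed Lie subgroups. First I would record the trivial inclusion: since $[\mathcal{T},\mathcal{T}]\subseteq\mathcal{T}$ and $[\mathcal{T},\mathcal{T}]\subseteq[\mathcal{V}^G,\mathcal{V}^G]=\mathcal{SV}^G$, we get $[\mathcal{T},\mathcal{T}]\subseteq\mathcal{T}\cap\mathcal{SV}^G$, so the identity component of the former sits inside that of the latter. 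For the reverse I would pass to the closure $L=\overline{\mathcal{T}}$, a compact Lie group whose projections $\pi_\lambda(L)=\overline{\pi_\lambda(\mathcal{T})}$ still act irreducibly (invariance of a subspace is a closed condition). In all applications $\mathcal{T}$ is already a compact connected Lie group by \cite{Marvian2022Restrict}, so I would either take $\mathcal{T}=L$ at the outset or separately check that replacing $\mathcal{T}$ by $L$ leaves the two identity components unchanged. Throughout write $\lie{v}=\operatorname{Lie}(\mathcal{V}^G)$, $\lie{sv}=\operatorname{Lie}(\mathcal{SV}^G)=\bigoplus_\lambda\su(\mathcal{M}_\lambda)$, $\lie{l}=\operatorname{Lie}(L)$, and let $\lie{z}=\lie{z}(\lie{l})$ denote the center of $\lie{l}$.

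Second, I would use that $\lie{l}$ is a compact (hence reductive) Lie algebra, giving the direct sum of ideals $\lie{l}=\lie{z}\oplus[\lie{l},\lie{l}]$ with $[\lie{l},\lie{l}]$ semisimple. As $\mathcal{SV}^G$ is closed, $(\mathcal{T}\cap\mathcal{SV}^G)_0$ is the closed connected subgroup with Lie algebra $\lie{l}\cap\lie{sv}$. Decomposing any $X\in\lie{l}\cap\lie{sv}$ as $X=X_{\lie{z}}+X_{ss}$ and using $[\lie{l},\lie{l}]\subseteq[\lie{v},\lie{v}]=\lie{sv}$ (which forces $X_{\lie{z}}\in\lie{z}\cap\lie{sv}$) yields the clean splitting
\begin{equation*}
\lie{l}\cap\lie{sv}=[\lie{l},\lie{l}]\oplus(\lie{z}\cap\lie{sv}).
\end{equation*}
The problem thus reduces to computing $\lie{z}\cap\lie{sv}$ and matching it with the central contribution to $\operatorname{Lie}([L,L])$.

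Third — the heart of the argument — I would analyze $\lie{z}$ under the finite component group $\Gamma=L/L_0$ acting via $\operatorname{Ad}$, and fix the orthogonal $\Gamma$-invariant splitting $\lie{z}=\lie{z}^L\oplus\lie{z}_{\ne}$ into the fixed part and the sum of nontrivial isotypic pieces. The key structural fact is that the disconnected components of $L$ enlarge the commutator subgroup beyond $[L_0,L_0]$: differentiating the curves $t\mapsto V\,e^{tX}V^{-1}e^{-tX}\in[L,L]$ shows $\operatorname{Ad}_V X-X\in\operatorname{Lie}([L,L])$, and the vectors $\operatorname{Ad}_V X-X$ span exactly $\lie{z}_{\ne}$ on the center (and fill out $[\lie{l},\lie{l}]$ on the semisimple part); the matching upper bound holds because $\exp(\lie{z}^L)\subseteq Z(L)$ spans a central subtorus on which $L$ acts trivially, which $[L,L]$ cannot reach. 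Hence $\operatorname{Lie}([L,L])=[\lie{l},\lie{l}]\oplus\lie{z}_{\ne}$. It remains to show $\lie{z}\cap\lie{sv}=\lie{z}_{\ne}$. For "$\supseteq$", each functional $Z\mapsto\Tr(\pi_\lambda(Z))$ is $\operatorname{Ad}$-invariant, hence $\Gamma$-invariant, so it annihilates every nontrivial isotypic component, giving $\lie{z}_{\ne}\subseteq\lie{sv}$. For "$\subseteq$", I would invoke Schur's lemma: a $Z\in\lie{z}^L$ is $\operatorname{Ad}(L)$-invariant, so $e^{tZ}$ commutes with all of $L$, whence $\pi_\lambda(Z)$ commutes with the irreducible $\pi_\lambda(L)$ and is scalar; if in addition $Z\in\lie{sv}$ this scalar is traceless, forcing $Z=0$, i.e. $\lie{z}^L\cap\lie{sv}=0$. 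Since $\lie{sv}$ is $\Gamma$-invariant, $\lie{z}\cap\lie{sv}=(\lie{z}^L\cap\lie{sv})\oplus(\lie{z}_{\ne}\cap\lie{sv})=\lie{z}_{\ne}$. Combining the two displayed identities gives $\lie{l}\cap\lie{sv}=\operatorname{Lie}([L,L])$, so the two identity components agree.

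The main obstacle I anticipate is exactly this third step: one must avoid the naive guess $[\mathcal{T},\mathcal{T}]_0=[\mathcal{T}_0,\mathcal{T}_0]_0$, which fails whenever a disconnected component acts on the central torus by a nontrivial automorphism (e.g. inversion), thereby realizing central directions as genuine commutators. Getting the bookkeeping of the $\Gamma$-action on $\lie{z}$ correct and coupling it to the tracelessness condition defining $\lie{sv}$ — through $\Gamma$-invariance of the trace functionals and Schur's lemma — is the delicate part. The standard inputs I would rely on are the reductive structure theory of compact Lie groups and the closedness of commutator subgroups of compact Lie groups, which ensures $[L,L]$ is a genuine Lie subgroup.
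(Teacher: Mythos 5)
Your proof is correct, and it shares the paper's basic strategy --- compare the Lie algebras of $\mathcal{T}\cap\mathcal{SV}^G$ and $[\mathcal{T},\mathcal{T}]$ via the reductive splitting $\lie{t}=[\lie{t},\lie{t}]\oplus\lie{z}$ and use Schur's lemma on the center --- but your third step is genuinely different and strictly stronger. The paper's proof argues that every element of $\lie{z}$ commutes with $\pi_\lambda(\mathcal{T})$, hence is a real combination of the $\i\Pi_\lambda$, and then identifies $\operatorname{Lie}([\mathcal{T},\mathcal{T}])$ with $[\lie{t},\lie{t}]$; both steps are valid only when $\mathcal{T}$ is connected (an element of the center of $\lie{t}$ need not commute with components of $\mathcal{T}$ outside the identity component, and $[\mathcal{T},\mathcal{T}]$ can be strictly larger than $[\mathcal{T}_0,\mathcal{T}_0]$ --- e.g.\ $\mathcal{T}=\operatorname{O}(2)\subset\U(2)$ acting irreducibly on $\mathbb{C}^2$ has $\lie{z}=\lie{so}(2)\not\subseteq\mathbb{R}\,\i\ident$ and $[\mathcal{T},\mathcal{T}]=\SO(2)$ while $[\lie{t},\lie{t}]=0$). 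Since the lemma is stated without a connectedness hypothesis, your decomposition $\lie{z}=\lie{z}^{L}\oplus\lie{z}_{\neq}$ under the component group, with $\operatorname{Lie}([L,L])=[\lie{l},\lie{l}]\oplus\lie{z}_{\neq}$ matched against $\lie{z}\cap\lie{sv}=\lie{z}_{\neq}$ via the $\operatorname{Ad}$-invariance of the trace functionals on one side and Schur on the other, is exactly what is needed to cover the disconnected case; the paper's argument buys brevity because in all of its applications $\mathcal{T}$ is a connected compact group. Two small points you should nail down if you write this up: the upper bound $\operatorname{Lie}([L,L])\subseteq[\lie{l},\lie{l}]\oplus\lie{z}_{\neq}$ deserves an explicit argument (e.g.\ quotient by the normal connected subgroup with that Lie algebra and observe that the commutator subgroup of a compact group with central identity component and finite component group is finite), and the reduction from $\mathcal{T}$ to its closure $L$ should be carried out rather than flagged, since neither $\mathcal{T}$ nor $[\mathcal{T},\mathcal{T}]$ is assumed closed in the statement.
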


In particular, when $[\mathcal{T}, \mathcal{T}]$ is connected (which is always the case when $\mathcal{T}$ is connected), it is equal to the identity component of $\mathcal{T} \cap \mathcal{SV}^G$. We note that when the group $\mathcal{T}$ is generated by $k$-local $G$-invariant gates, it is always compact and connected \cite{Marvian2022Restrict}. Furthermore, in the case of $G = \SU(d)$ on $d$-dimensional qudits, each of $\mathcal{V}_k^{(n)}$ acts irreducibly on the multiplicity subsystems since they contain $\P(\sigma) : \sigma \in \mathbb{S}_n$.

\begin{proof}
    Let $\mathfrak{v}$ and $\mathfrak{t}$ be the Lie algebras of $\mathcal{V}^G$ and $\mathcal{T}$, respectively (see \cref{eq:lie}). Both are groups of unitaries, hence their Lie algebras are reductive \cite{robert2015squares,zimboras2015symmetry}, meaning that they split into a direct sum of their commutator subalgebra and center,
    \begin{equation}
    \begin{split}
        \mathfrak{v} & = \mathfrak{sv} \oplus \mathfrak{p} = [\mathfrak{v}, \mathfrak{v}] \oplus \mathfrak{p} \\
        \mathfrak{t} & = \mathfrak{s} \oplus \mathfrak{z} = [\mathfrak{t}, \mathfrak{t}] \oplus \mathfrak{z},
    \end{split}
    \end{equation}
    where $\mathfrak{sv}$ and $\mathfrak{s}$ are  semisimple and $\mathfrak{p}$ and $\mathfrak{z}$ are commutative (the respective centers of $\mathfrak{v}$ and $\mathfrak{t}$). Since $\mathfrak{t} \subseteq \mathfrak{v}$, it follows that $\mathfrak{s} \subseteq \mathfrak{sv}$. By Schur's lemma and the fact that each $\pi_\lambda(\mathcal{T})$ acts irreducibly, elements of the center $\mathfrak{z}$ must be real linear combinations of the projectors $\i \Pi_\lambda : \lambda \in \Lambda$ to charge sectors, which span $\mathfrak{p}$. Thus $\mathfrak{z} \subseteq \mathfrak{p}$.
    
    The Lie subalgebra $\mathfrak{s} = [\mathfrak{t}, \mathfrak{t}]$ is also the Lie algebra of the commutator subgroup $[\mathcal{T}, \mathcal{T}]$, meaning that its exponentials generate its connected component of the identity. Because its Lie algebra $\mathfrak{s} \subseteq \mathfrak{sv}$, it follows that $\mathfrak{s}$ is also the Lie algebra of the intersection $\mathcal{T} \cap \mathcal{SV}^G$, i.e. the identity components coincide.
\end{proof}

It is worth noting that the above argument implies that every element of the connected component of the identity of $\mathcal{T}$ can be written as a product of an element of $[\mathcal{T}, \mathcal{T}]$ and a relative phase in the form $\e^{\i P}$ with $P \in \mathfrak{z}$.

We now prove \cref{lem-2-local}.

\begin{proof}
Recall that $\mathcal{V}_2^{(n)}=\langle \e^{\i t} \ident, \e^{\i t \P_{ij}} : t \in \real, i \neq j \rangle$ is a compact connected Lie group \cite{Marvian2022Restrict}, which means any unitary $V\in \mathcal{V}_2^{(n)}$ can be written as exponential $V=\exp(A)$, where $A$ is in the real Lie algebra generated by operators $\i\P_{ij}: i \neq j$ and the identity $\i \ident$. The Lie subalgebra generated by transpositions is equal to the real Lie algebra generated by operators $\i(\P_{ij}- \P_{kl}): i \neq j, k \neq l $ and the permutationally-invariant operator $\i B=\i \sum_{i\neq j} \P_{ij}$ (This can be seen by noting that the linear spans of the generators of the two Lie algebras are equal). Since $B$ is permutationally-invariant, it commutes with all $\i(\P_{ij}- \P_{kl}): i \neq j, k \neq l $. Therefore any unitary $V\in \mathcal{V}_2^{(n)}$ can be decomposed as $\e^{\i \theta} \widetilde{V} \exp(\i B s)= \e^{\i \theta} \exp(\i B s) \widetilde{V}$, where $\widetilde{V}$ is in the group on the left-hand side of \cref{eq:swapirrep}, denoted as $\mathcal{X}$, and $s\in\mathbb{R}$. It follows that $\mathcal{X}$ contains the commutator subgroup of $ \mathcal{V}_2^{(n)}$, namely $ \mathcal{SV}_2^{(n)}$. On the other hand, $\mathcal{X}$ itself is also contained in $\mathcal{SV}_2^{(n)}$. This can be seen, for instance, by noting that Hamiltonians $\P_{ij}- \P_{kl}$ are centerless, i.e., $\Tr(u^{\otimes n}[\P_{ij}- \P_{kl}])=0$, which means the unitaries realized by them are in $\mathcal{SV}^{(n)}\cap \mathcal{V}^{(n)}_2$, hence by \cref{lem:connect} they are in $\mathcal{SV}^{(n)}_2$. This completes the proof of the first part of the lemma. The second part of the lemma follows from the fact that $\mathcal{V}_2^{(n)}$ contains $\P(\sigma) :\sigma\in\mathbb{S}_n$, which acts irreducibly on $\mathcal{M}_\lambda$. This in turn implies its commutator subgroup $\mathcal{SV}_2^{(n)}$ also acts irreducibly on $\mathcal{M}_\lambda$. 
 \end{proof}

We note that, with \cref{lem:connect}, we could alternatively prove this by showing that $\i(\P_{ij} - \P_{kl})$ generate the semsisimple part of the Lie algebra of $\mathcal{V}_2^{(n)}$, for instance using the commutator identity
\begin{align}
 \i(\P_{12} - \P_{23}) = \frac{1}{2}\big[ [\i\P_{12}, \i\P_{23}], \i\P_{31} \big].
\end{align}

\newpage

\section{For \texorpdfstring{$d \geq 3$}{d >= 3}, universality requires \texorpdfstring{$n$}{n}-qudit interactions}\label{app:conj-proof}

In this appendix, we prove that, without ancillae, and when $d \geq 3$, $(n - 1)$-local $\SU(d)$-invariant unitaries are not universal.  As discussed before \cref{eq:centerdiff}, applying the results of \cite{Marvian2022Restrict}, we find that
\begin{equation}\nonumber
  \dim \mathcal{V}_n^{(n)} - \dim \mathcal{V}_k^{(n)} = \abs{\Lambda_{n, d}} - \abs{\Lambda_{k, d}}\ ,
\end{equation}
where $\abs{\Lambda_{k, d}}$ is the number of inequivalent irreps of $\SU(d)$, or, equivalently $\mathbb{S}_n$ that appear on $k$ qudits.

\lemIrrepsSn*

\begin{proof}
  We prove this lemma by defining an injective map $i: \Lambda_{n - 1, d} \rightarrow \Lambda_{n, d}$, which implies $|\Lambda_{n, d}| \geq |\Lambda_{n - 1, d}|$. Then, we show $\Lambda_{n, d}$ contains elements that are not in the image of $i$.
  
  This map $i$ is defined as follows. Let $\lambda \in \Lambda_{n -1, d}$, which can be labeled by a Young diagram with $n-1$ boxes. Then $i(\lambda)$ is defined by adding a box to the first row of $\lambda$. Clearly, this map is injective, because if $i(\lambda_1) = i(\lambda_2)$, then by removing a box from the first row, we must have $\lambda_1 = \lambda_2$.
  
  It is easy to see that for any $i(\lambda)$, the number of boxes in the first row is always greater than the number of boxes in the second row. Then as long as there exists a $\mu \in \Lambda_{n, d}$, such that the first two rows of $\mu$ have the same length, then $\mu \notin i(\Lambda_{n -1, d})$. When $d\geq 3$, such $\mu$ always exists: if $n$ is even, we can choose $\mu = \sydiag{2,2}\cdots\sydiag{2,2}$; if $n$ is odd, we can choose $\mu = \sydiag{2,2,1}\cdots\raisebox{0.3em}{$\sydiag{2,2}$}$.
\end{proof}
Note that when $n$ is odd and $d=2$, the above $\mu$ does not exist. Indeed, in this case we have $|\Lambda_{n, 2}| = |\Lambda_{n - 1, 2}|$ when $n$ is odd.

\newpage

\section{Semi-universality on \texorpdfstring{$n=3$}{n=3} qudits using \texorpdfstring{$8$}{8} ancilla qudits}\label{sec:ancillapr}

In this appendix, we consider unitaries that are realizable on 11 qudits using 2-qudit $\SU(d)$-invariant unitaries. More precisely, we focus on the behavior of the group $\mathcal{SV}_2^{(11)}$ in certain irreps that are relevant for the use of ancilla qudits. To understand the properties of this group, we use Marin's characterization of the Lie algebra generated by transpositions (SWAPs) as a subalgebra of the group algebra \cite{marin2007algebre}. Roughly speaking, according to this characterization, for any system with an arbitrary number of qudits, there are two sources of constraints: a relation between charge sectors labeled by L-shape diagrams, and a relation between $\lambda$ and $\lambda'$, where $\lambda'$ is the transpose (sometimes called conjugate) Young diagram (which may or may not be the same as $\lambda$). In particular, this result implies

\begin{proposition}[Marin \cite{marin2007algebre}]\label{prop:marin}
  Two-qudit $\SU(d)$ symmetric Hamiltonians are semi-universal in a subset of irreps labeled by $\Lambda$, provided that $\Lambda$ satisfies the following two criteria: 
  \begin{enumerate}
  \item $\Lambda$ does not contain L-shape Young diagrams;
  \item All $\lambda \in \Lambda$ satisfies $\lambda' \not\in \Lambda$.
  \end{enumerate}
\end{proposition}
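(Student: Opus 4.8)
The plan is to deduce \cref{prop:marin} from Marin's full decomposition of the Lie algebra generated by the transpositions, fed into the necessary-and-sufficient criterion of \cref{MainLemma}. Recall that $\mathcal{V}_2^{(n)}$ is the connected Lie group generated by $\exp(\i\theta\P_{ij})$ and global phases, so by \cref{lem-2-local} its commutator subgroup $\mathcal{SV}_2^{(n)}$ is the connected group whose Lie algebra is the semisimple part of the real Lie algebra generated by the swaps $\i\P_{ij}$. Marin's theorem characterizes this semisimple part sector by sector: projected onto the multiplicity space $\mathcal{M}_\lambda$ carrying the irrep $\lambda$ of $\mathbb{S}_n$, the image is all of $\su(\mathcal{M}_\lambda)$ whenever $\lambda$ is not an L-shape (hook) diagram, and is reduced to a proper $\mathfrak{so}$ or $\mathfrak{sp}$ subalgebra precisely when $\lambda$ is a hook; moreover the only correlations between distinct sectors occur (i) within the family of hook diagrams, and (ii) between a diagram $\lambda$ and its transpose $\lambda'$, the latter arising because $\P[\lambda'](\sigma)=\sgn(\sigma)\,\P[\lambda](\sigma)$ up to an intertwiner, so that the sign cancels upon taking commutators. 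These are exactly the type \textbf{III} and type \textbf{IV} constraints discussed in \cref{sec:2on4}.

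With this input, verifying the two hypotheses of \cref{MainLemma} for the subfamily $\Lambda$ becomes immediate. For condition \textbf{A} (subsystem universality), criterion 1 guarantees that no $\lambda\in\Lambda$ is a hook, so Marin's theorem gives $\pi_\lambda(\mathcal{SV}_2^{(n)})=\SU(\mathcal{M}_\lambda)$ for every $\lambda\in\Lambda$; in particular none of the $\mathfrak{so}/\mathfrak{sp}$ reductions occur. For condition \textbf{B} (pairwise independence), take distinct $\lambda_1,\lambda_2\in\Lambda$ with $\dim\mathcal{M}_{\lambda_1}=\dim\mathcal{M}_{\lambda_2}\ge 2$. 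Since both are non-hooks, the only way they could be correlated, according to Marin's decomposition, is if $\lambda_2=\lambda_1'$; but criterion 2 forbids $\lambda_1'\in\Lambda$, so the pair is independent and the joint projection $\pi_{\lambda_1,\lambda_2}(\mathcal{SV}_2^{(n)})$ is all of $\SU(\mathcal{M}_{\lambda_1})\times\SU(\mathcal{M}_{\lambda_2})$, whence a unitary with $|\Tr(\pi_{\lambda_1}(V))|\neq|\Tr(\pi_{\lambda_2}(V))|$ certainly exists. Applying \cref{MainLemma} to the group $\pi_\Lambda(\mathcal{V}_2^{(n)})\subseteq\prod_{\lambda\in\Lambda}\U(\mathcal{M}_\lambda)$ — that is, to the representation restricted to the sectors in $\Lambda$ — then yields $\pi_\Lambda(\mathcal{SV}_2^{(n)})\cong\prod_{\lambda\in\Lambda}\SU(\mathcal{M}_\lambda)$, which is semi-universality on $\Lambda$.

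The only genuinely substantive ingredient is Marin's theorem itself, which we take as given; the remainder is a translation between his simple-factor decomposition and the conditions of \cref{MainLemma}. The step requiring the most care is confirming that the two criteria exhaust Marin's two families of constraints — concretely, that reduction to $\mathfrak{so}/\mathfrak{sp}$ happens only for hooks, so excluding hooks restores full $\SU$ in each sector, and that among non-hook diagrams the sole pairwise correlation is the transpose relation, so excluding conjugate pairs restores independence. One should also note the dimensional bookkeeping that makes condition \textbf{B} align with criterion 2: conjugate irreps $\lambda$ and $\lambda'$ of $\mathbb{S}_n$ always satisfy $\dim\mathcal{M}_\lambda=\dim\mathcal{M}_{\lambda'}$, so the equal-dimension hypothesis of condition \textbf{B} is triggered exactly on the pairs that criterion 2 removes, any residual equal-dimension pairs being independent by Marin. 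Finally, since semi-universality concerns only the commutator subgroup, the central (phase) directions spanned by the $\i\P_{ij}$ — summarized by the permutationally invariant operator $B$ of \cref{lem-2-local} — play no role and may be discarded.
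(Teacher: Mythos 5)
Your proposal is correct and takes essentially the same route as the paper: the paper gives no derivation of \cref{prop:marin} beyond a one-line appeal to Marin's characterization of the Lie algebra generated by transpositions, so in both cases the entire substantive content is Marin's theorem taken as a black box. What you add, and the paper leaves implicit, is the explicit translation of Marin's simple-factor decomposition into conditions \textbf{A} and \textbf{B} of \cref{MainLemma}; that is a clean way to make the deduction airtight and is consistent with how the paper uses \cref{MainLemma} elsewhere. One imprecision worth fixing: Marin's reduction to a proper $\mathfrak{so}$ or $\mathfrak{sp}$ subalgebra occurs not only for hooks but also for self-conjugate non-hook diagrams $\lambda=\lambda'$ (the intertwiner $J$ with $J\,\P[\lambda](\sigma)J^{-1}=\sgn(\sigma)\P[\lambda](\sigma)$ forces $Y^{\mathrm{T}}J+JY=0$ on the generated Lie algebra), so condition \textbf{A} is \emph{not} secured by criterion 1 alone, contrary to your phrasing ``precisely when $\lambda$ is a hook.'' This does not break the argument, because criterion 2 automatically excludes self-conjugate diagrams from $\Lambda$ (they satisfy $\lambda'=\lambda\in\Lambda$); but the bookkeeping should credit criterion 2, not criterion 1, with eliminating that case.
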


The following lemma does not actually depend on \cref{prop:marin} in its full strength, which applies to a general number of qudits. In fact, we only need \cref{prop:marin} applied to 11 qudits.

Recall that an operator $H$ is called centerless if $\Tr H \Pi_\lambda = 0$ for all $\lambda \in \Lambda_{n, d}$.

\begin{lemma*}[re \ref{lem:ancillauni}]
  For any centerless $\SU(d)$-invariant Hamiltonian $H$, there exists a Hamiltonian $\tilde{H}$ that is realizable with 2-qudit $\SU(d)$-invariant Hamiltonians, i.e., $\e^{\i t\tilde{H}}\in\mathcal{V}_2: t\in{\mathbb{R}}$ such that 
  \begin{align}
    \e^{\i tH} (|\psi\rangle \otimes |\eta\rangle) = \e^{\i t\tilde{H}} (|\psi\rangle \otimes |\eta\rangle),
  \end{align}
  for all $|\psi\>\in(\mathbb{C}^d)^{\otimes 3}$,
  where for $d\ge 4$, 
  \begin{align}\label{eq:8anc}
    |\eta\rangle=(\ket{0} \wedge \ket{1} \wedge \ket{2} \wedge \ket{3})^{\otimes 2}\in(\mathbb{C}^d)^{\otimes 8}\ 
  \end{align}
  is a 8-qudit state, and for $d=3$, 
  \begin{align}
    |\eta'\rangle=(\ket{0} \wedge \ket{1})^{\otimes 2} \otimes |00\>\in(\mathbb{C}^3)^{\otimes 6}\ ,
  \end{align}
  is a 6-qutrit state.
\end{lemma*}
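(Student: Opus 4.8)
The plan is to embed the three system qudits into the eleven‑qudit space via $\iota:|\psi\rangle\mapsto|\psi\rangle\otimes|\eta\rangle$ and to show that the image $\mathcal{K}=\iota\bigl((\mathbb{C}^d)^{\otimes 3}\bigr)$ is supported only on a set of $\SU(d)$‑charge sectors on which, by Marin's \cref{prop:marin}, the $2$‑qudit $\SU(d)$‑invariant gates are already semi‑universal. Once this is in hand, $\tilde H$ is built block‑by‑block to agree with $H$ on exactly those sectors, and the desired equality follows because both Hamiltonians preserve the invariant subspace spanned by $\mathcal{K}$.

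First I would locate the ancilla state in the isotypic decomposition. For $d\ge 4$ the vector $|0\rangle\wedge|1\rangle\wedge|2\rangle\wedge|3\rangle$ is the highest‑weight vector of the $\SU(d)$‑irrep labeled by the single column $\ydiag{1,1,1,1}$; hence its square $|\eta\rangle$ is a product of highest‑weight vectors, so it is itself a highest‑weight vector (of weight $(2,2,2,2,0,\dots,0)$) and therefore generates the irreducible subspace $\hilbert[Q]_\mu\subseteq(\mathbb{C}^d)^{\otimes 8}$ with $\mu=\ydiag{2,2,2,2}$. Consequently $\mathcal{K}\subseteq(\mathbb{C}^d)^{\otimes 3}\otimes\hilbert[Q]_\mu$, and by the Littlewood--Richardson rule every $\SU(d)$‑irrep $\nu$ occurring in $(\mathbb{C}^d)^{\otimes 3}\otimes\hilbert[Q]_\mu$ satisfies $\nu\supseteq\mu$. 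Writing $\Lambda^{\ast}$ for the set of Young diagrams with $11$ boxes and $\le d$ rows that contain $\mu$, we obtain $\mathcal{K}\subseteq\bigoplus_{\nu\in\Lambda^{\ast}}\hilbert_\nu$. For $d=3$ the identical computation applied to $|\eta'\rangle=(|0\rangle\wedge|1\rangle)^{\otimes 2}\otimes|00\rangle$ gives the highest‑weight vector of $\mu=\ydiag{4,2}$, and $\Lambda^{\ast}$ becomes the set of $9$‑box diagrams with $\le 3$ rows containing $\ydiag{4,2}$.

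The core of the argument is to verify that $\Lambda^{\ast}$ satisfies both hypotheses of \cref{prop:marin}. Condition (1), absence of L‑shaped (hook) diagrams, is automatic, since every $\nu\supseteq\mu$ has second row $\nu_2\ge 2$ while hooks have $\nu_2\le 1$. Condition (2), that $\nu'\notin\Lambda^{\ast}$ whenever $\nu\in\Lambda^{\ast}$, follows by box counting: if both $\nu$ and $\nu'$ contained $\ydiag{2,2,2,2}$, then $\nu'_4\ge 2$ forces $\nu_1\ge\nu_2\ge 4$, while $\nu\supseteq\mu$ gives $\nu_3,\nu_4\ge 2$, so $|\nu|\ge 4+4+2+2=12>11$, a contradiction; for $d=3$ the obstruction is even cleaner, because $\nu\supseteq\ydiag{4,2}$ has $\nu_1\ge 4$, so $\nu'$ has more than three rows and cannot lie in $\Lambda^{\ast}$. (Discarding diagrams with more than $d$ rows only shrinks $\Lambda^{\ast}$ and preserves both conditions.) Thus \cref{prop:marin} applies: the commutator subgroup of $\mathcal{V}_2$ projects onto $\prod_{\nu\in\Lambda^{\ast}}\SU(\hilbert[M]_\nu)$.

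Finally I would assemble $\tilde H$. Since $H$ is $\SU(d)$‑invariant it is block‑diagonal, $H=\bigoplus_\nu \ident_{\hilbert[Q]_\nu}\otimes h_\nu$, and the centerless condition $\Tr(H\Pi_\nu)=0$ gives $\Tr h_\nu=0$, so each $\i h_\nu$ lies in $\su(\hilbert[M]_\nu)$. By the semi‑universality established above there is a Hamiltonian $\tilde H$ with $\e^{\i t\tilde H}\in\mathcal{V}_2$ whose blocks satisfy $\tilde h_\nu=h_\nu$ for all $\nu\in\Lambda^{\ast}$ (its action on the other sectors being irrelevant). Because $H$ and $\tilde H$ then coincide on the invariant subspace $\bigoplus_{\nu\in\Lambda^{\ast}}\hilbert_\nu$, which contains $\mathcal{K}$ and is preserved by both, we conclude $\e^{\i tH}(|\psi\rangle\otimes|\eta\rangle)=\e^{\i t\tilde H}(|\psi\rangle\otimes|\eta\rangle)$ for all $|\psi\rangle$ and all $t$, as required. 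The main obstacle is the representation‑theoretic step identifying $|\eta\rangle$ as a highest‑weight vector of a single irrep whose diagram contains $\ydiag{2,2,2,2}$ (resp. $\ydiag{4,2}$); the rest is the combinatorial check of Marin's two criteria, and it is precisely here that the qudit counts $8$ and $6$ (equivalently, the totals $11$ and $9$) are what keep $|\nu|$ too small for a sector to be correlated with its conjugate.
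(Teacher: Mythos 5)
Your proposal is correct and follows essentially the same route as the paper: identify the ancilla state as lying in the single irrep $\ydiag{2,2,2,2}$ (resp.\ $\ydiag{4,2}$), observe that $|\psi\rangle\otimes|\eta\rangle$ is supported only on diagrams containing that shape, verify Marin's two criteria for this set of sectors, and transfer the blocks of $H$ via semi-universality. The only difference is cosmetic: where the paper explicitly enumerates the nine (resp.\ eight) resulting diagrams and checks the criteria by inspection, you give a uniform box-counting argument ($|\nu|\ge 12>11$, resp.\ $\nu_1\ge 4$ forcing too many rows in $\nu'$), which is a clean way to see why the qudit counts $11$ and $9$ suffice.
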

In addition to \cref{prop:marin}, the following proof of this lemma requires some elementary knowledge of manipulating Young diagrams, which can be found, for example, in \cite{georgi2018lie}.

\begin{proof}
  When $d\geq 4$, the 8-qudit state $|\eta\>$ is within the irrep labeled by the Young diagram $\ysub{2,2,2,2}$. Then, one can show that the 11-qudit state $|\psi\> \otimes |\eta\>$ only has components in the following 9 Young diagrams, which are obtained by adding $3$ boxes to $\ysub{2,2,2,2}$,
  \begin{align}
    \ydiag{5,2,2,2}~~ \ydiag{4,3,2,2}~~ \ydiag{4,2,2,2,1}~~ \ydiag{3,3,3,2}~~ \ydiag{3,3,2,2,1}~~ \ydiag{3,2,2,2,2}~~ \ydiag{3,2,2,2,1,1}~~ \ydiag{2,2,2,2,2,1}~~ \ydiag{2,2,2,2,1,1,1}~,
  \end{align}
  These Young diagrams will be denoted as $\Lambda$. Here we assume $d \geq 7$. When $d<7$, we just have to remove all the diagrams with more than $d$ rows. It can be easily checked that $\Lambda$ satisfies the above two criteria in \cref{prop:marin}. Therefore, 2-qudit Hamiltonians are semi-universal in $\Lambda$. In other words, for any centerless $\SU(d)$-invariant Hamiltonian $H$, there exists a Hamiltonian $\tilde{H}$ that is realizable with 2-qudit $\SU(d)$-invariant Hamiltonians, such that 
  \begin{align}
    \e^{\i tH} (|\psi\rangle \otimes |\eta\rangle) = \e^{\i t\tilde{H}} (|\psi\rangle \otimes |\eta\rangle).
  \end{align}
  
  { Next, we consider the 6-qutrit state $|\eta'\>\in(\mathbb{C}^3)^{\otimes 6}$. This state is within the irrep labeled by the Young diagram $\ysub{4,2}$.} By tensoring it with the 3 qutrits in arbitrary states, we find that the resulting set of Young diagrams is
  \begin{align}
    &\ydiag{7,2}~~ \ydiag{6,3}~~ \ydiag{6,2,1}~~ \ydiag{5,4}~~ \ydiag{5,3,1}~~ \ydiag{5,2,2}~~ \ydiag{4,4,1}~~ \ydiag{4,3,2}~,
  \end{align}
  which still satisfies the above two criteria in \cref{prop:marin}. By a similar argument, we know that $H$ is realizable using 2-qudit symmetric Hamiltonians.
\end{proof}

\newpage

\section{\texorpdfstring{$t$}{t}-designs: Proof of Proposition \ref{prop:design}} \label{App:tdesign}

In this section, we prove \cref{prop:design}. To prove this result, in addition to the semi-universality of 3-qudit gates, we also use the following fact: Let $\mu_0$ and $\mu_1$ be, respectively, the symmetric $\sydiag{6}$$\cdots$ and the standard irreps $\sydiag{5,1}$\raisebox{0.7ex}{$\cdots$} of $\mathbb{S}_n$. Then, the projection of $\mathcal{V}^{(n)}_3$ to these two sectors is equal to 
\be\label{sty}
\pi_{\mu_0,\mu_1}(\mathcal{V}^{(n)}_3)=\pi_{\mu_0,\mu_1}(\mathcal{V}^{(n)})\cong  \U(\mathcal{M}_{\mu_0})\times \U(\mathcal{M}_{\mu_1})  \ .
\ee
To see this note that $\mathcal{V}^{(n)}_3$ contains the subgroup $\exp(\i\phi_0 A_0) \exp(\i\phi_1 A_1) : \phi_0,\phi_1\in[0,2\pi)$, where
\begin{align}
  A_0=\frac{1}{n}B_2 - \frac{n-3}{2}\mathbb{I}\ , \ \  
  A_1=\frac{n-1}{2}\mathbb{I} - \frac{1}{n}B_2\ ,
\end{align}
and $B_2 = \frac{1}{2}\sum_{i,j} \P_{ij}$.  It can be easily seen that  $A_i \Pi_{\mu_j}= \delta_{i,j} \Pi_{\mu_i},$ for $i,j\in\{0,1\}$, which means when projected to the sectors $\mu_0$ and $\mu_1$, $\pi_{\mu_0,\mu_1}\{\exp(i\phi_0 A_0)\exp(\i\phi_1 A_1) : \phi_0,\phi_1\in[0,2\pi)\}\cong \U(1)\times \U(1)$. Together with $\mathcal{SV}^{(n)}\subset\mathcal{V}^{(n)}_3$ proves \cref{sty}.

Recall the decomposition $V=\bigoplus_{\lambda} (\mathbb{I}_{\mathcal{Q}_\lambda}\otimes v_\lambda)$. Applying this decomposition to both sides of \cref{design}, we see that this equation holds, if and only if 
\be\label{rew}
\mathbb{E}_{V\in\V_3^{(n)}}[\bigotimes_{i=1}^t v_{\lambda_i}\otimes  v^\ast_{\lambda'_i}]=\mathbb{E}_{V\in\V^{(n)}}[\bigotimes_{i=1}^t v_{\lambda_i}\otimes  v^\ast_{\lambda'_i}]\ ,
\ee
for all $\lambda_1,\cdots, \lambda_t, \lambda'_1,\cdots, \lambda'_t\in\Lambda_{n,d}$. To analyze these expectation values, we use the following standard fact: For the  expectation value with respect to the Haar measure over $\U(m)$, unless $r=r'$, 
$\mathbb{E}_{U} (U^{\otimes r}\otimes {U^\ast}^{\otimes r'})=0\ .$
This means that unless $\lambda'_1,\cdots,\lambda'_t$ is a permutation of $\lambda_1,\cdots,\lambda_t$ the right-hand side of \cref{rew} vanishes.  On the other hand, when they are permutations of each other, phases do not contribute to the expectation value, so semi-universality implies both sides are equal. 
Therefore, to guarantee \cref{rew}, we only need to consider cases where $\lambda'_1,\cdots,\lambda'_t$ is not a permutation of $\lambda_1,\cdots,\lambda_t$, and make sure the left-hand side of \cref{rew} vanishes.  

Next, we use the result of \cite{rasala1977minimal} which implies that 
when $n\geq 9$ and $d<n-1$, 
the three irreps of $\mathbb{S}_n$ with the lowest dimensions are (i) the symmetric irrep $\mu_0=\sydiag{6}$$\cdots$ with dimension $1$, (ii) the standard irrep $\mu_1$ with dimension $n-1$, and (iii) the irrep $\sydiag{4,2}$\raisebox{0.7ex}{$\cdots$} (two-row Young diagrams with two boxes in the second row) with dimension $\frac{1}{2}n(n-3)$.

Let $\Delta$ be the set of irreps in $\Lambda_{n,d}$ for which  the number of their occurrences in $\{\lambda_1,\cdots,\ \lambda_t\}$
and $\{\lambda'_1,\cdots,\ \lambda'_t\}$ are not equal.  There are two cases depending on whether $\Delta$ contains an element of $\Lambda_{n,d}\setminus\{\mu_0,\mu_1\}$.  First, assume it does contain such elements, denoted as $\delta$. Then, for all $t< \frac{1}{2}n(n-3)$,
\be
\mathbb{E}_{V\in\mathcal{SV}^{(n)}}[\bigotimes_{i=1}^t v_{\lambda_i}\otimes  v^\ast_{\lambda'_i}] \cong \mathbb{E}_{V\in\mathcal{SV}^{(n)}}[v_\delta^{\otimes n_\delta} \otimes v_{\delta}^{\ast \otimes n_\delta'} \otimes V_\mu \otimes V_\nu^\ast] = 0\  ,
\ee
where in the second expression we have separated the occurrences of $\delta$ from all other irreps, denoted $V_\mu \otimes V_\nu^\ast$, and $n_\delta$ and $n_\delta'$ are the numbers of times $\delta$ shows up in $\lambda_1, \dots, \lambda_t$ and $\lambda_1', \dots, \lambda_t'$, respectively. To show that this expectation value vanishes, it suffices to show that 
$\mathbb{E}_{v_{\delta}\in\SU(\mathcal{M}_\delta)} v_\delta^{\otimes n_{\delta}}\otimes {(v_\delta)^\ast}^{\otimes n_{\delta}'} = 0$.  
Since $\dim (\mathcal{M}_\delta) \ge \frac{1}{2}n(n-3)$, we have  $\dim (\mathcal{M}_\delta) > t \ge n_{\delta}, n'_{\delta}$. This, in turns, implies $|n_\delta - n'_\delta| < \dim (\mathcal{M}_\delta)$. We also know that because $\delta\in\Delta $,   $n_{\delta}\neq  n'_{\delta}$.  Recall that with the respect to the Haar measure over $\SU(m)$, we have $\mathbb{E}_{U\in\SU(m)} (U^{\otimes r}\otimes {U^\ast}^{\otimes r'})\neq 0$, if and only if  $r=r' (\mod  m)$. We conclude that 
$\mathbb{E}_{v_{\delta}\in\SU(\mathcal{M}_\delta)} v_\delta^{\otimes n_{\delta}}\otimes {(v_\delta)^\ast}^{\otimes n'_{\delta}}=0$,  which in turn proves the above equality. Finally, we note that because $\mathcal{SV}^{(n)}$ is a subgroup of $\V_3^{(n)}$ (semi-universality), the above identity implies that 
\be
\mathbb{E}_{V\in\V_3^{(n)}}[\bigotimes_{i=1}^t v_{\lambda_i}\otimes  v^\ast_{\lambda'_i}]=\mathbb{E}_{V\in\mathcal{SV}^{(n)}}[\bigotimes_{i=1}^t v_{\lambda_i}\otimes  v^\ast_{\lambda'_i}]=0\  ,
\ee
Next, we focus on the second case, i.e., when $\Delta$ does not have any element if $\Lambda_{n,k}\setminus\{\mu_0,\mu_1\}$, which means $\Delta\subseteq \{\mu_0,\mu_1\}$. In this case there exists $\delta\in\{\mu_0,\mu_1\}$ such that the number of its occurrence in  $\{\lambda_1,\cdots,\lambda_t\}$ and 
$\{\lambda'_1,\cdots,\lambda'_t\}$, denoted as $n_{\delta}$ and $n'_{\delta}$ are not equal. Then, in the case applying \cref{sty} we have 
\be
\mathbb{E}_{V\in\V_3^{(n)}}   v_{\delta}^{\otimes n_{\delta}}\otimes {(v_{\delta})^\ast}^{\otimes n'_{\delta}}=\mathbb{E}_{V\in\mathcal{V}^{(n)}}   v_{\delta}^{\otimes n_{\delta}}\otimes {(v_{\delta})^\ast}^{\otimes n'_{\delta}}= \mathbb{E}_{v_\delta \in \U(\mathcal{M}_\delta)}   v_{\delta}^{\otimes n_{\delta}}\otimes {(v_{\delta})^\ast}^{\otimes n'_{\delta}}=0\ .
\ee
Therefore, we conclude that for all $t< n (n - 3) / 2$, \cref{rew} holds for all $\{\lambda_1,\cdots,\lambda_t\}$ and 
$\{\lambda'_1,\cdots,\lambda'_t\}$. This completes the proof of \cref{prop:design}.

\newpage

\section{Techniques for controllability}\label{sec:genSUprf}

In this appendix, we describe techniques for controllability using the notion of isolation to subspaces of a Hilbert space. In \cref{sec:useiso} we define ``isolation'' and we discuss some of its uses. In \cref{sec:spblocks} we show that one of the techniques described in \cref{sec:useiso} can fail when the subspaces are two-dimensional. \Cref{sec:strat} provides strategies for generating isolated elements of Lie algebras.

\subsection{Using isolation}\label{sec:useiso}

In this section we describe some uses of isolating Lie algebra elements to off-diagonal blocks. First, we discuss what we mean by ``isolation''. Consider a finite-dimensional Hilbert space with an orthogonal decomposition $\hilbert = \bigoplus_{i = 1}^m \hilbert_i$ and a group of unitaries $\mathcal{W} \subseteq \U(\hilbert)$. A nonzero element $X \in \lie{w}$ of the Lie algebra of $\mathcal{W}$,
\begin{equation}
  \lie{w} = \set{X \in \hilbert[L](\hilbert) \given \forall t \in \real, \e^{t X} \in \mathcal{W}}, \tag{re \ref{eq:lie}}
\end{equation}
is called (off-diagonal) \emph{isolated} to block $i$ if
\begin{equation}\label{eq:one-isolated}
  X = \Pi_i X + X \Pi_i.
\end{equation}
Note that this implies that $\Pi_i X \Pi_i = 0$ by applying $\Pi_i$ on the right and left to both sides of \cref{eq:one-isolated}. We also consider isolation to two blocks, say $i$ and $j$: $X \neq 0$ is \emph{isolated} to these if
\begin{equation}\label{eq:two-isolated}
  X = \Pi_i X \Pi_j + \Pi_j X \Pi_i.
\end{equation}

Isolated elements in the Lie algebra are particularly helpful when there are nontrivial elements $U_i \in \mathcal{W}$ which act as the identity on $\hilbert_j$ for $j \neq i$. Let $\mathcal{W}_i \subseteq \mathcal{W}$ be the subgroup which acts nontrivially only on the $i$th block, defined via
\begin{equation}\label{eq:isub}
  \mathcal{W}_i = \set{U_i \in \mathcal{W} \given \Pi_j U_i = U_i \Pi_j = U_i^{\delta_{ij}} \Pi_j},
\end{equation}
where $\delta_{ij}$ is the Kronecker delta, i.e.,
\begin{equation}
  U_i^{\delta_{ij}} \Pi_j \coloneqq \begin{cases} U_i \Pi_i & j = i \\ \Pi_j & j \neq i. \end{cases}
\end{equation}
Note that if $W \in \mathcal{W}$ and $X \in \lie{w}$ then $W X W^\dagger \in \lie{w}$ since for all $t \in \real$
\begin{equation}\label{eq:adjoint}
  \e^{t W X W^\dagger} = W \e^{t X} W^\dagger \in \mathcal{W}.
\end{equation}
If $X \in \lie{w}$ is isolated to the $i$th block (i.e. it satisfies \cref{eq:one-isolated}), then $\Pi_i X U_i=\Pi_i X$, which, in turn, implies 
\begin{equation}
  U_i X U_i^\dagger = U_i \Pi_i X + X \Pi_i U_i^\dagger \in \lie{w}\ ,
\end{equation}

Furthermore, $\lie{w}$ is closed under real linear combinations, so for all
\begin{equation}
  B \in \operatorname{span}_\real \set{U_i \Pi_i \given U_i \in \mathcal{W}_i} = \operatorname{span}_\real \mathcal{W}_i \Pi_i,
\end{equation}
it holds that $B X + X B^\dagger \in \lie{w}$. In other words,
\begin{equation}\label{eq:span-iso}
  \set{B X + X B^\dagger \given B \in \operatorname{span}_\real \mathcal{W}_i \Pi_i} = \operatorname{span}_\real \set{U_i X U_i^\dagger \given U_i \in \mathcal{W}_i} \subseteq \lie{w}.
\end{equation}
As long as there is some $U_i \in \mathcal{W}_i$ such that $U_i X U_i^\dagger \neq X$, this can be used to generate new elements of $\mathcal{W}$ which act only in the subspace $\hilbert_i \oplus X \hilbert_i$ (where $X \hilbert_i$ is the image of $X \Pi_i$; note that $X \hilbert_i$ consists entirely of vectors orthogonal to $\hilbert_i$ since $\Pi_i X \Pi_i = 0$).

For instance, suppose that 
\begin{equation}
  \operatorname{span}_\real \set{\Pi_i U_i \given U_i \in \mathcal{W}_i} = \operatorname{span}_\complex \set{\Pi_i U_i \given U_i \in \mathcal{W}_i}
\end{equation}
(this implies that $\mathcal{W}_i$ does not act on $\hilbert_i$ by orthogonal or symplectic matrices), and that the action of $\mathcal{W}_i$ on $\hilbert_i$ is irreducible, so that 
\begin{equation}\label{eq:irrspan}
  \operatorname{span}_\complex \set{\Pi_i U_i \given U_i \in \mathcal{W}_i} = \hilbert[L](\hilbert_i) = \set{B \in \hilbert[L](\hilbert) \given B = \Pi_i B \Pi_i}.
\end{equation}
i.e. the span of the unitaries in $\mathcal{W}_i$, restricted to $\hilbert_i$, consists of all linear operators on $\hilbert_i$. Then the real span includes all rank-one projectors on $\hilbert_i$, and so \cref{eq:span-iso} includes all operators of the form $\qout{\psi}{\varphi} - \qout{\varphi}{\psi}$ where $\ket{\psi} \in \hilbert_i$ and $\ket{\varphi} \in X \hilbert_i$. Then, since this is a generating set for $\su(\hilbert_i \oplus X \hilbert_i) \subseteq \lie{w}$ (see \cref{lem:twoblockoff}),
\begin{equation}\label{eq:block-cntrl}
  \SU(\hilbert_i \oplus X \hilbert_i) \subseteq \mathcal{W}.
\end{equation}

Applying this technique in the section we prove the following lemma, which is used in \cref{sec:lem2} for the proof of \cref{thm:mulblocks}.
\begin{lemma}\label{lem:iso-oneblock}
  Let $\mathcal{W} \subseteq \U(\hilbert)$ be a group of unitaries on $\hilbert = \bigoplus_{i = 1}^m \hilbert_i$. Suppose that, for some $i = 1, \dots, m$ the dimension $\dim \hilbert_i \geq 3$ and $\SU(\hilbert_i) \subseteq \mathcal{W}_i$, where $\mathcal{W}_i \subseteq \mathcal{W}$ is the subgroup acting nontrivially only on $\hilbert_i$, defined in \cref{eq:isub}. Let $\Pi_i$ be the Hermitian projector to $\hilbert_i$ and $\Pi_\perp$ the projector to its orthogonal complement. If $A$ is an (anti-Hermitian) operator such that $\e^{t A} \in \mathcal{W}$ for all $t \in \real$, then for all $B = \Pi_i B \Pi_i$ and $t \in \real$,
  \begin{equation}\label{eq:isoLie}
    \exp t (B A \Pi_\perp + \Pi_\perp  A B^\dagger) \in \mathcal{W}.
  \end{equation}
\end{lemma}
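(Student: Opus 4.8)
The plan is to exhibit the operator $X_B := B A \Pi_\perp + \Pi_\perp A B^\dagger$ as an element of the Lie algebra $\lie{w}$ of \cref{eq:lie}; since $X_B$ is anti-Hermitian (using $A^\dagger = -A$ one checks $X_B^\dagger = -X_B$) and depends real-linearly on $B$, once $X_B \in \lie{w}$ is established the conclusion $\exp(tX_B)\in\mathcal{W}$ is immediate from the definition of $\lie{w}$. Throughout I use that $\lie{w}$ is a closed real subspace of $\hilbert[L](\hilbert)$ invariant under $W(\cdot)W^\dagger$ for $W \in \mathcal{W}$ (\cref{eq:adjoint}), that $A \in \lie{w}$ by hypothesis, and that $\su(\hilbert_i) \subseteq \lie{w}$ because $\SU(\hilbert_i) \subseteq \mathcal{W}_i \subseteq \mathcal{W}$. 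The argument has two steps: first isolate the off-diagonal part of $A$ to block $i$, then feed this isolated element into the general mechanism of \cref{eq:span-iso}.

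For the isolation step I would twirl $A$ over the subgroup $\SU(\hilbert_i)$: set $P(A) = \int_{\SU(\hilbert_i)} U A U^\dagger\,dU$, which lies in $\lie{w}$ because the integrand does and $\lie{w}$ is a finite-dimensional (hence closed) real subspace, so the Haar average is a limit of linear combinations of elements of $\lie{w}$. Evaluating $P$ block by block via Schur's lemma — the conjugation twirl projects the $\hilbert_i$-block onto its scalar part, while $\int_{\SU(\hilbert_i)} U\,dU = 0$ kills the two off-diagonal blocks and leaves the complementary block untouched — gives
\[
P(A) = \frac{\Tr(\Pi_i A \Pi_i)}{\dim \hilbert_i}\,\Pi_i + \Pi_\perp A \Pi_\perp.
\]
Subtracting $P(A) \in \lie{w}$ from $A$, and then subtracting the traceless piece $\Pi_i A \Pi_i - \tfrac{\Tr(\Pi_i A \Pi_i)}{\dim \hilbert_i}\Pi_i \in \su(\hilbert_i) \subseteq \lie{w}$, leaves exactly $X_{\Pi_i} := \Pi_i A \Pi_\perp + \Pi_\perp A \Pi_i \in \lie{w}$. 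This $X_{\Pi_i}$ satisfies $\Pi_i X_{\Pi_i}\Pi_i = 0$ and $X_{\Pi_i} = \Pi_i X_{\Pi_i} + X_{\Pi_i}\Pi_i$, i.e. it is isolated to block $i$ in the sense of \cref{eq:one-isolated}.

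Then I apply \cref{eq:span-iso} to $X = X_{\Pi_i}$: for every $B \in \operatorname{span}_\real \mathcal{W}_i \Pi_i$ one has $B X_{\Pi_i} + X_{\Pi_i}B^\dagger \in \lie{w}$, and a direct computation using $B = \Pi_i B \Pi_i$ (so that $B\Pi_\perp = 0$ and $\Pi_\perp B^\dagger = 0$) collapses this expression to precisely $B A \Pi_\perp + \Pi_\perp A B^\dagger = X_B$. It remains to see that $B$ may be taken to be an arbitrary operator supported on $\hilbert_i$. This is the crux, and the place where $\dim \hilbert_i \geq 3$ is used: because $\SU(\hilbert_i) \subseteq \mathcal{W}_i$, I need $\operatorname{span}_\real \SU(\hilbert_i)\Pi_i = \hilbert[L](\hilbert_i)$, i.e. the real linear span of the special unitary group must be the full matrix algebra (\cref{eq:irrspan}). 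For $\dim \hilbert_i \geq 3$ the defining representation of $\SU(\hilbert_i)$ is of complex type (not self-dual), so its real span coincides with its complex span, which by irreducibility is all of $\hilbert[L](\hilbert_i)$; this fails for $\dim \hilbert_i = 2$, where $\SU(2)$ is quaternionic and spans only a four-real-dimensional algebra (the same phenomenon behind \cref{rmk:sp(2)}). Granting this, $X_B \in \lie{w}$ for every $B = \Pi_i B \Pi_i$, and exponentiating yields \cref{eq:isoLie}. I expect the only genuine obstacle to be this real-span identification: justifying the Frobenius–Schur ``complex type'' claim — equivalently, ruling out an orthogonal or symplectic invariant form on $\hilbert_i$ — is what makes the hypothesis $\dim \hilbert_i \geq 3$ essential.
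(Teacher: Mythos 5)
Your proof is correct and coincides with the second of the three proofs the paper gives for this lemma (the ``integration over uniform Haar measure'' strategy): twirl $A$ over $\SU(\hilbert_i)$ to isolate $\Pi_i A \Pi_\perp + \Pi_\perp A \Pi_i \in \lie{w}$, then feed it into \cref{eq:span-iso} with $\operatorname{span}_\real \SU(\hilbert_i)\Pi_i = \hilbert[L](\hilbert_i)$. The only cosmetic difference is that you justify this last span identity via the Frobenius--Schur type (complex, not self-dual, for $\dim\hilbert_i \geq 3$), whereas the paper argues more elementarily from the non-real central element $\e^{\i 2\pi/d}\ident \in \SU(d)$ in \cref{eq:d3span}; both are valid.
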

After discussing isolation techniques, we prove this lemma in \cref{app:applications}. In the next subsection, we discuss why the assumption that $\dim \hilbert_i \geq 3$ is necessary.

It is worth noting that, if $\mathcal{W}_i$ has a nontrivial Lie algebra, denoted $\lie{w}_i$, then an isolated $X \in \lie{w}$ can possibly (as long as it does not commute with $\lie{w}_i$) be used to generate a larger subalgebra $\angles{\lie{w}_i, X} \subseteq \lie{w}$.

\subsection{Why two-dimensional blocks are special}\label{sec:spblocks}

The assumption that $\operatorname{span}_\real \mathcal{W}_i \Pi_i = \operatorname{span}_\complex \mathcal{W}_i \Pi_i$, discussed above, is not satisfied for a particularly important group: when $\mathcal{W}_i = \SU(\hilbert_i)$ and $\dim \hilbert_i = 2$. This is related to the exceptional isomorphism $\SU(2) \cong \Sp(1)$ with the compact symplectic group \cite{goodman2009symmetry}.

One way of seeing this is the following. Every element $U \in \SU(2)$ can be written
\begin{equation}
  U = \e^{\i \theta (\hat{\vct{n}} \cdot \vct{\sigma})} = \cos \theta \ident +i \sin \theta (\hat{\vct{n}} \cdot \vct{\sigma})\ ,
\end{equation}
where $\vct{\sigma}$ is the Pauli vector operator, $\hat{\vct{n}} \in \real^3$ is a unit vector, and $\theta \in [0, 2 \pi)$. This implies
\begin{equation}
  \operatorname{span}_\real \SU(2) = \operatorname{span}_\real \set{\ident, i \sigma_x, i \sigma_y, i \sigma_z}.
\end{equation}
Notably, the traceless part of any $B \in \operatorname{span}_\real \SU(2)$ is anti-Hermitian. Hence $\operatorname{span}_\real \SU(2)$ is clearly a proper \emph{real} subspace of $\hilbert[L](\complex^2)$, the space of linear operators over $\complex^2$:
\begin{subequations}
  \begin{equation}
    \operatorname{span}_\real \SU(2) \neq \operatorname{span}_\complex \SU(2) = \hilbert[L](\complex^2).
  \end{equation}
  In fact, with the identity $(\hat{\vct{n}} \cdot \vct{\sigma}) (\hat{\vct{n}}' \cdot \vct{\sigma}) = \hat{\vct{n}} \cdot \hat{\vct{n}}' \ident + i (\hat{\vct{n}} \times \hat{\vct{n}}') \cdot \vct{\sigma}$, it can be seen that $\operatorname{span}_\real \SU(2) \cong \mathbb{H}$ where $\mathbb{H}$ is the associative algebra of quaternions.

  On the other hand, for $\SU(d)$ with $d > 2$, the element $\e^{\i 2 \pi / d} \ident \in \SU(d)$. It follows that
  \begin{equation}\label{eq:d3span}
    d \geq 3 \implies \operatorname{span}_\real \SU(d) = \operatorname{span}_\complex \SU(d),
  \end{equation}
  since any complex scalar is a real number plus a real multiple of the root of unity $\e^{\i 2 \pi / d}$. Similarly,
  \begin{equation}
    \operatorname{span}_\real \U(2) = \operatorname{span}_\complex \U(2),
  \end{equation}
\end{subequations}
so this phenomenon does not occur for $\U(2)$ subgroups in the diagonal blocks of $\mathcal{W}$, or, more generally, for any subgroups that contain phases that are not real.

\subsection{Strategies for isolation}\label{sec:strat}

In this section we consider a number of strategies for isolating Lie algebra elements to off-diagonal blocks of $\hilbert = \bigoplus_{i = 1}^m \hilbert_i$. The strategies we discuss are summarized in \cref{fig:isolate}. For the following, fix a group of unitaries $\mathcal{W} \subseteq \U(\hilbert)$ and let $\mathcal{W}_i \subseteq \mathcal{W}$ be the subgroups defined in \cref{eq:isub}, which necessarily act as the identity on $\hilbert_j$ for $j \neq i$.

\begin{table*}
  \begin{tblr}{hline{2-Y}={solid}, vline{2-Y}={solid}, colspec={ccQ[l,m]c}, column{3}={5cm}, measure=vbox, stretch=-1}
    $\displaystyle \int_{\mathcal{V}_i} \mathop{}\!\mathrm{d} U_i \, f(U_i) U_i X U_i^\dagger$ & & assumptions & \\
    $D_{U_i}(X) = U_i X U_i^\dagger - X$ &
    $\begin{+pmatrix}[hline{2-Y}={solid}, vline{2-Y}={solid}, cells={red!30}, row{3}={blue!20}, column{3}={blue!20}]
      0 & & D_{U_i}(X_{1i}) & \\
      & \ddots & \vdots & \\
      D_{U_i}(X_{i1}) & \cdots & D_{U_i}(X_{ii}) & \cdots \\
      & & \vdots & \ddots
    \end{+pmatrix}$ &
    \setlength{\leftmargini}{.3cm} \begin{itemize}[nosep]
    \item $\mathcal{V}_i$ nontrivial
    \end{itemize} &
    \Cref{eq:DUi} \\
    $(\mathcal{I} - \mathcal{E}_i)(X)$ &
    $\begin{+pmatrix}[hline{2-Y}={solid}, vline{2-Y}={solid}, cells={red!30}, row{3}={white}, column{3}={white}]
      0 & & X_{1i} & \\
      & \ddots & \vdots & \\
      X_{i1} & \cdots & \SetCell{blue!20} X_{ii} - \mathcal{E}_i(X_{ii}) & \cdots \\
      & & \vdots & \ddots
    \end{+pmatrix}$ &
    \setlength{\leftmargini}{.3cm} \begin{itemize}[nosep]
    \item $\mathcal{V}_i$ is compact, with no nonzero invariant vectors in $\hilbert_i$
    \item $\mathcal{E}_i$ is uniform Haar integral over $\mathcal{V}_i$
    \end{itemize} &
    \Cref{eq:mHaar} \\
    $f(U_i) = \operatorname{tr} (B^\dagger U_i + B U_i^\dagger)$ &
    $\begin{+pmatrix}[hline{2-Y}={solid}, vline{2-Y}={solid}, cells={red!30}, row{3}={blue!20}, column{3}={blue!20}]
      0 & & X_{1i} B^\dagger & \\
      & \ddots & \vdots & \\
      B X_{i1} & \cdots & \SetCell{red!30} 0 & \cdots \\
      & & \vdots & \ddots
    \end{+pmatrix}$ &
    \setlength{\leftmargini}{.3cm} \begin{itemize}[nosep]
    \item $\mathcal{V}_i$ is compact, acts nontrivially and irreducibly on $\hilbert_i$
    \item $\hilbert_i \otimes \hilbert_i$ does not contain subrepresentation isomorphic to trivial or $\hilbert_i$
    \item $B = \Pi_i B \Pi_i$
    \end{itemize} &
    \Cref{eq:character} \\
    $(D_{U_j} \circ D_{U_i})(X)$ &
    $\begin{+pmatrix}[hline{2-Y}={solid}, vline{2-Y}={solid}, cells={red!30}]
      \ddots & & & & \\
      & 0 & & \SetCell{blue!20} \tilde{X}_{ij} & \\
      & & \ddots & & \\
      & \SetCell{blue!20} \tilde{X}_{ji} & & 0 & \\
      & & & & \ddots
    \end{+pmatrix}$ &
    \setlength{\leftmargini}{.3cm} \begin{itemize}[nosep]
    \item $\mathcal{V}_i$ and $\mathcal{V}_j$ nontrivial
    \item $\tilde{X}_{ij} = (\ident - U_i) X_{ij} (\ident - U_j^\dagger)$ and $\tilde{X}_{ji} = -\tilde{X}_{ij}^\dagger$
    \end{itemize} &
    \Cref{eq:Doff} \\
    $((\mathcal{I} - \mathcal{E}_j) \circ (\mathcal{I} - \mathcal{E}_i))(X)$ &
    $\begin{+pmatrix}[hline{2-Y}={solid}, vline{2-Y}={solid}, cells={red!30}]
      \ddots & & & & \\
      & 0 & & \SetCell{white} X_{ij} & \\
      & & \ddots & & \\
      & \SetCell{white} X_{ji} & & 0 & \\
      & & & & \ddots
    \end{+pmatrix}$ &
    \setlength{\leftmargini}{.3cm} \begin{itemize}[nosep]
    \item $\mathcal{V}_i$ and $\mathcal{V}_j$ are compact with no nonzero invariant vectors in $\hilbert_i$ and $\hilbert_j$, respectively
    \item $\mathcal{E}_i$ and $\mathcal{E}_j$ are uniform Haar integrals over $\mathcal{V}_i$ and $\mathcal{V}_j$, respectively
    \end{itemize} &
    \Cref{eq:Haaroff}
  \end{tblr}
  \caption{This table shows various isolation strategies for block $i$ of $\hilbert = \bigoplus_{j = 1}^m \hilbert_j$. The first three rows can be understood as an integral weighted by a real-valued function $f$ over the group $\mathcal{V}_i \subseteq \mathcal{W}_i \subseteq \U(\hilbert_i)$ which acts nontrivially on $\hilbert_i$ and trivially on $\hilbert_j$ for $j \neq i$ (really $f$ is to be understood as a distribution: for instance the Dirac delta distribution $\delta_{U_i}$, satisfying $\int \diff U_i' \, \delta_{U_i} K(U_i') = K(U_i)$, is used in the first row). The last two rows have two such integrals, over $\mathcal{V}_i$ and $\mathcal{V}_j$. The color scheme for blocks has (1) white background for unchanged blocks; \textcolor{red}{(2)} red background for blocks which necessarily become $0$; and \textcolor{blue}{(3)} blue background for blocks which may be changed but are not necessarily $0$ (the exact value depending on $X$ and $f$).\label{fig:isolate}}
\end{table*}

\subsubsection{Using one nontrivial group element: \texorpdfstring{$D_{U_i}$}{D}}\label{sec:DU}

For $X \in \lie{w}$, let $X_{jk} = \Pi_j X \Pi_k$. Then for $U_i \in \mathcal{W}_i$,
\begin{equation}\label{eq:adUi}
  U_i X U_i^\dagger = \sum_{j, k} U_i X_{jk} U_i^\dagger = \sum_{j, k} (U_i)^{\delta_{i j}} X_{jk} (U_i^\dagger)^{\delta_{ik}} \in \lie{w},
\end{equation}
where $X_{ij} = \Pi_i X \Pi_j$ and $\delta_{ij}$ is the Kronecker delta. Written in block-matrix form,
\begin{equation}
  U_i X U_i^\dagger = \begin{+pmatrix}[row{3}={blue!20},column{3}={blue!20}]
    X_{11} & \vline & \vline X_{1i} U_i^\dagger & \vline \\ \hline
    & \ddots & \vdots & \\ \hline
    U_i X_{i1} & \cdots & U_i X_{ii} U_i^\dagger & \cdots \\ \hline
    & & \vdots & \ddots
  \end{+pmatrix} \in \lie{w},
\end{equation}
where the blue-shaded region denotes all the blocks which are possibly distinct from $X$, namely those in the $i$th row and $i$th column. This immediately implies that subtracting off $X$ will set all non-shaded blocks to zero. Suggested by this, for any unitary $U$ and operator $X$, define
\begin{equation}
  D_U(X) = U X U^\dagger - X.
\end{equation}
To first order, the superoperator $D_U$ can be understood as a derivation: if $U = \e^{\varepsilon B}$ then expanding in $\varepsilon$ gives $D_U(X) = \varepsilon [B, X] + O(\varepsilon^2)$.

If $X \in \lie{w}$ and $U \in \mathcal{W}$, then $D_U(X) \in \lie{w}$ since it is a sum of terms of $\lie{w}$ with real coefficients. Hence $D_U$ can possibly be used as a tool for isolating to particular blocks. To reiterate, $D_{U_i}(X)$ has a large number of blocks which are guaranteed to be zero. In block-matrix form, $D_{U_i}(X)$ can be nonzero only in the blue-shaded region of
\begin{equation}\label{eq:DUi}
  D_{U_i}(X) = \begin{+pmatrix}[row{3}={blue!20},column{3}={blue!20}]
    \SetCell{red!30} 0 & \SetCell{red!30} \vline & \vline D_{U_i}(X_{1i}) & \SetCell{red!30} \vline \\ \hline
    \SetCell{red!30} & \SetCell{red!30} \ddots & \vdots & \SetCell{red!30} \\ \hline
    D_{U_i}(X_{i1}) & \cdots & D_{U_i}(X_{ii}) & \cdots \\ \hline
    \SetCell{red!30} & \SetCell{red!30} & \vdots & \SetCell{red!30} \ddots
  \end{+pmatrix} \in \lie{w}.
\end{equation}
Furthermore, when $j \neq i$, $D_{U_i}(X_{ij}) = (U_i - \ident) X_{ij}$, and $D_{U_i}(X_{ji}) = -D_{U_i}(X_{ij})$ since $X \in \lie{w}$ is anti-Hermitian. If  $X_{ij} \neq 0$  and if $\hilbert_i$ does not contain any non-zero invariant vector under  
 $\mathcal{W}_i$, which is always the case if $\mathcal{W}_i$ acts nontrivially and irreducibly on $\hilbert_i$,  then it is always possible to find $U_i \in \mathcal{W}_i$ such that $U_i X_{ij} \neq X_{ij}$, so that $D_{U_i}(X_{ij}) \neq 0$. 
 (Recall that if $\mathcal{W}_i$ acts nontrivially and irreducibly on $\hilbert_i$, for any vector $0 \neq |\psi\rangle \in \hilbert_i$, there exists $U_i\in \mathcal{W}_i$, such that $U_i|\psi\rangle\neq |\psi\rangle$. This, in turn, implies that for any nonzero operator $X_{ij} = \Pi_i X_{ij}$, there exists $U_i\in \mathcal{W}_i$ such that 
 $(U_i - \ident) X_{ij}\neq 0$.)

The utility of $D_{U_i}$ comes in two forms:
\begin{enumerate}[i)]
\item First, the possibly-remaining block-diagonal element $D_{U_i}(X_{ii}) \in \su(\hilbert_i)$ since it is traceless and anti-Hermitian, although any of the off-diagonal blocks in the $i$th row or $i$th column may be zero or nonzero. If $\dim \hilbert_i = 1$, then, if nonzero, this implies $D_{U_i}(X) \in \lie{w}$ is isolated to block $i$. Otherwise, if $\su(\hilbert_i) \subset \lie{w}$ then $D_{U_i}(X) - D_{U_i}(X_{ii}) \in \lie{w}$ is purely off-diagonal (i.e. isolated).
\item Second, by sequentially applying $D_{U_j} \circ D_{U_i}$ for distinct blocks $i \neq j$, we obtain an operator which, when nonzero, is guaranteed to be purely off-diagonal and isolated to just these two blocks. (It is worth noting that, since $[U_i, U_j] = 0$, also $[D_{U_i}, D_{U_j}] = 0$, i.e. it does not matter in which order they are applied.)
\end{enumerate}
To clarify the second point, note that the only surviving terms have to be in the overlap of the blue-shaded regions corresponding to $D_{U_i}$ and $D_{U_j}$,
\begin{equation}\label{eq:Doff}
  D_{U_j} \circ D_{U_i}(X) = \begin{+pmatrix}[cells={red!30}]
    \ddots & \vline & \vline & \vline & \vline \\ \hline
    & 0 & & \SetCell{blue!20} \tilde{X}_{ij} & \\ \hline
    & & \ddots & & \\ \hline
    & \SetCell{blue!20} \tilde{X}_{ji} & & 0 & \\ \hline
    & & & & \ddots
  \end{+pmatrix} \in \lie{w}.
\end{equation}
That is, only $D_{U_j} \circ D_{U_i}(X_{ij})$ and $D_{U_j} \circ D_{U_i}(X_{ji})$ can possibly be nonzero. Once again, if $X_{ij} \neq 0$ and $U_i$ and $U_j$ come from nontrivial groups which do not have nonzero invariant vectors in their respective subspaces, the group elements can always be chosen so that the result is nonzero.

\subsubsection{Integration over the uniform Haar measure}\label{sec:haar}

The scheme using $D_{U_i}$ ensures that the only nonzero blocks are in the $i$th row and $i$th column; however, any of the blocks may be changed (possibly becoming zero). We next discuss a technique using the Haar measure which does the same, except that it leaves all the off-diagonal blocks invariant. First, in full generality, we notice that integrating the adjoint action (\cref{eq:adjoint}) of $\mathcal{W}$ on its Lie algebra $\lie{w}$ over group elements with real coefficients stays in the Lie algebra, since it is a closed (real) subspace of the set of all operators.

\begin{proposition}\label{lem:haar}
  Let $\mathcal{W}$ be a matrix Lie group and $\mathcal{S} \subseteq \mathcal{W}$ a subset with a measure $\mathrm{d} s$ (e.g. a finite set with the counting measure), and let $f(s)$ be a real-valued integrable function (or, more generally, a real-valued distribution) on $\mathcal{S}$. For any matrix $A$, if $\e^{t A} \in \mathcal{W}$ for all $t \in \real$, then $\e^{t \tilde{A}} \in \mathcal{W}$ for all $t \in \real$, where
  \begin{equation}
    \tilde{A} = \int \mathop{}\!\mathrm{d} s \, f(s) s A s^{-1}.
  \end{equation}
\end{proposition}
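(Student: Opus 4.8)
The plan is to show that $\tilde{A}$ lies in the Lie algebra $\lie{w}$ of $\mathcal{W}$, defined in \cref{eq:lie}, after which the conclusion $\e^{t\tilde{A}} \in \mathcal{W}$ for all $t \in \real$ is immediate from the very definition of $\lie{w}$. The whole argument rests on two structural facts about $\lie{w}$: that it is invariant under the adjoint action of $\mathcal{W}$ (\cref{eq:adjoint}), and that it is a \emph{closed real-linear} subspace of the finite-dimensional vector space $\hilbert[L](\hilbert)$ of all matrices.

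First I would observe that the hypothesis $\e^{tA} \in \mathcal{W}$ for all $t$ is exactly the statement $A \in \lie{w}$. Next, for each fixed $s \in \mathcal{S} \subseteq \mathcal{W}$, I would apply the adjoint-action identity $\e^{t s A s^{-1}} = s \e^{tA} s^{-1} \in \mathcal{W}$, valid for all $t \in \real$ because both $s$ and $\e^{tA}$ belong to the group $\mathcal{W}$; this shows $s A s^{-1} \in \lie{w}$. Thus the integrand $f(s)\, s A s^{-1}$ is, for every $s$, a real scalar multiple of an element of $\lie{w}$, and hence itself lies in $\lie{w}$, since $\lie{w}$ is closed under multiplication by real scalars.

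The final and only delicate step is to pass from the pointwise membership of the integrand to membership of the integral $\tilde{A}$ itself. I would write the integral as a limit of Riemann sums $\sum_k \Delta s_k\, f(s_k)\, s_k A s_k^{-1}$; each such sum is a finite real-linear combination of elements of $\lie{w}$ and therefore lies in $\lie{w}$, as $\lie{w}$ is a real vector space. Because $\lie{w}$ is a linear subspace of the finite-dimensional space $\hilbert[L](\hilbert)$, it is automatically topologically closed, so the convergent limit $\tilde{A}$ of these sums remains in $\lie{w}$. For the distributional case, e.g.\ a Dirac delta $\delta_{s_0}$, the integral collapses to the single element $f(s_0)\, s_0 A s_0^{-1} \in \lie{w}$, and more general distributions are treated as limits of such, again invoking closedness. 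The main obstacle is precisely making this convergence legitimate — that the integral exists and that closedness applies — but this is guaranteed by finite-dimensionality together with the integrability of $f$ and the boundedness of $s \mapsto s A s^{-1}$ on the relevant measure space, so no infinite-dimensional subtleties arise.
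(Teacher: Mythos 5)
Your proposal is correct and follows essentially the same route as the paper: the paper justifies this proposition in one sentence by noting that the adjoint action (\cref{eq:adjoint}) preserves $\lie{w}$ and that $\lie{w}$ is a closed real-linear subspace of the finite-dimensional operator space, so real-weighted integrals of $s A s^{-1}$ remain in $\lie{w}$. Your write-up merely makes the Riemann-sum/closedness step explicit, which is a faithful elaboration of the paper's argument rather than a different approach.
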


In particular, assuming that there is a nontrivial compact subgroup $\mathcal{V}_i \subseteq \mathcal{W}_i \subseteq \mathcal{W}$ which does not have any nonzero invariant vectors on $\hilbert_i$, we will apply this to
\begin{equation}
  \mathcal{E}_i(X) = \int_{\mathcal{V}_i} \mathop{}\!\mathrm{d} U_i \, U_i X U_i^\dagger \in \lie{w}.
\end{equation}
Then $\mathcal{E}_i(X)$ is zero in the red-shaded region and the same as $X$ everywhere else except for the $i$th diagonal block:
\begin{equation}
  \mathcal{E}_i(X) = \begin{+pmatrix}
    X_{11} & \vline & \vline \SetCell{red!30} 0 & \vline \\ \hline
    & \ddots & \SetCell{red!30} \vdots & \\ \hline
    \SetCell{red!30} 0 & \SetCell{red!30} \cdots & \SetCell{blue!20} \mathcal{E}_i(X_{ii}) & \SetCell{red!30} \cdots \\ \hline
    & & \SetCell{red!30} \vdots & \ddots
  \end{+pmatrix} \in \lie{w}.
\end{equation}
This follows because
\begin{equation}
  \int_{\mathcal{V}_i} \mathop{}\!\mathrm{d} U_i \, U_i \Pi_i = \int_{\mathcal{V}_i} \mathop{}\!\mathrm{d} U_i \, U_i^\dagger \Pi_i = 0,
\end{equation}
is the projector to the subspace of invariant vectors of $\hilbert_i$, which consists only of the zero vector. Letting $\mathcal{I}$ be the identity superoperator, $\mathcal{I}(X) = X$ for all $X$, it follows that
\begin{equation}\label{eq:mHaar}
  (\mathcal{I} - \mathcal{E}_i)(X) = \begin{+pmatrix}
    \SetCell{red!30} 0 & \SetCell{red!30} \vline & \vline X_{1i} & \SetCell{red!30} \vline \\ \hline
    \SetCell{red!30} & \SetCell{red!30} \ddots & \vdots & \SetCell{red!30} \\ \hline
    X_{i1} & \cdots & \SetCell{blue!20} X_{ii} - \mathcal{E}_i(X_{ii}) & \cdots \\ \hline
    \SetCell{red!30} & \SetCell{red!30} & \vdots & \SetCell{red!30} \ddots
  \end{+pmatrix} \in \lie{w}.
\end{equation}
Note that $\Tr X_{ii} = \Tr \mathcal{E}_i(X_{ii})$. Thus $X_{ii} - \mathcal{E}_i(X_{ii})$ is traceless and anti-Hermitian, hence an element of $\su(\hilbert_i)$. If $\su(\hilbert_i) \subseteq \lie{w}$, then
\begin{equation}\label{eq:Haariso}
  (\mathcal{I} - \mathcal{E}_i)(X-X_{ii}) =(\mathcal{I} - \mathcal{E}_i)(X) - (X_{ii} - \mathcal{E}_i(X_{ii})) \in \lie{w}
\end{equation}
is an element isolated to the $i$th block. Regardless, if a compact subgroup $\mathcal{V}_j \subseteq \mathcal{W}_j$ also does not have invariant nonzero vectors on $\hilbert_j$ for some $j \neq i$, then
\begin{equation}\label{eq:Haaroff}
  (\mathcal{I} - \mathcal{E}_j) \circ (\mathcal{I} - \mathcal{E}_i)(X) = \begin{+pmatrix}[cells={red!30}]
    \ddots & \vline & \vline & \vline & \vline \\ \hline
    & 0 & & \SetCell{white} X_{ij} & \\ \hline
    & & \ddots & & \\ \hline
    & \SetCell{white} X_{ji} & & 0 & \\ \hline
    & & & & \ddots
  \end{+pmatrix} \in \lie{w}
\end{equation}
is isolated to blocks $i$ and $j$.

Another use-case involves using the Haar measure on multiple blocks to set many off-diagonal blocks to $0$, when there are many compact $\mathcal{V}_i \subseteq \mathcal{W}_i$ which do not have nonzero invariant vectors. This can be helpful if, for instance, there is a block $j$ such that $\mathcal{V}_j$ is trivial, but all other blocks are irreducible (for instance, as an alternative way to prove \cref{lem:twoblock2} in the case needed for semi-universality, i.e. when one of the blocks is one-dimensional). As a specific example, suppose that $\SU(\hilbert_j) \subseteq \mathcal{W}$ for $j \geq 2$, or, in other words, $\su(\hilbert_j) \subseteq \lie{w}$. Then, by integrating over each of these,
\begin{equation}\label{eq:traceblocks}
  (\mathcal{E}_m \circ \dots \circ \mathcal{E}_2)(X) = \begin{+pmatrix}
    X_{11} & \vline \SetColumn{red!30} 0 & \vline \SetColumn{red!30} & \vline \SetColumn{red!30} 0 \\ \hline
    \SetRow{red!30} 0 & \SetCell{blue!20} \mathcal{E}_2(X_{22}) & & 0 \\ \hline
    \SetRow{red!30} & & \SetCell{blue!20} \ddots & \\ \hline
    \SetRow{red!30} 0 & 0 & & \SetCell{blue!20} \mathcal{E}_m(X_{mm})
  \end{+pmatrix} \in \lie{w}.
\end{equation}
But then, since $X_{ii} - \mathcal{E}_i(X_{ii}) \in \su(\hilbert_i) \subseteq \lie{w}$ for $i \geq 2$, it follows that
\begin{equation}
  \sum_{i = 1}^m X_{ii} = \mathcal{E}_m \circ \dots \circ \mathcal{E}_2(X) + \sum_{i = 2}^m \p[\big]{X_{ii} - \frac{\Tr X_{ii}}{\dim \hilbert_i} \Pi_i} \in \lie{w}.
\end{equation}
And therefore,
\begin{equation}\label{eq:offblocks}
  \tilde{X} = X - \sum_{i = 1}^m X_{ii} = \begin{+pmatrix}
    \SetCell{red!30} 0 & \vline X_{12} & \vline \cdots & \vline X_{1m} \\ \hline
    X_{21} & \SetCell{red!30} 0 & \ddots & X_{2m} \\ \hline
    \vdots & \ddots & \SetCell{red!30} \ddots & \\ \hline
    X_{m1} & X_{m2} & & \SetCell{red!30} 0
  \end{+pmatrix} \in \lie{w}.
\end{equation}
Finally, we may isolate to, e.g., blocks $1$ and $2$:
\begin{equation}\label{eq:twoblock}
  \mathcal{E}_m \circ \dots \circ \mathcal{E}_3(\tilde{X}) = \begin{+pmatrix}[cells={red!30}]
    0 & \vline \SetCell{white} X_{12} & \vline \\ \hline
    \SetCell{white} X_{21} & 0 & \\ \hline
    & & \ddots
  \end{+pmatrix} \in \lie{w}.
\end{equation}

\subsubsection{Weighting with characteristic function}

Suppose that $\mathcal{V}_i \subseteq \mathcal{W}_i$ is compact and acts nontrivially and irreducibly. If we assume also that, as a representation of $\mathcal{V}_i$, $\hilbert_i \otimes \hilbert_i$ does not contain a subrepresentation isomorphic to $\hilbert_i$ or the trivial representation, then we can isolate without requiring any of the other subgroups $\mathcal{V}_j$. Note that the latter assumption is not satisfied for $\mathcal{V}_i \cong \SU(2)$, since $\hilbert_i \otimes \hilbert_i$ contains a trivial subrepresentation for all of its irreps. More generally, this condition is not satisfied if the group $\mathcal{V}_i$ is a subgroup of the orthogonal or symplectic unitary group.

Recall the notion of the characteristic function $\Tr B U(g)$ of the operator $B$ (see e.g. \cite{Marvian_thesis}). We weight the Haar measure by the real part of the characteristic function in the following construction.

\begin{lemma}\label{lem:character}
  Consider a Hilbert space with orthogonal decomposition $\hilbert = \hilbert_\lambda \oplus \hilbert_\perp$ and associated Hermitian projectors $\Pi_\lambda$ and $\Pi_\perp$, where $\hilbert_\lambda$ carries a nontrivial irreducible representation $U_\lambda(g)$ of a compact group $G$. Write $U(g) = U_\lambda(g) \oplus \ident_\perp \in \U(\hilbert)$ for the action on $\hilbert$. If the representation $U_\lambda(g) \otimes U_\lambda(g)$ on $\hilbert_\lambda \otimes \hilbert_\lambda$ does not contain the trivial representation as a subrepresentation, or a subrepresentation isomorphic to $\hilbert_\lambda$, then for any operators $A \in \hilbert[L](\hilbert)$ and $B = \Pi_\lambda B \Pi_\lambda \in \hilbert[L](\hilbert)$,
  \begin{subequations}
    \begin{align}
      \dim \hilbert_\lambda \int \diff g \, \Tr (B U(g)^\dagger) U(g) A U(g)^\dagger & = B A \Pi_\perp \\
      \dim \hilbert_\lambda \int \diff g \, \Tr (B U(g)) U(g) A U(g)^\dagger & = \Pi_\perp A B,
    \end{align}
  \end{subequations}
  where $\diff g$ is the normalized Haar measure on $G$.
\end{lemma}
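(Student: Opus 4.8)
The plan is to evaluate the left-hand side block by block. Write $m = \dim \hilbert_\lambda$ and decompose $A = A_{\lambda\lambda} + A_{\lambda\perp} + A_{\perp\lambda} + A_{\perp\perp}$, where $A_{\lambda\lambda} = \Pi_\lambda A \Pi_\lambda$, $A_{\lambda\perp} = \Pi_\lambda A \Pi_\perp$, and so on. Since $U(g) = U_\lambda(g) \oplus \ident_\perp$ acts as the identity on $\hilbert_\perp$, conjugation respects this block structure in a controlled way,
\begin{equation}
  U(g) A U(g)^\dagger = U_\lambda(g) A_{\lambda\lambda} U_\lambda(g)^\dagger + U_\lambda(g) A_{\lambda\perp} + A_{\perp\lambda} U_\lambda(g)^\dagger + A_{\perp\perp}.
\end{equation}
Moreover, because $B = \Pi_\lambda B \Pi_\lambda$ is supported on $\hilbert_\lambda$, the weight simplifies to $\Tr(B U(g)^\dagger) = \Tr(B U_\lambda(g)^\dagger)$, a matrix coefficient of the irrep $U_\lambda$. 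Thus each of the four blocks of the integral is controlled by integrating a product of matrix coefficients of $U_\lambda$ against $\diff g$, which I would reduce using the standard Schur orthogonality relation $\int \diff g \, U_\lambda(g)_{ij} \overline{U_\lambda(g)_{kl}} = \tfrac{1}{m} \delta_{ik}\delta_{jl}$.

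First I would handle the block that survives, namely $\lambda\perp$. A direct entrywise computation with Schur orthogonality gives $\int \diff g \, \Tr(B U_\lambda(g)^\dagger)\, U_\lambda(g) = \tfrac{1}{m} B$, so multiplying by $m$ and by $A_{\lambda\perp}$ yields exactly $B A_{\lambda\perp} = B A \Pi_\perp$ (using $B A_{\perp\perp} = 0$). Next, the $\perp\perp$ block is $A_{\perp\perp} \cdot m \int \diff g\, \Tr(B U_\lambda(g)^\dagger)$, which vanishes because $\int \diff g\, U_\lambda(g)^\dagger = 0$ is the projector onto the $G$-invariant vectors of the \emph{nontrivial} irrep $\hilbert_\lambda$, hence zero. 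This already produces the first identity provided the remaining two blocks vanish.

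The vanishing of the $\perp\lambda$ and $\lambda\lambda$ blocks is the crux of the argument, and it is precisely where the two hypotheses on $U_\lambda(g) \otimes U_\lambda(g)$ enter. For the $\perp\lambda$ block, the relevant integral is $A_{\perp\lambda} \cdot m \int \diff g\, \Tr(B U_\lambda(g)^\dagger)\, U_\lambda(g)^\dagger$, whose entries are coefficients of $\int \diff g\, \bar{U}_\lambda(g) \otimes \bar{U}_\lambda(g)$, the complex conjugate of the projector onto the invariant subspace of $\hilbert_\lambda \otimes \hilbert_\lambda$. This projector is zero exactly when $U_\lambda \otimes U_\lambda$ contains no trivial subrepresentation, which is the first hypothesis. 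For the $\lambda\lambda$ block, the entries involve $\int \diff g\, \bar{U}_\lambda(g) \otimes U_\lambda(g) \otimes \bar{U}_\lambda(g)$, the projector onto the invariants of $\bar\lambda \otimes \lambda \otimes \bar\lambda$. By the adjunction $\dim \operatorname{Hom}_G(\mathbf{1}, \bar\lambda \otimes \lambda \otimes \bar\lambda) = \dim \operatorname{Hom}_G(\lambda \otimes \lambda, \lambda)$, this equals the multiplicity of $\lambda$ in $\lambda \otimes \lambda$, which is zero exactly under the second hypothesis that $U_\lambda \otimes U_\lambda$ contains no subrepresentation isomorphic to $\hilbert_\lambda$. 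I expect the careful bookkeeping of conjugated versus unconjugated coefficients in this triple integral, and the correct representation-theoretic identification of which invariant space each projector lives on, to be the main place where errors could creep in.

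Finally, the second identity would follow without a separate computation: applying the first identity to $A^\dagger$ and $B^\dagger$ in place of $A$ and $B$, and then taking the adjoint of both sides, converts $\Tr(B^\dagger U(g)^\dagger)$ into $\overline{\Tr(B^\dagger U(g)^\dagger)} = \Tr(U(g) B) = \Tr(B U(g))$ and turns the right-hand side $B^\dagger A^\dagger \Pi_\perp$ into $\Pi_\perp A B$, giving exactly the claimed formula.
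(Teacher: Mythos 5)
Your proof is correct and follows essentially the same route as the paper: expand $U(g)AU(g)^\dagger$ into its four blocks, kill the $\lambda\lambda$, $\perp\lambda$ (resp. $\lambda\perp$), and $\perp\perp$ contributions via Schur orthogonality together with the two hypotheses on $U_\lambda\otimes U_\lambda$ (your adjunction $\dim\operatorname{Hom}_G(\mathbf 1,\bar\lambda\otimes\lambda\otimes\bar\lambda)=\dim\operatorname{Hom}_G(\lambda\otimes\lambda,\lambda)$ is exactly the paper's vanishing of $\int \diff g\, U_\lambda\otimes U_\lambda\otimes U_\lambda^\dagger$), and evaluate the surviving cross term by the reproducing property of matrix coefficients. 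The only cosmetic differences are that you work entrywise rather than with the $\operatorname{SWAP}_\lambda$ operator identity, and you obtain the second identity by adjointing the first rather than repeating the computation — both are fine.
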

\begin{remark}\label{rem:center}
  If the irreducible group of unitaries $U_\lambda(G) = \set{U_\lambda(g) \given g \in G}$ has a nontrivial center, then it always satisfies the condition that $\hilbert_\lambda \otimes \hilbert_\lambda$ does not contain a subrepresentation isomorphic to $\hilbert_\lambda$. To see this, note that 
 irreducibility implies that the center of this group should be in the form of phases $\e^{\i \theta} \ident_\lambda \in U_\lambda(G)$.  For two copies of nontrivial phases,  i.e., the unitary $\e^{\i\theta} \mathbb{I}_\lambda\otimes \e^{\i\theta} \mathbb{I}_\lambda$,   all eigenvalues are $\e^{\i2\theta }$, and therefore, there is no subspace of $\hilbert_\lambda \otimes \hilbert_\lambda$ in which this unitary acts as $\e^{\i\theta } (\mathbb{I}_\lambda \otimes \mathbb{I}_\lambda)$.
 \end{remark}

\begin{proof}
  Write $A_{ij} = \Pi_i A \Pi_j$ for $i, j = \lambda, \perp$ and expand
  \begin{equation}\label{eq:Aexpand}
    \begin{split}
      U(g) A U(g)^\dagger & = \sum_{i, j} U(g) A_{ij} U(g)^\dagger = U_\lambda(g) A_{\lambda\lambda} U_\lambda(g)^\dagger + U_\lambda(g) A_{\lambda\perp} + A_{\perp\lambda} U_\lambda(g)^\dagger + A_{\perp\perp}.
    \end{split}
  \end{equation}
  According to the Schur orthogonality theorem,  for any pair of irreps $\lambda$ and $\lambda'$ of a compact group $G$    
  \begin{equation}
    \dim \hilbert_\lambda \int \diff g \, U_{\lambda'}(g) \otimes U_\lambda(g)^\dagger =\delta_{\lambda,\lambda'}\ {\operatorname{SWAP}}_\lambda \  ,
  \end{equation}
  where $\delta_{\lambda,\lambda'}$ is 0 if $\lambda$ and $\lambda'$  are inequivalent representations, and is equal to 1 if $\lambda=\lambda'$, and    
  where $\operatorname{SWAP}_\lambda$ is the linear operator on $\hilbert_\lambda \otimes \hilbert_\lambda$ which swaps product states, $\ket{\psi} \otimes \ket{\varphi} \mapsto \ket{\varphi} \otimes \ket{\psi}$. Schur orthogonality plus the assumption that $\hilbert_\lambda \otimes \hilbert_\lambda$ does not contain $\hilbert_\lambda$ or the trivial 
  representation as subrepresentations implies
  \begin{subequations}
    \begin{align}
      \int \diff g \, U_\lambda(g) \otimes U_\lambda(g) & = 0 \\
      \int \diff g \, U_\lambda(g) \otimes U_\lambda(g) \otimes U_\lambda(g)^\dagger & = 0.
    \end{align}
  \end{subequations}
  Similarly, switching $U_\lambda(g) \leftrightarrow U_\lambda(g)^\dagger$ in these expressions also obtains vanishing integrals. Finally, since $U_\lambda(g)$ is assumed nontrivial, $\int \diff g \, U_\lambda(g) = 0$. Thus the only term from \cref{eq:Aexpand} that survives in the following has exactly one $U_\lambda(g)$ and one $U_\lambda(g)^\dagger$ in the integral:
  \begin{equation}
    \begin{split}
      \dim \hilbert_\lambda \int \diff g \, \Tr (B U(g)) U(g) A U(g)^\dagger & = \dim \hilbert_\lambda \int \diff g \, \Tr (B U_\lambda(g)) A_{\perp\lambda} U_\lambda(g)^\dagger \\
      & = \dim \hilbert_\lambda \Tr_{(1)} \int \diff g \, (B \otimes A_{\perp\lambda}) (U_\lambda(g) \otimes U_\lambda(g)^\dagger) \\
      & = \Tr_{(1)} (B \otimes A_{\perp\lambda}) \operatorname{SWAP}_\lambda \\
      & = A_{\perp\lambda} B = \Pi_\perp A B.
    \end{split}
  \end{equation}
  where $\Tr_{(1)}$ is the partial trace\footnote{This can also be shown using $ \dim \mathcal{H}_\lambda\int dg \Tr(B U_\lambda(g)) U^\dag_\lambda(g)=B.$}. The other integral is performed similarly.
\end{proof}

Supposing that the conditions of \cref{lem:character} are satisfied for the action of $\mathcal{V}_i$ on $\hilbert_i$, it follows that, for any $B = \Pi_i B \Pi_i \in \hilbert[L](\hilbert)$,
\begin{equation}\label{eq:character}
  \int_{\mathcal{V}_i} \diff U_i \, \Tr (B^\dagger U_i + B U_i^\dagger)\  U_i X U^\dag_i = \begin{+pmatrix}[row{3}={blue!20},column{3}={blue!20}]
    \SetCell{red!30} 0 & \SetCell{red!30} \vline & \vline X_{1i} B^\dagger & \SetCell{red!30} \vline \\ \hline
    \SetCell{red!30} & \SetCell{red!30} \ddots & \vdots & \SetCell{red!30} \\ \hline
    B X_{i1} & \cdots & \SetCell{red!30} 0 & \cdots \\ \hline
    \SetCell{red!30} & \SetCell{red!30} & \vdots & \SetCell{red!30} \ddots
  \end{+pmatrix} \in \lie{w},
\end{equation}
since the coefficients $\Tr (B^\dagger U_i + B U_i^\dagger)$ are real.

\subsection{Application of strategies: 3 different proofs of Lemma \ref{lem:iso-oneblock}}\label{app:applications}

\begin{proof}[Proof of \cref{lem:iso-oneblock}]
  If $\Pi_i A = 0$ then this is just the statement that $\ident \in \mathcal{W}$. So suppose that $\Pi_i A \neq 0$. Any of the first three rows of \cref{fig:isolate} may be used. In particular, the first two directly use \cref{eq:span-iso} and the fact that $\dim \hilbert_i \geq 3$, so that $\operatorname{span}_\real \SU(\hilbert_i) \Pi_i = \operatorname{span}_\complex \SU(\hilbert_i) \Pi_i = \hilbert[L](\hilbert_i)$ (see \cref{eq:d3span}).
  \begin{enumerate}

  \item Recall the definition $D_U(X)=U X U^\dag- X$. Choose $U_i \in \SU(\hilbert_i)$ such that $U_i - \Pi_i$ is invertible. Then $(U_i - \Pi_i) A \neq 0$. Then $D_{U_i}(A_{ii}) \in \lie{su}(\hilbert_i)$ and therefore $\tilde{A} \coloneqq D_{U_i}(A) - D_{U_i}(A_{ii}) \in \lie{w}$ is isolated to block $i$ (see also \cref{eq:DUi}). 
Recall 
\begin{equation*}
  \set{B X + X B^\dagger \given B \in \operatorname{span}_\real \mathcal{W}_i \Pi_i} = \operatorname{span}_\real \set{U_i X U_i^\dagger \given U_i \in \mathcal{W}_i} \subseteq \lie{w}. \tag{re \ref{eq:span-iso}}
\end{equation*}
which holds under the assumption that $X$ is anti-Hermitian, and inside $\mathfrak{w}$, and   $\Pi_i X+X\Pi_i=X$, i.e., is isolated.  
Since $\dim \hilbert \geq 3$, $\operatorname{span}_\real \SU(\hilbert_i) \Pi_i = \hilbert[L](\hilbert_i)$, which means   $B$ can be arbitrary operator.  Choosing $X= \tilde{A}$ we conclude that for any $\tilde{B} \in \hilbert[L](\hilbert_i)$,
    \begin{equation}
      \tilde{B} \tilde{A} + \tilde{A} \tilde{B}^\dagger = \tilde{B} (U_i - \Pi_i) A \Pi_\perp + \Pi_\perp A (U_i^\dagger - \Pi_i) \tilde{B}^\dagger \in \lie{w}.
    \end{equation}

    In particular, for any $B \in \hilbert[L](\hilbert_i)$ choose $\tilde{B} = B (U_i - \Pi_i)^{-1}$ to obtain \cref{eq:isoLie}.
  \item \textbf{Integration over uniform Haar measure}: By \cref{eq:mHaar}, $(\mathcal{I} - \mathcal{E}_i)(A) \in \lie{w}$. But note that $\Pi_i (\mathcal{I} - \mathcal{E}_i)(A) \Pi_i = A_{ii} - \mathcal{E}_i(A_{ii}) \in \su(\hilbert_i)$ because it is anti-Hermitian and traceless, so
    \begin{equation}
      \Pi_i A \Pi_\perp + \Pi_\perp A \Pi_i = \Pi_i A + A \Pi_i - 2 \Pi_i A \Pi_i = (\mathcal{I} - \mathcal{E}_i)(A) - (A_{ii} - \mathcal{E}_i(A_{ii})) \in \lie{w}.
    \end{equation}
    Thus \cref{eq:span-iso} proves the claim.
  \item \textbf{Weighting with characteristic function}: By \cref{rem:center}, since $\SU(\hilbert_i)$ has a nontrivial center, $\hilbert_i \otimes \hilbert_i$ does not contain a subrepresentation isomorphic to $\hilbert_i$. Furthermore, $\dim \hilbert_i \geq 3$ implies that $\hilbert_i \otimes \hilbert_i$ does not contain a trivial subrepresentation, since it is not self-dual as a representation of $\SU(\hilbert_i)$. Thus, since $\dim \hilbert_i \geq 3$, $\SU(\hilbert_i)$ satisfies all the assumptions of \cref{lem:character}, \cref{eq:character} applies, and it follows immediately.
  \end{enumerate}
\end{proof}

\end{document}